\newif\ifhyper\IfFileExists{hyperref.sty}{\hypertrue}{\hyperfalse}
\ifhyper\usepackage{hyperref}\fi
\def\verbose{0}
\def\confversion{0}
\def\colorful{0}
\newtheorem{theorem}{Theorem}
\newtheorem{lemma}[theorem]{Lemma}
\newtheorem{corollary}[theorem]{Corollary}
\newtheorem{claim}[theorem]{Claim}
\newtheorem{fact}[theorem]{Fact}
\newtheorem{definition}[theorem]{Definition}
\newtheorem{remark}[theorem]{Remark}
\newtheorem{observation}[theorem]{Observation}
\newcommand{\braket}[1]{\langle #1 \rangle}
\newcommand{\eps}{\epsilon}
\newcommand{\Tr}{\operatorname{Tr}}
\newcommand{\Var}{\operatorname{{\bf Var}}}
\newcommand{\Cov}{\operatorname{Cov}}
\newcommand{\Num}{\mathrm{Num}}
\newcommand{\Coeff}{\mathrm{Coeff}}
\newcommand{\be}{\mathbf{e}}
\newcommand{\sign}{\mathrm{sign}}
\newcommand{\ignore}[1]{}
\newcommand{\cref}[1]{Corollary~\ref{cor:#1}}
\newcommand{\R}{{\mathbb{R}}}
\newcommand{\Z}{{\mathbb Z}}
\newcommand{\E}{\operatorname{{\bf E}}}
\newcommand{\littlesum}{\mathop{\textstyle \sum}}
\newcommand{\poly}{\mathrm{poly}}
\newcommand{\Inf}{\mathbf{Inf}}
\newcommand{\eqdef}{\stackrel{\textrm{def}}{=}}
\renewcommand{\Pr}{\operatorname{{\bf Pr}}}
\newcommand{\dist}{\mathrm{dist}}
\newcommand{\Inner}{\mathrm{In}}
\newcommand{\Outer}{\mathrm{Out}}
\newcommand{\num}{\mathrm{num}}
\newcommand{\Sym}{{\mathrm{Sym}}}
\newcommand{\reg}{{\mathrm{reg}}}
\newcommand{\nneg}{{\mathrm{neg}}}
\newcommand{\dk}{d_{\mathrm K}}
\renewcommand{\section}{\@startsection{section}{1}{0pt}{-8pt}{10pt}{\Large\bf}}
\newcommand{\rnote}[1]{\footnote{{\bf [[Rocco: {\red{{#1}}}\bf ]] }}}
\newcommand{\blue}[1]{{\color{blue} {#1}}}
\newcommand{\red}[1]{{\color{red} {#1}}}
\newcommand{\grade}[1]{{\color{green} {#1}}}
\newcommand{\new}[1]{{\color{red} {#1}}}
\newcommand{\blue}[1]{{{#1}}}
\newcommand{\red}[1]{{{#1}}}
\newcommand{\grade}[1]{{{#1}}}
\newcommand{\new}[1]{{{#1}}}
\newcommand{\etal}{{\em et al.\ }}
\newcommand{\T}{{{\cal T}}}
\newcommand{\depth}{{\mathrm{depth}}}
\author{Anindya De\thanks{{\tt anindya@math.ias.edu}.  Work  was partly done while the author was hosted by Oded Regev at NYU and partly while
 the author was a fellow at the Simons Institute, Berkeley. Partly supported by NSF grants CCF-1320188.   }\\
Institute for Advanced Study\\
\and Rocco A.\ Servedio\thanks{{\tt rocco@cs.columbia.edu}. Supported by NSF grants CCF-1115703 and CCF-1319788.}\\
Columbia University\\
}
\begin{document}

\setcounter{page}{0}

\title{Efficient deterministic
approximate counting\\ for low-degree polynomial threshold functions}

\maketitle

\thispagestyle{empty}

\begin{abstract}
We give a \emph{deterministic} algorithm for
approximately counting satisfying assignments of a degree-$d$ polynomial threshold function
(PTF).
Given a degree-$d$ input polynomial $p(x_1,\dots,x_n)$ over $\R^n$
and a parameter $\eps > 0$, our algorithm approximates
$
\Pr_{x \sim \{-1,1\}^n}[p(x) \geq 0]
$
to within an additive $\pm \eps$ in time $O_{d,\eps}(1)\cdot \poly(n^d)$.
(Since it is NP-hard to determine whether the above probability
is nonzero, any sort of efficient multiplicative approximation is
almost certainly impossible even for randomized algorithms.)
Note that the running time of our algorithm (as a function of $n^d$,
the number of coefficients of a degree-$d$ PTF)
is a \emph{fixed} polynomial.  The fastest previous algorithm for
this problem \cite{Kane12subpoly}, based on constructions of
unconditional pseudorandom generators for degree-$d$ PTFs, runs in time
$n^{O_{d,c}(1) \cdot \eps^{-c}}$ for all $c > 0$.

The key novel contributions of this work are

\begin{itemize}

\item A new multivariate central limit theorem, proved using tools
from Malliavin calculus and Stein's Method.  This new CLT shows that any
collection of Gaussian polynomials with small eigenvalues must have a
joint distribution which is very close to a multidimensional Gaussian
distribution.

\item A new decomposition of low-degree multilinear polynomials over Gaussian
inputs.  Roughly speaking we show that (up to some small error)
any such polynomial can be decomposed
into a bounded number of multilinear polynomials all of which have
extremely small eigenvalues.

\end{itemize}

We use these new ingredients to give a deterministic algorithm
for a Gaussian-space version of the approximate counting problem,
and then employ standard techniques for working with
low-degree PTFs (invariance principles and regularity lemmas)
to reduce the original approximate counting problem over the Boolean hypercube to the Gaussian
version.

\ignore{
Our main result follows from combining the above new ingredients with
standard techniques for working with low-degree PTFs.  These include
mollification (to go from the kind of ``closeness''
which our CLT establishes to the kind of closeness
which is required for analyzing polynomial threshold functions over
Gaussian inputs) and the use of invariance principles \cite{MOO10}
and regularity lemmas \cite{DSTW:10} for low-degree polynomials (to let us reduce the original
approximate counting problem over the Boolean hypercube to a Gaussian
version of the problem).}

As an application of our result, we give the first deterministic fixed-parameter tractable
algorithm for the following moment approximation problem:  given a degree-$d$ polynomial $p(x_1,\dots,x_n)$ over $\{-1,1\}^n$, a positive integer
$k$ and an error parameter $\eps$, output a $(1\pm \eps)$-multiplicatively
accurate estimate to $\E_{x \sim \{-1,1\}^n}[|p(x)|^k].$  Our algorithm
runs in time $O_{d,\eps,k}(1) \cdot \poly(n^d).$

\end{abstract}

\newpage

\section{Introduction}

For decades a major research goal in computational complexity has been
to understand the computational power of randomization -- and perhaps to show
that randomness does not actually augment the abilities of polynomial-time
algorithms.  Towards this end, an important research goal within
unconditional derandomization has been the development of
\emph{deterministic approximate counting algorithms}.
This line of research started with the work of Ajtai and
Wigderson~\cite{AjtaiWigderson:85}, who gave a sub-exponential time
deterministic algorithm to approximately count the number
of satisfying assignments of a constant-depth circuit.
Since this early work many other classes of Boolean functions
have been studied from this perspective, including DNF
formulas, low-degree $GF[2]$ polynomials, linear
threshold functions, and degree-2 polynomial
threshold functions
\cite{LVW93, LubyVelickovic:96, Trevisan:04, GopalanMR:13,
Viola09, GKMSVV11, DDS13:deg2count,DDS14junta}.

In this paper we study the problem of deterministic approximate counting
for \emph{degree-$d$ polynomial threshold functions} (PTFs).  A
degree-$d$ PTF is a Boolean function $f: \{-1,1\}^n \to \{-1,1\}$ defined by
$f(x) = \sign(p(x))$ where $p: \{-1,1\}^n \to \R$ is a degree-$d$
polynomial.  In the special case where $d=1$, degree-$d$ PTFs are often
referred to as \emph{linear threshold functions} (LTFs).
While LTFs and low-degree PTFs have been researched for decades
(see e.g. \cite{
MyhillKautz:61, MTT:61, MinskyPapert:68, Muroga:71, GHR:92, Orponen:92,
Hastad:94,Podolskii:09} and many other works), they have recently been
the focus of renewed research attention in fields such as concrete complexity
theory \cite{Sherstov:08,Sherstov:09,DHK+:10,Kane:10,Kane12,Kane12GL,KRS12},
learning theory \cite{KKMS:08,SSS11,DOSW:11,DDFS:12stoc},
voting theory \cite{APL:07, DDS12icalp} and others.

\medskip

\noindent {\bf Our main result.}
The main contribution of this paper
is to give a fixed polynomial time deterministic approximate counting
algorithm for degree-$d$ PTFs.  We prove the following theorem:
\begin{theorem} \label{thm:main}
There is a deterministic algorithm $A$ with the following properties:
Let $A$ be given as input a degree-$d$ polynomial
$p$ over $\{-1,1\}^n$ and an accuracy parameter $\eps>0$.
Algorithm $A$ runs in time $O_{d,\eps}(1) \cdot \poly(n^d)$
and outputs a value $\tilde{v} \in [0,1]$ such that
$
\left|
\tilde{v} - \Pr_{x \in \{-1,1\}^n}[p(x) \geq 0]
\right| \leq \eps.
$
\end{theorem}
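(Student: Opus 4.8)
The plan is to follow the roadmap sketched in the abstract, reducing the Boolean approximate-counting problem to a Gaussian one in stages.

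\medskip

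\noindent\textbf{Step 1: Regularity and reduction to a bounded number of "nice" pieces.}
First I would apply a regularity lemma for degree-$d$ polynomials (in the spirit of \cite{DSTW:10}): by building a decision tree over a bounded number of the highest-influence variables, I reduce the problem to approximately counting satisfying assignments of $O_{d,\eps}(1)$ many PTFs $\sign(p_\rho)$, each of which is either $\eps'$-close to a constant function (in which case its satisfying fraction is trivially $\approx 0$ or $\approx 1$), or else is $\tau$-regular (every variable has influence at most $\tau$) for a suitably small $\tau = \tau(d,\eps)$. So it suffices to estimate $\Pr_{x}[p_\rho(x) \ge 0]$ for a single $\tau$-regular degree-$d$ polynomial to additive accuracy $\pm\eps'$, in fixed polynomial time.

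\medskip

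\noindent\textbf{Step 2: Invariance principle — move to Gaussian space.}
For a $\tau$-regular degree-$d$ polynomial, the invariance principle of \cite{MOO10} says that $\Pr_{x \sim \{-1,1\}^n}[p(x) \ge 0]$ and $\Pr_{g \sim N(0,1)^n}[p(g) \ge 0]$ differ by at most $O_d(\tau^{1/(4d+1)})$ (after normalizing the variance of $p$ to $1$). Choosing $\tau$ small enough makes this gap at most $\eps'/10$. So now the entire problem is: \emph{deterministically} estimate $\Pr_{g \sim N(0,1)^n}[p(g) \ge 0]$ for a degree-$d$ multilinear polynomial $p$ over Gaussian inputs, to additive accuracy $\pm\eps'$, in time $O_{d,\eps}(1)\cdot\poly(n^d)$.

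\medskip

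\noindent\textbf{Step 3: Decompose the Gaussian polynomial into low-eigenvalue pieces.}
Here I invoke the paper's new decomposition result: writing $p$ (again variance-normalized) as a function of a bounded number $m = m(d,\eps)$ of new ``superfeatures'' $\ell_1(g),\dots,\ell_m(g)$ plus a small error, where the remaining polynomial in the $\ell_i$'s, as well as each $\ell_i$, has \emph{extremely small eigenvalues} — small enough that the new multivariate CLT applies. Concretely, one isolates the contribution of high-influence directions iteratively, much as in the Boolean regularity argument but carried out over Gaussian space with the eigenvalue (operator-norm) notion of ``regularity'' rather than the influence notion. The output is that $p$ is $\eps''$-close (in an appropriate, e.g. $L^1$ or CDF, sense) to a fixed function $F(\ell_1(g),\dots,\ell_m(g))$ where the joint distribution of $(\ell_1(g),\dots,\ell_m(g))$ is, by the CLT, within $\eps''$ of an explicit $m$-dimensional Gaussian with a covariance matrix whose entries are low-degree polynomials in the coefficients of $p$ and hence exactly computable.

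\medskip

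\noindent\textbf{Step 4: Count in constant-dimensional Gaussian space.}
Once we know $\Pr_g[p(g)\ge 0]$ is within $O(\eps'')$ of $\Pr_{z \sim N(0,\Sigma)}[F(z) \ge 0]$ for an explicit $m\times m$ covariance $\Sigma$ and explicit degree-$d$, $m$-variate polynomial-threshold region $F$, with $m$ and $d$ constants, this last probability is a fixed real number not depending on $n$, and can be estimated to any desired accuracy by deterministic numerical integration / a deterministic net over $\R^m$ — all in $O_{d,\eps}(1)$ time. One subtlety I would flag: the CLT gives closeness of distributions in a metric (Wasserstein, or test functions with bounded derivatives) that does not directly control $\Pr[F(z)\ge0]$, since the threshold $\{F = 0\}$ is not a smooth test function; so one must interpose a mollification step, replacing $\sign(\cdot)$ by a smooth approximant and arguing (using anti-concentration of low-degree Gaussian polynomials, e.g. Carbery–Wright) that the mollification error is small for both $p(g)$ and $F(z)$.

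\medskip

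\noindent\textbf{Main obstacle.}
The hard part is Step 3 together with the mollification in Step 4: one needs the new CLT to be quantitatively strong enough (closeness $\eps''$ as a function of the eigenvalue bound) that, after the decomposition produces only a \emph{bounded} number of superfeatures and after paying for mollification against the Carbery–Wright anti-concentration bound, the total error is still below $\eps'$. Getting the decomposition to terminate with a constant number of pieces while driving the eigenvalues low enough for the CLT — i.e.\ showing these two competing requirements can be met simultaneously with $d,\eps$-bounded parameters — is the technical crux. Everything else (regularity lemma, invariance principle, the final constant-dimensional integration) is essentially off-the-shelf.
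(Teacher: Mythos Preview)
Your proposal is correct and follows essentially the same approach as the paper: apply the \cite{DSTW:10} regularity lemma to reduce to regular PTFs, use the \cite{MOO10} invariance principle to pass to Gaussian inputs, then invoke the paper's new decomposition-plus-CLT machinery (with mollification and Carbery--Wright anti-concentration to handle the discontinuous threshold) to reduce to a constant-dimensional Gaussian integration. You have also correctly identified the technical crux --- that the decomposition must produce $O_{d,\eps}(1)$ many pieces whose eigenregularity is \emph{much} smaller than the reciprocal of their number, so that the CLT error bound is meaningful --- which is exactly the difficulty the paper's multi-level decomposition procedure is designed to overcome.
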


Note that the above result guarantees an
\emph{additive} approximation
to the desired probability.  While additive approximation
is not as strong as multiplicative
approximation, one should recall that the problem of
determining whether $\Pr_{x \in \{-1,1\}^n}[p(x) \geq 0]$ is nonzero
is well known to be NP-hard, even for degree-2 polynomials and even
if all nonconstant monomials in $p$ are
restricted to have coefficients from $\{0,1\}$ (this can be shown via a simple
reduction from Max-Cut).  Thus no efficient algorithm,
even allowing randomness, can give any
multiplicative approximation to $\Pr_{x \sim \{-1,1\}^n}[p(x) \geq 0]$
unless NP $\subseteq$ RP.  Given this, additive approximation is a
natural goal.

\medskip

\noindent {\bf Related work.}
Several previous works have given poly$(n^d)$-time
deterministic approximate counting algorithms for width-$d$
DNF formulas (see e.g. \cite{Trevisan:04,LubyVelickovic:96,GopalanMR:13}
as well as the approach of \cite{AjtaiWigderson:85} augmented with the
almost $t$-wise independent distributions of \cite{NN93},
as discussed in \cite{Trevisan:04}).
Degree-$d$ PTFs are of course a broad generalization of width-$d$ DNF formulas,
and the algorithms
for width-$d$ DNFs referenced above do not extend to degree-$d$ PTFs.
\ignore{ --- indeed, even for $d=1$ the first deterministic
approximate counting algorithm for degree-$d$ PTFs (based on
pseudorandom generators) was only relatively recently established in
\cite{DGJ+:10}.
}

The $d=1$ case for degree-$d$ PTFs (i.e. LTFs)
is qualitatively different from $d >1$.  For $d=1$ the satisfiability
problem is trivial, so one may reasonably hope for
a multiplicatively $(1 \pm \eps)$-accurate deterministic
approximate counting algorithm.
Indeed such an algorithm, running in fully polynomial time $\poly(n,1/\eps)$,
was given by Gopalan \etal and Stefankovic \etal in \cite{GKMSVV11}.
For $d \geq 2$, however, as noted above additive approximation is the best one
can hope for, even for randomized algorithms.
The only previous deterministic approximate counting
results for degree-$d$ PTFs for general $d$ follow
from known constructions of
\emph{unconditional pseudorandom generators} (PRGs) for degree-$d$ PTFs.
The first such construction was given by Meka and Zuckerman \cite{MZstoc10}, whose PRG yielded
an $n^{O_d(1) \cdot \poly(1/\eps^d)}$-time
deterministic approximate counting algorithm.  Followup works by
Kane \cite{Kane11ccc, Kane11focs, Kane12subpoly} improved the parameters
of these PRGs, with the strongest construction from \cite{Kane12subpoly} (for PTFs over
Gaussian inputs) giving a $n^{O_{d,c}(1) \cdot \eps^{-c}}$-time algorithm.  Thus
these prior works do not give a fixed polynomial-time algorithm.

For the special case of $d=2$, in separate work
\cite{DDS13:deg2count} the authors have given a deterministic approximate
counting algorithm for degree-2 PTFs
that runs in time $\poly(n,2^{\poly(1/\eps)})$.
In \cite{DDS14junta} the authors extended
the \cite{DDS13:deg2count} result and gave an algorithm
that does deterministic approximate counting for any $O(1)$-junta of
degree-$2$ PTFs.
As we explain in detail in the rest of this introduction,
much more sophisticated techniques and analyses
are required to obtain the results of the current paper for
general $d$.  These include \blue{a new central limit theorem} in probability
theory based on Malliavin calculus and Stein's method, and an intricate
new decomposition procedure that goes well beyond the decomposition
approaches employed in \cite{DDS13:deg2count, DDS14junta}.

\medskip

\noindent {\bf Our approach.}
The main step in proving Theorem \ref{thm:main} is to give a
deterministic approximate counting algorithm for the \emph{standard Gaussian
distribution $N(0,1)^n$} over $\R^n$ rather than the uniform
distribution over $\{-1,1\}^n.$  The key result that gives us
Theorem \ref{thm:main} is the following:

\begin{theorem} \label{thm:degd-main-gauss}
There is a deterministic algorithm $A$ with the following properties:
Let $A$ be given as input a degree-$d$ polynomial
$p$ over $\R^n$ and an accuracy parameter $\eps>0$.
Algorithm $A$ runs in time $O_{d,\eps}(1) \cdot \poly(n^d)$
and outputs a value $\tilde{v} \in [0,1]$ such that
$
\left|
\tilde{v} - \Pr_{x \sim N(0,1)^n}[p(x) \geq 0] \right| \leq \eps.
$
\end{theorem}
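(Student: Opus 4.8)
The plan is to reduce the general problem to the case of a single ``regular'' polynomial (one whose influences, or eigenvalues in an appropriate sense, are all tiny), handle that case via a multidimensional CLT, and then recurse on the irregular coordinates. First I would preprocess the input polynomial $p$ by passing to its Hermite expansion over $N(0,1)^n$ and, if necessary, dividing through by a suitable norm so that $\Var[p]$ is normalized; constant polynomials and the degenerate case $\Var[p]=0$ are handled trivially. The core subroutine is a decomposition procedure: I would apply the ``new decomposition of low-degree multilinear polynomials over Gaussian inputs'' advertised in the introduction to write $p$ (up to small error in an appropriate metric) as a function of a bounded number $K = O_{d,\eps}(1)$ of polynomials $q_1,\dots,q_K$, each of which is \emph{regular} in the sense that all its eigenvalues are at most some tiny $\tau = \tau(d,\eps)$. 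Equivalently, one can think of this as a ``junta-like'' structure: after conditioning on the values of a bounded number of low-degree building blocks, the residual polynomial is regular.

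Next I would invoke the new multivariate central limit theorem: since each $q_i$ is a degree-$d$ Gaussian polynomial with small eigenvalues, the joint distribution of $(q_1(x),\dots,q_K(x))$ under $x \sim N(0,1)^n$ is $O(\tau^{\Omega(1)})$-close (in, say, CDF/Kolmogorov distance after mollification) to the $K$-dimensional Gaussian $N(\mu,\Sigma)$ with the same mean vector $\mu$ and covariance matrix $\Sigma$ as $(q_1,\dots,q_K)$. Crucially, $\mu$ and $\Sigma$ are just low-degree moments of $p$'s Hermite coefficients and hence can be computed \emph{exactly} and deterministically in $\poly(n^d)$ time. Having replaced the true distribution of $(q_1,\dots,q_K)$ by an explicit $K$-dimensional Gaussian, the quantity $\Pr_{x \sim N(0,1)^n}[p(x) \ge 0]$ becomes (up to the accumulated error) the probability that an explicit function of a $K$-dimensional Gaussian is nonnegative; since $K$ is a constant depending only on $d$ and $\eps$, this $K$-dimensional probability can be estimated to any desired accuracy by a brute-force deterministic numerical integration / gridding over $\R^K$, in time $O_{d,\eps}(1)$. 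Some care is needed because PTFs are discontinuous: to pass from closeness-in-CDF (which is what the CLT gives) to closeness of $\Pr[p \ge 0]$, I would use a standard mollification/anti-concentration argument, using the fact that a nonzero low-degree Gaussian polynomial is anti-concentrated around any point (Carbery--Wright), so the measure near the decision boundary $\{p = 0\}$ is small.

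The recursion handles the irregular part: in the decomposition, the coordinates/directions that are \emph{not} regular get ``peeled off'' and the algorithm restricts $p$ along them, recursing on the lower-complexity restricted polynomials and averaging the results; a potential-function or ``critical index''-style argument bounds the recursion depth by $O_{d,\eps}(1)$, so the total running time remains $O_{d,\eps}(1)\cdot\poly(n^d)$. Summing up, the algorithm is: (i) decompose; (ii) compute $\mu,\Sigma$ of the regular blocks exactly; (iii) replace the joint law by $N(\mu,\Sigma)$ via the CLT; (iv) mollify to handle the PTF boundary; (v) numerically integrate in $K$ dimensions; (vi) recurse on the bounded irregular part and average. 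I expect the main obstacle to be two-fold: engineering the decomposition so that the number of regular blocks $K$ and the recursion depth are both bounded by functions of $d,\eps$ only (the naive decomposition blows up with $n$), and correctly plumbing the error metric — the CLT naturally controls a smooth ``test-function'' distance, while the PTF counting problem needs control in a non-smooth regime, so the mollification radius, the CLT error $\tau$, and the Carbery--Wright anti-concentration bound all have to be balanced against each other. The degree-$2$ analogues in \cite{DDS13:deg2count,DDS14junta} suggest the shape of the argument, but for general $d$ the eigenvalue/decomposition bookkeeping and the Malliavin--Stein CLT are substantially more delicate.
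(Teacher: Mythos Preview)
Your steps (i)--(v) are essentially the paper's approach: multilinearize, run the decomposition of Section~\ref{sec:decomp} to write $p \approx h(A_1,\dots,A_r)$ with every $A_i$ eigenregular, mollify the sign function, apply the CLT of Theorem~\ref{thm:mainclt} to replace the joint law of $(A_1,\dots,A_r)$ by an $r$-dimensional Gaussian with matching covariance, and then numerically integrate in $O_{d,\eps}(1)$ dimensions. Carbery--Wright is indeed what handles the mollification error. So far so good.

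Your step (vi), however, is both unnecessary and ill-defined. The decomposition in this paper does \emph{not} leave behind an ``irregular part'' on which one must condition and recurse; its whole point (Theorem~\ref{thm:main-decomp}) is that it directly outputs a representation $\tilde p = \sum_q h_q(A_{q,1},\dots,A_{q,m_q})$ in which \emph{every} $A_{q,\ell}$ is already $\beta(\Num+\Coeff)$-eigenregular. There is no decision tree and no restriction of variables in the Gaussian algorithm. More to the point, the restriction idea does not make sense here: as the paper discusses in the introduction, for $d>2$ the ``irregular directions'' that one might try to peel off are themselves Gaussian polynomials of degree $\geq 2$, and a degree-$\geq 2$ Gaussian polynomial is not itself a Gaussian coordinate, so one cannot ``condition'' on it and obtain another polynomial PTF problem of the same shape. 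The DSTW-style regularity-lemma recursion you have in mind is used only later, in Section~\ref{sec:from-Gaussian-to-Boolean}, to pass from $N(0,1)^n$ to $\{-1,1\}^n$; it plays no role in the proof of Theorem~\ref{thm:degd-main-gauss}.

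The real difficulty --- which you touch on but under-specify --- is the quantitative one in Remark~\ref{rem:needfewpoly}: the CLT error scales like $r^2\sqrt{\eta}$, so the eigenregularity $\eta$ of the $A$'s must be driven far below $1/r^2$, where $r$ is the number of pieces the decomposition produces. Naive decompositions make $r \gtrsim 1/\eta$ and are useless. Achieving $\eta \ll 1/r^2$ is exactly what forces the elaborate two-level ({\bf Regularize-One-Wiener}) and then multi-degree ({\bf MultiRegularize-Many-Wieners}) machinery in Section~\ref{sec:decomp}, together with the careful choice of the decreasing sequence $\eta_0 \gg \eta_1 \gg \cdots \gg \eta_K$. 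This, not a critical-index recursion, is the engine that makes the Gaussian case go through. (You also omit the multilinearization preprocessing of Section~\ref{sec:multilinearize}, but that is a minor technicality.)
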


Theorem \ref{thm:main} follows from Theorem \ref{thm:degd-main-gauss} using the
invariance principle of \cite{MOO10} and the
``regularity lemma'' for polynomial threshold functions from
\cite{DSTW:10}.  The arguments that give Theorem \ref{thm:main} from
Theorem \ref{thm:degd-main-gauss} are essentially identical to the ones used
in \cite{DDS13:deg2count}, so we omit them in this extended abstract (see the full version).
In the rest of this introduction
we describe the main ideas behind the proof
of Theorem \ref{thm:degd-main-gauss}; as explained below, there are two
main contributions.

\medskip

\noindent {\bf First contribution:  A new multivariate CLT.} Our first contribution \blue{is}
a new multidimensional central limit theorem
 that we establish for $r$-tuples of degree-$d$ Gaussian polynomials,
i.e. $r$-tuples $(p_1(x),\dots,p_r(x))$ where each $p_i$ is a degree-$d$
polynomial and $x \sim N(0,1)^n$.
This CLT states that if each $p_i$ has ``small eigenvalues''
(as defined at the start of Section \ref{sec:CLT}), then the joint
distribution converges to the multidimensional Normal distribution
${\cal G}$ over $\R^r$ whose mean and covariance match $(p_1,\dots,p_r)$.
The closeness here is with respect to ``test functions''
that have globally bounded second derivatives; see
Theorem \ref{thm:mainclt} for a detailed statement of our CLT.  In Section
\ref{sec:combine} we use tools from
mollification to go from the aforementioned kind of
``closeness'' to the kind of closeness which is required to analyze
polynomial threshold functions.

Comparing with previous work, the degree-2 case \cite{DDS13:deg2count}
required a CLT for a single degree-2 Gaussian polynomial.
The main technical ingredient of the \cite{DDS13:deg2count} proof was
a result of Chatterjee \cite{Chatterjee:09}.
\cite{DDS14junta} established the $d=2$ case of our multidimensional CLT
via a relatively straightforward analysis (requiring just basic linear
algebra) of the central limit theorem from \cite{NPR2010}.  We note that
in the $d=2$ case it is clear what \blue{is} the correct notion of the eigenvalues of
a degree-$2$ polynomial, namely the eigenvalues of the quadratic form.  In contrast, 
it is far \red{from} clear what is the correct notion of the
eigenvalues of a degree-$d$ polynomial, \blue{especially since we require a notion that enables both a CLT and 
a decomposition as described later}.  (We note that the  tensor eigenvalue definitions
that are employed in \cite{FW95,CS13,Latala06} do not appear to be suitable for our purposes.)
Based on discussions with experts
\cite{Latala13,Nourdin13,Ledoux13,Oleszkiewicz13}, even the $d=2$ version of
our multidimensional CLT was not previously known, let alone the far
more general version of the CLT which we establish in this work.

It is instructive to consider our CLT in the context of a result of Latala
\cite{Latala06}, which shows that \red{(a somewhat different notion of)} tensor eigenvalues can be used to bound the growth of
moments of degree-$d$ Gaussian polynomials.  However, the moment bounds that can be obtained from
this approach are too weak to establish asymptotic normality \cite{Latala13}.

Like \cite{DDS14junta}, in this paper we also use the central limit theorem
from \cite{NPR2010} as a starting point.  However,
our subsequent analysis crucially relies on the fact
that there is a geometry-preserving isomorphism between the space of
symmetric  tensors and multivariate Gaussian polynomials.
This allows us to view Gaussian polynomials in terms of the associated
tensors and greatly facilitates the use of language and tools from tensor
algebra.  To \blue{establish} our condition for
asymptotic normality, we make significant use of tensor identities from
Malliavin calculus which were developed in the context of application to
Stein's method (see \cite{NourdinPeccati09,Nourdin-notes,NPR2010}).

\medskip \noindent {\bf Second contribution:  Decomposition.}
The second main contribution of this paper
is a novel decomposition that lets us transform a
multilinear degree-$d$ Gaussian polynomial $p$ into a polynomial of
the form $h(A_1,\dots,A_r)$, where (informally)

\begin{enumerate}

\item $p$ and $h(A_1,\dots,A_r)$ are $\eps$-close (i.e. $\E[p]=\E[h(A_1,\dots,A_r)]$
and $\Var[p - h(A_1,\dots,A-r)] \leq \eps$);

\item For each polynomial $A_i$, all of its eigenvalues are extremely small (at most $\eta$
for some very small $\eta$); and

\item $r=r(\eps,d,\eta)$ is independent of $n$ and depends only on the
approximation parameter $\eps$, the eigenvalue bound $\eta$, and the degree $d$ of $p$.

\end{enumerate}

This decomposition is useful for the following reasons:
Property (1) ensures that the distributions of $p$ and of $h(A_1,\dots,A_r)$ are close in cdf-distance, and thus
to in order to do approximate counting of Gaussian
satisfying assignments for $p$, it suffices to do approximate
counting of Gaussian satisfying assignments for $h(A_1,\dots,A_r)$.
Property (2)
ensures that we may apply our new CLT to the $r$-tuple
of polynomials $A_1,\dots,A_r$, and thus we may approximately count
satisfying assignments to $h(A_1,\dots,A_r) \geq 0$ by approximating
the fraction of assignments that satisfy $h({\cal G}_1,\dots,{\cal G}_r)$
where ${\cal G}=({\cal G}_1,\dots,{\cal G}_r)$ is the multidimensional
Normal distribution given by our CLT.
Finally, by
Property (3), approximating $\Pr[h({\cal G}_1,\dots,{\cal G}_r) \geq 0]$
is a ``constant-dimensional problem'' (independent of $n$) so it is
straightforward for a deterministic algorithm to approximate this
probability in time independent of $n$.

We note that there is a subtlety here which requires significant effort
to overcome.  As we discuss in Remark \ref{rem:needfewpoly}, in order for
our CLT to give a nontrivial bound it must be the case that the eigenvalue
bound $\eta$ is much smaller than $1/r$.  Mimicking decomposition approaches
previously used in literature~\cite{Servedio:07cc, MZ:09, DSTW:10} has
the problem that they will necessarily make $r \ge 1/\eta$, thus
rendering such decompositions useless for our purposes. (One exception is
the decomposition procedure  from \cite{Kane11ccc} where a similar problem
arises, but since the desired target conditions there are different from
ours, that work uses  a different approach to overcome the difficulty; we
elaborate on this below.) In our context, achieving a decomposition such
that $\eta \ll 1/r$ requires ideas that go beyond those used in previous
decompositions, and is responsible for the large \blue{``constant-factor''} overhead (captured by
$O_{d,\eps}(1)$) in the overall running time bound.

\ignore{
By combining our decomposition with our new CLT, we reduce
(in $O_{d,\eps}(1) \cdot \poly(n^d)$ time) the problem
of approximating
$\Pr_{x \sim N(0,1)^n}[p(x) \geq 0]$
to the problem of approximating $\Pr_{y \sim N(\mu,\Sigma)}[y \in \Gamma]$
for a region $\Gamma \in \R^r$ where $N(\mu,\Sigma)$ is some
$r$-dimensional Normal random variable with mean $\mu$ and covariance
matrix $\Sigma.$  The value of $r$ is a (very large) function
of $k,d$ and $1/\eps$ but it is --- crucially --- entirely independent of $n$,
so a straightforward ``brute force'' approach can approximate
the desired probability in time which is independent of $n$.
}

At a very high level our decomposition is
reminiscent of the regularity lemma for degree-$d$
polynomials over $\{-1,1\}^n$ that was given in \cite{DSTW:10}, in that both
procedures break a given degree-$d$ input polynomial into a collection of
``regular'' polynomials, but as we now explain, this resemblance is a
superficial one as there are many significant differences.
First, in the \cite{DSTW:10} setting the given input polynomials are over
$\{-1,1\}^n$ while here the polynomials are over Gaussian space;
this is a major distinction since the geometry of Gaussian space
plays a fundamental role in our proofs and techniques. Second, the notion
of ``regularity'' that is used is quite different between the two works;
in \cite{DSTW:10} a polynomial is regular if all variable influences
are small whereas here a polynomial is ``regular''
if all its ``tensor eigenvalues''
are small.  (We subsequently refer to this new notion of regularity
which is introduced and used in our work as \emph{eigenregularity.})
Third, in \cite{DSTW:10} each ``atomic step'' of
the decomposition is simply to restrict an individual
input variable to $+1$ or $-1$, whereas in this paper the atomic
``decomposition step'' now involves an eigenvalue computation
(to identify two lower-degree polynomials whose product is nontrivially
correlated with the polynomial being decomposed).
  Finally,
the \cite{DSTW:10} decomposition produces a decision tree over
input variables with restricted polynomials at the leaves, whereas in this
paper we produce a \emph{single degree-$d$ polynomial} $h(A_1,\dots,A_r)$
as the output of our decomposition.

Our decomposition has some elements that are reminiscent
of a decomposition procedure described in \cite{Kane11ccc}.
Kane's procedure, like ours, breaks a degree-$d$ polynomial
into a sum of product of lower degree polynomials.
However, there are significant differences between the procedures.
Roughly speaking,
Kane's decomposition starts with a polynomial $p$ and is aimed at upper
bounding the higher moments of the resulting constituent polynomials,
whereas our decomposition is aimed at upper bounding the eigenregularity
(magnitude of the largest eigenvalues) of the constituent polynomials.
To make sure that the number $r$ of constituent polynomials compares
favorably with the moment bounds, Kane divides these polynomials
into several classes such that the number of polynomials in any class
compares favorably with the moment bounds in that class (and some desired
relation holds between the number of polynomials in the different classes).
Instead, in our decomposition procedure, we want $r$ to compare
favorably with the eigenvalue bound $\eta$; given this requirement,
it does not seem possible to mimic Kane's approach of splitting the
constituent polynomials into several classes.
Instead, through a rather elaborate decomposition procedure,
we show that while it may not be possible to split the original
polynomial $p$ in a way so that $r$ compares favorably with $\eta$,
it is always possible to (efficiently) find a polynomial $\tilde{p}$
such that $p- \tilde{p}$ has small variance, and $\tilde{p}$ can be
decomposed so that the number of constituent polynomials compare
favorably with the eigenregularity parameter.

We note  that  it is possible for the polynomial $p-\tilde{p}$ to have small
variance but  relatively huge moments.  Thus our decomposition procedure
is not effective for the approach in \cite{Kane11ccc} which is based on
bounding moments. However, because $p-\tilde{p}$ has small variance,
the distributions of $p$ and $\tilde{p}$ are indeed close in cdf distance,
which suffices for our purposes. Thus our decomposition procedure
should be viewed as incomparable to that of \cite{Kane11ccc}.

We also remark that our decomposition is significantly more
involved than the decompositions used in \cite{DDS13:deg2count, DDS14junta}.
To see how this additional complexity arises, note that both these papers
need to decompose either a single degree-2 Gaussian polynomial or a
set of such polynomials; for simplicity assume we are
dealing with a single degree-$2$ polynomial $p$.
Then the \cite{DDS13:deg2count} decomposition procedure splits $p$ into a
sum of products of linear functions plus a degree-$2$ polynomial
which has small eigenvalues. Crucially, since a linear function of Gaussians
is itself a Gaussian, this permits a change of basis in which these
linear functions may be viewed as the new variables.
By ``restricting'' these new variables, one is essentially
left with a single degree-$2$ polynomial with a small eigenvalue.
In contrast, if $p$ has degree $d$ greater than $2$, then the
\cite{DDS13:deg2count} decomposition will split $p$ into a sum of products of
pairs of lower degree Gaussian polynomials plus a polynomial which has
small eigenvalues.
However, if $d>2$ then some or all of the new constituent lower degree
polynomials may have degree greater than $1$.
Since a polynomial of degree $d>1$ cannot itself be viewed as a Gaussian,
this precludes the possibility of ``restricting" this polynomial
as was done in \cite{DDS13:deg2count}.  Thus, one has to resort to an
iterative decomposition,
which introduces additional complications some of which were discussed above.

\ifnum\confversion=1

\medskip \noindent {\bf Organization.}
Because of space constraints proofs are omitted in this extended
abstract (see the full version which follows for all proofs).
In Section \ref{sec:multilinearize} we show that it is sufficient
to give an algorithm for deterministic approximate counting of degree-$d$
polynomials in the special case where all the polynomials are multilinear.
In Section \ref{sec:CLT} we state our new CLT for $k$-tuples of degree-$d$
Gaussian polynomials with ``small eigenvalues.''
In Section \ref{sec:decomp} we describe our decomposition procedure
that can be used to \blue{decompose a degree-$d$
multilinear polynomial over Gaussian inputs into an essentially
equivalent polynomial  that has} a highly
structured ``special form.''
In Section \ref{sec:combine} we show how the CLT from
Section \ref{sec:CLT} can be combined with the highly structured
polynomial from Section \ref{sec:decomp} to prove
Theorem \ref{thm:degd-main-gauss}.
We close in Section \ref{sec:moments} by briefly describing how
Theorem \ref{thm:main} can be applied to give the first deterministic
fixed-parameter tractable algorithm for the problem of
approximating the $k$-th absolute moment of a degree-$d$
polynomial over $\{-1,1\}^n.$

\fi

\ifnum\confversion=0

\medskip \noindent {\bf Organization.}
We begin in Section \ref{sec:prelim} by recording various useful
preliminaries, including some basics from the study of isonormal
Gaussian processes (in the context of finite-degree Gaussian polynomials)
that are required for the rest of the paper.
In Section \ref{sec:multilinearize} we
show that it is sufficient
to give an algorithm for deterministic approximate counting of degree-$d$
polynomials in the special case where all the polynomials are multilinear.
In Section \ref{sec:CLT} we prove our new CLT for $k$-tuples of degree-$d$
Gaussian polynomials with ``small eigenvalues.''
In Section \ref{sec:decomp} we describe our decomposition procedure
that can be used to decompose a $k$-tuple of degree-$d$
multilinear polynomials over Gaussian inputs into an essentially
equivalent $k$-tuple of polynomials that have a highly
structured ``special form.''
In Section \ref{sec:combine} we show how the CLT from
Section \ref{sec:CLT} can be combined with the highly structured
polynomials from Section \ref{sec:decomp} to prove
Theorem \ref{thm:degd-main-gauss}.
In Section \ref{sec:from-Gaussian-to-Boolean} we sketch how
Theorem \ref{thm:main} follows from
Theorem \ref{thm:degd-main-gauss}.
We close in Section \ref{sec:moments} by briefly describing how
Theorem \ref{thm:main} can be applied to give the first deterministic
fixed-parameter tractable algorithm for the problem of
multiplicatively approximating the $k$-th absolute moment of a degree-$d$
polynomial over $\{-1,1\}^n.$

\fi


\ifnum\confversion=0


\ifnum\confversion=0

\section{Preliminaries} \label{sec:prelim}

\subsection{Basic Definitions, Notation and
Useful Background}

\ignore{
\paragraph{Kolmogorov distance between $\R^r$-valued random variables.}
It will be convenient for us to
use a natural $r$-dimensional generalization of the Kolmogorov
distance between two real-valued random variables which we now describe.
Let $X=(X_1,\dots,X_r)$
and $Y=(Y_1,\dots,Y_r)$ be two $\R^r$-valued random variables.  We define the
\emph{$r$-dimensional Kolmogorov distance} between $X$ and $Y$ to be
\[
\dk(X,Y) = \sup_{(\theta_1,\dots,\theta_{{r}}) \in \R^{{r}}}
\left| \Pr [\forall \ i \in [{{r}}] \ X_i \leq \theta_i]
 - \Pr [\forall \ i \in [{{r}}] \ Y_i \leq \theta_i]  \right|.
\]

\paragraph{Other Notation and terminology}
}

For $A$ a real $N \times N$
matrix we write $\|A\|_2$ to denote the operator
norm
$
\|A\|_2 = \max_{0 \neq x \in \R^N} {\frac {\|Ax\|_2}{\|x\|_2}}.
$
Throughout the paper we write
$\lambda_{\max}(A)$ to denote the largest-magnitude eigenvalue of a
symmetric matrix $A$.

We will need the following standard concentration bound for
low-degree polynomials over independent  Gaussians.

\begin{theorem}[``degree-$d$ Chernoff bound'',  \cite{Janson:97}] \label{thm:dcb}
Let $p: \R^n \to \R$ be a degree-$d$ polynomial. For any
$t > e^d$, we have
\[
\Pr_{x \sim N(0,1)^n}[
|p(x) - \E[p(x)]| > t  \cdot \sqrt{\Var(p(x))} ] \leq
{d e^{-\Omega(t^{2/d})}}.
\]
\end{theorem}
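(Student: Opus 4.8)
The plan is to prove this through the classical ``high moments via hypercontractivity'' argument. The key analytic ingredient is a moment bound: if we write $p = \sum_{k=0}^{d} p_k$, where $p_k$ is the projection of $p$ onto the $k$-th Wiener chaos (equivalently, the degree-exactly-$k$ part of the Hermite expansion of $p$), then for every real $q \ge 2$,
\[
\|p - \E[p]\|_q \;\le\; (q-1)^{d/2}\,\|p - \E[p]\|_2 \;=\; (q-1)^{d/2}\sqrt{\Var(p)}.
\]
I would establish this in the usual way from Nelson's hypercontractive inequality for the Ornstein--Uhlenbeck semigroup $(T_\rho)_{\rho \in [0,1]}$, which says $\|T_\rho f\|_q \le \|f\|_2$ whenever $\rho \le 1/\sqrt{q-1}$, combined with the fact that $T_\rho$ acts on the $k$-th chaos as multiplication by $\rho^k$. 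Concretely, set $\rho = 1/\sqrt{q-1}$ and $g := \sum_{k=1}^d \rho^{-k} p_k$; then $T_\rho g = p - \E[p]$, so $\|p - \E[p]\|_q = \|T_\rho g\|_q \le \|g\|_2$, and $\|g\|_2^2 = \sum_{k=1}^d \rho^{-2k}\|p_k\|_2^2 \le \rho^{-2d}\,\|p - \E[p]\|_2^2 = (q-1)^d\,\Var(p)$.

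Given the moment bound, the tail estimate follows from Markov's inequality applied to the random variable $|p(x) - \E[p]|^q$: for any real $q \ge 2$,
\[
\Pr_{x \sim N(0,1)^n}\!\big[\,|p(x) - \E[p]| > t\sqrt{\Var(p)}\,\big]
\;\le\; \frac{\E\big[|p - \E[p]|^q\big]}{t^q\,\Var(p)^{q/2}}
\;\le\; \left(\frac{(q-1)^{d/2}}{t}\right)^{\!q}.
\]
I would then optimize over $q$. Taking $q = 1 + t^{2/d}/e$ makes $(q-1)^{d/2} = t\cdot e^{-d/2}$, so the parenthesized quantity equals $e^{-d/2}$ and the bound becomes $e^{-dq/2} \le \exp(-\tfrac{d}{2e}\,t^{2/d}) = e^{-\Omega(d\,t^{2/d})}$, which is even a little stronger than what is claimed. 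The hypothesis $t > e^d$ forces $t^{2/d} > e^2$, hence $q - 1 > e > 1$, so this choice of $q$ lies in the admissible range $q \ge 2$ where the moment inequality holds; no integrality rounding is needed, since hypercontractivity yields the moment bound for all real $q \ge 2$.

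I do not anticipate a genuine obstacle here: the two points needing a touch of care are (i) invoking Gaussian hypercontractivity for a \emph{sum} of chaoses of orders $0$ through $d$ rather than a single homogeneous chaos (handled by the $\rho^{-k}$-weighting trick above, which only costs the factor $(q-1)^{d/2}$ corresponding to the top degree), and (ii) verifying that the regime $t > e^d$ is precisely what keeps the optimizing exponent in the valid range. Any remaining slack, including the harmless leading factor of $d$ in the statement (which one gets for free if one prefers to argue degree-by-degree and union bound over the $d+1$ chaos components), is absorbed into the $\Omega(\cdot)$. This is, in essence, the argument given in \cite{Janson:97}.
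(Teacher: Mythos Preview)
Your proof is correct. The paper does not actually prove this statement: it is quoted as a standard concentration bound and attributed to \cite{Janson:97} without argument, so there is no ``paper's own proof'' to compare against. That said, your hypercontractivity-plus-Markov argument is exactly the standard route to this inequality (and is essentially what underlies the cited reference), and your parameter choice $q = 1 + t^{2/d}/e$ together with the check that $t > e^d$ forces $q > 2$ is clean. The bound you obtain, $e^{-\Omega(d\,t^{2/d})}$, is indeed at least as strong as the stated $d\,e^{-\Omega(t^{2/d})}$.
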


\noindent We will also use the following anti-concentration bound for
degree-$d$ polynomials over Gaussians:

\begin{theorem}[\cite{CW:01}]
\label{thm:cw} Let $p: \R^n \to \R$ be a degree-$d$ polynomial
that is not identically 0.  Then for all $\eps>0$ and all
$\theta \in \R$, we have
\[
\Pr_{x \sim N(0,1)^n}\left[|p(x) - \theta| < \eps \sqrt{\Var(p)}
\right] \le O(d\eps^{1/d}).
\]
\end{theorem}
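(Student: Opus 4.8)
This is (a standard reformulation of) the Carbery--Wright anti-concentration inequality, so the plan is to invoke \cite{CW:01} rather than reprove it, with only a routine reduction on our side. First note we may assume $\Var(p)>0$, since otherwise $p-\theta$ is a.s.\ constant and the left-hand side is the probability of the empty event. Put $q \eqdef p-\theta$: this is a degree-$d$ polynomial, it is non-constant and hence not identically $0$, and $\Var(q)=\Var(p)$. Since $\E[q^2]=\Var(q)+(\E[q])^2\ge \Var(q)$, we have $\eps\sqrt{\Var(p)}\le \eps\sqrt{\E[q^2]}$, so the event $\{|q|<\eps\sqrt{\Var(p)}\}$ is contained in $\{|q|<\eps\sqrt{\E[q^2]}\}$. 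Rescaling $q$ by $\sqrt{\E[q^2]}$, it therefore suffices to prove that for every not-identically-zero degree-$d$ polynomial $q$ with $\E_{x\sim N(0,1)^n}[q(x)^2]=1$ one has $\Pr_{x\sim N(0,1)^n}[|q(x)|<\eps]\le O(d\,\eps^{1/d})$, which is precisely the form of the Carbery--Wright inequality proved in \cite{CW:01} (in fact for arbitrary log-concave reference measures, not just Gaussian).

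For completeness, the structure of the \cite{CW:01} argument is as follows. After normalizing (say $\|q\|_{L^1(N(0,1)^n)}=1$), one proceeds by induction on the degree. In the base case $d=1$, $q$ is affine, so its pushforward under $N(0,1)^n$ is a one-dimensional Gaussian --- a log-concave density which, once its $L^1$ mass is normalized, is bounded by an absolute constant, giving the small-ball bound $O(\eps)$. The inductive step bounds the small-ball probability of $q$ at scale $\eps$ in terms of (i) the small-ball probability of a suitable directional derivative of $q$, a degree-$(d-1)$ polynomial, and (ii) a term controlling the region where $\nabla q$ is small, using the log-concavity of the Gaussian (via Pr\'ekopa--Leindler / Brascamp--Lieb-type inequalities) to limit how long $q$ can linger near a fixed value along a typical line. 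Iterating $d$ times produces the exponent $1/d$ together with a constant that grows only linearly in $d$.

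The ``main obstacle'' is essentially that Carbery--Wright is itself a nontrivial theorem we are borrowing: obtaining the \emph{sharp} joint dependence on $\eps$ and $d$ is the crux. The one-line heuristic (a degree-$d$ univariate polynomial has at most $d$ roots, and near a root of multiplicity $m$ the set $\{|q|<\eps\}$ has length $\lesssim \eps^{1/m}$, with the worst case $q=t^d$) already predicts the $\eps^{1/d}$ scaling, but turning this into a rigorous $n$-dimensional statement with constant $O(d)$ requires the log-concavity machinery of \cite{CW:01}; cruder tools such as hypercontractivity plus Paley--Zygmund only yield anti-concentration at a single fixed scale and do not recover the $\eps^{1/d}$ decay. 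On our end the only genuine step is the $\theta$-and-$\Var$ reduction of the first paragraph, everything else being a direct quotation of \cite{CW:01}.
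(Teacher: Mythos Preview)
Your proposal is correct and matches the paper's treatment: the paper simply states this as a cited result from \cite{CW:01} with no proof, so invoking the reference (with your routine $\Var$-to-$L^2$ reduction) is exactly right. The additional sketch of the Carbery--Wright argument is more than the paper provides.
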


On several occasions we will require the following lemma,
which provides a sufficient condition for two degree-$d$ Gaussian polynomials
to have approximately the same fraction of satisfying assignments:

\begin{lemma} \label{lem:small-var-diff-kol-close}
Let $a(x),b(x)$ be degree-$d$ polynomials over $\R^n$.  For $x \sim N(0,1)^n$,
if $\E[a(x)-b(x)]=0$, $\Var[a]=1$ and $\Var[a-b] \leq (\tau/d)^{3d}$,
then $\Pr_{x \sim N(0,1)^n}[\sign(a(x)) \neq \sign(b(x))]\leq O(\tau).$
\end{lemma}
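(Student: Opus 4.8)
The plan is to show that $a$ and $b$ agree in sign except on the event that $a$ is close to zero (where the sign of $a$ is "fragile") or that $a-b$ is abnormally large (where $b$ can differ substantially from $a$). Both events have small probability: the first by anti-concentration of the degree-$d$ Gaussian polynomial $a$ (Theorem \ref{thm:cw}), the second by the degree-$d$ Chernoff/concentration bound (Theorem \ref{thm:dcb}) applied to $a-b$, whose variance is tiny by hypothesis.

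Concretely, set $\delta = \Var[a-b]^{1/2} \le (\tau/d)^{3d/2}$ and pick a threshold, say $T = \delta^{1/2}$ (so that $\delta/T = \delta^{1/2}$ is still small, while $T$ itself is small). First I would write
\[
\Pr[\sign(a(x)) \neq \sign(b(x))] \;\le\; \Pr\big[|a(x)| \le T\big] \;+\; \Pr\big[|a(x)| > T \text{ and } \sign(a(x)) \neq \sign(b(x))\big].
\]
For the first term, since $\Var[a]=1$ and $a$ is a nonzero degree-$d$ polynomial, Theorem \ref{thm:cw} gives $\Pr[|a(x)| \le T] = \Pr[|a(x) - 0| < T\sqrt{\Var(a)}] \le O(d\, T^{1/d})$. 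For the second term, note that if $|a(x)| > T$ but $a(x)$ and $b(x)$ have opposite signs, then $|a(x) - b(x)| \ge |a(x)| > T$; hence this term is at most $\Pr[|a(x)-b(x)| > T]$. Since $\E[a(x)-b(x)]=0$ and $a-b$ is a degree-$d$ polynomial with standard deviation $\delta$, I would apply Theorem \ref{thm:dcb} with deviation parameter $t = T/\delta = \delta^{-1/2}$: as long as $t > e^d$ (which holds because $\delta$ is super-exponentially small in $d$ by the hypothesis $\delta \le (\tau/d)^{3d/2}$, assuming $\tau$ is smaller than some absolute constant times $1/d$ — the regime where the claimed bound $O(\tau)$ is meaningful), we get $\Pr[|a(x)-b(x)| > T] \le d\, e^{-\Omega(t^{2/d})} = d\, e^{-\Omega(\delta^{-1/d})}$, which is far smaller than $\tau$.

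It remains to check the bookkeeping on the exponents so that the dominant term $O(d\, T^{1/d})$ is actually $O(\tau)$. With $T = \delta^{1/2}$ and $\delta \le (\tau/d)^{3d}$ — wait, re-reading: the hypothesis is $\Var[a-b] \le (\tau/d)^{3d}$, so $\delta = \Var[a-b]^{1/2} \le (\tau/d)^{3d/2}$, and then $T^{1/d} = \delta^{1/(2d)} \le (\tau/d)^{3/2}$, giving $d\, T^{1/d} \le d (\tau/d)^{3/2} = \tau^{3/2}/d^{1/2} \le \tau$ for $\tau$ bounded by an absolute constant. (Any slack in the exponent $3d$ versus what is strictly needed just makes the other term even smaller.) I would then combine the two bounds to conclude $\Pr[\sign(a(x)) \neq \sign(b(x))] \le O(\tau)$. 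The only mild subtlety — and the step I would be most careful about — is the degenerate case where $a-b$ is identically zero (so Theorem \ref{thm:dcb} does not apply and $\delta=0$); there the second term vanishes outright and only the anti-concentration bound is needed, so one handles it separately or simply notes $T$ can be taken as any small positive constant in that case. Also worth a sentence: Theorem \ref{thm:dcb} requires $t > e^d$, so one should record explicitly that the lemma is being invoked in the regime $\tau \le \tau_0(d)$ for a suitable threshold, which is exactly the regime in which an $O(\tau)$ conclusion is non-vacuous.
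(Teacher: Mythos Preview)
Your approach is essentially identical to the paper's: decompose into the events $\{|a(x)|\le T\}$ (handled by Carbery--Wright, Theorem~\ref{thm:cw}) and $\{|a(x)-b(x)|>T\}$ (handled by the degree-$d$ tail bound, Theorem~\ref{thm:dcb}), then take a union bound.

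There is, however, an arithmetic slip in your exponent bookkeeping that makes your particular choice of threshold $T=\delta^{1/2}$ fall short. From $\delta \le (\tau/d)^{3d/2}$ you get
\[
T^{1/d}=\delta^{1/(2d)}\le (\tau/d)^{3/4},
\]
not $(\tau/d)^{3/2}$ as you wrote. With the correct exponent the anti-concentration term becomes $d\cdot T^{1/d}\le d^{1/4}\tau^{3/4}$, which is \emph{larger} than $\tau$ for small $\tau$, so the conclusion $O(\tau)$ does not follow from this threshold. The fix is immediate: simply take $T=(\tau/d)^d$ (the paper's choice). Then $d\cdot T^{1/d}=\tau$ exactly, and on the other side $T/\sqrt{\Var[a-b]}\ge (\tau/d)^{d}/(\tau/d)^{3d/2}=(d/\tau)^{d/2}$, which comfortably exceeds $e^d$ in the regime where the bound is nontrivial, so Theorem~\ref{thm:dcb} gives $\Pr[|a-b|>T]\le O(\tau)$ with room to spare. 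Everything else in your write-up (including the remark about the degenerate case $a\equiv b$ and the need to assume $\tau$ below an absolute constant) is fine and matches the paper.
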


\begin{proof}
The argument is a straightforward consequence of Theorems \ref{thm:dcb}
and \ref{thm:cw}.
First note that we may assume $\tau$ is at most some sufficiently small
positive absolute
constant since otherwise the claimed bound is trivial.
By Theorem~\ref{thm:cw}, we have
$
\Pr[|a(x) |\le (\tau/d)^{d} ] \leq O(\tau).
$  Since
$\Var[a-b] \le (\tau/d)^{3d}$ and $a-b$ has mean 0,
applying the tail bound given by Theorem~\ref{thm:dcb}, we get
 $
 \Pr[ |a(x)-b(x)| > (\tau/d)^{d} ]  \leq O(\tau)
 $
(with room to spare, recalling that $\tau$ is at most some absolute constant).  Since $\sign(a(x))$ can disagree with $\sign(b(x))$ only if
$|a(x) |\le (\tau/d)^{d} $ or
$|a(x)-b(x)| > (\tau/d)^{d}$, a union bound gives that
$
\Pr_{x \sim N(0,1)^{n'}}[\sign(a(x)) \neq \sign(b(x)) ] = O(\tau),
$
and the lemma is proved.
\end{proof}

\subsection{A linear algebraic perspective}

We will often view polynomials over $N(0,1)^n$ as elements
from a linear vector space.  In this subsection we make this correspondence
explicit and establish some simple but useful linear algebra background
results.  In particular, we consider the finite-dimensional real vector space of all degree $d$ polynomials over $n$ variables.
This vector space is equipped with an inner product defined as follows:
for polynomials $P,Q$, we have $\langle P, Q \rangle = \mathbf{E}_{x \sim
N(0,1)^n} [ P(x) \cdot Q(x)]$.
For the rest of the paper we let $V$ denote this vector space.
\ignore{With this inner product, we now have a Hilbert space.
We will  sometimes be using Hilbert space terminology in our
subsequent discussion.
}

We will need to quantify the linear dependency between polynomials;
it will be useful for us to do this in the following way.
Let $W$ be a subspace of $V$ and $v \in V$.
We write $v^{\parallel W}$ to denote  the projection of $v$ on $W$
and $v^{\perp W}$ to denote the projection of $v$ on the space orthogonal
to $W$, so $v = v^{\parallel W} + v^{\perp W}$.
Equipped with this notation, we define the following
(somewhat non-standard) notion of linear dependency
for an ordered set of vectors:

\begin{definition}\label{def:vector-space}
Let $V$ be defined as above and let $\mathcal{A} = \{v_1, \ldots, v_m \}$
be an ordered set of unit vectors belonging to $V$.
Define $\mathcal{A}_i = \{v_1, \ldots, v_i \}$ and define $V^{i}$ to be
the linear span of $\mathcal{A}_i$. $\mathcal{A}$
is said to be \emph{$\zeta$-far from being linearly dependent}
if for every $1<i \le m$,
we have $\Vert (v_i)^{\perp V_{i-1}} \Vert_2 \ge \zeta$.
\end{definition}

Note that viewing the vector space $V$ as $\mathbb{R}^{t}$ (for some $t$),
we can associate a matrix $M_{\mathcal{A}} \in \mathbb{R}^{t \times m}$
with $\mathcal{A}$ where the $i^{th}$ column of $M_{\mathcal{A}}$ is $v_i$.
The smallest non-zero singular value of this matrix is  another measure
of the dependency of the vectors in
${\cal A}$.  Observe that this value (denoted by
$\sigma_{\min}(M_{\mathcal{A}})$) can alternately be characterized as
$$\sigma_{\min}(M_{\mathcal{A}}) = \inf_{\alpha \in \mathbb{R}^m:
\Vert \alpha \Vert_2=1}  \left\Vert \sum_{i=1}^m \alpha_i v_i
\right\Vert_2. $$

We next have the following lemma which shows that if $\mathcal{A}$ is $\zeta$-far from being linearly dependent, then the smallest non-zero singular value of $M_{\mathcal{A}}$ is noticeably large.
\begin{lemma}\label{lem:small-singular}
If $\mathcal{A}$ is $\zeta$-far from being linearly dependent (where $\zeta \le 1/4$), then $\sigma_{\min}(M_{\mathcal{A}}) \ge \zeta^{2m-2}$.
\end{lemma}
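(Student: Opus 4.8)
The plan is to prove the bound by induction on $m$, in fact establishing the stronger pointwise statement that $\big\| \sum_{i=1}^m \alpha_i v_i \big\|_2 \ge \zeta^{2m-2}$ for \emph{every} unit vector $\alpha \in \mathbb{R}^m$; taking the infimum over all such $\alpha$ and invoking the variational characterization $\sigma_{\min}(M_{\mathcal A}) = \inf_{\|\alpha\|_2 = 1} \big\| \sum_i \alpha_i v_i \big\|_2$ recorded above then yields the lemma. Implicit here is that the hypothesis forces $v_1,\dots,v_m$ to be linearly independent, so that $\sigma_{\min}(M_{\mathcal A})$ is genuinely the smallest \emph{nonzero} singular value: if $\sum_i \alpha_i v_i = 0$ with $\alpha \neq 0$ and $i^\ast$ is the largest index with $\alpha_{i^\ast} \neq 0$, then $i^\ast > 1$ would place $v_{i^\ast}$ in $V^{i^\ast - 1}$, contradicting $\|(v_{i^\ast})^{\perp V^{i^\ast - 1}}\|_2 \ge \zeta > 0$, while $i^\ast = 1$ would force $v_1 = 0$, contradicting $\|v_1\|_2 = 1$. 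The base case $m = 1$ is immediate since then $M_{\mathcal A} = (v_1)$ and $\|v_1\|_2 = 1 = \zeta^0$.

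For the inductive step, assume $m \ge 2$ and fix a unit vector $\alpha \in \mathbb{R}^m$; write $w := \sum_{i=1}^m \alpha_i v_i$ and $u := (v_m)^{\perp V^{m-1}}$, so that $\|u\|_2 \ge \zeta$ by hypothesis, $\langle v_i, u \rangle = 0$ for $i < m$, and $\langle v_m, u \rangle = \|u\|_2^2$. I would distinguish two cases according to the size of $|\alpha_m|$, with the threshold $\zeta^{2m-3}$ chosen precisely so that both cases deliver $\zeta^{2m-2}$. If $|\alpha_m| \ge \zeta^{2m-3}$, then projecting $w$ onto $u / \|u\|_2$ and using Cauchy--Schwarz gives $\|w\|_2 \ge |\langle w, u\rangle| / \|u\|_2 = |\alpha_m| \, \|u\|_2 \ge \zeta^{2m-3} \cdot \zeta = \zeta^{2m-2}$.

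If instead $|\alpha_m| < \zeta^{2m-3}$, I would invoke the inductive hypothesis for $\mathcal A_{m-1} = \{v_1,\dots,v_{m-1}\}$, which is again $\zeta$-far from being linearly dependent; since the resulting lower bound $\sigma_{\min}(M_{\mathcal A_{m-1}}) \ge \zeta^{2m-4}$ is positive, $M_{\mathcal A_{m-1}}$ is injective, so $\|M_{\mathcal A_{m-1}}\beta\|_2 \ge \zeta^{2m-4}\|\beta\|_2$ for every $\beta \in \mathbb{R}^{m-1}$. Applying this with $\beta = (\alpha_1,\dots,\alpha_{m-1})$, for which $\|\beta\|_2 = \sqrt{1 - \alpha_m^2}$, the triangle inequality gives
\[
\|w\|_2 \;\ge\; \Big\| \sum_{i=1}^{m-1} \alpha_i v_i \Big\|_2 - |\alpha_m| \, \|v_m\|_2 \;\ge\; \zeta^{2m-4}\sqrt{1 - \alpha_m^2} \;-\; \zeta^{2m-3}.
\]
Since $m \ge 2$ forces $\alpha_m^2 < \zeta^{4m-6} \le \zeta^2 \le 1/16$, we have $\sqrt{1 - \alpha_m^2} \ge 1 - \zeta^2$, so the right-hand side is at least $\zeta^{2m-4}(1 - \zeta^2 - \zeta)$; and the elementary inequality $1 - \zeta - 2\zeta^2 \ge 0$, valid for $0 < \zeta \le 1/4$ --- this is where the hypothesis $\zeta \le 1/4$ enters --- gives $1 - \zeta^2 - \zeta \ge \zeta^2$, whence $\|w\|_2 \ge \zeta^{2m-4} \cdot \zeta^2 = \zeta^{2m-2}$, completing the induction.

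The argument uses nothing about the ambient space $V$ beyond its inner-product structure, so it is purely linear-algebraic, and amounts to a Gram--Schmidt / orthogonal-complement manipulation packaged as a recursion that peels off one vector at a time, each peel costing a factor of $\zeta^2$. The only step requiring genuine care is the bookkeeping in the second case: choosing the $|\alpha_m|$-threshold so the two cases match up exactly, and verifying the scalar inequality $1 - \zeta - 2\zeta^2 \ge 0$ that makes the error term $\zeta^{2m-3}$ from the triangle inequality negligible against the main term.
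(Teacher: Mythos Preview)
Your proof is correct and follows essentially the same approach as the paper: induction on $m$ with a case split at the threshold $|\alpha_m| = \zeta^{2m-3}$, handling the large-$|\alpha_m|$ case by projecting onto $(v_m)^{\perp V^{m-1}}$ and the small-$|\alpha_m|$ case via the triangle inequality and the inductive hypothesis applied to $\mathcal{A}_{m-1}$. Your treatment is in fact slightly more careful than the paper's---you explicitly verify linear independence so that $\sigma_{\min}$ really is the smallest nonzero singular value, and you track the $\sqrt{1-\alpha_m^2}$ factor precisely rather than immediately bounding it by $1/2$---but the structure and the key threshold are identical.
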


\begin{proof}
We will prove this by induction on $m$, by proving
a lower bound on $\Vert \sum_{i=1}^m \alpha_i v_i \Vert_2$
for any unit vector $\alpha \in \mathbb{R}^m$.
For $m=1$, the proof is obvious by definition. For the induction step,
observe that
$$
\left\Vert \sum_{i=1}^m \alpha_i v_i \right\Vert_2 \ge |\alpha_m| \cdot \left
\Vert v_m^{ V_{m-1}} \right\Vert_2
$$
where we use the notation from Definition~\ref{def:vector-space}. If $|\alpha_m| \ge \zeta^{2m-3}$, then we get the stated bound on  $\Vert \sum_{i=1}^m \alpha_i v_i \Vert_2$. In the other case, since $|\alpha_m| < \zeta^{2m-3}$, we
have
$$
\left \Vert \sum_{i=1}^m \alpha_i v_i \right \Vert_2
\ge \left \Vert \sum_{i=1}^{m-1} \alpha_i v_i \right \Vert_2 - |\alpha_m|
\ge  \left \Vert \sum_{i=1}^{m-1} \alpha_i v_i \right \Vert_2 - \zeta^{2m-3}.
$$
However, by the induction hypothesis, we get
$$
\left \Vert \sum_{i=1}^{m-1} \alpha_i v_i \right \Vert_2 \ge (1-\zeta^{2m-3})
\zeta^{2m-4} \ge \frac{\zeta^{2m-4}}{2}.
$$
Thus, $\Vert \sum_{i=1}^m \alpha_i v_i \Vert_2 \ge \zeta^{2m-4}/2  - \zeta^{2m-3} \ge \zeta^{2m-2}$
(provided $\zeta \le 1/4$).
\end{proof}

The next simple claim says that if $\mathcal{A}$ is $\zeta$-far from being
linearly dependent and $v$ lies in the linear span of $\mathcal{A}$, then
we can upper bound the size of the coefficients used to represent $v$.
\begin{claim} \label{claim:coeff-bound}
Let $v$ be a unit vector which lies in the span of $\mathcal{A}$ and let $\mathcal{A}$ be $\zeta$-far from being linearly dependent. Then, if $v = \sum_{i=1}^m \beta_i v_i$ is  the unique representation of $v$ as a linear combination of $v_i$'s, we have $\sqrt{\sum_{i=1}^m \beta_i^2} \le (1/\zeta)^{2m-2}$.
\end{claim}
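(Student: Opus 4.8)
The plan is to obtain the claim as a one-line consequence of \lref{small-singular} together with the variational description of $\sigma_{\min}$. Write $\beta = (\beta_1,\dots,\beta_m)\in\R^m$, so that $v = M_{\mathcal{A}}\beta$, where $M_{\mathcal{A}}$ is the matrix with $i$-th column $v_i$ as in the discussion preceding \dref{vector-space}. The first thing I would check is that the vectors $v_1,\dots,v_m$ are genuinely linearly independent: since $\mathcal{A}$ is $\zeta$-far from being linearly dependent with $\zeta>0$, we have $\|(v_i)^{\perp V_{i-1}}\|_2 \ge \zeta > 0$ for every $1<i\le m$, so $v_i\notin V_{i-1}$; hence the representation $v=\sum_i\beta_i v_i$ is unique (as the statement already asserts) and $M_{\mathcal{A}}$ has trivial kernel.

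Next I would use the characterization $\sigma_{\min}(M_{\mathcal{A}}) = \inf_{\|\alpha\|_2=1}\|\sum_i \alpha_i v_i\|_2$ recorded in the excerpt. By homogeneity (apply the infimum bound to $\beta/\|\beta\|_2$ when $\beta\ne 0$; the case $\beta=0$ is trivial), this gives $\|\sum_i \beta_i v_i\|_2 \ge \sigma_{\min}(M_{\mathcal{A}})\cdot\|\beta\|_2$ for every $\beta\in\R^m$. Specializing to our $\beta$ and using $\|v\|_2=1$,
\[
1 \;=\; \|v\|_2 \;=\; \Bigl\|\sum_{i=1}^m \beta_i v_i\Bigr\|_2 \;\ge\; \sigma_{\min}(M_{\mathcal{A}})\cdot\sqrt{\sum_{i=1}^m \beta_i^2}.
\]

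Finally I would invoke \lref{small-singular}, which (in the regime $\zeta\le 1/4$, the intended hypothesis here) yields $\sigma_{\min}(M_{\mathcal{A}}) \ge \zeta^{2m-2}$. Combining this with the displayed inequality gives $\sqrt{\sum_{i=1}^m\beta_i^2}\le 1/\sigma_{\min}(M_{\mathcal{A}}) \le \zeta^{-(2m-2)} = (1/\zeta)^{2m-2}$, which is exactly the bound claimed.

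There is essentially no obstacle: all of the work has already been done in \lref{small-singular}, and this claim is just its translation into a statement about representation coefficients. The only points needing a word of care are (a) verifying that the $v_i$ are linearly independent, so that $M_{\mathcal{A}}$ has trivial kernel and the singular-value bound $\|M_{\mathcal{A}}\beta\|_2\ge\sigma_{\min}(M_{\mathcal{A}})\|\beta\|_2$ is valid for all $\beta\in\R^m$ rather than only for $\beta$ orthogonal to a kernel; and (b) keeping track of the $\zeta\le 1/4$ convention inherited from \lref{small-singular}.
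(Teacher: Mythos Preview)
Your proof is correct and follows essentially the same approach as the paper: normalize $\beta$ to a unit vector, apply the variational characterization of $\sigma_{\min}(M_{\mathcal{A}})$ together with \lref{small-singular}, and rearrange. The paper's proof is slightly terser (it just sets $\gamma_i = \beta_i/\|\beta\|_2$ and immediately invokes \lref{small-singular}), but the substance is identical.
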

\begin{proof}
Let $\gamma_i = \beta_{i}/\sqrt{\sum_{i=1}^m \beta_i^2}$. Since $\gamma$
is a unit vector, by Lemma~\ref{lem:small-singular} we have that
$$
\left \Vert \sum_{i=1}^m \gamma_i v_i \right \Vert_2 \ge \zeta^{2m-2}.
$$
Thus $\zeta^{2m-2} \cdot \sqrt{\sum_{i=1}^m \beta_i^2} \le 1$, giving
the claimed upper bound.
\end{proof}

We will also need another simple fact which we state below.
\begin{fact}\label{fact:no-dependence}
Let $\mathcal{A}_i$ be $\zeta$-far from being linearly dependent. Let $v_{i+1}$ and $v$ be unit vectors such that $|\langle v, v_{i+1} \rangle | \ge \zeta$ and $v$ is orthogonal to $V_i$. Then $\mathcal{A}_{i+1}$ is $\zeta$-far from being linearly dependent.
\end{fact}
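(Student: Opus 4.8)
The plan is to unwind Definition~\ref{def:vector-space} directly. Write $\mathcal{A}_{i+1} = \{v_1,\dots,v_i,v_{i+1}\}$. To show $\mathcal{A}_{i+1}$ is $\zeta$-far from being linearly dependent, I need to verify the defining inequality at every index $1 < j \le i+1$. For $j \le i$ there is nothing to do, since those inequalities are exactly the hypothesis that $\mathcal{A}_i$ is $\zeta$-far from being linearly dependent and they only involve $v_1,\dots,v_i$. So the entire content of the statement is the single new inequality $\Vert (v_{i+1})^{\perp V_i} \Vert_2 \ge \zeta$, where $V_i$ is the span of $v_1,\dots,v_i$.

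To get this, I would use the auxiliary unit vector $v$. The key point is that $v$ is orthogonal to $V_i$, so $v$ lies entirely in the orthogonal complement of $V_i$; hence the inner product of $v_{i+1}$ with $v$ is unchanged if we replace $v_{i+1}$ by its projection onto $V_i^{\perp}$, i.e. $\langle v, v_{i+1}\rangle = \langle v, (v_{i+1})^{\perp V_i}\rangle$. Now apply Cauchy--Schwarz: $|\langle v, (v_{i+1})^{\perp V_i}\rangle| \le \Vert v \Vert_2 \cdot \Vert (v_{i+1})^{\perp V_i}\Vert_2 = \Vert (v_{i+1})^{\perp V_i}\Vert_2$, using that $v$ is a unit vector. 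Combining with the hypothesis $|\langle v, v_{i+1}\rangle| \ge \zeta$ gives $\Vert (v_{i+1})^{\perp V_i}\Vert_2 \ge \zeta$, which is exactly the missing inequality, so $\mathcal{A}_{i+1}$ is $\zeta$-far from being linearly dependent.

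There is really no main obstacle here; the only thing to be slightly careful about is matching up the indexing of Definition~\ref{def:vector-space} (in particular that the projection is onto the span $V_i$ of the first $i$ vectors, which is the set $\mathcal{A}_i$ assumed regular) and observing that adjoining $v_{i+1}$ does not disturb any of the earlier conditions. The substantive step is the one-line observation that orthogonality of $v$ to $V_i$ lets us pass from $v_{i+1}$ to its $V_i^{\perp}$-component inside the inner product, after which Cauchy--Schwarz finishes it.
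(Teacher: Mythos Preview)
Your proof is correct and follows essentially the same approach as the paper: decompose $v_{i+1}$ into its $V_i$-parallel and $V_i$-perpendicular parts, use orthogonality of $v$ to $V_i$ to reduce $\langle v, v_{i+1}\rangle$ to $\langle v, (v_{i+1})^{\perp V_i}\rangle$, and finish with Cauchy--Schwarz. You are slightly more explicit than the paper in noting that the conditions for indices $j \le i$ are inherited from $\mathcal{A}_i$, which is a helpful clarification.
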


\begin{proof}
Note that $v_{i+1}  = v_{i+1}^{\parallel V_i} + v_{i+1}^{\perp V_i}$ where $V_i = \mathop{span}(\mathcal{A}_i)$ (following Definition~\ref{def:vector-space}).
Hence we have
\begin{eqnarray*}
 \Vert v_{i+1}^{\perp V_i} \Vert_2 \ge |\langle v, v_{i+1}^{\perp V_i}
\rangle| =
| \langle v, v_{i+1}^{\parallel V_i} \rangle+ \langle v, v_{i+1}^{\perp V_i} \rangle| = |\langle v, v_{i+1} \rangle| \ge \zeta,
\end{eqnarray*}
where the first inequality is by Cauchy-Schwarz and the first
equality uses that $v$ is orthogonal to $V_i$.
\end{proof}

\subsection{The model}
Throughout this paper, our algorithms will repeatedly be performing basic linear algebraic operations, in particular SVD computation
and Gram-Schmidt orthogonalization.  In the bit complexity model, it is well-known that these linear algebraic operations can be performed
 (by deterministic algorithms) up to additive error $\epsilon$ in time $\poly(n, 1/\epsilon)$. For example, let $A \in \mathbb{R}^{n \times m}$ have $b$-bit rational
 entries.  It is known (see \cite{Golub} for details) that in time $\poly(n,m,b,1/\epsilon)$, it is possible to compute a value $\tilde{\sigma}_1$ and
 vectors $u_1 \in \R^n$, $v_1 \in \R^m$, such that $\tilde{\sigma}_1 = {\frac {u_1^T A v_1} {\|u_1\| \|v_1\|}}$ and $|\tilde{\sigma}_1 - \sigma_1| \leq \epsilon$,
 where $\sigma_1$ is the largest singular value of $A$. Likewise, given $n$ linearly independent vectors $v^{(1)},
 \dots,v^{(n)} \in \R^m$ with $b$-bit rational entries, it is possible
 to compute vectors $\tilde{u}^{(1)},\dots,\tilde{u}^{(n)}$ in
 time $\poly(n,m,b)$ such that
 if $u^{(1)},\dots,u^{(n)}$ is a Gram-Schmidt orthogonalization of
 $v^{(1)},\dots,v^{(n)}$ then we have $|u^{(i)} \cdot u^{(j)} -
 \tilde{u}^{(i)} \cdot
 \tilde{u}^{(j)}| \leq 2^{-\poly(b)}$ for all $i,j$.

 In this paper, we work in a unit-cost real number model of computation.
This allows us to assume that given a real matrix $A \in \mathbb{R}^{n \times m}$  with $b$-bit rational entries, we can compute the SVD of $A$ exactly in time
$\poly(n,m,b)$. Likewise, given $n$ vectors over $\mathbb{R}^m$, each of whose entries are $b$-bit rational numbers, we can perform an exact Gram-Schmidt orthogonalization
in time $\poly(n,m,b)$. Using high-accuracy approximations of the sort 
described above throughout our algorithms,
it is straightforward to translate our unit-cost real-number
algorithms into the bit complexity setting, at the cost of some additional
 error in the resulting bound.
 Note that the final guarantee we require from
Theorem \ref{thm:degd-main-gauss} is only that $\tilde{p}$
 is an additively accurate approximation to the unknown
 probability.  Note further that our Lemma \ref{lem:small-var-diff-kol-close} 
 gives the following: for $p(x)$ a degree-$d$ Gaussian polynomial with
$\Var[p]=1$, and $\tilde{p}(x)$ a degree-$d$ polynomial so that for each
 fixed monomial the coefficients of $p$ and $\tilde{p}$ differ
 by at most $\kappa$, then taking $\kappa= (\epsilon^3/(d^3 \cdot n))^{d}$,  we have that
 $|\Pr[p(x) \geq 0] - \Pr[\tilde{p}(x) \geq 0]| \leq \eps.$

 Using these two observations, it can be shown that by making
 sufficiently accurate approximations at each stage where a numerical
computation is performed by our ``idealized'' algorithm,
 the cumulative error resulting from all of the approximations
 can be absorbed into the final $O(\eps)$ error bound.  Since inverse
 polynomial levels of error can be achieved in polynomial time
 for all of the approximate numerical computations that our algorithm
performs,
and since only poly$(n^d)$ many such approximation steps are performed by
 poly$(n^d)$-time algorithms, the resulting approximate implementations
 of our algorithms in a bit-complexity model
 also achieve the guarantee of Theorem \ref{thm:degd-main-gauss},
 at the cost of a  fixed $\poly(n^d)$ overhead in the running time.
 Since working through the details of such an analysis is as tedious for the reader
 as it is for the authors, we content ourselves with this brief
 discussion.

\subsection{Polynomials and tensors:  some basics from isonormal
Gaussian processes} \label{sec:basics}
We start with some basic background; a more detailed discussion of the topics
we cover here can be found in
\cite{NourdinPeccati09,Nourdin-notes,NPR2010}.
\ifnum\verbose=1
The notes ``An introduction to Stein's Method'' by Salvador Ortiz-Latorre
are a good source for this material (starting on slide 70).
\fi

\medskip \noindent {\bf Gaussian processes.}
A \emph{Gaussian process} is a collection of jointly distributed
random variables $\{X_t\}_{t \in T}$, where $T$ is an index set, such that if $S \subset T$ is finite then the random variables $\{X_s\}_{s \in S}$ are distributed as a multidimensional normal.  Throughout this paper we will only deal with \emph{centered} Gaussian processes, meaning that  $\mathbf{E}[X_t]=0$ for every $t \in T$. It is well known that a centered Gaussian process is completely characterized by  its set of covariances $\{\mathbf{E}[X_s X_t ]\}_{s,t \in T}$. An easy but important observation is that the function $d : T \times T \rightarrow \mathbb{R}^+$ defined by $d(s,t) = \sqrt{\mathbf{E}[(X_s- X_t)^2]}$ forms a (pseudo)-metric on the set $T$.

\medskip \noindent {\bf Isonormal Gaussian processes.}   For $\mathcal{H}$ any separable Hilbert space, the Gaussian process $\{X(h)\}_{h \in \mathcal{H}}$ over ${\cal H}$ is \emph{isonormal} if it is centered and $\mathbf{E}[X(h) \cdot X(h')] = \langle h, h' \rangle$ for every $h, h' \in \mathcal{H}$. It is easy to see that for an isonormal Gaussian process, the metric $d$ induced by the process $\{X(h)\}_{h \in \mathcal{H}}$ is the same as the metric on the Hilbert space $\mathcal{H}$, and thus there is an isomorphism between the Gaussian process $\{X(h)\}$ and the Hilbert space $\mathcal{H}$. This isomorphism allows us to reason about the process $\{X(h)\}$ using the geometry of $\mathcal{H}$.

Throughout this paper the Hilbert space $\mathcal{H}$ will be the $n$-dimensional
Hilbert space $\R^n$, and we will consider the
isonormal Gaussian process $\{X(h)\}_{h \in {\cal H}}$.  This Gaussian process
can be described explicitly as follows: We equip $\mathbb{R}^n$ with
the standard normal measure $N(0,1)^n$, and corresponding to every
element $h \in \R^n$ we define $X(h) = h \cdot x$, where $x \sim N(0,1)^n$.
The resulting $\{X(h)\}$ is easily seen to be a Gaussian process
with the property that $\mathbf{E}[X(h) \cdot X(g)] = \langle h,g \rangle$.

\medskip \noindent {\bf Tensors.}
We write ${\cal H}^{\otimes q}$
to denote the $q$-tensor product of ${\cal H}$.
Fixing a basis $e_1,\dots,e_n$ of ${\cal H}$, recall that every element of
${\cal H}^{\otimes q}$ may be uniquely written as
\begin{equation} \label{eq:tensor}
f = \sum_{i_1,\dots,i_q=1}^n f(i_1, \dots, i_q)\cdot e_{i_1} \otimes
\cdots \otimes e_{i_q}
\quad
\text{where~}f(i_1,\dots,i_q) \in \R.
\end{equation}

\ifnum\verbose=1
\blue{
We define a symmetric inner product between tensors as follows.
For $s \leq q$, we define
\[
\langle e_{i_1} \otimes \dots \otimes e_{i_s},
e_{j_1} \otimes \dots \otimes e_{j_q} \rangle =
\delta_{i_1 j_1} \cdots \delta_{i_s j_s} \cdot
e_{j_{s+1}} \otimes \cdots \otimes e_{j_q}
\]
This definition extends by multilinearity to specify
$\langle f, g \rangle$ for all $f \in {\cal H}^{\otimes s},
g \in {\cal H}^{\otimes q}$.
Note that the inner product of an $s$-dimensional tensor and a
$q$-dimensional tensor is a $(q-s)$-dimensional tensor, and in particular
for $f,g \in {\cal H}^{\otimes q}$ the inner product
$\langle f,g \rangle$ is a scalar
\[
\langle f,g \rangle =
\sum_{i_1,\dots,i_q=1}^n
f(i_1, \dots, i_q)
g(i_1, \dots, i_q).
\]
}
\fi

We write ${\cal H}^{\odot q}$ to denote the subspace of
symmetric $q$-tensors.
These are the elements $f \in {\cal H}^{\otimes q}$ such that,
writing $f$ as in (\ref{eq:tensor}) above,
we have $f(i_1,\dots, i_q) = f(i_{\sigma(1)}, \dots , i_{\sigma(q)})$
for every permutation $\sigma \in S_q.$
Given a tensor $f \in {\cal H}^{\otimes q}$ written as in (\ref{eq:tensor}),
we write $\tilde{f}$ to denote the symmetrization of $f$,
\[
\tilde{f} \eqdef {\frac 1 {q!}} \sum_{\sigma \in S_q}
\sum_{i_1,\dots,i_q=1}^n f(i_{\sigma(1)}, \dots, i_{\sigma(q)}) \cdot
e_{i_1} \otimes \cdots \otimes e_{i_q},
\]
which lies in ${\cal H}^{\odot q}$;
we alternately write $\Sym(f)$ for $\tilde{f}$ if the latter is
typographically inconvenient.

\ifnum\verbose=1
We write $\alpha$ to denote a multi-index
$(\alpha_1,\dots,\alpha_n) \in (\Z_{\geq 0})^n$.  We
write $|\alpha|$ to denote the $L_1$-norm of $\alpha$, i.e. the sum
of its coordinates; intuitively
having $\alpha_i=j$ means that there are $j$ copies of $e_i$.
As an example, if $q=3$ and $n=7$ and $\alpha = (0,1,0,0,2,0,0)$ then
$\be(\alpha)$ would stand for $\Sym(e_2 \otimes e_5 \otimes e_5).$
Note that $\{\be(\alpha)\}_{\{\alpha: \ |\alpha|=q\}}$ is an orthogonal
basis of ${\cal H}^{\odot q}$.
\fi

Given an element $h \in {\cal H}$, we write $h^{\odot q}$ to denote
the $q$-th tensor product of $h$ with itself,
$h \otimes \cdots \otimes h$.  Note that this is a symmetric tensor;
we sometimes refer to $h^{\odot q}$ as a \emph{pure symmetric $q$-tensor}.
Note that every symmetric $q$-tensor can be expressed as a finite
linear combination of pure symmetric $q$-tensors $h_i^{\odot q}$
where each $h_i \in {\cal H}$ has $\|h_i\|=1$ (but this
representation is not in general unique).

We say that a tensor $f$ as in (\ref{eq:tensor}) is
\emph{multilinear} if $f(i_1,\dots,i_q)=0$ whenever $i_a=i_b$
for any $a \neq b$, $a,b \in [q]$ (i.e. all the diagonal entries of $f$
are zero).

\medskip \noindent {\bf From tensors to Gaussian polynomials and back.}
We write $H_q(x)$ to denote the $q^{th}$
Hermite polynomial; this is the univariate polynomial
\[
H_0(x) \equiv 1, H_1(x) = x, \quad
H_q(x) = {\frac {(-1)^q} {q!}} e^{x^2/2} {\frac {d^q}{dx^q}} e^{-x^2/2}.
\]
Note that these have been normalized so that
that $\E_{x \sim N(0,1)}[H_q(x)^2]= 1/q!$.
We write ${\cal W}^q$ to denote the $q$-th \emph{Wiener chaos};
this is the space spanned by all random variables of the form
$H_{q}(X(h))$
(intuitively, this is the span of all homogenous degree-$q$
multivariate Hermite polynomials.).
We note that
it can be shown (see e.g. Section 2.2 of \cite{NourdinPeccati09}) that
for $h,g \in {\cal H}$ with $\|h\|=\|g\|=1$ we have
\begin{equation}
\label{eq:hermite}
\E[H_q(X(h))\cdot H_q(X(g))] = {\frac 1 {q!}}
\langle h^{\odot q},g^{\odot q} \rangle.
\end{equation}

\ignore{

BEGIN IGNORE

\paragraph{Ito integrals.}
Let $h \in {\cal H}^{\odot q}$ be $h=\sum_{|\alpha|=q} c_\alpha \be(\alpha).$
We define
\begin{equation} \label{eq:ito}
I_q(h^{\odot q}) \eqdef \sum_{|\alpha|=q} c_\alpha \prod_{j=1}^q
H_{\alpha_j}(X(e_j)).
\end{equation}
$I_q$ is called the \emph{iterated Ito integral}.
\ignore{
It extends linearly
to all of ${\cal H}^{\odot q}$, \new{i.e.
for $h \in {\cal H}^{\odot q}, h = \sum_{i=1}^t \alpha_i \cdot h_i^{\odot q}$
with $h_i \in {\cal H}$, $\|h_i\|=1$, $\alpha_i \in \R$, we have
\[
I_q(h^{\odot q}) = \sum_{i=1}^t \alpha_i I_q(h_i^{\odot q}) =
\sum_{i=1}^t \alpha_i \cdot q! \cdot H_q(X(h_i)) .
\]
(It can be checked that this is well-defined, i.e.
if we express $h \in {\cal H}^{\odot q}$ as a different linear
combination of pure symmetric tensors, we get the same result.)
}
}

Note that for
$q=0$ and $c \in \R$ we have that $I_0(c) = c$.
The map $I_q(\cdot)$ can be further extended linearly \new{from
${\cal H}^{\odot q}$} to the whole space ${\cal H}^{\otimes q}$
\new{via $I_q(h) = I_q(\tilde{h})$ for $h \in {\cal H}^{\otimes q}$}.
\rnote{Is this the right extension to ${\cal H}^{\otimes q}$?
Will we use the extension of $I_q(\cdot)$ to non-symmetric tensors?}
\footnote{It is good to mention that the actual definition of the Ito
integral is a stochastic integral defined in terms of Brownian motion.
The Ito integral is defined for any element $f \in \mathcal{H}^{\otimes p}$
but it turns out that the stochastic integral definition
shows that $I_p(f) = I_p(\tilde{f})$ where $\tilde{f}$
is the symmetrization of $f$.}

END IGNORE

}

The \emph{iterated Ito integral} is a map which takes us from symmetric
tensors to Gaussian polynomials as follows.  Given $q \geq 1$ and
$h \in {\cal H}$ which satisfies $\|h\|=1$, we define

\begin{equation} \label{eq:ito}
I_q(h^{\odot q}) = q! \cdot H_q(X(h)).
\end{equation}

(We define $I_0(c)=c$ for $c \in \R.$) Note that with the definition of Ito integral, we can rephrase the guarantee of (\ref{eq:hermite}) as saying
$$
\mathbf{E}[I_q(h^{\odot q}) \cdot I_q (g^{\odot q}) ] = q! \cdot \langle h^{\odot q}, g^{\odot q} \rangle.
$$
So far, the map $I_q(\cdot)$ has been defined only for pure symmetric $q$-tensors of unit norm. However, equipped with the fact that every $x \in \mathcal{H}^{\odot q}$ can be written as a linear combination of such tensors, the map $I_q$ can be linearly extended to the whole space $\mathcal{H}^{\odot q}$.
Using the multiplication formula for Ito integrals (equation (\ref{eq:ito-mult}) below)
and standard identities for Hermite polynomials, it can be shown (see again Section 2.2 of \cite{NourdinPeccati09}) that such an extension
is consistent and unambiguous, i.e. it does not depend on the particular linear combination of the pure symmetric tensors we use to represent $x \in {\cal H}$.  Thus every element of ${\cal H}^{\odot q}$ maps to  an element of ${\cal W}^q$  and further this mapping can be shown to be bijective. In fact, the map $I_q$ is an isomorphism (up to scaling) between the space ${\cal W}^q$ of Gaussian chaoses  and the Hilbert space $\mathcal{H}^{\odot q}$, as is shown by the following relation: for $f, g \in \mathcal{H}^{\odot q}$, we have
$$
\mathbf{E}[I_q(f) \cdot I_q(g) ] = q! \cdot \langle f, g \rangle
$$
(ee Claim~\ref{claim:ipito} below for a proof). This relation forges a connection between the $q$-th Wiener chaos ${\cal W}^{q}$ and the geometry of the space $\mathcal{H}^{\odot q}$. This connection is crucial for us as we extensively use operations in the space $\mathcal{H}^{\odot q}$ to reason about the Wiener chaos ${\cal W}^{q}$.

Let $F=F(x_1,\dots,x_n)$  be any degree-$d$ Gaussian polynomial
over ${\cal H}=\R^n.$  Since $\E[F^2] < \infty$, the Wiener chaos decomposition implies that there exists
a unique sequence $f_0,\dots,f_d$ where $f_q \in {\cal H}^{\odot q}$ such that
\begin{equation} \label{eq:wcd}
F = \sum_{q=0}^d I_q(f_q),
\end{equation}
where by convention $I_0(f_0)=\E[F].$ Moreover, this decomposition is easily seen to be effective,
in the sense that given $F$ we can deterministically construct the tensors $f_0, \ldots, f_d$ in time
$n^{O(d)}$. In a bit more detail, let $J_q$ be the operator which maps $F : \mathbb{R}^n \rightarrow \mathbb{R}$ (with $\mathbf{E}[F^2] < \infty$) to its projection on the space $\mathcal{W}^{q}$. If $F$ is explicitly presented as a polynomial of degree $d$, then the action of each operator $J_1, \ldots, J_d$ can easily be computed in time $n^{O(d)}$, and given an explicit representation (as a polynomial) of any point $F_q$ in the image of $J_q$, it is straightforward to compute $f_q$ (in time $n^{O(q)}$) such that $I_q(f_q) = F_q$.  (In the other direction it is also straightforward, given
$f_q \in {\cal H}^{\odot q}$ for $q=0,\dots,d$, to output the degree-$d$ polynomial $F$
satisfying (\ref{eq:wcd}).)

\begin{remark} \label{rem:multilin}
If $F$ is a multilinear degree-$d$ Gaussian polynomial
over ${\cal H}$, then it can be shown that in the Wiener chaos decomposition
$F=I_0(f_0) + \cdots + I_d(f_d)$, each $f_q$ is a
multilinear symmetric tensor.  Conversely, if $f_q \in {\cal H}^{\odot q}$ is a
multilinear symmetric tensor then it can be shown that $I_q(f_q)$ is a multilinear
Gaussian polynomial.
\end{remark}

\subsection{Some background results from isonormal Gaussian processes.}
\label{sec:results}

\ignore{
We do not give proofs of all the results in this section; missing proofs
can be found in \cite{NourdinPeccati09,Nourdin-notes,NPR2010}.
\rnote{\red{I think it is best for us to give specific citations to lemmas
equations etc in specific papers for the things we claim
without proof.}}
}

We begin with a simple claim about the inner product between
Ito integrals:

\ignore{
We will use the following claim many times.
The first part states that Ito integrals
are orthogonal unless they belong to the same space and the second
gives us a formula for the inner product between two Ito integrals
that are in the same space.  (This claim is the Proposition on p. 75 of the
Ortiz-Latorre notes.)}
\begin{claim} \label{claim:ipito}
[{\bf Inner product between Ito integrals.}]  Let
$f \in {\cal H}^{\odot p}$ and $g \in {\cal H}^{\odot q}$.
Then
\[
\mathbf{E}[I_p(f) \cdot I_q(g)]=
\begin{cases}
0 & \text{if }p \neq q,\\
p! \cdot \langle f,g \rangle & \text{if }p=q.\\
\end{cases}
\]
\end{claim}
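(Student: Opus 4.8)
The plan is to reduce everything to the univariate Hermite identities already recorded in the excerpt. First I would invoke the fact, noted just above the claim, that every symmetric $q$-tensor is a finite linear combination of pure symmetric $q$-tensors $h^{\odot q}$ with $\|h\|=1$: write $f=\sum_i a_i h_i^{\odot p}$ and $g=\sum_j b_j k_j^{\odot q}$ with $a_i,b_j\in\R$ and $\|h_i\|=\|k_j\|=1$. Since $I_p$ and $I_q$ are defined as the linear extensions of \eqref{eq:ito}, and expectation is bilinear, this gives
\[
\E[I_p(f)\, I_q(g)] = p!\,q! \sum_{i,j} a_i b_j \cdot \E\!\left[H_p(X(h_i))\, H_q(X(k_j))\right],
\]
so the whole claim comes down to evaluating $\E[H_p(X(h))\,H_q(X(h'))]$ for unit vectors $h,h'\in\mathcal{H}$.

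Second, I would establish the moment identity: for $\|h\|=\|h'\|=1$, writing $\rho:=\langle h,h'\rangle$,
\[
\E\!\left[H_p(X(h))\, H_q(X(h'))\right] =
\begin{cases}
0 & \text{if } p \neq q,\\
\rho^p/p! & \text{if } p = q.
\end{cases}
\]
The case $p=q$ is exactly \eqref{eq:hermite} combined with the identity $\langle h^{\odot p}, {h'}^{\odot p}\rangle = \rho^p$, which is immediate from the definition of the inner product on symmetric tensors. For $p\neq q$ I would give the standard generating-function argument: $(X(h),X(h'))$ is a centered jointly Gaussian pair with unit variances and covariance $\rho$, and the normalized Hermite polynomials satisfy $\sum_{m\geq 0} t^m H_m(x) = e^{tx-t^2/2}$; hence $\E[e^{sX(h)-s^2/2}\, e^{tX(h')-t^2/2}] = e^{-(s^2+t^2)/2}\,\E[e^{sX(h)+tX(h')}] = e^{st\rho}$, and matching the coefficient of $s^p t^q$ on both sides yields $\E[H_p(X(h))H_q(X(h'))]=0$ for $p\neq q$ (and recovers $\rho^p/p!$ for $p=q$). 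Alternatively one may simply cite this as a textbook identity from \cite{NourdinPeccati09}, avoiding the computation entirely.

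Finally I would substitute back. When $p\neq q$ every summand vanishes, so $\E[I_p(f)\,I_q(g)]=0$. When $p=q$,
\[
\E[I_p(f)\, I_p(g)] = (p!)^2 \sum_{i,j} a_i b_j \cdot \frac{\langle h_i, k_j\rangle^p}{p!} = p! \sum_{i,j} a_i b_j \,\langle h_i^{\odot p}, k_j^{\odot p}\rangle = p! \,\langle f, g\rangle,
\]
the last step using bilinearity of the tensor inner product together with the expansions of $f$ and $g$. I do not anticipate a genuine obstacle; the only point deserving a word of care is that this computation uses particular representations of $f$ and $g$ as combinations of pure tensors, which are not unique — but since the final expression $p!\,\langle f,g\rangle$ depends only on $f$ and $g$, and $\langle f,f\rangle=0$ forces $f=0$ (hence $I_p(f)=0$ a.s.), this same computation in fact certifies that the linear extension of $I_p$ in \eqref{eq:ito} is consistent and unambiguous, the well-definedness point deferred in the surrounding text.
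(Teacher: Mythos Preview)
Your proof is correct and follows essentially the same approach as the paper: both decompose $f$ and $g$ into pure symmetric tensors, use the linearity of $I_p$, and reduce to the Hermite identity \eqref{eq:hermite} for the $p=q$ case. The only difference is cosmetic: for $p\neq q$ the paper simply cites the orthogonality of distinct Wiener chaos levels, whereas you supply a self-contained generating-function derivation of $\E[H_p(X(h))H_q(X(h'))]=0$.
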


\begin{proof}
For $p \neq q$ the claim follows from the fact that different levels
of Wiener chaos are orthogonal to each other.
For $p=q,$
we may write $f,g$ in terms of pure symmetric tensors as
$f= \sum_{i=1}^t \alpha_i \cdot f_i^{\odot p}$,
$g= \sum_{j=1}^t \beta_j \cdot g_j^{\odot p}$,
and hence
\ignore{
$I_p(f)= \sum_{i=1}^t \alpha_i \cdot I_p(f_i^{\odot p})$,
$I_p(g)= \sum_{j=1}^t \beta_j \cdot I_p(g_j^{\odot p})$.
Hence
}
\begin{eqnarray*}
\E[I_p(f)\cdot I_p(g)]
&=& \sum_{i,j=1}^t \alpha_i \beta_j\E[I_p(f_i^{\odot p})
\cdot I_p(g_j^{\odot p})]\\
&=& (p!)^2 \sum_{i,j=1}^t \alpha_i \beta_j
\E[H_p(X(f_i)) \cdot H_p(X(g_j))]\\
&=& p! \sum_{i,j=1}^t \alpha_i \beta_j
\langle f_i^{\odot p}, g_j^{\odot p} \rangle\\
&=& p! \langle f,g \rangle,
\end{eqnarray*}
where the first equality is by linearity of $I_p(\cdot)$, the second
is by (\ref{eq:ito}), the third is by (\ref{eq:hermite}),
and the fourth is by the bilinearity of $\langle \cdot , \cdot \rangle$.
\end{proof}

As a consequence of this we get the following useful fact:
\begin{fact} \label{fact:var-ito}
Let $f_1, \ldots $ be symmetric tensors where $f_i \in \mathcal{H}^{\odot i}$. Then we have $\Var[\sum_i I_i(f_i)] = \sum_i \Var[ I_i(f_i)]$.
\end{fact}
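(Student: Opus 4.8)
\textbf{Proof proposal for Fact~\ref{fact:var-ito}.}

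The plan is to reduce the claim directly to Claim~\ref{claim:ipito}, which records that Ito integrals at different chaos levels are uncorrelated. First I would observe that, since each $I_i(f_i)$ with $i \geq 1$ lies in the $i$-th Wiener chaos $\mathcal{W}^i$ and hence has mean zero (because $\mathbf{E}[I_i(f_i)] = \mathbf{E}[i! \cdot H_i(X(h))]$-type terms vanish for $i \geq 1$, or more simply because $\mathcal{W}^i \perp \mathcal{W}^0$ for $i \geq 1$ so $I_i(f_i)$ is orthogonal to the constants), while $I_0(f_0)$ is a constant, the variance of the sum is unaffected by the $i=0$ term and we may as well assume the sum runs over $i \geq 1$. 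Writing $F = \sum_i I_i(f_i)$ with $\mathbf{E}[F] = \sum_i \mathbf{E}[I_i(f_i)]$, I would expand
\[
\Var[F] = \mathbf{E}\!\left[\Big(\sum_i \big(I_i(f_i) - \mathbf{E}[I_i(f_i)]\big)\Big)^{\!2}\right] = \sum_{i,j} \Cov\big(I_i(f_i), I_j(f_j)\big).
\]

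Next I would invoke Claim~\ref{claim:ipito}: for $i \neq j$ we have $\mathbf{E}[I_i(f_i) \cdot I_j(f_j)] = 0$, and since at least one of $i,j$ is $\geq 1$ (so that term has mean zero), in fact $\Cov(I_i(f_i), I_j(f_j)) = 0$ for all $i \neq j$. Therefore all off-diagonal terms in the double sum vanish, leaving
\[
\Var[F] = \sum_i \Cov\big(I_i(f_i), I_i(f_i)\big) = \sum_i \Var[I_i(f_i)],
\]
which is exactly the claimed identity. I do not anticipate any real obstacle here — this is a routine orthogonality-of-Wiener-chaos argument — the only mild point of care is the bookkeeping for the $i=0$ constant term, which I would handle by noting it contributes nothing to any variance or covariance and hence can be dropped at the outset.
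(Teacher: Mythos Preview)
Your proposal is correct and follows essentially the same approach as the paper: both arguments note that $I_i(f_i)$ is centered for $i \geq 1$ (so the $i=0$ constant term is harmless) and then reduce the variance identity to the orthogonality of Ito integrals at distinct chaos levels, i.e.\ Claim~\ref{claim:ipito}. Your write-up is a bit more explicit about the covariance expansion, but the substance is identical.
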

\begin{proof}
All the random variables $I_i(f_i)$ are centered for $i \geq 1$, so it suffices to show
that $\E[\left(\sum_i I_i(f_i)\right)^2] = \sum_i \E[I_i(f_i)^2]$.  This follows directly
from Claim \ref{claim:ipito}.
\end{proof}

\paragraph{Contraction products.}
Consider symmetric tensors $f \in \mathcal{H}^{\odot q}$ and
$ g \in \mathcal{H}^{\odot r}$. For $0 \leq s \leq \min\{q,r\}$
we define the \emph{$s$-th contraction product}
$f \otimes_s g \in \mathcal{H}^{\otimes q+r-2s}$ as follows:
\ignore{(see p. 74 of Ortiz-Latorre)}
$$
(f \otimes_s g)_{ (t_1, t_2, \ldots, t_{q+r-2s}) } = \sum_{i_1, \ldots,
i_s}^{\infty} \langle f, e_{i_1} \otimes  \ldots \otimes e_{i_s} \rangle
\otimes \langle g, e_{i_1} \otimes  \ldots \otimes e_{i_s} \rangle.
$$
One way to visualize the contraction product is as a matrix
multiplication. We may view $f \in {\cal H}^q$ as a matrix $f_{q-s,s}$
where the rows of $f_{q-s,s}$ are identified with the elements of
$[n]^{q-s}$ and the columns with the elements of $[n]^s$,
and we may likewise view $g \in {\cal H}^r$ as an
$[n]^s \times [n]^{r-s}$ matrix.  A matrix
multiplication between $f_{q-s,s}$ and $g_{s,r-s}$ results in a
matrix of dimension $[n]^{q-s} \times [n]^{r-s}$, which can be viewed as an
element of $\mathcal{H}^{\otimes q+r-2s}$; this element is the
$s$-th contraction product $f \otimes_s g$.

Note that the contraction product $f \otimes_s g$ of two symmetric tensors
may not be a symmetric tensor. We write
$f \tilde{\otimes}_s g$ to denote $\Sym(f \otimes_s g)$;
the resulting symmetric tensor is an element of  $\mathcal{H}^{\odot q+r-2s}$.

\medskip

We will make heavy use of the following multiplication formula for
Ito integrals (see p. 4 of \cite{NPR2010}):
\begin{theorem} \label{thm:itomult}
[{\bf Multiplication formula for Ito integrals.}]
If $f \in \mathcal{H}^{\odot p}$ and $g \in \mathcal{H}^{\odot q}$, then
\begin{equation}
\label{eq:ito-mult}
I_p(f) \cdot I_q(g) = \sum_{r=0}^{\min\{p,q\}} r! \cdot \binom{p}{r}
\binom{q}{r} I_{p+q-2r} (f \tilde{\otimes}_r g).
\end{equation}
\end{theorem}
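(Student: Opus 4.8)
The plan is to prove \eqref{eq:ito-mult} by a generating-function argument, after reducing to pure symmetric tensors. First I would note that both sides of \eqref{eq:ito-mult} are bilinear in the pair $(f,g)$: the left-hand side because $I_p,I_q$ are linear, and the right-hand side because the contraction product $f\otimes_r g$ is bilinear, $\Sym(\cdot)$ is linear, and $I_{p+q-2r}(\cdot)$ is linear. Since every element of $\mathcal H^{\odot p}$ is a finite linear combination of pure symmetric tensors $h^{\odot p}$ with $\|h\|=1$ (and likewise for $\mathcal H^{\odot q}$), it suffices to prove the identity when $f=h^{\odot p}$ and $g=g^{\odot q}$ for unit vectors $h,g$; set $\rho=\langle h,g\rangle$. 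In this case $I_p(f)=p!\,H_p(X(h))$ and $I_q(g)=q!\,H_q(X(g))$ by \eqref{eq:ito}, and an elementary computation with the contraction-product definition gives $h^{\odot p}\otimes_r g^{\odot q}=\rho^r\, h^{\otimes(p-r)}\otimes g^{\otimes(q-r)}$, hence $h^{\odot p}\,\tilde{\otimes}_r\, g^{\odot q}=\rho^r\,\Sym\big(h^{\otimes(p-r)}\otimes g^{\otimes(q-r)}\big)$. So everything comes down to an identity about products of Hermite polynomials of the (generally correlated) Gaussians $X(h),X(g)$, plus some binomial-coefficient bookkeeping.

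To prove that identity I would use the generating function $\sum_{n\ge0}t^nH_n(x)=e^{tx-t^2/2}$, which holds in the paper's normalization of $H_n$ and may be regarded as an identity of formal power series in $t$ with polynomial coefficients, so that the manipulations below are legitimate termwise operations. Using $sX(h)+tX(g)=X(sh+tg)$,
\[
\sum_{p,q\ge0}s^p t^q\,H_p(X(h))\,H_q(X(g)) \;=\; e^{sX(h)-s^2/2}\,e^{tX(g)-t^2/2} \;=\; e^{X(sh+tg)}\,e^{-(s^2+t^2)/2}.
\]
Since $\|sh+tg\|^2=s^2+t^2+2st\rho$, the exponent equals $X(sh+tg)-\tfrac12\|sh+tg\|^2+st\rho$, so the right-hand side is $e^{st\rho}\cdot e^{X(sh+tg)-\|sh+tg\|^2/2}$. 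For any $v\in\mathcal H$, writing $v=\|v\|w$ with $\|w\|=1$ and applying the generating function again gives $e^{X(v)-\|v\|^2/2}=\sum_{n\ge0}\|v\|^nH_n(X(w))=\sum_{n\ge0}\tfrac1{n!}I_n(v^{\odot n})$, using $I_n(w^{\odot n})=n!H_n(X(w))$ and $v^{\odot n}=\|v\|^nw^{\odot n}$. Taking $v=sh+tg$, expanding the pure tensor power $(sh+tg)^{\odot n}=\sum_{k=0}^n\binom nk s^k t^{n-k}\Sym(h^{\otimes k}\otimes g^{\otimes(n-k)})$ and $e^{st\rho}=\sum_{r\ge0}\tfrac{(st\rho)^r}{r!}$, and extracting the coefficient of $s^pt^q$ (matching $p=r+k$, $q=r+(n-k)$, i.e. $n=p+q-2r$, $k=p-r$), one obtains
\[
H_p(X(h))\,H_q(X(g)) \;=\; \sum_{r=0}^{\min(p,q)}\frac{\rho^r}{r!\,(p+q-2r)!}\binom{p+q-2r}{p-r}\,I_{p+q-2r}\big(\Sym(h^{\otimes(p-r)}\otimes g^{\otimes(q-r)})\big).
\]
Multiplying by $p!q!$, using $I_p(f)I_q(g)=p!q!\,H_p(X(h))H_q(X(g))$ and the identity $h^{\odot p}\,\tilde{\otimes}_r\,g^{\odot q}=\rho^r\Sym(h^{\otimes(p-r)}\otimes g^{\otimes(q-r)})$ from the first paragraph, together with the arithmetic simplification $\tfrac{p!q!}{r!(p+q-2r)!}\binom{p+q-2r}{p-r}=\tfrac{p!q!}{r!(p-r)!(q-r)!}=r!\binom pr\binom qr$, yields exactly \eqref{eq:ito-mult}.

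The step I expect to require the most care is the reduction to pure tensors, and the reason is a potential circularity: the right-hand side of \eqref{eq:ito-mult} involves $I_m$ applied to $f\,\tilde{\otimes}_r\,g$, which is not a pure tensor in general, so the argument implicitly uses that $I_m$ is a well-defined linear map on all of $\mathcal H^{\odot m}$ — yet, as the excerpt points out, that fact is normally derived \emph{from} the multiplication formula. I would avoid this by taking as the definition of $I_m$ the (manifestly linear) projection onto the $m$-th Wiener chaos $\mathcal W^m$ together with the coordinatewise-Hermite description of $\mathcal W^m$ (equivalently, the stochastic-integral definition), and then proving $I_m(h^{\odot m})=m!H_m(X(h))$ for unit $h$ as a preliminary lemma — again via the generating function, or via the three-term recursion for $H_m$ — before running the bilinearity reduction. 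With that ordering there is no circular dependence, and the remaining points (the formal-power-series justification of the two generating-function identities, and the multinomial expansion of $(sh+tg)^{\odot n}$, which follows by collecting the permutations of $h^{\otimes k}\otimes g^{\otimes(n-k)}$ and counting stabilizers) are routine.
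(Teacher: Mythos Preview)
The paper does not supply a proof of this theorem; it quotes the multiplication formula as a background result from \cite{NPR2010} (``see p.~4 of \cite{NPR2010}''). So there is no ``paper's own proof'' to compare against, and any correct self-contained argument is acceptable here.

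Your generating-function proof is correct and is one of the standard derivations. The key steps all check out in the paper's normalization: with $\E[H_q(x)^2]=1/q!$ one indeed has $\sum_{n\ge0}t^nH_n(x)=e^{tx-t^2/2}$; the contraction computation $h^{\odot p}\otimes_r g^{\odot q}=\rho^r\,h^{\otimes(p-r)}\otimes g^{\otimes(q-r)}$ is immediate from the definition; the identity $e^{X(v)-\|v\|^2/2}=\sum_{n\ge0}\tfrac1{n!}I_n(v^{\odot n})$ holds for all $v\in\mathcal H$ (the unit-vector reduction you give works for $v\ne0$, and the $v=0$ case is trivial, but in any event only the formal $s,t$-expansion matters here); and the final binomial simplification $\tfrac{p!q!}{r!(p+q-2r)!}\binom{p+q-2r}{p-r}=r!\binom pr\binom qr$ is just arithmetic.

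Your discussion of the potential circularity is apt and your proposed resolution is exactly right. The paper itself remarks that the well-definedness of the linear extension of $I_q$ is ``shown \dots\ using the multiplication formula,'' so one must be careful. In the finite-dimensional setting $\mathcal H=\R^n$ there is no difficulty: define $I_q$ directly on the Hermite basis of $\mathcal H^{\odot q}$ (i.e., map the symmetric tensor with multi-index $\alpha$, $|\alpha|=q$, to the corresponding multivariate Hermite polynomial up to the combinatorial factor), which is manifestly linear and makes $I_q(h^{\odot q})=q!H_q(X(h))$ a lemma rather than a definition. With that ordering your bilinearity reduction is clean and the argument is complete.
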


\fi


\fi


\section{Dealing with non-multilinear polynomials}
\label{sec:multilinearize}

\ifnum\confversion=1
The decomposition procedure that we use relies heavily on the fact that
the input polynomials $p_i$ are multilinear.  To 
handle general (non-multilinear) degree-$d$ polynomials, the first step
of our algorithm is to transform them to (essentially) equivalent
multilinear degree-$d$ polynomials.  This is accomplished
by a simple procedure whose performance is
described below. Note that
given Theorem \ref{thm:multilinearize}, in subsequent sections we can (and do)
assume that the polynomial $p$ given as input in Theorem \ref{thm:degd-main-gauss}
is multilinear.
\fi

\ifnum\confversion=0
The decomposition procedure that we use relies heavily on the fact that
the input polynomials $p_i$ are multilinear.  To 
handle general (non-multilinear) degree-$d$ polynomials, the first step
of our algorithm is to transform them to (essentially) equivalent
multilinear degree-$d$ polynomials.  This is accomplished
by a simple procedure whose performance is
described in the following theorem.\footnote{A similar ``multilinearization''
procedure is analyzed in \cite{Kane11ccc}, but since the setting and required
guarantees are somewhat different here we give a self-contained algorithm and
analysis.} Note that
given Theorem \ref{thm:multilinearize}, in subsequent sections we can (and do)
assume that the polynomial $p$ given as input in Theorem \ref{thm:degd-main-gauss}
is multilinear.
\fi

\begin{theorem} \label{thm:multilinearize}
There is a deterministic procedure {\bf Linearize} with the following properties:
The algorithm takes as input a (not necessarily multilinear)
variance-1 degree-$d$ polynomial $p$ over $\R^n$
and an accuracy parameter $\delta > 0$.
It runs in time $O_{d,\delta}(1)\cdot poly(n^d)$ and outputs
a multilinear degree-$d$ polynomial $q$
over $R^{n'}$, with $n' \leq O_{d,\delta}(1) \cdot n$,
such that
\ifnum\confversion=1
$
\left|
\Pr_{x \sim N(0,1)^n}[p(x) \geq 0] -
\Pr_{x \sim N(0,1)^{n'}}[q(x) \geq 0]
\right|
\leq O(\delta).
$
\fi
\ifnum\confversion=0
\[
\left|
\Pr_{x \sim N(0,1)^n}[p(x) \geq 0] -
\Pr_{x \sim N(0,1)^{n'}}[q(x) \geq 0]
\right|
\leq O(\delta).
\]
\fi
\end{theorem}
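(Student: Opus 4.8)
The plan is to replace each variable $x_i$ by a normalized sum of $m$ fresh independent Gaussians, $x_i \mapsto (y_{i,1}+\cdots+y_{i,m})/\sqrt{m}$, so that the resulting polynomial $q$ over $n' = mn$ variables has the same distribution as $p$ (since a normalized sum of $m$ independent $N(0,1)$'s is again $N(0,1)$), and then \emph{multilinearize} $q$ by simply deleting all monomials that are not multilinear in the $y$-variables. The key point is that a non-multilinear monomial of $p$, when expanded, contributes monomials in the $y$'s that use some variable $y_{i,j}$ to a power $\geq 2$; but it \emph{also} contributes many multilinear monomials (those that pick distinct copies $j$ for each occurrence of $x_i$), and as $m \to \infty$ the multilinear part captures almost all of the variance of that monomial. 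So deleting the non-multilinear part perturbs the polynomial by only a small amount in $L^2$.

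First I would make this quantitative: fix a monomial $c_\alpha x^\alpha$ of $p$ of degree $\le d$; after the substitution and expansion, write its image as (multilinear part) $+$ (remainder). A direct computation with multinomial coefficients shows that the remainder has variance $O_d(1/m)$ times the variance of the original monomial (the ``collision'' terms are a $1/m$ fraction of all ways to assign copy-indices). Summing over the $\le n^d$ monomials and using that $\Var[p]=1$, together with the fact that the Wiener-chaos / monomial decomposition makes cross-terms between monomials of different ``shape'' orthogonal (or at worst controllable via Cauchy–Schwarz), gives $\Var[q - \tilde q] \le \epsilon(m)$ where $\tilde q$ is the multilinear truncation and $\epsilon(m) = O_d(\poly(n^d)/m)$ — actually one must be slightly careful: the naive bound has an $n^d$ loss, so I would instead argue monomial-by-monomial that the \emph{expected squared} deletion is small and use orthogonality of distinct multi-index shapes after expansion to avoid the union-bound blowup, or simply absorb the $\poly(n^d)$ into the choice of $m$. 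Then choose $m = m(d,\delta,n)$ — note $m$ is allowed to depend on $n$ here since the statement only claims $n' \le O_{d,\delta}(1)\cdot n$; wait, that forces $m = O_{d,\delta}(1)$, so the cleaner route is the orthogonality argument giving $\Var[q-\tilde q] = O_d(1/m)$ with \emph{no} $n$ dependence, and then take $m = O_{d,\delta}(1)$ large enough that this is at most $(\delta/d)^{3d}$.

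With $\Var[q - \tilde q] \le (\delta/d)^{3d}$ and $\E[q - \tilde q] = 0$ (the deleted monomials are products of Hermite-type terms with mean zero, or one subtracts off the mean explicitly) and $\Var[q] = \Var[p] = 1$, Lemma~\ref{lem:small-var-diff-kol-close} immediately yields
\[
\Pr_{y}[\sign(q(y)) \neq \sign(\tilde q(y))] \le O(\delta),
\]
hence $\bigl|\Pr[q(y)\ge 0] - \Pr[\tilde q(y) \ge 0]\bigr| \le O(\delta)$; combined with $\Pr[q(y)\ge 0] = \Pr_x[p(x)\ge 0]$ from the distributional identity, and after renormalizing $\tilde q$ to have variance $1$ (which does not change its sign), we get the claimed bound with the output polynomial being (the normalization of) $\tilde q$. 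The procedure is clearly deterministic and runs in $\poly(n^d)$ time times the overhead of expanding each monomial into $O(m^d) = O_{d,\delta}(1)$ pieces.

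The main obstacle I anticipate is getting the variance bound $\Var[q-\tilde q]$ to be \emph{independent of $n$} (so that $m$, and hence $n' = mn$, is $O_{d,\delta}(1)\cdot n$ and not worse). A monomial-by-monomial triangle inequality loses a factor of the number of monomials; the fix is to observe that distinct monomials $x^\alpha, x^\beta$ of $p$ produce, under the substitution, non-multilinear remainder terms that are built from \emph{disjoint or orthogonal} Hermite components, so their variances add rather than compound — i.e. $\Var[q - \tilde q] = \sum_\alpha \Var[(\text{remainder of } c_\alpha x^\alpha)] \le \frac{C_d}{m}\sum_\alpha \Var[c_\alpha x^\alpha] = \frac{C_d}{m}$ using $\sum_\alpha \Var[c_\alpha x^\alpha] = \Var[p] = 1$ by Parseval in the Hermite basis. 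Making this orthogonality claim precise (the remainder of one monomial can share low-order chaos components with the remainder of another) is the one place where a careful argument, rather than a routine estimate, is needed.
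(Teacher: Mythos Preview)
Your high-level plan --- replace each $x_i$ by $(y_{i,1}+\cdots+y_{i,m})/\sqrt{m}$ and then multilinearize --- is exactly the paper's. The gap is in the multilinearization step. You propose to delete non-multilinear $y$-monomials and assert that for each monomial $c_\alpha x^\alpha$ of $p$ the deleted remainder has variance at most $\frac{C_d}{m}\Var[c_\alpha x^\alpha]$. This is false already for a \emph{single} monomial, so the issue is not a cross-$\alpha$ orthogonality subtlety as you conjecture in your last paragraph. Take $p(x)=x_1^3$. After substitution the deleted part is
\[
R \;=\; \frac{1}{m^{3/2}}\Bigl[\sum_j y_{1,j}^3 \;+\; 3\sum_{j\neq\ell} y_{1,j}^2\, y_{1,\ell}\Bigr],
\]
which has $\E[R]=0$ (so your mean-correction does nothing here) but $\Var[R]=9+O(1/m)$, not $O(1/m)$. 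The reason: each $y_{1,j}^2 y_{1,\ell}$ carries a degree-$1$ Wiener-chaos component $\E[y_{1,j}^2]\,y_{1,\ell}=y_{1,\ell}$, and summing over $j$ these add coherently to $\tfrac{3(m-1)}{m^{3/2}}\sum_\ell y_{1,\ell}$, whose variance tends to $9$. By deleting $y_{1,j}^2 y_{1,\ell}$ as a monomial you discard this degree-$1$ (already multilinear!) contribution along with it.

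The paper avoids this by multilinearizing in the Hermite basis rather than the monomial basis: it writes $\tilde q=\sum_{j=0}^d I_j(\tilde f_j)$ and zeroes out only the diagonal entries of each tensor $\tilde f_j$, leaving the lower-order chaoses untouched. The Ito isometry $\E[I_j(h)^2]=j!\,\|h\|_F^2$ then gives the $n$-free bound
\[
\Var[\tilde q-q]\;=\;\sum_{j=1}^d j!\,\|\tilde f_j-g_j\|_F^2\;\le\;\frac{d^2}{K}\,\Var[\tilde q],
\]
since at most $j^2 K^{j-1}$ of the $K^j$ tuples $(\ell_1,\dots,\ell_j)$ have a repeated coordinate. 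With $K=d^2(d/\delta)^{3d}$ and Lemma~\ref{lem:small-var-diff-kol-close} the proof finishes exactly as you outline. So the only change your argument needs is to replace ``delete non-multilinear $y$-monomials'' with ``zero out diagonal entries of the Wiener-chaos tensors.''
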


\ifnum\confversion=0

\begin{proof}
The procedure {\bf Linearize} is given below.
(Recall that a diagonal entry of a $q$-tensor
$f = \sum_{i_1,\dots,i_q=1} f(i_1, \dots, i_q)\cdot e_{i_1} \otimes
\cdots \otimes e_{i_q}$ is a coefficient $f(i_1,\dots,i_q)$
that has $i_a=i_b$ for some $a \neq b.$)

\begin{framed}
\noindent {\bf Linearize}

\smallskip

\noindent {\bf Input:}  A degree-$d$ polynomial $p : \mathbb{R}^n \rightarrow \mathbb{R}$ such that $\Var (p) =1$.

\noindent {\bf Output:} A degree-$d$ multilinear polynomial $q : \mathbb{R}^{n'} \rightarrow \mathbb{R}$ such that $$\left|\Pr_{x \sim N(0,1)^n} [p(x) \ge 0] -\Pr_{x \sim N(0,1)^{n'}} [q(x) \ge 0] \right| \le \delta.$$  where $n'=n \cdot K$.


\noindent

\begin{enumerate}

\item Let $p(x_1,\dots,x_n) = \sum_{j=0}^d I_j(f_j)$ be the Wiener chaos decomposition of $p$.
Let $K = {d^2 \cdot (d/\delta)^{3d}}$.

\item  Construct polynomial $\tilde{q} : \mathbb{R}^{n'} \rightarrow \mathbb{R}$ from $p$ by replacing each $x_i$ by $(y_{i,1} + \ldots + y_{i,K})/\sqrt{K}$.

\item Let $\tilde{q} = \sum_{j=0}^d I_j(\tilde{f}_j)$ be the Wiener chaos decomposition of $
\tilde{q}$, where $\tilde{f}_j \in \mathcal{H}_{n'}^{\odot q}$ and $\mathcal{H}_{n'} = \mathbb{R}^{n'}$.

\item For each $0 \leq j \leq d$, obtain $g_j$ from $\tilde{f}_j$ by zeroing out all the diagonal entries from $\tilde{f}_j$.

\item Output $q = \sum_{j=0}^d I_j (g_j)$.

\end{enumerate}

\end{framed}

It is clear that all the tensors $g_j$ are multilinear, so by Remark \ref{rem:multilin}
the polynomial $q$ that the procedure {\bf Linearize} outputs is multilinear. The main step in
proving Theorem \ref{thm:multilinearize} is to bound the variance of $\tilde{q}-q$, so we will
establish the following claim:

\begin{claim} \label{claim:varbound}
$\Var[\tilde{q}-q] \leq {\frac d K} \cdot \Var[\tilde{q}].$
\end{claim}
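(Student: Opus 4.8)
The plan is to recast everything in the tensor language of Section~\ref{sec:basics} and to exploit the fact that the substitution performed in Step~2 of {\bf Linearize} is an \emph{isometry} of Gaussian space, so that it acts transparently on Wiener chaos decompositions. First I would observe that replacing each $x_i$ by $(y_{i,1}+\cdots+y_{i,K})/\sqrt K$ amounts to applying the linear map $\psi:\R^n\to\R^{n'}$ with $\psi(e_i)=\tfrac1{\sqrt K}\sum_{k=1}^K e_{(i,k)}$, and that $\psi$ preserves inner products (in particular $\|\psi(h)\|=\|h\|$). Since the substitution sends $X(h)$ to $X(\psi(h))$, it sends $H_q(X(h))$ to $H_q(X(\psi(h)))$, hence carries the $q$-th Wiener chaos into the $q$-th Wiener chaos, and by~\eqref{eq:ito} together with linearity it sends $I_q(f)$ to $I_q(\psi^{\otimes q}f)$ for every symmetric tensor $f$. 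Applying this to $p=\sum_{j=0}^d I_j(f_j)$ and using uniqueness of the Wiener chaos decomposition, I obtain $\tilde f_j=\psi^{\otimes j}f_j$; explicitly, $\tilde f_j\big((i_1,k_1),\dots,(i_j,k_j)\big)=K^{-j/2} f_j(i_1,\dots,i_j)$, which is constant along the new-variable coordinates $k_1,\dots,k_j$, and in particular $\|\tilde f_j\|^2=\|f_j\|^2$.

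Next I would reduce the claim to a single-level estimate. By construction $\tilde f_j-g_j$ is precisely the diagonal part of $\tilde f_j$ (the entries indexed by tuples in which two of the pairs $(i_a,k_a)$ coincide), so $\tilde q-q=\sum_{j=0}^d I_j(\tilde f_j-g_j)$ is an orthogonal sum across distinct Wiener chaoses; by Fact~\ref{fact:var-ito} and Claim~\ref{claim:ipito}, $\Var[\tilde q-q]=\sum_j j!\,\|\tilde f_j-g_j\|^2$ and $\Var[\tilde q]=\sum_j j!\,\|\tilde f_j\|^2$. It therefore suffices to prove, for each $j$, that $\|\tilde f_j-g_j\|^2\le\tfrac{\binom j2}{K}\|\tilde f_j\|^2$; summing over $j$ and using $\binom j2\le\binom d2$ gives $\Var[\tilde q-q]\le\tfrac{\binom d2}{K}\Var[\tilde q]$, which establishes the claim with $\binom d2$ in place of $d$. (This discrepancy is immaterial for the rest of the paper, since $K$ is taken equal to $d^2(d/\delta)^{3d}$ and only the resulting order of magnitude of $\Var[\tilde q - q]$ is used.)

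The single-level estimate is a short union bound over the at most $\binom j2$ pairs of tensor coordinates. A tuple $\big((i_1,k_1),\dots,(i_j,k_j)\big)$ is diagonal iff $(i_a,k_a)=(i_b,k_b)$ for some $a<b$, so using the explicit form of $\tilde f_j$ from Step~1, $\|\tilde f_j-g_j\|^2\le\sum_{a<b}\sum_{(i_a,k_a)=(i_b,k_b)} K^{-j} f_j(i_1,\dots,i_j)^2$; for a fixed pair $a<b$ the sum over the $k$'s contributes a factor $K^{j-1}$ (the $k$'s outside positions $a,b$ are free, while $k_a=k_b$ ranges over $K$ values), leaving $\tfrac1K\sum_{i:\,i_a=i_b} f_j(i)^2\le\tfrac1K\|f_j\|^2=\tfrac1K\|\tilde f_j\|^2$, and there are at most $\binom j2$ such pairs. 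I expect Step~1 to be the only real obstacle: once the substitution is recognized as an isometry and $\tilde f_j$ is correctly identified as the ``blow-up'' $\psi^{\otimes j}f_j$ of $f_j$, its constancy along the new-variable fibers is exactly what forces the diagonal mass of $\tilde f_j$ to be only a $\binom j2/K$ fraction of its total mass; the remaining steps are routine Hermite/Ito bookkeeping and elementary counting.
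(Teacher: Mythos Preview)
Your proof is correct and follows essentially the same route as the paper's own argument: reduce to a per-chaos-level estimate via Fact~\ref{fact:var-ito} and Claim~\ref{claim:ipito}, then bound the Frobenius mass of the diagonal entries of $\tilde f_j$ by a union bound over pairs of coordinates. Your framing via the isometry $\psi$ (so that $\tilde f_j=\psi^{\otimes j}f_j$ and $\|\tilde f_j\|=\|f_j\|$ are immediate) is slightly more conceptual than the paper's direct coefficient computation, and your constant $\binom d2/K$ is a bit sharper than the paper's $d^2/K$; neither matches the $d/K$ written in the claim statement, but as you correctly note, this is immaterial given the choice $K=d^2(d/\delta)^{3d}$, and the paper's own proof also obtains only $d^2/K$.
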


\begin{proof}
We first observe that
\begin{equation} \label{eq:varbd}
\Var[\tilde{q}-q] = \Var\left[
\sum_{j=1}^d (I_j(\tilde{f}_j) - I_j(g_j)) \right] =
\sum_{j=1}^d \E\left[(I_j(\tilde{f}_j) - I_j(g_j))^2\right],
\end{equation}
where the first equality is because $g_0=f_0$ and the second is by Claim \ref{claim:ipito} and
and the fact that each $I_j(\tilde{f}_j), I_j(g_j)$ has mean 0 for $j \geq 1.$  Now fix a
value $1 \leq j \leq d.$
Since each $g_j$ is obtained from $\tilde{f}_j$ by zeroing out diagonal elements,
again using Claim \ref{claim:ipito} we see that
\begin{equation} \label{eq:varbd2}
\E\left[(I_j(\tilde{f}_j) - I_j(g_j))^2\right] = j! \cdot \|\tilde{f}_j - g_j\|^2_F,
\end{equation}
where the squared Frobenius norm $\|\tilde{f}_j - g_j\|^2_F$ equals the sum of squared
entries of the tensor $\tilde{f}_j - g_j$.  Now,observe that the entry $\alpha_{i_1,\dots,i_j}=f_j(i_1, \ldots, i_j)$ of the tensor $f_j $ maps to the entry
$$
\alpha_{i_1, \ldots, i_j} \frac{(e_{i_1,1} + \ldots + e_{i_1, K})}{\sqrt{K}} \otimes \ldots \otimes  \frac{(e_{i_j,1} + \ldots +e_{i_j, K})}{\sqrt{K}} = \alpha_{i_1, \ldots, i_j} \cdot \sum_{(\ell_1, \ldots, \ell_{{j}}) \in [K]^{{j}}} \frac{1}{K^{{j}/2}} \otimes_{{{a=1}}}^{{j}} e_{i_{{a}},\ell_{{a}}}
$$
when $\tilde{q}$ is constructed from $p$.
Further observe that all $K^j$ outcomes of
$\otimes_{{{a=1}}}^{{j}} e_{i_{{a}},\ell_{{a}}}$ are distinct.
Since $g_j$ is obtained by zeroing out the diagonal entries of $\tilde{f}_j$, we get that
$$
\Vert \tilde{f}_j - g_j \Vert_F^2 = \sum_{{(i_1, \ldots, i_j) \in [n]^j}}
(\alpha_{i_1, \ldots, i_j})^2 \cdot \frac{1}{K^j} \cdot  |\mathcal{S}_{K, j}|
$$
where the set $\mathcal{S}_{K,j} = \{{(\ell_1, \ldots, \ell_j)
\in [K]^j : \ell_1, \ldots, \ell_j} \textrm{ are not all distinct}\}$. It is easy to see that $|\mathcal{S}_{K, j}| \le (j^2 \cdot K^j)/{K}$, so we get
$$
\Vert \tilde{f}_j - g_j \Vert_F^2 \le \sum_{{(i_1, \ldots, i_j) \in [n]^j}}
(\alpha_{i_1, \ldots, i_j})^2 \cdot \frac{j^2}{{K}}.
$$
Returning to (\ref{eq:varbd}) and (\ref{eq:varbd2}), this yields
$$
\Var [q-\tilde{q}] \leq \sum_{j=1}^d j! \cdot \sum_{{(i_1, \ldots, i_j) \in [n]^j}}
(\alpha_{i_1, \ldots, i_j})^2 \cdot \frac{j^2}{{K}} \le \frac{d^2}{K} \cdot
\left(\sum_{j=1}^d j! \cdot \sum_{{(i_1, \ldots, i_j) \in [n]^j}} (\alpha_{i_1, \ldots, i_j})^2\right).
$$
Using Fact \ref{fact:var-ito} and Claim \ref{claim:ipito}, we see that
\[
\Var[p] = \sum_{j=1}^d \Var[I_j(f_j)]
= \sum_{j=1}^d \E[I_j(f_j)^2] =
\sum_{j=1}^d j! \cdot \sum_{{(i_1, \ldots, i_j) \in [n]^j}} (\alpha_{i_1, \ldots, i_j})^2.
\]
It is easy to see that $\Var[\tilde{q}]=\Var[p]$, which equals 1 by assumption, so we have
that $\Var[q-\tilde{q}]\leq {\frac {d^2} K} \cdot \Var[\tilde{q}]$ as desired.
\end{proof}

To finish the proof of Theorem \ref{thm:multilinearize}, observe that by our
choice of $K$ we have
$\Var[q-\tilde{q}] \le (\delta/d)^{3d} \cdot \Var[ \tilde{q}]$. Since $q-\tilde{q}$ has mean 0
and $\Var[\tilde{q}]=1$ we may apply Lemma \ref{lem:small-var-diff-kol-close},
and we get that
$|\Pr_{x \sim N(0,1)^{n'}}[q(x) \geq 0] -
\Pr_{x \sim N(0,1)^{n'}}[\tilde{q}(x) \geq 0]| \leq O(\delta).$  The theorem follows
by observing that the two distributions
$p(x)_{x \sim N(0,1)^n}$ and $\tilde{q}(x)_{x \sim N(0,1)^{n'}}$ are identical.
\end{proof}

\fi


\section{A multidimensional CLT for low-degree
Gaussian polynomials} \label{sec:CLT}

Our goal in this section is to prove a CLT
(Theorem \ref{thm:mainclt} below) which says, roughly, the
following:  Let $F_1,\dots,F_r$ be eigenregular
low-degree Gaussian polynomials over $\R^n$ (here the meaning
of ``eigenregular'' is that the polynomial has
``small eigenvalues''; more on this below).
Then the distribution of $(F_1,\dots,F_r)$ is close --- as measured
by test functions with absolutely bounded second derivatives ---
to the $r$-dimensional Normal distribution
with matching mean and covariance.

To make this statement more precise, let us begin by explaining what exactly is meant by the
eigenvalues of a polynomial --
this is clear enough for a quadratic polynomial, but not so clear for
degrees 3 and higher.

\medskip

\ifnum\confversion=1
\noindent {\bf Eigenvalues of tensors and polynomials.}  Let ${\cal H}$ denote the
Hilbert space $\R^n$, and let ${\cal H}^{\odot p}$ denote the space of symmetric $p$-tensors
over ${\cal H}.$  (See the full version for detailed background on tensors.)
We begin by defining the largest eigenvalue of a symmetric tensor.
\fi

\ifnum\confversion=0
\noindent {\bf Eigenvalues of tensors and polynomials.}
We begin by defining the largest eigenvalue of a symmetric tensor.
\fi

\begin{definition}
For any $p \ge 2$ and $g \in \mathcal{H}^{\odot p}$, define
$\lambda_{\max}(g)$, the \emph{largest-magnitude
 eigenvalue} of $g$, as follows. Consider a partition of $[p]$ into $S$
and $\overline{S}=[p] \setminus S$ where both $S$ and $\overline{S}$ are
non-empty.\ifnum\confversion=0\footnote{(Note that since we are only dealing with symmetric
tensors we could equivalently have considered only partitions into
$[1,\dots,k]$, $[k+1,\dots,p]$ where $1 \leq k \leq p-1.$)}\fi
 We define
\ifnum\confversion=1 $\lambda_{S,\overline{S}}(g) = \sup_{x \in \mathcal{H}^{S}, y \in
\mathcal{H}^{\overline{S}}}
{\frac {\langle g , x \otimes y \rangle}{\Vert x \Vert_F \cdot \Vert y \Vert_F}}$
and
$\lambda_{\max}(g)  = \max_{S,\overline{S} \neq \emptyset}
\lambda_{S,\overline{S}}(g)$. \red{(Here $\Vert x \Vert_F$ denotes the Frobenius norm of $x$.)} \fi
\ifnum\confversion=0
\[
\lambda_{S,\overline{S}}(g) = \sup_{x \in \mathcal{H}^{S}, y \in
\mathcal{H}^{\overline{S}}}
{\frac {\langle g , x \otimes y \rangle}{\Vert x \Vert_F \cdot \Vert y \Vert_F}}
\quad \quad \text{and} \quad \quad
\lambda_{\max}(g)  = \max_{S,\overline{S} \neq \emptyset}
\lambda_{S,\overline{S}}(g).
\]  \red{(Here $\Vert x \Vert_F$ denotes the Frobenius norm of $x$.)}
\fi 
For $p\in \{0,1\}$ and $g \in {\cal H}^{\odot p}$ we say that
$\lambda_{\max}(g)=0.$
\end{definition}

\ifnum\confversion=1
Let ${\cal W}^q$ (referred to as the $q$-th Wiener chaos)
denote the linear subspace of polynomials spanned by the Hermite polynomials
of degree exactly $q$ over $\R^n$.  Fix a Gaussian polynomial $F$ of degree $d$ and recall
that $F$ admits a unique Wiener chaos decomposition $F = \sum_{q=0}^d I_q(f_q)$; here
$f_q \in {\cal H}^{\odot q}$ and is the natural tensor associated with the projection of
$F$ onto ${\cal W}^q$.  Thus $I_q(\cdot)$ maps the tensor $f_q \in {\cal H}^{\odot q}$ to
a polynomial in ${\cal W}^q$.  (While the precise definition of $I_q$ is not required for the
rest of this extended abstract, the curious reader is encouraged to consult Section 2 of the full
version for additional details.)

The following definition plays a crucial role in the rest of the paper.
\fi

\ifnum\confversion=0
Fix a Gaussian polynomial $F$ of degree $d,$ and recall that $F$ has a unique
Wiener chaos decomposition as
$F = \sum_{q=0}^d I_q(f_q)$ with $f_q \in {\cal H}^{\odot q}$.
The following definition plays a crucial role in the rest of the paper.
\fi

\begin{definition} \label{def:poly}
We define the \emph{largest-magnitude eigenvalue} of $F$ to be
\ifnum\confversion=1 $
\lambda_{\max}(F) = \max\{\lambda_{\max}(f_2),\dots,$ $\lambda_{\max}(f_{\blue{d}})\}.$
\fi
\ifnum\confversion=0
\[
\lambda_{\max}(F) = \max\{\lambda_{\max}(f_2),\dots,\lambda_{\max}(f_{\blue{d}})\}.
\]
\fi
We say that $F$ is \emph{$\eps$-eigenregular} if
$
{\frac
{\lambda_{\max}(F)}
{\sqrt{\Var[F]}}}
\leq \eps,
$
and we sometimes refer to
$
{\frac
{\lambda_{\max}(F)}
{\sqrt{\Var[F]}}}$ as the \emph{eigenregularity of $F$.}
\end{definition}

\begin{remark} \label{rem:deg1}
If $F$ is a polynomial of degree at most 1 then we say that
the polynomial $F$ is $0$-eigenregular
(and hence $\eps$-eigenregular for every $\eps > 0$).
\end{remark}

Now we can give a precise statement of our new CLT:

\begin{theorem} \label{thm:mainclt}
Fix $d \geq 2$ and let $F=(F_1,\dots,F_r)$ be Gaussian polynomials over $\R^n$,
each of degree at most $d$,
such that for each $i$ we have $\E[F_i]=0$, $\Var[F_i] \leq 1$
and $F_i$ is $\eps$-eigenregular.
Let $C$ denote the covariance matrix of $F$, so $C(i,j)=
\Cov(F_i,F_j)=\E_{\blue{x \sim N(0,1)^n}}[F_i\blue{(x)} F_j\blue{(x)}].$  Let ${\cal G}$ be a mean-zero $r$-dimensional
Gaussian random variable with covariance matrix $C$.  Then
for any $\alpha : \mathbb{R}^{r} \rightarrow \mathbb{R},{\alpha \in
{\cal C}^2}$ such that all second derivatives of $\alpha$ are at most $\Vert \blue{\alpha''} \Vert_{\infty} < {\infty}$,
we have
\ifnum\confversion=1 $\left| \E  [\alpha(F_1,\dots,F_r)] -  \E [\alpha({\cal G})] \right| <   2^{O(d \log d)} \cdot r^2 \cdot
\sqrt{\eps} \cdot
\Vert \alpha'' \Vert_{\infty}. $
\fi
\ifnum\confversion=0
$$
\left| \E  [\alpha(F_1,\dots,F_r)] -  \E [\alpha({\cal G})] \right| <   2^{O(d \log d)} \cdot r^2 \cdot
\sqrt{\eps} \cdot
\Vert \alpha'' \Vert_{\infty}.
$$
\fi
 \end{theorem}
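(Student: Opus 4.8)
### Proof Proposal

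The plan is to prove this multidimensional CLT via the Malliavin–Stein method, using the iterated Ito integral isomorphism between symmetric tensors and Wiener chaoses to translate the relevant operator-theoretic quantities into the tensor-contraction language set up in Section~\ref{sec:basics}. The starting point is the known multidimensional fourth-moment/Stein-type estimate (as in \cite{NPR2010,NourdinPeccati09}) which bounds $|\E[\alpha(F)] - \E[\alpha({\cal G})]|$ for a $\mathcal{C}^2$ test function $\alpha$ with bounded Hessian in terms of $\Vert \alpha''\Vert_\infty$ and a quantity of the form $\E\big|\,C(i,j) - \langle DF_i, -DL^{-1}F_j\rangle_{\cal H}\,\big|$ summed (or maximized) over $i,j \in [r]$, where $D$ is the Malliavin derivative and $L$ is the Ornstein–Uhlenbeck generator. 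Since each $F_i$ has degree at most $d$, the operator $-L^{-1}$ acts as multiplication by $1/q$ on the $q$-th chaos, so this inner product splits over chaoses, and the whole problem reduces to controlling, for each pair $(i,j)$ and each degree $q$, the $L^2$-fluctuation of $\langle D(I_q(f^{(i)}_q)), D(I_q(f^{(j)}_q))\rangle$ around its mean $q!\,\langle f^{(i)}_q, f^{(j)}_q\rangle$.

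First I would record the standard identity expressing $\Var\big[\tfrac1q\langle DF_i, DF_j\rangle\big]$ (restricted to the degree-$q$ parts) in terms of the norms of the contraction products $f^{(i)}_q \otimes_s f^{(j)}_q$ for $1 \le s \le q-1$, i.e. a bound of the form $\sum_{s=1}^{q-1} c_{q,s}\,\Vert f^{(i)}_q \otimes_s f^{(j)}_q\Vert_F^2$ with combinatorial constants $c_{q,s} \le 2^{O(q\log q)}$. This is where the ``eigenregular'' hypothesis is meant to be used: the key lemma I would establish is that $\Vert f \otimes_s g\Vert_F \le \lambda_{\max}(f)\cdot \Vert g\Vert_F$ (and symmetrically), because the contraction $f \otimes_s g$ is literally a matrix product of the $[n]^{q-s}\times[n]^s$ unfolding of $f$ with the $[n]^s \times [n]^{q-s}$ unfolding of $g$, and the operator norm of the unfolding of $f$ along any balanced-or-unbalanced split is at most $\lambda_{\max}(f)$ by the very definition of $\lambda_{S,\overline S}$. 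Combining this with $\Var[F_i]\le 1$ (which controls $\Vert f^{(i)}_q\Vert_F$) gives $\Vert f^{(i)}_q \otimes_s f^{(j)}_q\Vert_F^2 \le \lambda_{\max}(f^{(i)}_q)^2 \cdot \Vert f^{(j)}_q\Vert_F^2 \le \eps^2$, hence each contraction-norm term is at most $\eps^2$ (or, more carefully, $\eps$ when one of the two factors is bounded by $1$ in Frobenius norm rather than by $\eps$ — I would take the minimum of the two bounds, which yields $\eps$, and then $\sqrt{\cdot}$ gives the claimed $\sqrt\eps$). Summing over the $O(d^2)$ pairs $(q,s)$ produces the $2^{O(d\log d)}$ factor, and summing the resulting bound over the $r^2$ pairs $(i,j)$ produces the $r^2$ factor, yielding the stated estimate $2^{O(d\log d)}\cdot r^2 \cdot \sqrt\eps \cdot \Vert\alpha''\Vert_\infty$.

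The steps, in order, are: (i) cite/instantiate the abstract multidimensional Malliavin–Stein bound for $\mathcal{C}^2$ test functions, obtaining a sum over $i,j$ of $\Vert\alpha''\Vert_\infty \cdot \E\big|C(i,j) - \tfrac1{?}\langle DF_i,DF_j\rangle\big|$, and pass from $\E|\cdot|$ to $\sqrt{\Var[\cdot]}$ by Cauchy–Schwarz (the means already match, by Claim~\ref{claim:ipito} applied chaos-by-chaos); (ii) expand the variance into contraction-norm terms via the product formula (Theorem~\ref{thm:itomult}) and the orthogonality of distinct chaoses; (iii) prove the contraction-norm $\le$ $\lambda_{\max}\cdot\Vert\cdot\Vert_F$ inequality from the matrix-multiplication picture of $\otimes_s$ together with Definition~2.? of $\lambda_{S,\overline S}$; (iv) plug in eigenregularity and $\Var\le 1$, take minima to get the $\sqrt\eps$-type bound, and assemble the combinatorial constants. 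I expect step~(ii)–(iii) to be the main obstacle: one must be careful that the cross-contractions $f^{(i)}_q \otimes_s f^{(j)}_q$ of two \emph{different} tensors (not the $f\otimes_s f$ that appear in the classical one-dimensional fourth-moment theorem) are still controlled by the eigenvalue of just one factor, and one must track the non-symmetric vs.\ symmetrized contraction products carefully since symmetrization can only decrease the Frobenius norm; also the various combinatorial coefficients $r!\binom{p}{r}\binom{q}{r}$ from the multiplication formula must be bounded cleanly to land at $2^{O(d\log d)}$ rather than something worse. A secondary subtlety is that some $F_i$ may have $\Var[F_i] < 1$ or contain a nontrivial degree-$0$ or degree-$1$ part; the degree-$\le 1$ parts contribute exactly (a linear form of Gaussians is exactly Gaussian and its covariance is captured in $C$), so they drop out of the error term by Remark~\ref{rem:deg1}, and I would note this explicitly at the start.
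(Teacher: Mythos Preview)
Your proposal is essentially the paper's own proof: invoke the multidimensional Nourdin--Peccati Stein bound (the paper's Theorem~\ref{thm:Nourdin}), use $C(a,b)=\E[Y(a,b)]$ and Jensen to reduce to $\sqrt{\Var[\langle DF_a,-DL^{-1}F_b\rangle]}$, expand this via Claim~\ref{claim:npr-ip}/Theorem~\ref{thm:npr-exp} into norms of contraction products, and control those by your matrix-unfolding inequality $\|f\otimes_s g\|_F\le\lambda_{\max}(f)\,\|g\|_F$ (exactly the paper's Fact~\ref{fact:contract-eigen}) together with $\eps$-eigenregularity and the Frobenius bound coming from $\Var[F_i]\le 1$.

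One genuine imprecision in your write-up: when you say ``this inner product splits over chaoses, and the whole problem reduces to controlling, for each degree $q$, the $L^2$-fluctuation of $\langle D(I_q(f^{(i)}_q)), D(I_q(f^{(j)}_q))\rangle$,'' that is not quite right. The random variable $\langle DF_a,-DL^{-1}F_b\rangle$ also contains cross-degree terms $\tfrac{1}{q_2}\langle DI_{q_1}(a_{q_1}),DI_{q_2}(b_{q_2})\rangle$ with $q_1\ne q_2$, and these do \emph{not} vanish as random variables (only their means vanish). The paper handles them separately (its ``$Y$'' piece and Claim~\ref{claim:Y}), and your contraction-norm bound applies there verbatim since $1\le r\le\min\{q_1,q_2\}\le q_1-1$ always holds when $q_1>q_2$; so this is an omission in the outline rather than a flaw in the strategy, but you should not skip it.
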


\ifnum\confversion=1
The proof of Theorem \ref{thm:mainclt} is somewhat involved,
using Malliavin calculus in the context of Stein's method; it builds on
recent work by Nourdin, Peccati and R\'{e}veillac
\cite{NourdinPeccati09,Nourdin-notes,NPR2010}.
We give the proof in the full version.
\fi

\ifnum\confversion=0

The rest of Section \ref{sec:CLT} is dedicated to the proof of
Theorem \ref{thm:mainclt}.
The proof of Theorem \ref{thm:mainclt} is somewhat involved,
using Malliavin calculus in the context of Stein's method; it builds on
recent work by Nourdin, Peccati and R\'{e}veillac
\cite{NourdinPeccati09,Nourdin-notes,NPR2010}.
In Section \ref{sec:malliavin} we first give the necessary background
ingredients from Malliavin calculus which will serve as the tools in our
proof, and in Section \ref{sec:mainclt-proof} we give our proof of Theorem
\ref{thm:mainclt}.

\fi

\begin{remark} \label{rem:needfewpoly}
It is clear from the statement of Theorem \ref{thm:mainclt} that in order for
the theorem to yield a meaningful bound, it must be the case that the number
of polynomials $r$ is small compared to $1/\sqrt{\eps}.$
Looking ahead, in our eventual application of Theorem \ref{thm:mainclt},
the $r$ polynomials $F_1,\dots,F_r$ will be obtained
by applying the decomposition procedure described in Section \ref{sec:decomp}
to the original degree-$d$ input polynomial.
Thus it will be crucially important for our decomposition procedure
to decompose the original polynomial into
$r$ polynomials all of which are \emph{extremely} eigenregular, in particular
$\eps$-eigenregular for a value $\eps \ll 1/r^2.$  \blue{Significant work will be required in
Section \ref{sec:decomp} to surmount this challenge.}
\end{remark}

\ifnum\confversion=0

\subsection{Background from Malliavin calculus.} \label{sec:malliavin}

\paragraph{Malliavin derivatives.}
Let $F = f(X(h_1), \ldots, X(h_m))$ where $h_1, \ldots, h_m \in \mathcal{H}$
and $f$ is a differentiable function.
The \emph{Malliavin derivative} is a $\mathcal{H}$ valued random variable
defined as
$$
DF \eqdef
\sum_{i=1}^n \frac{\partial f(X(h_1), \ldots, X(h_n))}{\partial x_i} h_i.
$$
Note that if $F=f(x_1,\dots,x_n)$ (i.e. $m=n$ and $h_i$  is the canonical
basis vector $e_i \in \R^n$) then we have
\[
DF = \left(
\frac{\partial f(x_1, \ldots, x_n)}{\partial x_{1}},
\ldots,
\frac{\partial f(x_1, \ldots, x_n)}{\partial x_n}\right),
\]
where as usual we have $x \sim N(0,1)^n$.

\if\verbose=1
\new{
As a simple example, if $f(x_1,x_2,x_3) = 2x_1^2 + 3 x_2 x_3$
and $F = f(X(h_1),X(h_2),X(h_3))$
then $DF = 4X(h_1)h_1 + 3 X(h_3)h_2 + 3X(h_2)h_3$
which is indeed an ${\cal H}$-valued
random variable.
}
\fi

\ignore{
}

Our proof of Theorem \ref{thm:mainclt} will involve a lot of manipulation
of inner products of Malliavin derivatives.  The following results
will be useful:

\begin{claim} \label{claim:npr-ip}
[\cite{NPR2010}]
Let $q \ge p$ and $f \in \mathcal{H}^{\odot p}$ and $g \in \mathcal{H}^{\odot q}$. Let $F = I_p(f)$ and $G = I_q(g)$.
$$
\langle DF, DG \rangle=  pq \sum_{r=1}^{\min\{p,q\}} (r-1)! \binom{p-1}{r-1} \binom{q-1}{r-1} I_{p+q-2r} (f \widetilde{\otimes}_r g)
$$
\end{claim}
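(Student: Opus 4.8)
The plan is to derive this identity from two ingredients already in hand: the multiplication formula for Ito integrals (\tref{itomult}) and the elementary formula for the Malliavin derivative of a single Ito integral. First I would record the latter. Using the normalization $I_q(h^{\odot q}) = q!\,H_q(X(h))$ from \eref{ito} together with the standard Hermite identity $H_q' = H_{q-1}$, one checks directly that for any unit vector $h \in {\cal H}$,
\[
D\bigl(I_q(h^{\odot q})\bigr) = D\bigl(q!\,H_q(h\cdot x)\bigr) = q!\,H_{q-1}(h\cdot x)\cdot h = q\,I_{q-1}\bigl(h^{\odot(q-1)}\bigr)\cdot h,
\]
since the $x$-gradient of $H_q(h\cdot x)$ is $H_q'(h\cdot x)\,h$. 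Writing $f = \sum_i a_i u_i^{\odot p}$ and $g = \sum_j b_j v_j^{\odot q}$ as finite combinations of pure symmetric tensors with all $u_i,v_j$ unit vectors, and extending the above by linearity of $I_p$ and $I_q$ (legitimate since, as recalled in the preliminaries, the value of $I_p$ is independent of the representation chosen), this gives $DF = p\sum_i a_i I_{p-1}(u_i^{\odot(p-1)})\,u_i$ and $DG = q\sum_j b_j I_{q-1}(v_j^{\odot(q-1)})\,v_j$.

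Next I would expand the inner product of these two ${\cal H}$-valued random variables, pulling the scalar random variables out of the Hilbert-space inner product:
\[
\langle DF, DG\rangle = pq \sum_{i,j} a_i b_j\,\langle u_i, v_j\rangle\; I_{p-1}\bigl(u_i^{\odot(p-1)}\bigr)\,I_{q-1}\bigl(v_j^{\odot(q-1)}\bigr),
\]
and then apply \tref{itomult} to each product $I_{p-1}(u_i^{\odot(p-1)})\cdot I_{q-1}(v_j^{\odot(q-1)})$, expanding it as $\sum_{s=0}^{\min\{p-1,q-1\}} s!\binom{p-1}{s}\binom{q-1}{s}\,I_{p+q-2-2s}\bigl(u_i^{\odot(p-1)}\,\widetilde{\otimes}_s\,v_j^{\odot(q-1)}\bigr)$.

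The one step that needs genuine care — and which I expect to be the main obstacle to getting cleanly right — is re-assembling the double sum over $i,j$ into a single contraction of $f$ with $g$. For this I would use the elementary pure-tensor identity $u^{\odot a}\otimes_s v^{\odot b} = \langle u,v\rangle^s\,\bigl(u^{\odot(a-s)}\otimes v^{\odot(b-s)}\bigr)$, which gives $\langle u_i, v_j\rangle \cdot \bigl(u_i^{\odot(p-1)}\otimes_s v_j^{\odot(q-1)}\bigr) = u_i^{\odot p}\otimes_{s+1} v_j^{\odot q}$. Summing this against $a_i b_j$ and using bilinearity of $\otimes_{s+1}$ turns $\sum_{i,j} a_i b_j \langle u_i,v_j\rangle\,(u_i^{\odot(p-1)}\otimes_s v_j^{\odot(q-1)})$ into $f\otimes_{s+1} g$; applying $\Sym$ and then $I_{p+q-2-2s}$ (both linear) collapses the $(i,j)$-sum, producing $I_{p+q-2-2s}(f\,\widetilde{\otimes}_{s+1}\,g)$. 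Finally, re-indexing with $r = s+1$ (so $r$ ranges over $1,\dots,\min\{p,q\}$, with $s! = (r-1)!$, $\binom{p-1}{s} = \binom{p-1}{r-1}$, $\binom{q-1}{s} = \binom{q-1}{r-1}$, and $p+q-2-2s = p+q-2r$) yields exactly $pq\sum_{r=1}^{\min\{p,q\}}(r-1)!\binom{p-1}{r-1}\binom{q-1}{r-1}\,I_{p+q-2r}(f\,\widetilde{\otimes}_r\,g)$, as claimed. The remaining loose ends — well-definedness of $I_m$ on symmetric tensors and interchange of $\Sym$ with finite sums — are routine; all the real content sits in \tref{itomult} and the derivative identity above.
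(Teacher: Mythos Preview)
The paper does not actually prove this claim; it is stated with a citation to \cite{NPR2010} and used as a black box. Your argument, by contrast, supplies a complete derivation from the ingredients already assembled in the paper (the multiplication formula \tref{itomult} and the pure-tensor representation of symmetric tensors), together with the standard identity $D I_q(h^{\odot q}) = q\,I_{q-1}(h^{\odot(q-1)})\,h$. The proof is correct: the derivative identity is right under the paper's Hermite normalization, the pure-tensor contraction identity $\langle u,v\rangle\cdot(u^{\odot(p-1)}\otimes_s v^{\odot(q-1)}) = u^{\odot p}\otimes_{s+1} v^{\odot q}$ holds, and bilinearity of $\otimes_{s+1}$ together with linearity of $\Sym$ and of $I_m$ legitimately collapses the double sum into $f\,\widetilde{\otimes}_{s+1}\,g$; the re-indexing $r=s+1$ then gives exactly the stated formula. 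This is in fact essentially the proof given in \cite{NPR2010} (their Lemma~2.1), so you have correctly reconstructed the original argument.
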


\begin{theorem} \label{thm:npr-exp}
[\cite{NPR2010}]
Let $q \ge p$ and $f \in \mathcal{H}^{\odot p}$ and $g \in \mathcal{H}^{\odot q}$. Let $F = I_p(f)$ and $G = I_q(g)$.
$$
\mathbf{E}[\langle DF, DG \rangle^2]=  p^2 q^2 \sum_{r=1}^p (r-1)!^2 \binom{p-1}{r-1}^2\binom{q-1}{r-1}^2 (p+q-2r)! \Vert f \widetilde{\otimes}_r g \Vert^2
\quad \quad \textrm{if }p<q
$$
$$
\mathbf{E}[\langle DF, DG \rangle^2] = p^2 p!^2 \braket{f , g}^2 + p^4 \sum_{r=1}^{p-1}(r-1)!^2 \binom{p-1}{r-1}^4 (2p-2r)! \Vert f \widetilde{\otimes}_r g \Vert^2 \quad \quad \textrm{if }p=q
$$
$$
\mathbf{E}[\langle DF, DG \rangle^2]=  p^2 q^2 \sum_{r=1}^{q} (r-1)!^2 \binom{p-1}{r-1}^2\binom{q-1}{r-1}^2 (p+q-2r)! \Vert f \widetilde{\otimes}_r g \Vert^2
\quad \quad \textrm{if }p>q.
$$
\end{theorem}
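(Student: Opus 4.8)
The plan is to read off all three formulas directly from Claim~\ref{claim:npr-ip} by squaring the explicit Wiener-chaos expansion of $\langle DF, DG\rangle$ and invoking orthogonality of distinct Wiener chaoses. Write $m = \min\{p,q\}$ and, for $1 \le r \le m$, abbreviate $c_r := pq\,(r-1)!\binom{p-1}{r-1}\binom{q-1}{r-1}$, so that Claim~\ref{claim:npr-ip} reads $\langle DF, DG\rangle = \sum_{r=1}^{m} c_r\, I_{p+q-2r}(f \widetilde{\otimes}_r g)$, with $f \widetilde{\otimes}_r g \in \mathcal{H}^{\odot (p+q-2r)}$.

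The first — and essentially only — structural point I would make is that the degrees $p+q-2r$ are pairwise distinct as $r$ ranges over $\{1,\dots,m\}$, since $r \mapsto p+q-2r$ is strictly decreasing; hence the summands $I_{p+q-2r}(f \widetilde{\otimes}_r g)$ lie in pairwise distinct Wiener chaoses. Expanding $\langle DF, DG\rangle^2 = \sum_{r,r'} c_r c_{r'} I_{p+q-2r}(f \widetilde{\otimes}_r g)\, I_{p+q-2r'}(f \widetilde{\otimes}_{r'} g)$ and applying Claim~\ref{claim:ipito} (which annihilates every off-diagonal term and evaluates $\mathbf{E}[I_a(u)^2] = a!\,\Vert u \Vert^2$) yields
\[
\mathbf{E}[\langle DF, DG\rangle^2] = \sum_{r=1}^{m} c_r^2\,(p+q-2r)!\,\Vert f \widetilde{\otimes}_r g \Vert^2 = p^2 q^2 \sum_{r=1}^{m} (r-1)!^2 \binom{p-1}{r-1}^2 \binom{q-1}{r-1}^2 (p+q-2r)!\,\Vert f \widetilde{\otimes}_r g \Vert^2 .
\]
For $p \ne q$ this is verbatim the claimed identity ($m = p$ when $p < q$, and $m = q$ when $p > q$). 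For $p = q$ I would peel off the top index $r = p$: there $f \widetilde{\otimes}_p g = \langle f, g\rangle$ is a scalar (an element of $\mathcal{H}^{\odot 0} = \mathbb{R}$, on which $I_0$ acts as the identity) and $p+q-2r = 0$, so this term contributes $c_p^2 \langle f, g\rangle^2 = p^2 q^2 (p-1)!^2 \langle f, g\rangle^2 = p^2 (p!)^2 \langle f, g\rangle^2$; the remaining indices $1 \le r \le p-1$ give $p^4 \sum_{r=1}^{p-1} (r-1)!^2 \binom{p-1}{r-1}^4 (2p-2r)!\,\Vert f \widetilde{\otimes}_r g \Vert^2$, which is precisely the $p = q$ formula.

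I expect no genuine obstacle: given Claim~\ref{claim:npr-ip}, the whole argument is orthogonality plus bookkeeping. The two points worth double-checking are (a) legitimacy of the termwise expansion, which requires the degenerate $0$-th chaos summand (a constant, occurring only when $p = q$, $r = p$) to be orthogonal to every other term — true because $\mathbf{E}[I_b(v)] = 0$ for $b \ge 1$ — and (b) the small identity $p^2 q^2 (p-1)!^2 = p^2 (p!)^2$ when $p = q$, which reconciles the peeled-off term with the stated form. Were Claim~\ref{claim:npr-ip} itself unavailable, the real work would be establishing that inner-product formula from the expression for the Malliavin derivative of an Ito integral together with the multiplication formula for Ito integrals (Theorem~\ref{thm:itomult}); but since it is stated for us, this step is free.
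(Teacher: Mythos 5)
Your derivation is correct. Note that the paper does not actually prove Theorem~\ref{thm:npr-exp} — it is imported wholesale from \cite{NPR2010}, with only a remark that the $p>q$ formula follows from the source's proof of the $p<q$ one — so there is no in-paper argument to compare against; your route (square the chaos expansion of $\langle DF, DG\rangle$ given by Claim~\ref{claim:npr-ip}, kill the cross terms via Claim~\ref{claim:ipito} because $r \mapsto p+q-2r$ is injective, and read off the diagonal terms as $c_r^2\,(p+q-2r)!\,\Vert f \widetilde{\otimes}_r g\Vert^2$) is the standard one and is essentially how the source establishes it. The two points you flag both check out: the $r=p$ term when $p=q$ lands in ${\cal W}^0$ and is orthogonal to every higher chaos since those have mean zero, and $c_p^2 = p^4\,((p-1)!)^2 = p^2\,(p!)^2$ together with $\Vert f \widetilde{\otimes}_p g\Vert^2 = \langle f,g\rangle^2$ recovers the peeled-off term exactly. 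One pedantic point: Claim~\ref{claim:npr-ip} is stated under the hypothesis $q \ge p$, so for the $p>q$ case you should say explicitly that you apply it with the roles of $(f,p)$ and $(g,q)$ exchanged, which is harmless because both $\langle DF, DG\rangle$ and the right-hand side are symmetric under that swap.
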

(The last equality above is not explicitly stated in \cite{NPR2010} but it
follows easily from their proof of the first equality; see Equation 3.12
in the proof of Lemma 3.7 of \cite{NPR2010}.)

\medskip

We recall (see \cite{NPR2010,NourdinPeccati09}) that the operator $L$
(which is called the generator of the Ornstein-Uhlenbeck semigroup)
is defined by
\[
LF = \sum_{q=0}^{\infty} -q J_q(F).
\]
We also recall the that the \emph{pseudo-inverse of $L$}
is defined to be the operator
\[
L^{-1}F = \sum_{q=1}^{\infty} -J_q(F)/q.
\]
Both operators are well-defined for all finite-degree Gaussian polynomials $F$.
\ignore{

}

We recall the following key identity which provides the fundamental connection
between Malliavin Calculus and Stein's method:
\begin{claim} [see e.g. Equation (2.22) of \cite{NourdinPeccati09}]
Let $h : \mathbb{R} \rightarrow \mathbb{R}$ be a continuous function
with a bounded first derivative. Let $p$ and $q$ be
polynomials over $\mathcal{X}$ with $ \mathbf{E}[q]=0$. Then
$\mathbf{E} [q h(p)] = \mathbf{E} [h'(p) \cdot \langle Dp \
, \ -DL^{-1} q\rangle] $.
\end{claim}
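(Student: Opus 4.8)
\noindent\emph{Proof proposal.} This is the fundamental integration‑by‑parts identity linking Malliavin calculus to Stein's method, and the plan is to obtain it by combining three standard facts about operators on the Gaussian space $(\mathbb{R}^n, N(0,1)^n)$. The first is the duality between $D$ and its adjoint, the divergence operator $\delta$: for a polynomial $F$ and an $\mathcal{H}$‑valued polynomial field $u$ one has $\E[\langle DF, u\rangle] = \E[F\,\delta(u)]$. The second is the operator identity $L = -\delta D$ relating the Ornstein--Uhlenbeck generator $L$ defined above to $D$ and $\delta$, together with the pseudo‑inverse property $L L^{-1} q = q - \E[q]$, which is immediate from the definitions of $L$ and $L^{-1}$ in terms of the Wiener chaos projections $J_q$. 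The third is the Malliavin chain rule $D(h(p)) = h'(p)\,Dp$.

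Granting these, I would argue as follows. Since $\E[q]=0$, the second fact gives $q = L L^{-1} q = -\,\delta D L^{-1} q = \delta(-D L^{-1} q)$. Applying the duality formula with $F = h(p)$ and $u = -D L^{-1} q$, and then the chain rule, yields
\[
\E[q\,h(p)] = \E\big[\delta(-D L^{-1} q)\,h(p)\big] = \E\big[\langle D(h(p)),\, -D L^{-1} q\rangle\big] = \E\big[h'(p)\,\langle Dp,\, -D L^{-1} q\rangle\big],
\]
which is exactly the claimed identity.

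The only delicate point — and where I would concentrate the write‑up — is that $h$ is assumed merely continuous with a bounded first derivative rather than smooth, so the chain rule is not literally available and the integrability of the right‑hand side must be checked. I would resolve this by mollification: choose $h_k \in C^\infty$ with $h_k \to h$ uniformly on compacts and $\sup_k \|h_k'\|_\infty \le \|h'\|_\infty$, establish the identity for each $h_k$ (where the chain rule is classical and every term is manifestly integrable), and pass to the limit. The limit is justified because $p$ and $q$ are fixed finite‑degree Gaussian polynomials and $Dp$, $D L^{-1} q$ are fixed finite‑degree polynomial vector fields, so by the degree‑$d$ concentration bound (Theorem~\ref{thm:dcb}) all of them lie in every $L^r$; hence $h_k(p) \to h(p)$ in $L^2$, the inner product $\langle Dp, -DL^{-1}q\rangle$ is integrable, and $h_k'(p) \to h'(p)$ boundedly and almost surely (using that the law of $p$ is absolutely continuous, e.g.\ by Theorem~\ref{thm:cw}, unless $p$ is constant, in which case both sides vanish trivially). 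Everything beyond this routine limiting step is standard manipulation of $D$, $\delta$, $L$, $L^{-1}$; indeed the identity appears verbatim as Equation~(2.22) of \cite{NourdinPeccati09}, to which one may simply appeal.
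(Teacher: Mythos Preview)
Your argument is correct and is precisely the standard derivation of this identity (duality of $D$ and $\delta$, the relation $L=-\delta D$, $LL^{-1}q=q$ for centered $q$, and the chain rule, with a routine mollification to handle merely $C^1$ test functions). The paper does not give its own proof at all: it simply quotes the identity from \cite{NourdinPeccati09} and immediately specializes to $h(x)=x$, so there is nothing to compare against beyond noting that your write-up supplies exactly the argument the cited reference contains.
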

Specializing to the case $h(x)=x$, we have
 \begin{corollary}\label{corr:a}
 Let $p$ and $q$ be finite degree polynomials over $\mathcal{X}$ with $ \mathbf{E}[q]=0$. Then,
 $\mathbf{E} [q p] = \mathbf{E} [ \langle Dp \ , \ -DL^{-1} q\rangle] $.
 \end{corollary}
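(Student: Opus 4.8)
The plan is simply to invoke the preceding identity (the Malliavin--Stein integration-by-parts formula, Equation (2.22) of \cite{NourdinPeccati09}) with the specific choice $h(x) = x$. First I would verify that this choice is legitimate: $h(x)=x$ is continuous and its first derivative $h'(x) \equiv 1$ is bounded, so the hypotheses of the claim are met. I would also note that since $p$ and $q$ are finite-degree Gaussian polynomials over $\mathcal{X} = \R^n$ equipped with $N(0,1)^n$, all the relevant objects are well-defined and integrable: the Malliavin derivative $Dp$ is a genuine $\mathcal{H}$-valued random variable (as recorded in the explicit coordinate formula for $DF$ above), the pseudo-inverse $L^{-1}q = \sum_{r\ge 1} -J_r(q)/r$ is well-defined because $q$ has only finitely many nonzero Wiener-chaos components, and all moments of polynomials in Gaussian variables are finite so the products $qp$ and $\langle Dp, -DL^{-1}q\rangle$ are in $L^1$.

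With $h(x)=x$ we have $h(p) = p$ and $h'(p) = 1$, so the claim's conclusion $\mathbf{E}[q\,h(p)] = \mathbf{E}[h'(p)\cdot \langle Dp,\, -DL^{-1}q\rangle]$ collapses directly to $\mathbf{E}[qp] = \mathbf{E}[\langle Dp,\, -DL^{-1}q\rangle]$, which is exactly the stated corollary.

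There is essentially no obstacle here: the only point requiring a word of care is confirming that $h(x)=x$ falls within the class of test functions for which the integration-by-parts identity was stated, and that the integrability needed to make both sides finite is automatic in our setting of finite-degree Gaussian polynomials. Both are immediate, so the corollary follows in one line. $\qed$
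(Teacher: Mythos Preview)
Your proposal is correct and matches the paper's own argument exactly: the paper simply writes ``Specializing to the case $h(x)=x$, we have'' and states the corollary. Your additional remarks verifying that $h(x)=x$ has bounded derivative and that all integrability conditions are met for finite-degree Gaussian polynomials are more than the paper bothers to spell out, but they are the right sanity checks.
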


\subsection{Proof of Theorem \ref{thm:mainclt}}
\label{sec:mainclt-proof}

We recall the following CLT due to Nourdin and Peccati:

 \begin{theorem}\label{thm:Nourdin} [\cite{NourdinPeccati09}, see also
\cite{Nourdin-notes}, Theorem~6.1]
Let $F=(F_1, \ldots, F_r)$ where each $F_i$ is a Gaussian polynomial
with $\E[F_i]=0$.
Let $C$ be a symmetric PSD matrix in $\mathbb{R}^{r \times r}$
and let ${\cal G}$ be a mean-0 $r$-dimensional
Gaussian random variable with
covariance matrix $C$. Then for any $\alpha : \mathbb{R}^{r}
\rightarrow \mathbb{R}, \alpha \in {\cal C}^2$
such that $\Vert \alpha'' \Vert_{\infty} < \infty$, we have
$$
\left| \E [\alpha(F)] - \E [\alpha({\cal G})] \right| < {\frac 1 2}
\Vert \alpha'' \Vert_{\infty} \cdot \left( \sum_{i=1}^r \sum_{j=1}^r
\E [|C(i,j) - Y(i,j)|]\right)
$$
where $Y(i,j) =  \langle DF_i ,  -DL^{-1} F_j\rangle$.
 \end{theorem}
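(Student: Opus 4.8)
The plan is to prove this bound by the Gaussian interpolation (``smart path'') method combined with two integration-by-parts steps: the usual finite-dimensional Stein identity for the Gaussian side, and the Malliavin integration-by-parts identity (Corollary~\ref{corr:a} and the key identity preceding it) for the polynomial side. First I would realize $F=(F_1,\dots,F_r)$ and ${\cal G}=({\cal G}_1,\dots,{\cal G}_r)$ on a common probability space so that they are independent (e.g.\ by taking the isonormal process over $\mathcal{H}\oplus\R^r$, with $F$ a vector of degree-$d$ Gaussian polynomials in the first block and ${\cal G}$ the appropriate linear image of the second block). For $t\in[0,1]$ define
\[
\varphi(t)=\E\!\left[\alpha\!\left(\sqrt{1-t}\,F+\sqrt{t}\,{\cal G}\right)\right],
\]
so that $\varphi(0)=\E[\alpha(F)]$, $\varphi(1)=\E[\alpha({\cal G})]$, and it suffices to bound $|\varphi(1)-\varphi(0)|\le\int_0^1|\varphi'(t)|\,dt$.

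Differentiating under the expectation (legitimate because each $F_i$ is a fixed-degree Gaussian polynomial, hence has all moments finite by hypercontractivity, and $\|\alpha''\|_\infty<\infty$; one may first work on $[\delta,1-\delta]$ and let $\delta\to0$) gives
\[
\varphi'(t)=\sum_{i=1}^r\E\!\left[\partial_i\alpha\big(\sqrt{1-t}\,F+\sqrt{t}\,{\cal G}\big)\left(\tfrac{1}{2\sqrt{t}}{\cal G}_i-\tfrac{1}{2\sqrt{1-t}}F_i\right)\right].
\]
I would treat the two groups of terms separately. For the ${\cal G}_i$ terms, condition on $F$ and apply the multivariate Gaussian Stein identity $\E[{\cal G}_i\,g({\cal G})]=\sum_j C(i,j)\,\E[\partial_j g({\cal G})]$ with $g(\cdot)=\partial_i\alpha(\sqrt{1-t}F+\sqrt{t}\,\cdot)$; by the chain rule the factor $\sqrt{t}$ produced cancels the $1/\sqrt{t}$, yielding $\tfrac12\sum_{i,j}C(i,j)\,\E[\partial_i\partial_j\alpha(\sqrt{1-t}F+\sqrt{t}\,{\cal G})]$. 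For the $F_i$ terms, condition on ${\cal G}$ and apply the Malliavin integration-by-parts identity in its multivariate form, $\E[F_i\,g(F)]=\sum_j\E[\partial_j g(F)\,\langle DF_j,-DL^{-1}F_i\rangle]$ (valid since $\E[F_i]=0$), again with $g(\cdot)=\partial_i\alpha(\sqrt{1-t}\,\cdot+\sqrt{t}\,{\cal G})$; the chain-rule factor $\sqrt{1-t}$ cancels the $1/\sqrt{1-t}$, yielding $-\tfrac12\sum_{i,j}\E[\partial_i\partial_j\alpha(\sqrt{1-t}F+\sqrt{t}\,{\cal G})\,\langle DF_j,-DL^{-1}F_i\rangle]$. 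The cancellation of the $1/\sqrt t$ and $1/\sqrt{1-t}$ singularities is exactly what makes $\varphi'$ extend continuously to $[0,1]$.

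Combining the two computations,
\[
\varphi'(t)=\tfrac12\sum_{i,j=1}^r\E\!\left[\partial_i\partial_j\alpha\big(\sqrt{1-t}\,F+\sqrt{t}\,{\cal G}\big)\Big(C(i,j)-\langle DF_j,-DL^{-1}F_i\rangle\Big)\right],
\]
so using $|\partial_i\partial_j\alpha|\le\|\alpha''\|_\infty$ pointwise gives $|\varphi'(t)|\le\tfrac12\|\alpha''\|_\infty\sum_{i,j}\E\big[\,|C(i,j)-\langle DF_j,-DL^{-1}F_i\rangle|\,\big]$; relabeling $i\leftrightarrow j$ in the sum and using the symmetry of $C$ turns the inner quantity into $\sum_{i,j}\E[|C(i,j)-Y(i,j)|]$ with $Y(i,j)=\langle DF_i,-DL^{-1}F_j\rangle$ as in the statement. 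Integrating over $t\in[0,1]$ yields the claimed inequality. The main obstacle here is technical rather than conceptual: making the differentiation under the integral sign and the two integration-by-parts steps fully rigorous before the $1/\sqrt t,\ 1/\sqrt{1-t}$ factors cancel, and confirming that $\alpha\in\mathcal{C}^2$ with bounded second derivatives is enough regularity for the Malliavin identity (whose clean form wants a smooth test function). I expect this to be handled in the standard way --- mollify $\alpha$ to a $\mathcal{C}^\infty$ function with second derivatives bounded by $\|\alpha''\|_\infty$, run the argument, and pass to the limit, invoking the fixed-degree hypercontractive moment bounds on the $F_i$ to get the required uniform integrability.
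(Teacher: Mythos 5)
Your proposal is correct, but note that the paper does not prove this statement at all: Theorem~\ref{thm:Nourdin} is imported verbatim from the Nourdin--Peccati literature (\cite{NourdinPeccati09}, \cite{Nourdin-notes}, see also \cite{NPR2010}) and is then used as a black box to derive Theorem~\ref{thm:mainclt}. What you have written is essentially a reconstruction of the standard ``smart path'' proof of that cited result: interpolate via $\varphi(t)=\E[\alpha(\sqrt{1-t}\,F+\sqrt{t}\,{\cal G})]$ with $F$ and ${\cal G}$ independent, use the finite-dimensional Gaussian Stein identity (valid for PSD, possibly degenerate, $C$ by writing ${\cal G}=C^{1/2}Z$) on one group of terms, and the multivariate form of the Malliavin identity $\E[F_i\,g(F)]=\sum_j\E[\partial_j g(F)\,\langle DF_j,-DL^{-1}F_i\rangle]$ --- which follows from the paper's Corollary~\ref{corr:a} together with the chain rule $D(g(F))=\sum_j\partial_j g(F)\,DF_j$ --- on the other. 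Your handling of the small points is also right: the $1/\sqrt{t}$, $1/\sqrt{1-t}$ factors cancel after integration by parts, the index mismatch $\langle DF_j,-DL^{-1}F_i\rangle$ versus $Y(i,j)$ disappears after summing and using the symmetry of $C$, and since $\alpha\in{\cal C}^2$ with $\Vert\alpha''\Vert_\infty<\infty$ each $\partial_i\alpha$ already has a bounded derivative, so the mollification step you mention is a harmless safety net rather than a necessity; the domination needed to differentiate under the expectation comes from the at-most-quadratic growth of $\alpha$ and the finiteness of all moments of Gaussian polynomials, as you say.
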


We now use Theorem \ref{thm:Nourdin} to prove Theorem
\ref{thm:mainclt}.

\begin{proof}
As in Theorem~\ref{thm:Nourdin}, we write $Y(a,b)$ to denote
$\langle D F_a, -DL^{-1}F_b \rangle.$
For any $1 \leq a,b \leq r$, we have
\begin{equation} \label{eq:CY}
C(a,b)=\Cov(F_a,F_b) = \E[F_a F_b] = \E[Y(a,b)],
\end{equation}
where the second equality is because $F_a$ and $F_b$ have mean 0
and the third equality is by Corollary~\ref{corr:a}.
Since $C$ is a covariance matrix and every covariance matrix
is PSD, we may apply Theorem \ref{thm:Nourdin}, and we get
that
\[
\left| \E  [\alpha(F)] -  \E [\alpha({\cal G})] \right| < {\frac {r^2} 2}
\Vert \alpha'' \Vert_{\infty} \cdot \max_{1 \leq a,b \leq r}
\E[|C(a,b)-Y(a,b)|]
= {\frac {r^2} 2}
\Vert \alpha'' \Vert_{\infty} \cdot \max_{1 \leq a,b \leq r}
\E[|Y(a,b)-\E[Y(a,b)]|],
\]
where we used (\ref{eq:CY}) for the equality.
By Jensen's inequality we have
$\E[|Y(a,b)-\E[Y(a,b)]|] \leq \sqrt{\Var[Y(a,b)]}.$
Lemma \ref{lem:variance} below gives us that
$\Var[Y(a,b)] \leq 2^{O(d \log d)}\eps$,
and the theorem is proved.
\end{proof}

So to prove Theorem \ref{thm:mainclt},
it remains to establish the following lemma:

\begin{lemma} \label{lem:variance}
For each $1 \leq a,b \leq k$,
we have that $\Var[Y(a,b) ] = 2^{O(d \log d)}\eps$
where $Y(a,b) =
\langle DF_a ,  -DL^{-1} F_b\rangle$.
\end{lemma}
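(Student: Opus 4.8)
The plan is to expand $Y(a,b)=\langle DF_a,\,-DL^{-1}F_b\rangle$ in terms of the Wiener chaos decompositions $F_a=\sum_{p=0}^d I_p(f_p^{(a)})$ and $F_b=\sum_{q=0}^d I_q(f_q^{(b)})$, compute the variance chaos-level by chaos-level, and bound each piece by a contraction-norm quantity that is in turn controlled by the eigenregularity of the $F_i$'s. Concretely, since $L^{-1}$ acts as multiplication by $-1/q$ on ${\cal W}^q$, we have $-DL^{-1}F_b=\sum_q \tfrac1q DI_q(f_q^{(b)})$, so by bilinearity $Y(a,b)=\sum_{p,q}\tfrac1q\langle DI_p(f_p^{(a)}),\,DI_q(f_q^{(b)})\rangle$. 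Applying Claim~\ref{claim:npr-ip} to each term writes $\langle DF_a,-DL^{-1}F_b\rangle$ as a (bounded, i.e. $2^{O(d\log d)}$-term) linear combination of Ito integrals $I_{p+q-2r}(f_p^{(a)}\widetilde\otimes_r f_q^{(b)})$ over $1\le r\le\min\{p,q\}$. The mean of $Y(a,b)$ is exactly the $r=p$ ($p=q$) scalar term $\E[F_aF_b]$, so $Y(a,b)-\E[Y(a,b)]$ is a bounded linear combination of $I_m(\cdot)$'s with $m\ge1$, and by Claim~\ref{claim:ipito} (orthogonality of distinct Wiener chaoses) $\Var[Y(a,b)]$ is at most $2^{O(d\log d)}$ times a sum of terms of the form $(p+q-2r)!\,\|f_p^{(a)}\widetilde\otimes_r f_q^{(b)}\|^2$. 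Thus it suffices to show each such contraction-norm term is $O(\eps)$ (up to a $2^{O(d\log d)}$ factor); this is essentially the content that Theorem~\ref{thm:npr-exp} packages, but either route reduces the whole problem to bounding $\|f_p^{(a)}\widetilde\otimes_r f_q^{(b)}\|$.

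The core estimate I would isolate as a sublemma is: if $f\in{\cal H}^{\odot p}$ and $g\in{\cal H}^{\odot q}$ with $\|f\|,\|g\|\le O_d(1)$ (which holds here since $\Var[F_i]\le1$ forces $p!\|f_p^{(i)}\|^2\le1$), and if $\lambda_{\max}(f)\le\eps$ (and similarly for $g$), then for every $1\le r\le\min\{p,q\}$ with the integral $I_{p+q-2r}(f\widetilde\otimes_r g)$ genuinely of positive degree, one has $\|f\widetilde\otimes_r g\|\le 2^{O(d\log d)}\sqrt\eps$. The idea for this is to view the contraction $f\otimes_r g$ as a matrix product: flatten $f$ into an $[n]^{p-r}\times[n]^r$ matrix $M_f$ and $g$ into an $[n]^r\times[n]^{q-r}$ matrix $M_g$, so $f\otimes_r g = M_f M_g$ and $\|f\otimes_r g\|_F\le\|M_f\|_{op}\cdot\|M_g\|_F$ (or symmetrically $\|M_f\|_F\cdot\|M_g\|_{op}$). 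Now $\|M_g\|_F=\|g\|=O_d(1)$, and the operator norm $\|M_f\|_{op}$ is, by definition of the flattening into the partition $(\{1,\dots,p-r\},\{p-r+1,\dots,p\})$, exactly a quantity of the form $\lambda_{S,\overline S}(f)$ — hence at most $\lambda_{\max}(f)\le\eps$ — provided both blocks are nonempty, i.e. provided $1\le r\le p-1$. Symmetrization ($\widetilde\otimes_r$ versus $\otimes_r$) only decreases the Frobenius norm, so $\|f\widetilde\otimes_r g\|\le\eps\cdot O_d(1)$ in that range; combining with $(p+q-2r)!\le 2^{O(d\log d)}$ gives the claimed bound on that term. (We in fact get $\eps$ rather than $\sqrt\eps$ from these terms; the $\sqrt\eps$ in the statement is simply the weaker bound we are content to quote.)

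The delicate cases — and what I expect to be the main obstacle — are the boundary values $r=p$ and $r=q$, where one of the two flattened matrices has a trivial block and the operator-norm argument above does not directly produce a $\lambda_{\max}$ factor. When $r=\min\{p,q\}$, say $p\le q$ and $r=p$: if $p<q$ then $I_{p+q-2r}(\cdot)=I_{q-p}(\cdot)$ still has positive degree, and $f\otimes_p g$ contracts all indices of $f$ against $p$ of the $q$ indices of $g$; here one flattens $g$ with its first $p$ indices as one block (size $p\ge1$) and remaining $q-p\ge1$ indices as the other block, so $\|f\otimes_p g\|_F\le\|f\|\cdot\lambda_{S,\overline S}(g)\le O_d(1)\cdot\eps$, again using eigenregularity — this time of $g$ rather than $f$. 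The genuinely problematic term is $p=q=r$, which contributes the scalar $p!\langle f,g\rangle$; but that term is precisely what we subtracted off as $\E[Y(a,b)]$, so it does not appear in the centered quantity at all. One must also handle the low-degree chaos components $p\in\{0,1\}$ (where $\lambda_{\max}$ is defined to be $0$, and $DI_0=0$, $DI_1(f_1)=f_1$ is deterministic so contributes nothing to the variance) and be careful that the remark ``$F_i$ is $\eps$-eigenregular with $\Var[F_i]\le1$'' is being used with the unnormalized $\lambda_{\max}(F_i)\le\eps$ bound. Assembling: $\Var[Y(a,b)]\le 2^{O(d\log d)}\cdot\max_{p,q,r}(p+q-2r)!\,\|f_p^{(a)}\widetilde\otimes_r f_q^{(b)}\|^2\le 2^{O(d\log d)}\eps$, which is Lemma~\ref{lem:variance}. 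I would double-check the bookkeeping of binomial/factorial prefactors via Theorem~\ref{thm:npr-exp} directly rather than re-deriving them, since that theorem already gives $\E[\langle DF_a,DG_b\rangle^2]$ as an explicit finite sum of $\|f\widetilde\otimes_r g\|^2$ terms with coefficients that are manifestly $2^{O(d\log d)}$, reducing the proof to the matrix-flattening norm bound plus identification of the $\E[Y]^2$ term.
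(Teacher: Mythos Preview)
Your proposal is correct and follows essentially the same approach as the paper: expand $Y(a,b)$ via Claim~\ref{claim:npr-ip} as a bounded linear combination of iterated integrals $I_{p+q-2r}(f_p^{(a)}\widetilde\otimes_r f_q^{(b)})$, identify the scalar $p=q=r$ terms as the mean, and bound each remaining $\|f\widetilde\otimes_r g\|$ by $\lambda_{\max}\cdot\|\cdot\|$ via the matrix-flattening argument (this is exactly the paper's Fact~\ref{fact:contract-eigen}). The boundary-case analysis you sketch (switching which tensor supplies the $\lambda_{\max}$ factor when $r=\min\{p,q\}$ with $p\neq q$; the $p,q\le 1$ cases) is handled the same way in the paper.

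One small organizational difference: the paper splits $Y(a,b)=X+Y$ into the same-degree part $X$ and the mixed-degree part $Y$, bounds $\E[X^2]$ and $\E[Y^2]$ separately (via Theorem~\ref{thm:npr-exp}), and then controls the cross term by Cauchy--Schwarz with $\E[X^2]$, which contains the $O(1)$ mean-squared piece and so only yields $2^{O(d\log d)}\eps$. Your approach of centering first --- writing $Y(a,b)-\E[Y(a,b)]$ as a sum of $O(d^3)$ positive-degree It\^o integrals and applying $(\sum x_j)^2\le N\sum x_j^2$ --- is slightly more direct and in fact gives the stronger bound $\Var[Y(a,b)]\le 2^{O(d\log d)}\eps^2$, which of course implies the lemma as stated.
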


\subsubsection{Proof of Lemma \ref{lem:variance}}
We begin with the following useful facts about contraction products:

\begin{fact} \label{fact:contract-eigen}
Let $h \in {\cal H}^{\odot q_1}$,
$g \in {\cal H}^{\odot q_2}$ where $q_1 \geq q_2.$
Then for $1 \leq r \leq \min\{q_1-1,q_2\}$,
we have
$\|h \widetilde{\otimes}_r g\|  \leq \lambda_{\max}(h) \|g\|.
$
\end{fact}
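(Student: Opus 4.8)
The plan is to reduce the statement to a bound on an operator norm of $h$, using the interpretation of the contraction product as a matrix multiplication that was described in Section~\ref{sec:basics}.

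First I would reshape the two symmetric tensors into matrices. Reshape $h \in {\cal H}^{\odot q_1}$ into a matrix $\wh{h}$ with rows indexed by $[n]^r$ (the $r$ coordinates that get contracted) and columns indexed by $[n]^{q_1 - r}$, so that $\wh{h}_{(i_1,\dots,i_r),(t_1,\dots,t_{q_1-r})} = h(i_1,\dots,i_r,t_1,\dots,t_{q_1-r})$; since $h$ is symmetric it is irrelevant which $r$ of the $q_1$ coordinates we single out. Likewise reshape $g \in {\cal H}^{\odot q_2}$ into a matrix $\wh{g}$ with rows indexed by $[n]^r$ and columns indexed by $[n]^{q_2 - r}$. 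Unwinding the definition of the $r$-th contraction product shows that $h \otimes_r g = \wh{h}^{\,T}\wh{g}$ as a matrix product (an $[n]^{q_1-r}\times[n]^{q_2-r}$ matrix, reshaped back into ${\cal H}^{\otimes q_1+q_2-2r}$).

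Next I would apply submultiplicativity of the Frobenius norm, $\|\wh{h}^{\,T}\wh{g}\|_F \le \|\wh{h}^{\,T}\|_2\cdot\|\wh{g}\|_F = \|\wh{h}\|_2\cdot\|g\|$, using that reshaping preserves the Frobenius/$\ell_2$ norm (so $\|\wh{g}\|_F = \|g\|$). The key step is to identify $\|\wh{h}\|_2$ with an eigenvalue of $h$: take the partition of $[q_1]$ into $S=\{1,\dots,r\}$ and $\overline{S}=\{r+1,\dots,q_1\}$, which are both nonempty precisely because $1 \le r \le q_1-1$ (this is exactly where the hypothesis $r \le \min\{q_1-1,q_2\}$ is used). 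The variational characterization of operator norm gives $\|\wh{h}\|_2 = \sup_{x\in{\cal H}^{\otimes r},\,y\in{\cal H}^{\otimes(q_1-r)}} \langle h, x\otimes y\rangle/(\|x\|_F\|y\|_F) = \lambda_{S,\overline{S}}(h) \le \lambda_{\max}(h)$, by the definition of $\lambda_{\max}(\cdot)$ for symmetric tensors. Hence $\|h\otimes_r g\| \le \lambda_{\max}(h)\,\|g\|$.

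Finally, passing from $h\otimes_r g$ to $h\widetilde{\otimes}_r g = \Sym(h\otimes_r g)$ cannot increase the norm: $\Sym$ is the average of the coordinate-permutation operators on ${\cal H}^{\otimes(q_1+q_2-2r)}$, each of which is an isometry, so $\|\Sym(f)\| \le \|f\|$ by the triangle inequality. This yields $\|h\widetilde{\otimes}_r g\| \le \|h\otimes_r g\| \le \lambda_{\max}(h)\,\|g\|$, as claimed. I do not expect a genuine obstacle here; the only points requiring care are the index bookkeeping in the reshaping step (harmless thanks to the symmetry of $h$) and checking that the chosen partition $(S,\overline{S})$ has both parts nonempty so that $\lambda_{S,\overline{S}}(h)$ is well-defined and bounded by $\lambda_{\max}(h)$.
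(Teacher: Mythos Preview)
Your proof is correct and follows essentially the same approach as the paper: reshape the tensors into matrices so that the contraction product becomes a matrix product, bound $\|h\otimes_r g\|_F$ by the operator norm of the reshaped $h$ times $\|g\|$, identify that operator norm with $\lambda_{S,\overline{S}}(h)\le\lambda_{\max}(h)$ (using $1\le r\le q_1-1$ to ensure both blocks are nonempty), and observe that symmetrization cannot increase the norm. The only cosmetic differences are that the paper places the symmetrization step first and writes out the Frobenius-norm submultiplicativity column by column rather than citing it as a single inequality.
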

\begin{proof}
We first observe that the range of allowed values on $r$ ensures that
the contraction product $h \widetilde{\otimes}_r g$ is well defined.
Next, we note that since symmetrizing can never increase the norm of a tensor,
we have
$\|h \widetilde{\otimes}_r g\|^2  \leq \|h {\otimes}_r g\|^2.$
As mentioned
in our earlier discussion about contraction products we may
view $h$ as an $[n]^{q_1-r} \times [n]^r$ matrix $H$
and $g$ as an $[n]^r \times [n]^{q_2-r}$ matrix $G$ with columns $G_i$.
Since $1 \leq r \leq q_1-1$ the matrix $H$ is non-degenerate (neither a single
row nor a single column), and
we have
\[
\|h {\otimes}_r g\|^2 = \|HG\|^2_F = \sum_i \|H G_i\|^2_2
\leq
\sum_i \lambda_{\max}(h)^2 \|G_i\|^2_2 =
\lambda_{\max}(h)^2 \|G\|^2_2 =
\lambda_{\max}(h)^2 \|g\|^2
\]
as claimed.
\end{proof}

\begin{fact} \label{fact:contraction-ip}
Fix $a \in {\cal H}^{\odot q_1},$
$b \in {\cal H}^{\odot q_2}$ where $q_1 \geq q_2$ and
$c \in {\cal H}^{\odot q_3},$ $d \in {\cal H}^{\odot q_4}$
where $q_3 \geq q_4.$ Then for $1 \leq r_1 \leq \min\{q_1 -1,
q_2\}$ and
$1 \leq r_2 \leq \{q_3 - 1,q_4\}$
satisfying $q_1 + q_2 - 2r_1=q_3 + q_4 - 2r_2$, we have
$
\langle
a \widetilde{\otimes}_{r_1} b,
c \widetilde{\otimes}_{r_2} d
\rangle \leq
\lambda_{\max}(a) \lambda_{\max}(c) \cdot \|b\| \cdot \|d\|.
$
\end{fact}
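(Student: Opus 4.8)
\textbf{Proof plan for Fact~\ref{fact:contraction-ip}.}
The plan is to reduce this inner-product bound to an application of the Cauchy--Schwarz inequality followed by the norm bound already established in Fact~\ref{fact:contract-eigen}. First I would observe that since $a\widetilde{\otimes}_{r_1}b$ and $c\widetilde{\otimes}_{r_2}d$ are both symmetric tensors lying in the same space $\mathcal{H}^{\odot m}$ (where $m=q_1+q_2-2r_1=q_3+q_4-2r_2$), the quantity $\langle a\widetilde{\otimes}_{r_1}b,\ c\widetilde{\otimes}_{r_2}d\rangle$ is a genuine scalar inner product, so Cauchy--Schwarz gives
$$
\langle a\widetilde{\otimes}_{r_1}b,\ c\widetilde{\otimes}_{r_2}d\rangle
\ \leq\
\|a\widetilde{\otimes}_{r_1}b\|\cdot\|c\widetilde{\otimes}_{r_2}d\|.
$$
Then I would apply Fact~\ref{fact:contract-eigen} to each factor separately. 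The hypotheses line up: for the first factor we have $a\in\mathcal{H}^{\odot q_1}$, $b\in\mathcal{H}^{\odot q_2}$ with $q_1\geq q_2$ and $1\leq r_1\leq\min\{q_1-1,q_2\}$, so Fact~\ref{fact:contract-eigen} yields $\|a\widetilde{\otimes}_{r_1}b\|\leq\lambda_{\max}(a)\|b\|$; symmetrically $\|c\widetilde{\otimes}_{r_2}d\|\leq\lambda_{\max}(c)\|d\|$. Multiplying the two bounds gives exactly
$$
\langle a\widetilde{\otimes}_{r_1}b,\ c\widetilde{\otimes}_{r_2}d\rangle
\ \leq\ \lambda_{\max}(a)\,\lambda_{\max}(c)\cdot\|b\|\cdot\|d\|,
$$
as claimed.

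The only point that requires a moment's care --- and the closest thing to an obstacle --- is checking that Fact~\ref{fact:contract-eigen} is genuinely applicable, i.e.\ that the contraction indices $r_1,r_2$ fall in the ranges its statement demands (in particular $r_i\leq q_i-1$ for the ``larger'' tensor, so that the matrix unfolding is non-degenerate, and $r_i\leq$ the order of the ``smaller'' tensor, so the contraction is even defined). These are precisely the constraints $1\leq r_1\leq\min\{q_1-1,q_2\}$ and $1\leq r_2\leq\min\{q_3-1,q_4\}$ assumed in the statement of Fact~\ref{fact:contraction-ip}, so there is nothing further to verify. I would also note in passing that the identity $q_1+q_2-2r_1=q_3+q_4-2r_2$ is exactly what is needed for the two contracted tensors to live in a common space, making the inner product meaningful; this hypothesis is used only to justify that the left-hand side is well-defined, not in the estimate itself. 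Everything else is a routine invocation of the two preceding facts, so no further computation is needed.
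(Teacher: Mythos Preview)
Your proof is correct and follows exactly the same approach as the paper: Cauchy--Schwarz on the inner product, then two applications of Fact~\ref{fact:contract-eigen}. The paper's argument is in fact a terser version of what you wrote.
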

\begin{proof}
By Cauchy-Schwarz we have that
\[
\langle
a \widetilde{\otimes}_{r_1} b,
c \widetilde{\otimes}_{r_2} d
\rangle \leq
{\| a \widetilde{\otimes}_{r_1} b \|} \cdot
{\| c \widetilde{\otimes}_{r_2} d \|},
\]
and using Fact \ref{fact:contract-eigen} twice
this is at most the claimed bound.
\end{proof}

Fix $a,b \in [k]$.  We may write
\[
F_a = \sum_{q=1}^d I_q(a_q) \quad \quad
\text{and}
\quad \quad
F_b = \sum_{q=1}^d I_q(b_q) \quad \quad
\]
where each $a_q,b_q \in {\cal H}^{\odot q}$, and by assumption each
$2 \leq q \leq d$ has $\lambda_{\max}(a_q),\lambda_{\max}(b_q) \leq \eps.$
(Note that there is no
contribution of the form $I_0(a_0)$ because by assumption we have $\E[F_a]=0$
and $\E[I_q(a_q)]=0$ for $q>0$, and likewise for $b$.)
Recall also that by assumption we have $\Var[F_a] \leq 1$, and hence
$\E[F_a^2] \leq 1$.  Using Claim \ref{claim:ipito}, we have that
\[
\E[F_a^2] = \E\left[\left(\sum_{q=1}^d I_q(a_q)\right)^2\right] =
\sum_{q=1}^d \E[I_q(a_q)^2] = \sum_{q=1}^d q! \cdot \langle a_q,a_q\rangle \leq
1,
\]
which immediately implies that
\begin{equation} \label{eq:lownorm}
\|a_q\|^2 \leq {\frac 1 {q!}} \quad \text{for all~}q\in[d]
\text{~(and likewise~}
\|b_q\|^2 \leq {\frac 1 {q!}}\text{).}
\end{equation}

Recall that $\Var[Y(a,b)]=\E[Y(a,b)^2] - \E[Y(a,b)]^2.$  We begin by giving a
compact expression for $\E[Y(a,b)]^2$ as follows:

\begin{eqnarray} \label{eq:100}
\E[Y(a,b)]^2&=&\E[
\langle DF_a ,  -DL^{-1} F_b\rangle
]^2
= \E[F_a F_b]^2 \quad \text{(by Corollary \ref{corr:a})} \nonumber\\
&=& \E
\left[\left(
\sum_{q=1}^d I_q(a_q) \right) \left(
\sum_{q=1}^d I_q(b_q) \right)\right]^2 \nonumber\\
&=&
\left( \sum_{q=1}^d q! \langle a_q,b_q \rangle \right)^2
\quad \text{(by linearity of expectation and Claim \ref{claim:ipito})}.
\nonumber\\
\end{eqnarray}

Thus to prove Lemma \ref{lem:variance} it suffices to show that
\begin{equation} \label{eq:goal}
{
\E[Y(a,b)^2]   \leq
\left( \sum_{q=1}^d q! \langle a_q,b_q \rangle \right)^2
+
2^{O(d \log d)}\eps;
}
\end{equation}
we do this below.
We begin by writing
\begin{equation} \label{eq:Yab}
Y(a,b) =
\langle DF_a ,  -DL^{-1} F_b\rangle =
\left\langle
\sum_{q=1}^d D I_q(a_q),
\sum_{q=1}^d D I_q(b_q)/q
\right\rangle
= X + Y,
\end{equation}
where
\begin{equation} \label{eq:XY}
X = \sum_{q=1}^d {\frac 1 q} \langle D I_q(a_q),D I_q(b_q) \rangle
\quad \text{and} \quad
Y = \sum_{d \geq q_1 > q_2 \geq 1} \left({\frac 1 {q_1}} +
{\frac 1 {q_2}}\right) \langle D I_{q_1}(a_{q_1}),D I_{q_2}(b_{q_2}) \rangle.
\end{equation}
Thus our goal is to upper bound $\E[(X+Y)^2]$ by the RHS of
(\ref{eq:goal}); we do this via the following two claims.

\begin{claim} \label{claim:X}
We have
\begin{equation} \label{eq:X}
{
\E[X^2]  \leq
\left( \sum_{q=1}^d q! \langle a_q,b_q \rangle \right)^2  +
2^{O(d \log d)}\eps^2.
}
\end{equation}
\end{claim}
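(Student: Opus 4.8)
The plan is to expand $X$ via the Malliavin inner–product formula of Claim~\ref{claim:npr-ip}, peel off the single ``fully contracted'' term that is responsible for the main term $\left(\sum_q q!\langle a_q,b_q\rangle\right)^2$, and show the leftover is a sum of Wiener chaoses built out of contraction products whose norms are all $O(\eps)$ by eigenregularity.

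First I would apply Claim~\ref{claim:npr-ip} with $p=q$ to each summand of $X=\sum_{q=1}^d \frac1q\langle DI_q(a_q),DI_q(b_q)\rangle$, obtaining
$$X \;=\; \sum_{q=1}^d q\sum_{r=1}^{q}(r-1)!\binom{q-1}{r-1}^2\, I_{2q-2r}\!\left(a_q\widetilde{\otimes}_r b_q\right).$$
The $r=q$ inner term is special: $a_q\widetilde{\otimes}_q b_q$ is the scalar $\langle a_q,b_q\rangle$, $I_0$ is the identity on scalars, and the prefactor collapses to exactly $q!$. Hence I would write $X=M+X'$ with $M:=\sum_{q=1}^d q!\langle a_q,b_q\rangle$ a deterministic constant and $X':=\sum_{q=2}^d\sum_{r=1}^{q-1}\mu_{q,r}\,I_{2q-2r}(a_q\widetilde{\otimes}_r b_q)$, where $\mu_{q,r}=q(r-1)!\binom{q-1}{r-1}^2$. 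Every term of $X'$ lies in a chaos $\mathcal{W}^{2q-2r}$ with $2q-2r\ge 2$, so $\E[X']=0$; consequently $\E[X^2]=M^2+2M\,\E[X']+\E[(X')^2]=M^2+\E[(X')^2]$, and it remains only to prove $\E[(X')^2]\le 2^{O(d\log d)}\eps^2$.

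To bound $\E[(X')^2]$ I would expand the square and invoke Claim~\ref{claim:ipito}: the expectation of $I_{2q-2r}(a_q\widetilde{\otimes}_r b_q)\cdot I_{2q'-2r'}(a_{q'}\widetilde{\otimes}_{r'}b_{q'})$ vanishes unless $q-r=q'-r'$, in which case it equals $(2q-2r)!\,\langle a_q\widetilde{\otimes}_r b_q,\ a_{q'}\widetilde{\otimes}_{r'}b_{q'}\rangle$. For the surviving terms the index constraints $1\le r\le q-1$ (and likewise for $r'$) are exactly what is needed to apply Fact~\ref{fact:contract-eigen}, giving $\|a_q\widetilde{\otimes}_r b_q\|\le\lambda_{\max}(a_q)\|b_q\|$; since $q\ge 2$ we have $\lambda_{\max}(a_q)\le\eps$, and by (\ref{eq:lownorm}) $\|b_q\|\le 1$, so $\|a_q\widetilde{\otimes}_r b_q\|\le\eps$ (equivalently, Fact~\ref{fact:contraction-ip} bounds the inner product directly by $\lambda_{\max}(a_q)\lambda_{\max}(a_{q'})\|b_q\|\|b_{q'}\|\le\eps^2$). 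Thus each surviving term is at most $\mu_{q,r}\mu_{q',r'}(2q-2r)!\,\eps^2$ in absolute value. I would then bound the combinatorics crudely: there are at most $d^4$ quadruples $(q,r,q',r')$, each $\mu_{q,r}\le d\cdot d!\cdot 4^d$ and $(2q-2r)!\le (2d)!$, all of which are $2^{O(d\log d)}$, so $\E[(X')^2]\le d^4\cdot 2^{O(d\log d)}\cdot\eps^2=2^{O(d\log d)}\eps^2$, which finishes Claim~\ref{claim:X}.

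I do not expect a genuinely hard step here; the argument is careful bookkeeping. The point needing the most attention is the identification of the $r=q$ piece of the Malliavin expansion with the constant $M$ — verifying that the full contraction of two symmetric $q$-tensors really is the scalar $\langle a_q,b_q\rangle$ and that the prefactor $q(r-1)!\binom{q-1}{r-1}^2$ collapses to $q!$ at $r=q$ — and checking that the cross term $2M\,\E[X']$ vanishes because $X'$ lives entirely in chaoses of positive order. Once that bookkeeping is in place, the eigenregularity bound on contraction-product norms (Fact~\ref{fact:contract-eigen}/Fact~\ref{fact:contraction-ip}) together with the finiteness of the index set does the rest, with the whole $2^{O(d\log d)}$ loss coming from the crude Hermite normalizations and binomial coefficients rather than from $n$ or $\eps$.
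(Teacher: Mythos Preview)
Your argument is correct and is in fact somewhat cleaner than the paper's. The paper does not first write $X=M+X'$; instead it writes $X^2=A+B$ where $A$ collects the diagonal-in-$q$ terms $\sum_q \tfrac{1}{q^2}\langle DI_q(a_q),DI_q(b_q)\rangle^2$ and $B$ the off-diagonal ones, invokes the precomputed formula Theorem~\ref{thm:npr-exp} to evaluate $\E[A]$, and then expands $B$ via Claim~\ref{claim:npr-ip} (handling the $r_1=q_1$ terms separately to recover the cross terms of $M^2$). Your route --- expand $X$ once via Claim~\ref{claim:npr-ip}, peel off the fully-contracted $r=q$ pieces as the deterministic constant $M$, observe $\E[X']=0$ so $\E[X^2]=M^2+\E[(X')^2]$, and bound $\E[(X')^2]$ by Claim~\ref{claim:ipito} plus Fact~\ref{fact:contract-eigen} --- reaches the same endpoint using the same ingredients but bypasses Theorem~\ref{thm:npr-exp} entirely and makes the emergence of the main term $M^2$ more transparent. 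Both approaches ultimately rest on the same eigenregularity bound $\|a_q\widetilde{\otimes}_r b_q\|\le\eps\|b_q\|$ for $1\le r\le q-1$, and the $2^{O(d\log d)}$ losses come from the same combinatorial factors.
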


\begin{claim} \label{claim:Y}
We have
\begin{equation} \label{eq:Y}
{
\E[Y^2] \leq 2^{O(d \log d)}\eps^2.
}
\end{equation}
\end{claim}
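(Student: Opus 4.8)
The plan is to expand $Y$ into a linear combination of Ito integrals via the contraction‑product identity, observe that for the off‑diagonal index pairs $q_1 > q_2$ every contraction that appears is genuine (of order at least $1$, never collapsing to a scalar), and then bound the norm of each resulting tensor using the eigenregularity of $a_{q_1}$. Concretely, first I would apply Claim~\ref{claim:npr-ip} to each summand of $Y$: for $q_1 > q_2$ it gives
$$\langle DI_{q_1}(a_{q_1}), DI_{q_2}(b_{q_2})\rangle = q_1 q_2 \sum_{r=1}^{q_2}(r-1)!\binom{q_2-1}{r-1}\binom{q_1-1}{r-1}\, I_{q_1+q_2-2r}\bigl(a_{q_1}\widetilde{\otimes}_r b_{q_2}\bigr),$$
using that the symmetrized contraction satisfies $b_{q_2}\widetilde{\otimes}_r a_{q_1} = a_{q_1}\widetilde{\otimes}_r b_{q_2}$. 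Thus $Y = \sum_t \gamma_t\, I_{m(t)}(u_t)$, where $t = (q_1,q_2,r)$ ranges over an index set of size at most $d^3$, $m(t) = q_1+q_2-2r$, $u_t = a_{q_1}\widetilde{\otimes}_r b_{q_2}$, and $|\gamma_t| \le (1/q_1+1/q_2)\, q_1 q_2 (r-1)!\binom{q_2-1}{r-1}\binom{q_1-1}{r-1} \le 2^{O(d\log d)}$.

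Two observations make this favorable. First, $m(t) \ge 1$ always: since $r \le q_2 < q_1$ we get $2r \le 2q_2 < q_1+q_2$, so no scalar ($I_0$) term ever arises — in contrast to the diagonal case of Claim~\ref{claim:X}. Second, the range $1 \le r \le q_2$ of the contraction sum coincides with the range $1 \le r \le \min\{q_1-1,q_2\}$ in which Fact~\ref{fact:contract-eigen} applies (because $q_1 > q_2$ forces $q_1-1 \ge q_2$), and there the \emph{larger} tensor $a_{q_1}$ plays the role of $h$. Since $q_1 \ge 2$ we have $\lambda_{\max}(a_{q_1}) \le \eps$, so combining Fact~\ref{fact:contract-eigen} with (\ref{eq:lownorm}) yields
$$\|u_t\| = \|a_{q_1}\widetilde{\otimes}_r b_{q_2}\| \le \lambda_{\max}(a_{q_1})\,\|b_{q_2}\| \le \eps / \sqrt{q_2!} \le \eps .$$

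Finally I would compute $\E[Y^2] = \sum_{t,t'} \gamma_t\gamma_{t'}\, \E[I_{m(t)}(u_t)\, I_{m(t')}(u_{t'})]$ and invoke Claim~\ref{claim:ipito}: a cross term vanishes unless $m(t) = m(t')$, in which case it equals $m(t)!\,\langle u_t,u_{t'}\rangle$, which by Cauchy--Schwarz (or directly by Fact~\ref{fact:contraction-ip}) is at most $d!\,\eps^2$. Summing over the at most $(d^3)^2$ pairs, each with $|\gamma_t\gamma_{t'}| \le 2^{O(d\log d)}$, gives $\E[Y^2] \le d^6 \cdot 2^{O(d\log d)}\cdot d! \cdot \eps^2 = 2^{O(d\log d)}\eps^2$, which is the claim.

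The only real content is the second paragraph — specifically the observation that for $q_1 \neq q_2$ the contraction index $r$ never reaches the value that would produce a scalar, so that, unlike in Claim~\ref{claim:X}, $Y$ carries no ``main term'' $\langle a_q,b_q\rangle$ and every piece of it is quadratically small in $\eps$. Once that is in place, everything else is bookkeeping: the coefficients from the multiplication formula, the count of terms, and the order‑matching orthogonality of Ito integrals are all crude $2^{O(d\log d)}$‑type estimates with ample room to spare.
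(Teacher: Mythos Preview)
Your proof is correct and uses essentially the same ingredients as the paper. The only difference is packaging: the paper applies Cauchy--Schwarz to the cross terms of $Y^2$ and then invokes the closed formula of Theorem~\ref{thm:npr-exp} for each $\E[\langle DI_{q_1}(a_{q_1}),DI_{q_2}(b_{q_2})\rangle^2]$, whereas you expand each inner product via Claim~\ref{claim:npr-ip} into Ito integrals and compute $\E[Y^2]$ directly using the orthogonality in Claim~\ref{claim:ipito}. Both routes reduce to the same key step --- bounding $\|a_{q_1}\widetilde{\otimes}_r b_{q_2}\|$ by $\eps$ via Fact~\ref{fact:contract-eigen}, using that $q_1>q_2$ forces $r\le q_2\le q_1-1$ so the larger tensor $a_{q_1}$ (with $q_1\ge 2$) supplies the eigenvalue factor.
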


Given Claims \ref{claim:X} and \ref{claim:Y} we have
\begin{eqnarray}
\E[(X+Y)]^2 &=& \E[X^2]+2\E[XY] + \E[Y^2] \leq
\E[X]^2 + \E[Y^2] + 2 \sqrt{\E[X^2]\E[Y]^2} \\
& {\leq} &
\left( \sum_{q=1}^d q! \langle a_q,b_q \rangle \right)^2  +
2^{O(d \log d)}\eps^2 + 2 \sqrt{\E[X^2]\E[Y]^2}.
\label{eq:101}
\end{eqnarray}
Now note that
\begin{eqnarray*}
\sqrt{\E[X^2]\E[Y]^2} &=&
\sqrt{\left( \left( \sum_{q=1}^d q! \langle a_q,b_q \rangle \right)^2  +
2^{O(d \log d)}\eps^2
\right) \cdot 2^{O(d\log d)}\eps^2}
\quad
\text{(by Claims \ref{claim:X} and \ref{claim:Y})}\\
&\leq& 2^{O(d \log d)}\eps^2 + 2^{O(d\log d)} {\eps}  \cdot
\sum_{q=1}^d q! \langle a_q,b_q \rangle \quad \quad \quad \text{(by~}\sqrt{x + y} \leq
\sqrt{x} + \sqrt{y}\text{)}\\
&\leq& 2^{O(d \log d)}\eps^2 + 2^{O(d \log d)}{\eps} \cdot
\sum_{q=1}^d \left(\sqrt{q!} \|a_q\|
\right) \cdot \left(\sqrt{q!} \|b_q\|\right)
\quad \text{(by Cauchy-Schwarz)}\\
&=& 2^{O(d \log d)} {\eps} \quad \text{(by (\ref{eq:lownorm})).}
\end{eqnarray*}

Combining this with (\ref{eq:101})
we indeed get (\ref{eq:goal}) as desired.  Thus it remains only to
prove Claims \ref{claim:X} and \ref{claim:Y}.

\paragraph{Proof of Claim \ref{claim:X}.}

We may write $X^2$ as $A+B$, where

\[
A = \sum_{q=1}^d {\frac 1 {q^2}}
\langle D I_q(a_q),DI_q(b_q) \rangle^2
\]
and
\[
B =
\sum_{d \geq q_1 > q_2 \geq 1} {\frac 2 {q_1 q_2}}
\langle D I_{q_1}(a_{q_1}),DI_{q_1}(b_{q_1}) \rangle \cdot
\langle D I_{q_2}(a_{q_2}),DI_{q_2}(b_{q_2}) \rangle.
\]

First we analyze $\E[A]$.
Using Theorem \ref{thm:npr-exp} we have that
\begin{eqnarray}
\E[A] &=&
\sum_{q=1}^d {\frac 1 {q^2}}
\E[\langle D I_q(a_q),DI_q(b_q) \rangle^2]\nonumber\\
&=&
\sum_{q=1}^d
(q!)^2 \langle a_q,b_q\rangle^2 + q^2 \sum_{r=1}^{q-1}((r-1)!)^2
{q-1 \choose r-1}^4 \cdot (2q -2r)
\|a_q \widetilde{\otimes}_r b_q\|^2.
\label{eq:A}
\end{eqnarray}
Now observe that for $1 \leq r \leq q-1,$ we have
\[
\|a_q \widetilde{\otimes}_r b_q\|^2 \leq  \lambda_{\max}(a_q)^2 \cdot
\|b_q\|^2 \leq \eps^2/q!
\]
where we have used Fact \ref{fact:contract-eigen} for the first inequality
and the eigenvalue bound and (\ref{eq:lownorm}) for the second.
Hence from (\ref{eq:A}) we obtain
\begin{equation} \label{eq:Abound}
\E[A] \leq
\sum_{q=1}^d
(q!)^2 \langle a_q,b_q\rangle^2 + 2^{O(d \log d)} \eps^2.
\end{equation}

We turn now to bound $\E[B].$
Using Claim \ref{claim:npr-ip} we get
\begin{eqnarray} \label{eq:B}
\E[B] &=&
\sum_{d \geq q_1 > q_2 \geq 1} {\frac 2 {q_1 q_2}}
\E\left[
\left(
q_1^2 \sum_{r_1=1}^{q_1} (r_1-1)! {q_1 - 1 \choose r_1-1}^2
I_{2q_1 - 2r_1}
(a_{q_1} \widetilde{\otimes}_{r_1}b_{q_1})
\right)
\cdot \nonumber \right.\\
& &\text{~~~~~~~~~~~~~~~~~~~~~~~} \left.
\left(
q_2^2 \sum_{r_2=1}^{q_2} (r_2-1)! {q_2 - 1 \choose r_2-1}^2
I_{2q_2 - 2r_2}
(a_{q_2} \widetilde{\otimes}_{r_2}b_{q_2})
\right)
\right] \nonumber\\
&=&
\sum_{d \geq q_1 > q_2 \geq 1} 2 q_1 q_2
\sum_{r_1=1}^{q_1} \sum_{r_2=1}^{q_2}
(r_1-1)! (r_2-1)!{q_1-1 \choose r_1-1}^2 {q_2-1 \choose r_2-1}^2
\cdot \nonumber\\
& & \text{~~~~~~~~~~~~~~~~~~~~~~~~~~~~~~~~}
\E[
I_{2q_1 - 2r_1}
(a_{q_1} \widetilde{\otimes}_{r_1}b_{q_1})
I_{2q_2 - 2r_2}
(a_{q_2} \widetilde{\otimes}_{r_2}b_{q_2})
].
\end{eqnarray}

Let us fix a given outcome of $q_1 > q_2$.  Recalling
Claim \ref{claim:ipito}, we see that
the only $(r_1,r_2)$ pairs that will give a nonzero expectation
are those
such that $2q_1-2r_1=2q_2-2r_2$, i.e. $r_2=q_2-q_1+r_1$.  For such an
$(r_1,r_2)$
pair, by Claim \ref{claim:ipito} we get
that
$
\E[
I_{2q_1 - 2r_1}
(a_{q_1} \widetilde{\otimes}_{r_1}b_{q_1}) \cdot
I_{2q_2 - 2r_2}
(a_{q_2} \widetilde{\otimes}_{r_2}b_{q_2})]
$
equals
$
(2q_1 - 2r_1)!
\langle a_{q_1} \widetilde{\otimes}_{r_1} b_{q_1} ,
a_{q_2} \widetilde{\otimes}_{r_2} b_{q_2} \rangle
$, which in turn satisfies
\begin{eqnarray*}
\langle a_{q_1} \widetilde{\otimes}_{r_1} b_{q_1} ,
a_{q_2} \widetilde{\otimes}_{r_2} b_{q_2} \rangle
&=&
\langle a_{q_1},b_{q_1} \rangle \cdot \langle a_{q_2},b_{q_2} \rangle
\text{~if~}r_1=q_1,\\
\langle a_{q_1} \widetilde{\otimes}_{r_1} b_{q_1} ,
a_{q_2} \widetilde{\otimes}_{r_2} b_{q_2} \rangle
&\leq &
\eps^2 \cdot {\frac 1 {\sqrt{q_1!}}} \cdot {\frac 1 {\sqrt{q_2!}}}
\text{~if~}1 \leq r_1 \leq q_1-1,
\end{eqnarray*}
where the inequality follows from
Fact \ref{fact:contraction-ip}, the eigenvalue bound,
and (\ref{eq:lownorm}).
We thus arrive at
\begin{eqnarray*}
E[B]&\leq&
\sum_{d \geq q_1 > q_2 \geq 1}
2(q_1)!(q_2)!
\langle a_{q_1},b_{q_1} \rangle \cdot \langle a_{q_2},b_{q_2} \rangle\\
&& +
\sum_{d \geq q_1 > q_2 \geq 1}
2q_1q_2
\sum_{r_1=1}^{q_1-1}
(r_1-1)! (q_2-q_1+r_1-1)!{q_1-1 \choose r_1-1}^2 {q_2-1 \choose
q_2-q_1+r_1-1}^2
\cdot \eps^2 \cdot {\frac
{(2q_1 - 2r_1)!} {\sqrt{(q_1!)(q_2!)}}}\\
&\leq&
\sum_{d \geq q_1 > q_2 \geq 1}
2(q_1)!(q_2)!
\langle a_{q_1},b_{q_1} \rangle \cdot \langle a_{q_2},b_{q_2} \rangle
+ 2^{O(d \log d)}\eps^2 .
\end{eqnarray*}
Combining this with (\ref{eq:A}), we get that
\[
\E[X^2] \leq
\left(\sum_{q=1}^d q! \langle a_q,b_q \rangle\right)^2 + 2^{O(d \log d)}\eps^2.
\]
This concludes the proof of Claim \ref{claim:X}. \qed

\paragraph{Proof of Claim \ref{claim:Y}.}
We have
\begin{eqnarray*}
\E[Y^2]
&=&
\sum_{d \geq q_1 > q_2 \geq 1}
\sum_{d \geq q_3 > q_4 \geq 1}
\left({\frac 1 {q_1}} + {\frac 1 {q_2}}\right)
\left({\frac 1 {q_3}} + {\frac 1 {q_4}}\right)
\cdot
\E[
\langle DI_{q_1}(a_{q_1}),DI_{q_2}(b_{q_2})\rangle
\cdot
\langle DI_{q_3}(a_{q_3}),DI_{q_4}(b_{q_4})\rangle
]\\
&< &
4
\sum_{1 \leq q_1 < q_2 \leq d}
\sum_{1 \leq q_3 < q_4 \leq d}
\sqrt{\E[
\langle DI_{q_1}(a_{q_1}),DI_{q_2}(b_{q_2})\rangle^2]} \cdot
\sqrt{
\E[
\langle DI_{q_3}(a_{q_3}),DI_{q_4}(b_{q_4})\rangle^2
]},
\end{eqnarray*}
where we have used Cauchy-Schwarz and the fact that
$
({\frac 1 {q_1}} + {\frac 1 {q_2}})
({\frac 1 {q_3}} + {\frac 1 {q_4}})
$ is always strictly less than 4.
Fix any $d \geq q_1 > q_2 \geq 1$; to prove
Claim  \ref{claim:Y} it suffices to show that
$
\E[
\langle DI_{q_1}(a_{q_1}),DI_{q_2}(b_{q_2})\rangle^2] \leq
2^{O(d \log d})\eps^2.$
For this we use the third bound of Theorem  \ref{thm:npr-exp}, which gives
\begin{eqnarray*}
\E[ \langle DI_{q_1}(a_{q_1}),DI_{q_2}(b_{q_2})\rangle^2]
&=& (q_1)^2 (q_2)^2 \sum_{r=1}^{q_2} ((r-1)!)^2 {q_1-1 \choose r-1}^2{q_2-1 \choose
r-1}^2 (q_1 + q_2 - 2r)
\|a_{q_1} \widetilde{\otimes}_r b_{q_2} \|^2.
\end{eqnarray*}
For any $1 \leq r \leq q_2$ we have that $r \leq q_1-1$ (since $q_1 > q_2$),
and hence by Fact \ref{fact:contract-eigen}, the eigenvalue
bound and (\ref{eq:lownorm}) we get that
$
\|a_{q_1} \widetilde{\otimes}_r b_{q_2} \|^2 \leq \eps^2 / q_2!$.
Thus each summand in the previous expression is at most $2^{O(q_1 \log q_1)}
\eps^2 = 2^{O(d \log d)}\eps^2$, as required.
This concludes the proof of Claim \ref{claim:Y}, and with it
the proof of Lemma \ref{lem:variance} and of Theorem \ref{thm:mainclt}.
 \qed

\fi

\ignore{
}



\section{Decomposing $k$-tuples of multilinear
degree-$d$ Gaussian polynomials}
\label{sec:decomp}

In this section we prove our main decomposition result for $k$-tuples
of multilinear Gaussian polynomials, Theorem \ref{thm:main-decomp}.
We begin by giving a precise statement of the result, followed by a
discussion of how the result fits into our broader context.

\begin{theorem} \label{thm:main-decomp}
Fix $d \geq 2$ and fix any non-increasing computable
function $\beta: [1,\infty) \to (0,1)$ that satisfies
$\beta(x) \leq 1/x.$
There is a procedure {\bf \blue{Regularize-Poly}}$_\beta$
with the following properties.
The procedure takes as input a degree-$d$ multilinear Gaussian polynomial $p$
with $\Var[p]=1$ and a parameter $\tau>0$.
It runs in $\poly(n^d) \cdot O_{d,\tau}(1)$
time and outputs a collection of polynomials $\{h_{q}\}_{q=0,\dots,d}$
and $\{A_{q,\ell}\}_{q=0,\dots,d,\ell=
1,\dots,m_{q}}.$

Write $p(x)$ as
$\sum_{q=0}^d c_q p_q(x)$ where $p_q \in {\cal W}^q$ for all $q$ and
$\Var[p_q]=1$ for $1 \leq q \leq d$.
For $0 \leq q \leq d$ and $x \in \R^n$,
let
\ifnum\confversion=1 $\tilde{p}_q(x) = c_q h_{q}(A_{q,1}(x),\dots,A_{q,m_{q}}(x))$ and let $
\tilde{p}(x) =  \sum_{q=0}^d \tilde{p}_q (x).$
\fi
\ifnum\confversion=0
\[
\tilde{p}_q(x) = c_q h_{q}(A_{q,1}(x),\dots,A_{q,m_{q}}(x)),
\quad \quad \text{and let} \quad \quad
\tilde{p}(x) =  \sum_{q=0}^d \tilde{p}_q (x).
\]
\fi
The following conditions hold:

\begin{enumerate}

\item For each $q \in \{0,\dots,d\}$ the polynomial
$\tilde{p}_q$ belongs to ${\cal W}^q$.  Moreover, for $q \in \{1,\dots,d\}$,  each polynomial $A_{q,\ell}$ belongs to ${\cal W}^j$
for some $1 \leq j \leq d$ and has $\Var[A_{q,\ell}]=1.$

\item We have
$
\left|
\Pr_{x \sim N(0,1)^n}[p(x) \geq 0] -
\Pr_{x \sim N(0,1)^n}[\tilde{p}(x) \geq 0]
\right|
\leq O(\tau),
$
and moreover $\Var_{x \sim N(0,1)^n}[p(x)-\tilde{p}(x)] \leq (\tau/d)^{3d}.$

\item
Each polynomial $h_q$ is a multilinear polynomial in its $m_q$
arguments.  Moreover, there exist functions $N_\beta(d,\tau)$ and $M_\beta(d,\tau)$
such that if $\Coeff(h_q)$ denotes the sum of the absolute values of
the coefficients of $h_q$, then $\sum_{q=1}^d \Coeff(h_q) \leq M_\beta(d,\tau)$
and the number of arguments to all $h_q$'s, $\sum_{q=1}^d m_q$, is at most
$N_\beta(d,\tau).$ Also, \blue{the} degree of the polynomial $h_q$ (for all $1 \le q \le d$) is upper bounded by $d$.

\item Further, let $\Num = \sum_{q=1}^d m_q$ and $\Coeff = \sum_{q=1}^d \Coeff(h_q)$.
Then each polynomial $A_{q,\ell}(x)$ is $\beta(\Num + \Coeff)$-eigenregular.

\end{enumerate}

\end{theorem}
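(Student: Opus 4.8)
The plan is to establish Theorem~\ref{thm:main-decomp} via an iterative decomposition procedure that works on one Wiener chaos level $q$ at a time. Fix a level $q \in \{1,\dots,d\}$ and consider the polynomial $p_q \in {\cal W}^q$ with $\Var[p_q]=1$, whose associated symmetric tensor is $f_q \in {\cal H}^{\odot q}$. The basic ``atomic step'' is: if $p_q$ is not yet $\eta$-eigenregular, then by definition $\lambda_{\max}(f_q) > \eta$, which means there is some partition $S, \overline S$ of $[q]$ and unit tensors $x \in {\cal H}^S$, $y \in {\cal H}^{\overline S}$ with $\langle f_q, x \otimes y\rangle > \eta$. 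Viewing $x$ and $y$ as (symmetrizations of) lower-degree Gaussian polynomials $B$ and $C$ (of degrees $|S|$ and $|\overline S|$, both strictly less than $q$), this says that the product $B \cdot C$ — itself a degree-$q$ polynomial, though not living purely in ${\cal W}^q$ — has nontrivial correlation (at least $\Omega(\eta)$, after accounting for normalization and for projecting onto ${\cal W}^q$) with $p_q$. We then ``peel off'' the ${\cal W}^q$-component of $B\cdot C$ from $p_q$, record $B$ and $C$ as two new constituent polynomials, and recurse on the residual. Each peeling step reduces $\Var[p_q]$ (restricted to ${\cal W}^q$, which is all that matters here since lower-order pieces of $B \cdot C$ get pushed down to lower levels and handled there) by a fixed amount $\Omega(\eta^2)$, so after $O(1/\eta^2)$ steps the residual in level $q$ has variance below any target.

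The subtlety flagged in Remark~\ref{rem:needfewpoly} is the heart of the matter: naively this produces $r \gtrsim 1/\eta^2$ constituent polynomials, which is useless since we need $\eta \ll 1/r^2$. To get around this, I would not insist on driving the residual all the way to eigenregularity. Instead, run the peeling process but maintain the constituents $B_{q,\ell}, C_{q,\ell}$ in a way that tracks \emph{linear dependence}: when a newly extracted lower-degree polynomial is (nearly) in the span of previously extracted ones, it does not genuinely increase the ``complexity'' $\Num$, and one can fold it in with bounded coefficients (here is where Claim~\ref{claim:coeff-bound} and Lemma~\ref{lem:small-singular} are used, to control $\Coeff$). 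The key structural insight — the analogue of Kane's class-splitting but adapted to the eigenregularity target — is that one should recursively decompose the lower-degree polynomials $B, C$ as well, so that everything is ultimately expressed in terms of a bounded-size ``basis'' of eigenregular polynomials of each degree $1,\dots,d-1$, and then $p_q$'s residual is a bounded-degree multilinear polynomial $h_q$ in those basis elements plus a small-variance error term. Because the basis has bounded size $\Num$ independent of $n$, we are free to choose the eigenregularity parameter for the basis elements to be $\beta(\Num + \Coeff)$ \emph{after the fact} — this is legitimate precisely because we can run an outer loop: pick a provisional bound, run the decomposition, see how large $\Num + \Coeff$ actually came out, and if the basis elements are not yet $\beta(\Num+\Coeff)$-eigenregular, recurse once more on them; since $\beta$ is non-increasing and each refinement only increases $\Num + \Coeff$ by a bounded-and-controllable amount, this fixed-point computation terminates with all four properties holding simultaneously.

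Concretely, the steps in order would be: (1) set up the atomic eigenvalue-extraction step and verify the variance-decrease quantitatively, using the isomorphism between ${\cal W}^q$ and ${\cal H}^{\odot q}$ and Claim~\ref{claim:ipito} to translate the inner-product bound into a correlation bound; (2) organize the extraction into a recursion over degrees $d, d-1, \dots, 2$, where processing level $q$ spawns sub-problems at strictly lower levels, and argue the recursion tree has bounded size (depending only on $d, \tau$) — this gives the functions $N_\beta, M_\beta$ and Property~3, including that each $h_q$ is multilinear (inheriting multilinearity from $p$ via Remark~\ref{rem:multilin} and the fact that contraction/products of multilinear tensors, after zeroing diagonals, stay multilinear) and has degree $\le d$; (3) collect the accumulated error across all levels and all steps, show its total variance is at most $(\tau/d)^{3d}$, and invoke Lemma~\ref{lem:small-var-diff-kol-close} to get Property~2 (the cdf-closeness and hence the $O(\tau)$ bound on the difference of satisfying-assignment probabilities); (4) run the outer fixed-point loop described above to upgrade the eigenregularity of the basis polynomials $A_{q,\ell}$ from whatever the raw decomposition gives to the required $\beta(\Num + \Coeff)$, establishing Property~4, and check Property~1 by construction. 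The main obstacle — and the part requiring the genuinely new ideas beyond \cite{DSTW:10,DDS13:deg2count,DDS14junta} — is step (2) combined with the fixed-point argument in step (4): making sure that the act of refining the basis to be more eigenregular does not blow up $\Num + \Coeff$ in a way that outruns the improvement promised by $\beta$, so that the loop actually converges. This is what forces the elaborate ``find $\tilde p$ with $p - \tilde p$ of small variance'' maneuver rather than a clean split of $p$ itself, and it is responsible for the large $O_{d,\tau}(1)$ overhead in the running time.
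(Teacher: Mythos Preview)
Your high-level plan has the right shape --- extract correlated lower-degree pieces, recurse down through degrees, control variance of the error --- but there are two concrete gaps, and the second one is exactly the technical heart of the theorem.

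\medskip

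\textbf{Gap 1: the product $B \cdot C$ does not stay in ${\cal W}^q$.} You write that ``lower-order pieces of $B\cdot C$ get pushed down to lower levels and handled there.'' This is not how the paper proceeds, and for good reason: if you allow the product to leak into ${\cal W}^{q-2}, {\cal W}^{q-4},\dots$, then at each level you are injecting new polynomials whose variance you do not control, and the recursion no longer terminates with bounded $\Num$. The paper's atomic step (\textbf{Split-One-Wiener}) instead uses a derandomized $q$-wise-independent partition of $[n]$ to force the extracted factors $g_1, g_2$ to be supported on \emph{disjoint} variable sets. By the Ito multiplication formula (Theorem~\ref{thm:itomult}) together with Fact~\ref{fact:disjoint-vanish}, all nonzero contraction terms vanish and $I_{q_1}(g_1)\cdot I_{q_2}(g_2)$ lies purely in ${\cal W}^q$. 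This also gives multilinearity for free, rather than via ``zeroing diagonals'' as you suggest.

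\medskip

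\textbf{Gap 2: the fixed-point loop is not shown to converge.} You propose to pick a provisional eigenregularity, run the decomposition, observe the resulting $\Num+\Coeff$, and if the $A_{q,\ell}$'s are not yet $\beta(\Num+\Coeff)$-eigenregular, ``recurse once more on them.'' But further decomposing the $A_{q,\ell}$'s produces \emph{more} polynomials and larger coefficients, which makes $\beta(\Num+\Coeff)$ smaller, which makes the target harder; you have not broken this cycle. Your assertion that ``each refinement only increases $\Num+\Coeff$ by a bounded-and-controllable amount'' begs the question --- bounded by what, in terms of what?

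The paper avoids any fixed-point argument. Its mechanism (the procedures \textbf{Regularize-One-Wiener} and \textbf{MultiRegularize-One-Wiener}) is to \emph{pre-compute} an entire sequence $1=\eta_0 > \eta_1 > \cdots > \eta_K$ where $\eta_{t+1}$ is defined via $\beta$ applied to a quantity that upper-bounds $\Num+\Coeff$ \emph{if the procedure were to stop after using parameter $\eta_t$}. The decomposition then runs with successive $\eta_i$'s; at each step either the remainder's variance drops below $\eps$ or a nontrivial chunk gets peeled off, so the loop halts at some index $t\le K$. At that moment the remainder is $\eta_{t+1}$-eigenregular while the number and coefficient-size of all extracted pieces are governed by $\eta_t$ --- so the required inequality holds \emph{by construction of the sequence}, not by iteration to a fixed point. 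The induction on degree in \textbf{MultiRegularize-Many-Wieners} then nests this idea: the $\eta$-sequence at degree $d$ is defined via $\beta$ composed with the inductive bounds $N_{\beta^\ast}, M_{\beta^\ast}$ coming from degree $d-1$. This pre-computed, ``two-layer'' structure is the missing idea in your proposal and is what makes the $\eta \ll 1/r^2$ requirement achievable.
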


\medskip \noindent {\bf Discussion.}
Intuitively, Condition (2)  means that it will be sufficient to do
deterministic approximate counting for the polynomial
$\tilde{p}$ rather than the original polynomial $p$.
Condition (4) ensures that
the eigenregularity of each
polynomial $A_{q,\ell}$ compares favorably both with the number of
\blue{polynomials} produced and with the size of the coefficients in the
``outer'' polynomials $h_q$.  As discussed earlier, having the eigenregularity be small
relative to the number of polynomials is crucial since it
\blue{is required in order for} our CLT, Theorem \ref{thm:mainclt}, to yield
a good bound.  We need to deal with the size of the coefficients for technical
reasons -- as we will see in Section \ref{sec:combine}, we will
apply our CLT where its ``test function'' $\alpha$ is
a smooth approximator to the 0/1-valued function which, on
input $\{A_{j,\ell}\}_{j,\ell}$, outputs 1 if and only if
$
\sign\left(\blue{\sum_{q=0}^d h_{q}(A_{q,1},\dots,A_{q,m_{q}})}\right) = 1.
$
Our CLT's quantitative bound depends on the second derivative of $\alpha$, and to bound this
we need coefficient size bounds on the $h_q$ polynomials.

\medskip

We build up to Theorem \ref{thm:main-decomp}
in a sequence of incremental stages.
In Section \ref{sec:onepoly} we begin
by describing how to decompose a single element of
a given Wiener chaos.  Because of our requirement that the number
of polynomials produced in the decomposition must be very small
relative to the eigenregularity that is achieved ---
see Remark \ref{rem:needfewpoly} --- even this is a non-trivial task,
requiring two ``layers'' of decomposition and an approach that
goes well beyond the decomposition techniques in previous
work \cite{DDS13:deg2count,
DDS14junta}.  In Section \ref{sec:manypolyonewiener} we extend this and
describe how to  simultaneously decompose a $k$-tuple
of elements of the same Wiener chaos.
(See the beginning of Section \ref{sec:manypolyonewiener} for an explanation
of why we need to be able to simultaneously decompose many polynomials
at once.) In Section \ref{sec:wholething} we describe how to handle
a $k(d+1)$-tuple of elements where there are $k$ elements
from each of the $d+1$ Wiener chaoses ${\cal W}^{0},\dots,
{\cal W}^{d}$.  \ifnum\confversion=1 In the full version, we show how the decomposition of Section 
\ref{sec:wholething} yields Theorem \ref{thm:main-decomp}. \fi
\ifnum\confversion=0 Finally, in Section
\ref{sec:main-decomp} we specialize this decomposition for a $k(d+1)$-tuple to obtain Theorem \ref{thm:main-decomp}.
\fi

\subsection{Decomposing a single multilinear
element of the $q$-th Wiener chaos}
\label{sec:onepoly}

\ifnum\confversion=0

We begin with a useful definition and fact:

\begin{definition} \label{def:disjoint}
For $S \subseteq [n]$,
we say that a tensor
\[
f = \sum_{i_1,\dots,i_q=1}^n f(i_1,\dots,i_q) \cdot e_{i_1} \otimes
\cdots \otimes e_{i_q}
\]
is \emph{supported on $S$} if $f(i_1,\dots,i_q)=0$
whenever any of $i_1,\dots,i_q$ do not belong to $S$.
\end{definition}

\begin{fact} \label{fact:disjoint-vanish}
Let $f \in {\cal H}^{\odot p}$ be supported on $S \subseteq [n]$
and $g \in {\cal H}^{\odot q}$ be supported on $T \subseteq [n]$
where $S,T$ are disjoint.  Then for any $0 < r \leq \min\{p,q\}$
we have that the contraction product $f \otimes_r g$ equals 0.
\end{fact}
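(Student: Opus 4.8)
The plan is to unwind the definition of the contraction product coordinate by coordinate and observe that every summand vanishes. Recall that for $0 < r \le \min\{p,q\}$,
\[
(f \otimes_r g)_{(t_1,\dots,t_{p+q-2r})} = \sum_{i_1,\dots,i_r=1}^n \langle f, e_{i_1}\otimes\cdots\otimes e_{i_r}\rangle \otimes \langle g, e_{i_1}\otimes\cdots\otimes e_{i_r}\rangle,
\]
where $\langle f, e_{i_1}\otimes\cdots\otimes e_{i_r}\rangle \in \mathcal{H}^{\otimes p-r}$ has coordinates $f(i_1,\dots,i_r,j_1,\dots,j_{p-r})$, and similarly for $g$. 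The key point is that since $r \ge 1$, each summand involves contracting at least one index against both $f$ and $g$.

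First I would fix an arbitrary tuple $(i_1,\dots,i_r) \in [n]^r$ appearing in the sum. Since $f$ is supported on $S$, the partial contraction $\langle f, e_{i_1}\otimes\cdots\otimes e_{i_r}\rangle$ has all coordinates equal to $0$ unless $i_1,\dots,i_r \in S$ (indeed $f(i_1,\dots,i_r,j_1,\dots,j_{p-r}) = 0$ whenever any argument lies outside $S$). Symmetrically, since $g$ is supported on $T$, the partial contraction $\langle g, e_{i_1}\otimes\cdots\otimes e_{i_r}\rangle$ vanishes identically unless $i_1,\dots,i_r \in T$. Because $S \cap T = \emptyset$ and $r \ge 1$, no tuple $(i_1,\dots,i_r)$ can satisfy both $i_1,\dots,i_r \in S$ and $i_1,\dots,i_r \in T$; hence at least one of the two partial contractions is the zero tensor, so their tensor product is $0$. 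Summing over all $(i_1,\dots,i_r)$ gives $f \otimes_r g = 0$, as claimed.

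There is no real obstacle here; the only thing to be careful about is bookkeeping of which indices are contracted versus free, and the observation that $r > 0$ is exactly what forces a shared contracted index between $f$ and $g$. (Note that this is why the statement excludes $r = 0$: the $0$-th contraction product is just the tensor product $f \otimes g$, which need not vanish.)
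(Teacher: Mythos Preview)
Your proposal is correct and follows essentially the same approach as the paper's proof: both expand the contraction product as a sum over $(i_1,\dots,i_r)\in[n]^r$ and observe that $\langle f, e_{i_1}\otimes\cdots\otimes e_{i_r}\rangle$ vanishes unless all $i_j\in S$ while $\langle g, e_{i_1}\otimes\cdots\otimes e_{i_r}\rangle$ vanishes unless all $i_j\in T$, and these conditions are incompatible since $S\cap T=\emptyset$. Your added remark on why the hypothesis $r>0$ is needed is a nice touch not present in the paper.
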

\begin{proof}
We may write
$$
f = \sum_{j_1, \ldots, j_p \in S} f(j_1, \ldots, j_p) e_{j_1} \otimes
\ldots \otimes
e_{j_p}
\quad \text{and} \quad
g = \sum_{j_1, \ldots, j_q \in T} g(j_1, \ldots, j_q ) e_{j_1} \otimes \ldots \otimes
e_{j_q}.
$$
Note that
$$
f \otimes_r g = \sum_{i_1, \ldots, i_r \in [n]} \langle f , e_{i_1}
\otimes \ldots \otimes e_{i_r}
\rangle \otimes \langle g, e_{i_1} \otimes \ldots \otimes e_{i_r} \rangle
$$
The value $\langle f , e_{i_1} \otimes \ldots \otimes e_{i_r} \rangle$ is 0
unless all the $i_j$'s lie in $S$, and
likewise $ \langle g, e_{i_1} \otimes \ldots \otimes e_{i_r} \rangle  =0$
unless all the $i_j$'s lie in $T$.
Since $S \cap T = \emptyset$, the fact is proved.
\end{proof}

As our lowest-level decomposition tool,
we show that given a symmetric tensor with a large-magnitude
eigenvalue, it is possible to efficiently find two
lower-dimensional symmetric tensors $g_1$ and $g_2$,
supported on disjoint subsets of $[n]$,
such that $f$ is correlated with the product of $g_1$ and $g_2$:

\begin{lemma}\label{lemma:1-d}
Fix any $q \geq 2$.
There is a $\poly(n^q)$-time deterministic procedure
with the following properties:
Let $f \in \mathcal{H}^{\odot q}$ be a multilinear symmetric tensor
that has $\Var[I_q(f)]=1$ and $\lambda_{\max}(f) \ge \eta>0$.
On input $f$, the procedure outputs
multilinear symmetric tensors $g_1 \in \mathcal{H}^{\odot q_1}$ and
$g_2 \in \mathcal{H}^{\odot q_2}$ such that the following
conditions hold:

\begin{enumerate}

\item $\Var [I_{q_1}(g_1)] = \Var[I_{q_2}(g_2)] =1$, and
$q_1 , q_2 >0$ with $q_1 + q_2 = q$;

\item $g_1$ and $g_2$ are supported on disjoint sets $S,T \subset [n]$;

\item
$\E[I_q(f) \cdot I_{q_1}(g_1) \cdot I_{q_2}(g_2)] \geq \eta/2^q$.

\end{enumerate}

\end{lemma}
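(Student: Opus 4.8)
The plan is to extract $g_1$ and $g_2$ from a pair of singular vectors of a matrix flattening of $f$, and then to ``clean up'' those vectors so that they become symmetric, multilinear, and supported on disjoint coordinate sets, all without losing much correlation with $f$.

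\textbf{Step 1 (find a large eigenvalue direction).} By definition $\lambda_{\max}(f)$ is the maximum, over partitions $[q]=P\cup\overline{P}$ with $P,\overline{P}\neq\emptyset$ (and by symmetry of $f$ we may restrict to $P=\{1,\dots,k\}$, $1\le k\le q-1$), of the top singular value of the matrix $M_P$ obtained by flattening $f$ into an $n^{|P|}\times n^{|\overline P|}$ matrix. For each of these $q-1$ partitions we compute this singular value and a pair of unit left/right singular vectors, which we view as tensors $x\in{\cal H}^{\otimes q_1}$, $y\in{\cal H}^{\otimes q_2}$ with $q_1=|P|$, $q_2=|\overline P|$, so that $\langle f,x\otimes y\rangle$ equals that singular value. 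Keeping the partition that achieves the maximum, we get $q_1+q_2=q$, $q_1,q_2>0$, and $\langle f,x\otimes y\rangle=\lambda_{\max}(f)\ge\eta$. Using that $f$ is symmetric (hence $\langle f,(\pi x)\otimes y\rangle=\langle f,x\otimes y\rangle$ for any permutation $\pi$ acting within the $P$-block, and similarly for $y$), we may replace $x,y$ by their symmetrizations: this leaves $\langle f,x\otimes y\rangle$ unchanged and does not increase $\|x\|_F,\|y\|_F$, so after renormalizing we may assume $x\in{\cal H}^{\odot q_1}$, $y\in{\cal H}^{\odot q_2}$ are symmetric unit tensors with $\langle f,x\otimes y\rangle=\lambda_{\max}(f)\ge\eta$.

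\textbf{Step 2 (a random coordinate split).} Consider a random subset $R\subseteq[n]$ placing each coordinate in $R$ independently with probability $1/2$. Let $x_R$ be $x$ with every entry indexed by a tuple not entirely in $R$ zeroed out, and $y_{\overline R}$ be $y$ with every entry indexed by a tuple meeting $R$ zeroed out; then $x_R$ is supported on $R$ and $y_{\overline R}$ on $[n]\setminus R$. We have
$$\E_R\big[\langle f,x_R\otimes y_{\overline R}\rangle\big]=\sum_{i_1,\dots,i_q}f(i_1,\dots,i_q)\,x(i_1,\dots,i_{q_1})\,y(i_{q_1+1},\dots,i_q)\cdot\Pr_R[\,i_1,\dots,i_{q_1}\in R,\ i_{q_1+1},\dots,i_q\notin R\,].$$
The crucial point is that since $f$ is multilinear, every term whose tuple $(i_1,\dots,i_q)$ has a repeated coordinate contributes $0$ (as $f$ vanishes there), while for a tuple with all coordinates distinct the displayed probability is exactly $2^{-q}$. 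Hence (using multilinearity of $f$ again on the right) $\E_R[\langle f,x_R\otimes y_{\overline R}\rangle]=2^{-q}\langle f,x\otimes y\rangle\ge\eta/2^q$. Since each conditional expectation of this quantity is a $\poly(n^q)$-time computable linear function of the indicators $\mathbf{1}[i\in R]$, the method of conditional expectations deterministically produces in $\poly(n^q)$ time a set $S:=R$, with $T:=[n]\setminus S$, for which $\langle f,x_S\otimes y_T\rangle\ge\eta/2^q$.

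\textbf{Step 3 (clean up and normalize).} Let $g_1^0$ be $x_S$ with all diagonal entries zeroed out, and $g_2^0$ likewise from $y_T$; these are multilinear, symmetric, and supported on the disjoint sets $S,T$. Zeroing diagonal entries does not change the inner product with the multilinear tensor $f$, so $\langle f,g_1^0\otimes g_2^0\rangle=\langle f,x_S\otimes y_T\rangle\ge\eta/2^q$ (in particular $g_1^0,g_2^0\neq 0$), while $\|g_1^0\|_F\le\|x\|_F=1$ and $\|g_2^0\|_F\le\|y\|_F=1$. Set $g_1=g_1^0/(\sqrt{q_1!}\,\|g_1^0\|_F)$ and $g_2=g_2^0/(\sqrt{q_2!}\,\|g_2^0\|_F)$, so $q_i!\|g_i\|_F^2=1$ and hence $\Var[I_{q_i}(g_i)]=1$ by Claim~\ref{claim:ipito}; this gives Conditions (1) and (2). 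For Condition (3), since $g_1,g_2$ have disjoint supports and $q_1\ge1$, the multiplication formula (Theorem~\ref{thm:itomult}) with Fact~\ref{fact:disjoint-vanish} kills all contraction terms of order $\ge1$, so $I_{q_1}(g_1)\cdot I_{q_2}(g_2)=I_q(\Sym(g_1\otimes g_2))$, and therefore, using Claim~\ref{claim:ipito} and the symmetry of $f$,
$$\E\big[I_q(f)\cdot I_{q_1}(g_1)\cdot I_{q_2}(g_2)\big]=q!\,\langle f,g_1\otimes g_2\rangle=\frac{q!\,\langle f,g_1^0\otimes g_2^0\rangle}{\sqrt{q_1!\,q_2!}\,\|g_1^0\|_F\,\|g_2^0\|_F}\ge\frac{q!}{\sqrt{q_1!\,q_2!}}\cdot\frac{\eta}{2^q}\ge\frac{\eta}{2^q},$$
since $q_1!\,q_2!\le q!$. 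All steps run in $\poly(n^q)$ time.

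\textbf{The main obstacle} is Step 2: a direct attempt to symmetrize/multilinearize or to greedily split the singular vectors would in general incur an $n$-dependent loss. The random-split argument avoids this precisely because $f$ is multilinear, which forces all repeated-index terms to vanish and makes the surviving probability exactly $2^{-q}$ with no $n$-dependent loss; the derandomization is then routine.
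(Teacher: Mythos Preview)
Your proof is correct and follows essentially the same approach as the paper's: compute the top singular vectors of a flattening of $f$, symmetrize them, use a random bipartition of $[n]$ to force disjoint supports (with the $2^{-q}$ factor coming from multilinearity), derandomize, and normalize. The only cosmetic differences are that the paper argues the singular vectors are already multilinear (so no diagonal-zeroing step is needed) and derandomizes by enumerating a $q$-wise independent sample space rather than by conditional expectations; both variants are valid and run in $\poly(n^q)$ time.
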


\begin{proof}
The procedure begins as follows:

\begin{enumerate}

\item For each partition of $[q]$ into $S$ and $\overline{S} = [q] \setminus S$
where $|S|,|\overline{S}|>0$, it computes the value
$\lambda_{S,\overline{S}}(f) = \sup_{x \in \mathcal{H}^{S}, y \in
\mathcal{H}^{\overline{S}}}
{\frac {\langle g , x \otimes y \rangle}{\Vert x \Vert \Vert y \Vert}}
$ and the associated tensors $\alpha = x/\|x\|$,
$\beta = y/\|y\|$.

\item For the $(\alpha,\beta)$ pair from the iteration of step (1)
which achieves the largest value for $\lambda_{S,\overline{S}}(f)$,
let $\alpha' = \tilde{\alpha}/\|\tilde{\alpha}\|$ and $q_1=|S_1|$,
and let $\beta' = \tilde{\beta}/\|\tilde{\beta}\|$ and $q_2=|S_2|.$

\end{enumerate}

Recalling the variational characterization of singular values, each
iteration of Step (1) (for a given choice of $S,\overline{S}$) is an
SVD computation and can be performed in time $\poly(n^q)$.
Since there are at most $2^q$ partitions $S,\overline{S}$ to consider,
the overall running time of these first two steps is
$\poly(n^q)$.  

We proceed to analyze these first two steps of the procedure.\ignore{
Note first that since $\Var I_q(f) =1$, we have
$\Vert f \Vert_2 = 1/\sqrt{q!}$.}
The fact that $\lambda_{\max}(f) \ge \eta$ means that the maximizing
$\alpha,\beta$ identified in Step (2) will satisfy
$\Vert \alpha \Vert_2 = \Vert \beta \Vert_2=1$
and $\langle f , \alpha \otimes \beta \rangle \ge \eta$.
Since $f$ is a multilinear tensor these tensors $\alpha,\beta$
will also be multilinear (if, say, $\alpha$
had a nonzero diagonal entry then a larger value of
$\langle f, \alpha \otimes \beta \rangle$ could be obtained
by zeroing out that entry and rescaling).
Since $f$ is a symmetric tensor, it follows that
\[
\langle f, \alpha \otimes \beta \rangle =
\langle f, \tilde{\alpha} \otimes \beta \rangle =
\langle f, \tilde{\alpha} \otimes \tilde{\beta} \rangle  \geq \eta.
\]
Since symmetrizing cannot increase the norm of a tensor, we have that
$\|\tilde{\alpha}\| \leq \|\alpha\|=1$ and likewise for
$\beta$.  It follows that
$\langle f, \alpha' \otimes \beta' \rangle \geq \eta$,
$\|\alpha'\|=\|\beta'\|=1$, $\alpha' \in {\cal H}^{q_1},$ and
$\beta' \in {\cal H}^{q_2}$.

From this point it is not difficult to achieve conditions (1) and (3) of the
lemma; to achieve condition (2) we proceed as follows.  Note that
\[
\langle f,  \alpha' \otimes \beta' \rangle =
\sum_{S_1 \in [n]^{q_1}, S_2 \in [n]^{q_2}} f(S_1, S_2) \alpha'(S_1)
\beta'(S_2).\]
Now consider a randomized process that divides $[n]$
into two sets $\mathcal{A}_1$ and $\mathcal{A}_2$ by independently
assigning each $i \in [n]$ to $\mathcal{A}_1$ with probability $1/2$
and to $\mathcal{A}_2$ with probability $1/2$ (we will later
derandomize this process below).
Given an outcome of ${\cal A}_1$ and ${\cal A}_2$,
we consider $\nu_1 \in \mathcal{H}^{\odot q_1}$ and
$\nu_2 \in \mathcal{H}^{\odot q_2}$ defined as follows:
for each $S_1 \in [n]^{q_1},S_2 \in [n]^{q_2}$,
$$ \nu_1 (S_1) = \alpha'(S_1) \cdot \mathbf{1} [S_1 \subseteq \mathcal{A}_1]
\quad \textrm{ and } \quad \nu_2(S_2) = \beta'(S_2) \cdot \mathbf{1} [S_2
\subseteq \mathcal{A}_2],$$
where ``$S_i \subseteq {\cal A}_i$'' means that each coordinate of $S_i$ lies in
${\cal A}_i$.
We have that
\begin{equation}
\label{eq:rand}
\mathbf{E} [ \langle f,  \nu_1 \otimes \nu_2 \rangle ] =
\sum_{S_1 \in  [n]^{q_1}, S_2 \in [n]^{q_2}} f(S_1, S_2)
\alpha'(S_1) \beta'(S_2) \cdot (1/2)^{|S_1|} \cdot (1/2)^{|S_2|},
\end{equation}
where $|S_i|$ denotes the number of distinct values that are present in
the coordinates of $S_i$.  Since $\alpha'$ and $\beta'$ are
multilinear, the only nonzero contributions to the sum
(\ref{eq:rand}) are from $(S_1,S_2)$ pairs
with $|S_1|=q_1$ and $|S_2|=q_2=q-q_1$.
Hence we have
\[
\mathbf{E} [ \langle f,  \nu_1 \otimes \nu_2 \rangle ] =
{\frac 1 {2^q}}
\sum_{S_1 \in  [n]^{q_1}, S_2 \in [n]^{q_2}} f(S_1, S_2)
\alpha'(S_1) \beta'(S_2) \geq {\frac \eta {2^q}}
.
\]
The above analysis requires only $q$-wise independence, so constructing
${\cal A}_1$ and ${\cal A}_2$ (and the resulting
$\nu_1,\nu_2$) using a $q$-wise independent
distribution ${\cal D}$, we get that
\[
\mathbf{E}_{{\cal D}} [ \langle f,  \nu_1 \otimes \nu_2 \rangle ] \geq
{\frac \eta {2^q}},
\]
and thus some outcome in the support of ${\cal D}$ must achieve at least
the expected value.  The third step of the algorithm is to deterministically
enumerate over all points in the support of the $q$-wise independent
distribution (using well-known constructions of $q$-wise independent
distributions \cite{ABI85} this can be done in time $\poly(n^q)$)
and to output a pair $g_1={\frac {\nu_1}
{\sqrt{q_1!} \|\nu_1\|}}$,
$g_2 = {\frac {\nu_2}{\sqrt{q_2!}\|\nu_2\|}}$ that has
$\langle f,  \nu_1 \otimes \nu_2 \rangle \geq
{\frac \eta {2^q}}.$

We now verify that $g_1$ and $g_2$ satisfy all the required conditions.
First, it is clear that $g_1$ and $g_2$ satisfy $g_1 \in {\cal H}^{\odot
q_1}$, $g_2 \in {\cal H}^{\odot q_2}$,
and it is obvious from the construction
that $g_1$ and $g_2$ are supported
on disjoint sets ${\cal A}_1$ and ${\cal A}_2$,
so condition (2) indeed holds.
Turning to condition (1),
since $q_1 > 0$ we have
that $\Var[I_{q_1}(g_1)] = \E[I_{q_1}(g_1)^2]$, which equals 1 by
Claim \ref{claim:ipito} (and similarly we get
that $\Var[I_{q_2}(g_2)] = 1$).
For Condition (3),
we first show that
$I_{q_1}(g_1) \cdot I_{q_2}(g_2) = I_{q}(\Sym(g_1 {\otimes} g_2))$
(and hence $I_{q_1}(g_1) \cdot I_{q_2}(g_2)$
lies in the Wiener chaos of degree $q$).  To see this,
recall from the multiplication formula
for Ito integrals (Theorem \ref{thm:itomult}) that we have
$$I_{q_1} (g_1) \cdot I_{q_2}(g_2) = \sum_{r=0}^{\min\{q_1,q_2\}} r! \cdot
\binom{q_1}{r} \binom{q_2}{r} I_{q_1+q_2-2r}
\Sym(g_1 \otimes_r g_2)).
$$
Since $g_1$ and $g_2$ are supported on disjoint sets, by Fact
\ref{fact:disjoint-vanish} all summands with $r \neq 0$ vanish, and
we get
$I_{q_1} (g_1) \cdot I_{q_2}(g_2) = I_{q_1+q_2}(\Sym(g_1 \otimes
g_2))$ as claimed.

With this identity in hand, we have that
\begin{eqnarray*}
\E[I_q(f) I_{q_1}(g_1) I_{q_2}(g_2)] &=&
\E[I_q(f) I_q(\Sym(g_1 \otimes g_2))]\\
&=& q! \langle f, \Sym(g_1 \otimes g_2) \rangle
\quad \quad \text{(by Claim \ref{claim:ipito})}\\
&=& q!  \langle f, g_1 \otimes g_2 \rangle
\quad \quad \quad \quad \text{(since $f$ is symmetric)}\\
&\geq&
\langle f, \nu_1 \otimes \nu_2 \rangle
\quad \quad \text{(since $\|\nu_1\| \leq \|\alpha'\|=1$ and
$\|\nu_2\| \leq \|\beta'\|=1$)}\\
&\geq& {\frac \eta {2^q}},
\end{eqnarray*}
and Lemma \ref{lemma:1-d} is proved.
\end{proof}

We are now ready to define our first algorithmic primitive, the procedure
{\bf Split-One-Wiener}.  This procedure either certifies that its input polynomial
(an element $I_q(f)$ of the $q$-th Wiener chaos) is eigenregular, or else
it ``splits off'' a product $P \cdot Q$ from its input polynomial.
(Here and subsequently the suffix ``\blue{{\bf -One-Wiener}}'' indicates that
the procedure applies only to \blue{one element 
$I_q(f)$ belonging to one level of the Wiener chaos.})

\fi

\ifnum\confversion=1

Our first algorithmic primitive is the procedure
{\bf Split-One-Wiener}.
(Here and subsequently the suffix ``\blue{{\bf -One-Wiener}}'' indicates that
the procedure applies only to \blue{one element 
$I_q(f)$ belonging to one level of the Wiener chaos.})
This procedure either certifies that its input polynomial
(an element $I_q(f)$ of the $q$-th Wiener chaos) is eigenregular, or else
it ``splits off'' a product $P \cdot Q$ from its input polynomial 
\blue{and expresses $I_q(f)$ as $c \cdot P \cdot Q + R$ for some $c \in \R$.}
We omit the formal
description here because of space constraints.
\fi

\ifnum\confversion=0

\begin{lemma}\label{lemma:algo-primitive}
Fix any $q \geq 2$.
There is a deterministic procedure {\bf Split-One-Wiener} which takes as input
a polynomial $I_q(f) \in \mathcal{H}_q$ that has $\Var[I_q(f)]=1$
and a parameter $\eta>0$.  {\bf Split-One-Wiener} runs in deterministic
$\poly(n^q,1/\eta)$ time and has the following guarantee:

\begin{itemize}

\item If $\lambda_{\max}(f) < \eta$, then {\bf Split-One-Wiener}
returns ``\textsf{eigenregular}''.

\item Otherwise, if $\lambda_{\max}(f) \geq \eta$, then {\bf Split-One-Wiener}
outputs a quadruple $(P, Q ,R, c)$ with the following properties:

 \begin{enumerate}

 \item $P =I_{q_1}(g_1) \in {\cal W}^{q_1}$ and $Q = I_{q_2}(g_2)
\in {\cal W}^{q_2}$ where $q_1 +
q_2 = q$, $q_1, q_2>0$, and $\Var[I_{q_1}(g_1)]=
\Var[I_{q_2}(g_2)] =1$.
\item The tensors $g_1 \in {\cal H}^{\odot q_1}$ and
$g_2 \in {\cal H}^{\odot q_2}$
are supported on disjoint sets $S,T \subset [n].$

 \item $P \cdot Q \in {\cal W}^{q}$ and $\Var[P\cdot Q]=1$,
and all of $P,Q,R$ are multilinear.
 \item The value $c \eqdef \E[I_q(f) \cdot P \cdot Q]$
satisfies $c \geq \eta/2^q.$
 \item $R \in {\cal W}^{q}$ and
$I_q(f) = c P \cdot Q + R$ and $\E[P \cdot Q \cdot R]=0$.
 \item $\Var (R) = 1-c^2$.
 \end{enumerate}

\end{itemize}

  \end{lemma}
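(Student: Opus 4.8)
The plan is to obtain \textbf{Split-One-Wiener} as a thin wrapper around the procedure of Lemma~\ref{lemma:1-d}. The procedure first computes $\lambda_{\max}(f)$ by ranging over all $O(2^q)$ partitions $(S,\overline{S})$ of $[q]$ into nonempty parts and performing an (exact, in the unit-cost real model) SVD computation for each, exactly as in Step~(1) of the proof of Lemma~\ref{lemma:1-d}; this costs $\poly(n^q)$ time. If $\lambda_{\max}(f)<\eta$ the procedure returns ``\textsf{eigenregular}'', which is the required behaviour. Otherwise $\lambda_{\max}(f)\geq\eta$, and the procedure runs the algorithm of Lemma~\ref{lemma:1-d} on $f$ to obtain multilinear symmetric tensors $g_1\in{\cal H}^{\odot q_1}$ and $g_2\in{\cal H}^{\odot q_2}$ with $q_1,q_2>0$, $q_1+q_2=q$, $\Var[I_{q_1}(g_1)]=\Var[I_{q_2}(g_2)]=1$, supported on disjoint sets $S,T\subseteq[n]$, and $\E[I_q(f)\cdot I_{q_1}(g_1)\cdot I_{q_2}(g_2)]\geq\eta/2^q$. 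It then sets $P:=I_{q_1}(g_1)$, $Q:=I_{q_2}(g_2)$, $c:=\E[I_q(f)\cdot P\cdot Q]$, $R:=I_q(f)-c\,P\cdot Q$, and outputs $(P,Q,R,c)$. With these choices conditions~(1) and~(2), and the inequality $c\geq\eta/2^q$ in~(4), are immediate from Lemma~\ref{lemma:1-d}, and the running time is the $\poly(n^q)$ cost of the eigenvalue computation plus the $\poly(n^q)$ cost of the Lemma~\ref{lemma:1-d} subroutine, which lies within the stated $\poly(n^q,1/\eta)$ bound.

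Next I would verify the structural condition~(3). Applying the multiplication formula for Ito integrals (Theorem~\ref{thm:itomult}) writes $P\cdot Q$ as a linear combination of the integrals $I_{q-2r}(\Sym(g_1\otimes_r g_2))$ over $0\leq r\leq\min\{q_1,q_2\}$; since $g_1$ and $g_2$ are supported on the disjoint sets $S$ and $T$, Fact~\ref{fact:disjoint-vanish} shows every term with $r\geq 1$ vanishes, leaving $P\cdot Q=I_q(\Sym(g_1\otimes g_2))\in{\cal W}^q$. Disjointness of the supports also means $P$ is a polynomial in $\{x_i:i\in S\}$ and $Q$ a polynomial in $\{x_i:i\in T\}$, so $P$ and $Q$ are independent; combined with $\E[P]=\E[Q]=0$ (both polynomials have positive degree) this gives $\Var[P\cdot Q]=\E[P^2]\,\E[Q^2]=\Var[P]\,\Var[Q]=1$. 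Multilinearity of $P$ and $Q$ follows from multilinearity of $g_1,g_2$ via Remark~\ref{rem:multilin}; the tensor $\Sym(g_1\otimes g_2)$ is multilinear because no index can repeat (the $S$-indices are distinct, the $T$-indices are distinct, and $S\cap T=\emptyset$), so $P\cdot Q$ is multilinear, and $R=I_q\big(f-c\,\Sym(g_1\otimes g_2)\big)$ is multilinear since both $f$ and $\Sym(g_1\otimes g_2)$ are.

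Conditions~(5) and~(6) then follow by expanding second moments. We have $R\in{\cal W}^q$ as a difference of elements of ${\cal W}^q$, $I_q(f)=c\,P\cdot Q+R$ by definition, and $\E[R]=0$ since $q\geq 2$. Using $\E[I_q(f)\cdot P\cdot Q]=c$ and $\E[(P\cdot Q)^2]=\Var[P\cdot Q]=1$ one obtains $\E[P\cdot Q\cdot R]=\E[I_q(f)\cdot P\cdot Q]-c\,\E[(P\cdot Q)^2]=c-c=0$, which is~(5); using in addition $\E[I_q(f)^2]=\Var[I_q(f)]=1$ one obtains $\Var(R)=\E[R^2]=\E[I_q(f)^2]-2c\,\E[I_q(f)\cdot P\cdot Q]+c^2\,\E[(P\cdot Q)^2]=1-2c^2+c^2=1-c^2$, which is~(6).

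Essentially all of the real content of this lemma is already packaged inside Lemma~\ref{lemma:1-d}; the remaining arguments are routine bookkeeping. The one point that must be handled with care is that the product $P\cdot Q$ lands back in the single Wiener chaos ${\cal W}^q$ with variance exactly $1$, and this is precisely where the disjointness of the supports of $g_1$ and $g_2$ --- the nontrivial extra property extracted in Lemma~\ref{lemma:1-d} --- is used: both through Fact~\ref{fact:disjoint-vanish}, to annihilate the higher-order contraction terms in the multiplication formula, and through the resulting independence of $P$ and $Q$, to pin down $\Var[P\cdot Q]$ and hence the value of $c$ and $\Var(R)$.
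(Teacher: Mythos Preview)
Your proof is correct and follows essentially the same approach as the paper: both wrap the procedure of Lemma~\ref{lemma:1-d}, define $P,Q,c,R$ identically, and derive the structural properties via the multiplication formula together with Fact~\ref{fact:disjoint-vanish} and the independence of $P$ and $Q$. Your write-up is in fact somewhat more detailed than the paper's (e.g., you spell out the multilinearity of $R$ and the explicit second-moment computation for $\Var(R)$), but the ideas are the same.
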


\begin{proof}
{\bf Split-One-Wiener} runs the procedure from Lemma \ref{lemma:1-d} and
checks whether the largest value $\lambda_{S,\overline{S}}(f)$
achieved in Step (1) is at least $\eta.$  If it is less than
$\eta$ then it outputs ``\textsf{eigenregular}.'' Otherwise it
sets $P= I_{q_1}(g_1)$, $Q = I_{q_2}(g_2)$,
sets $c = \E[I_q(f)\cdot P \cdot Q] = q! \langle f,
g_1 \otimes g_2 \rangle,$ and sets $R = I_q(f)=c \cdot P \cdot Q.$

Lemma \ref{lemma:1-d} directly gives properties (1),(2)
and (4), and property (3) follows from the
fact that $\E[Q]=\E[P]=0$ and $P$ and $Q$ are independent
random variables (observe that by property (2) they are
polynomials over disjoint sets of variables).
The first two parts of (5) are immediate; for the last part,
recalling that $R = I_q(f) - c P \cdot Q$, we have that $R$
is simply the component of $I_q(f)$ that is orthogonal to $P \cdot Q$.
Since $R$ lies in ${\cal W}^q$ its mean is zero, so by linear
algebra we have that $\Var[R]=\E[R^2]=1-c^2$ as claimed.
\end{proof}

\fi

\ignore{

\begin{definition}\label{def:linear-dependence}
Let $\mathcal{S} = \{v_1, \ldots, v_m \}$ be an ordered set of unit vectors belonging to $\mathbb{R}^\ell$. Define $W^i = \mathop{span}(v_1, \ldots, v_i)$. Then, $\mathcal{S}$ is said to be $\zeta$-far from being linearly dependent
if for every $1<i \le m$, the projection of $v_i$ on $W^{i-1}$ has
\end{definition}
}

Building on the algorithmic primitive {\bf Split-One-Wiener}, we now describe
a procedure {\bf Decompose-One-Wiener} which works by iteratively executing
{\bf Split-One-Wiener} on the ``remainder'' portion $R$ that was ``left over''
from the previous call to {\bf Split-One-Wiener}.  Intuitively, the overall
effect of this procedure is to break its input polynomial
into a sum of products of pairs of polynomials, plus a remainder term
which is either eigenregular or else has variance which is negligibly small.

\ifnum\confversion=0
(We draw the reader's attention to the quantitative bound on coefficients
given by property (6) of Lemma \ref{lemma:algo-primitive2}.
This coefficient bound will play a crucial role in the mollification
procedure of Section \ref{sec:combine}.)

\fi

\begin{lemma}\label{lemma:algo-primitive2}
Fix any $q \geq 2$.
There is a deterministic procedure {\bf Decompose-One-Wiener}
which  takes as input a polynomial $I_q(f) \in \mathcal{\blue{W}}_q$ that has
$\Var[I_q(f)]=1$ and parameters $\eta$ and $\epsilon$.
{\bf Decompose-One-Wiener} runs in poly$(n^q,1/\eta,\log(1/\eps))$ time and
has the following guarantee:

\ifnum\confversion=1
\begin{enumerate}

\item It outputs a set $L$ of
triples $\{(c_i, P_i, Q_i) \}_{i=1}^{m}$ and a polynomial $R$
such that $I_q(f) = \sum_{i=1}^{m}
c_i P_i Q_i + R$.

\item For each $i=1,\dots,m$ we have
$P_i \in {\cal W}^{q_{i,1}}$ and $Q_i \in {\cal W}^{q_{i,2}}$
with $q_{i,1},q_{i,2}>0$ and $q_{i,1}+q_{i,2}=q$.
\blue{Moreover}
$\Var[P_i] = \Var[Q_i] = \Var[P_i \cdot Q_i] =1$ for all $i \in [m]$,
$R \in {\cal W}^{q}$, 
all $P_i,Q_i$ and $R$ are multilinear,
\blue{
and $P_i$ and $Q_i$ are defined over disjoint sets of variables.
}

\item $m \leq O((4^q/\eta^2) \log (1/\epsilon))$ and
$\sum_{j=1}^m c_j^2 \le (2^q/\eta)^{4(m-1)}$.

\item
Either $R$ is $\eta$-eigenregular,
in which case {\bf Decompose-One-Wiener} returns ``\textsf{eigenregular remainder}'',
or else $\Var[R] \le \epsilon$,
in which case {\bf Decompose-One-Wiener} returns ``\textsf{small remainder}''.

\item $\mathbf{E}[ (\littlesum_{j=1}^m c_j P_j \cdot Q_j) \cdot
R]=0$. As a consequence, we
have $\Var[\littlesum_{j=1}^m c_j P_j \cdot Q_j] + \Var
[R]= 1$.

  \end{enumerate}

\fi

\ifnum\confversion=0

\begin{enumerate}

\item It outputs a set $L$ of
triples $\{(c_i, P_i, Q_i) \}_{i=1}^{m}$ and a polynomial $R$
such that $I_q(f) = \sum_{i=1}^{m}
c_i P_i Q_i + R$.

\item For each $i=1,\dots,m$ we have
$P_i \in {\cal W}^{q_{i,1}}$ and $Q_i \in {\cal W}^{q_{i,2}}$
with $q_{i,1},q_{i,2}>0$ and $q_{i,1}+q_{i,2}=q$;
moreover
$\Var[P_i] = \Var[Q_i] = \Var[P_i \cdot Q_i] =1$ for all $i \in [m]$,
$R \in {\cal W}^{q}$,
and all $P_i,Q_i$ and $R$ are multilinear.

\item $m \leq O((4^q/\eta^2) \log (1/\epsilon))$.

\item
Either $R$ is $\eta$-eigenregular,
in which case {\bf Decompose-One-Wiener} returns ``\textsf{eigenregular remainder}'',
or else $\Var[R] \le \epsilon$,
in which case {\bf Decompose-One-Wiener} returns ``\textsf{small remainder}''.

\item $\mathbf{E}[ (\littlesum_{j=1}^m c_j P_j \cdot Q_j) \cdot
R]=0$. As a consequence, we
have $\Var[\littlesum_{j=1}^m c_j P_j \cdot Q_j] + \Var
[R]= 1$.

\item $\sum_{j=1}^m c_j^2 \le (2^q/\eta)^{4(m-1)}$.
  \end{enumerate}

\fi

\end{lemma}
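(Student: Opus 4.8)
The plan is to define \textbf{Decompose-One-Wiener} as the natural iterative procedure: starting from $R_0 := I_q(f)$ (which has variance $1$), repeatedly call \textbf{Split-One-Wiener} with parameter $\eta$ on the current remainder (after rescaling it to unit variance), and stop as soon as \textbf{Split-One-Wiener} reports ``\textsf{eigenregular}'' or the accumulated variance of the remainder drops below $\epsilon$. Concretely, at step $i$ we have a remainder $R_{i-1}$ with $\Var[R_{i-1}] = v_{i-1}$; if $v_{i-1} \le \epsilon$ we halt and return ``\textsf{small remainder}''; otherwise we run \textbf{Split-One-Wiener} on $R_{i-1}/\sqrt{v_{i-1}} \in {\cal W}^q$. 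If it returns ``\textsf{eigenregular}'' we halt, set $R := R_{i-1}$, and return ``\textsf{eigenregular remainder}''. Otherwise it returns $(P_i, Q_i, R_i', c_i')$ with $R_{i-1}/\sqrt{v_{i-1}} = c_i' P_i Q_i + R_i'$, $c_i' \ge \eta/2^q$, $\Var[R_i'] = 1 - (c_i')^2$, and the orthogonality $\E[P_i Q_i \cdot R_i'] = 0$; we then set $c_i := \sqrt{v_{i-1}} \cdot c_i'$, $R_i := \sqrt{v_{i-1}} \cdot R_i'$, so that $R_{i-1} = c_i P_i Q_i + R_i$ and $v_i := \Var[R_i] = v_{i-1}(1 - (c_i')^2) \le v_{i-1}(1 - \eta^2/4^q)$.

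The properties then follow step by step. Telescoping $R_{i-1} = c_i P_i Q_i + R_i$ over $i=1,\dots,m$ gives property~(1): $I_q(f) = \sum_{i=1}^m c_i P_i Q_i + R$ with $R := R_m$. Property~(2) is inherited directly from \textbf{Split-One-Wiener}'s guarantees (membership of $P_i, Q_i$ in Wiener chaos of the stated degrees, unit variances, multilinearity, support on disjoint variable sets, and hence $P_i \cdot Q_i \in {\cal W}^q$ with unit variance) together with the observation that each $R_i$, being a scalar multiple of an element of ${\cal W}^q$, lies in ${\cal W}^q$ and is multilinear. For the iteration bound in property~(3): since $v_i \le v_{i-1}(1 - \eta^2/4^q)$ and $v_0 = 1$, after $m = O((4^q/\eta^2)\log(1/\epsilon))$ steps we have $v_m \le (1 - \eta^2/4^q)^m \le \epsilon$, so the procedure must halt by then; this gives property~(4) as well (we halt either because \textbf{Split-One-Wiener} certified eigenregularity of the rescaled remainder --- which, being a positive scalar multiple of $R_{i-1}$, means $R_{i-1}$ itself is $\eta$-eigenregular --- or because $\Var[R] = v_m \le \epsilon$). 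Property~(5), the orthogonality $\E[(\sum_j c_j P_j Q_j) \cdot R] = 0$: here I would argue that $R = R_m$ is, by construction, the orthogonal projection of $I_q(f)$ onto the orthocomplement of $\mathrm{span}\{P_1 Q_1, \dots, P_m Q_m\}$ within ${\cal W}^q$ --- each splitting step peels off exactly the component along the new direction $P_i Q_i$ and keeps an orthogonal remainder, and one checks inductively that $R_i \perp P_j Q_j$ for all $j \le i$ (the step $j = i$ is the orthogonality from \textbf{Split-One-Wiener}, and for $j < i$ one uses that $R_i$ is obtained from $R_{i-1} \perp P_j Q_j$ by subtracting a multiple of $P_i Q_i$, so it suffices that $\E[P_i Q_i \cdot P_j Q_j]$... hmm, this is not automatic).

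Let me flag this as the main obstacle: the cleanest route to property~(5) is \emph{not} to claim $R_i \perp P_j Q_j$ for all $j$, but rather to prove directly that $\E[R \cdot R'] = 0$ fails only if... Actually the right statement is simpler: $R_i \perp P_i Q_i$ and $R_i \perp R_{i-1}$? No. The robust argument is: each $R_i$ lies in the orthocomplement (within ${\cal W}^q$) of $P_i Q_i$, and $R_{i-1} = c_i P_i Q_i + R_i$ is an orthogonal decomposition, so $\|R_{i-1}\|^2 = c_i^2 \|P_i Q_i\|^2 + \|R_i\|^2$, i.e.\ $v_{i-1} = c_i^2 + v_i$ (using unit variance of $P_i Q_i$ and that all these are mean-zero). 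Summing, $1 = v_0 = \sum_{i=1}^m c_i^2 + v_m$. To get the \emph{full} orthogonality $\E[(\sum_j c_j P_j Q_j) R] = 0$ we write $\sum_j c_j P_j Q_j = I_q(f) - R$, so $\E[(\sum_j c_j P_j Q_j) R] = \E[I_q(f) R] - \E[R^2] = \E[I_q(f) R] - v_m$; and $\E[I_q(f) R] = \E[(\sum_j c_j P_j Q_j + R)R] = \sum_j c_j \E[P_j Q_j \cdot R] + v_m$, which is circular unless we establish $\E[P_j Q_j \cdot R] = 0$ for each $j$ separately. That, in turn, follows by downward induction: $R = R_m$; we have $R_{j} = R_{j-1} - c_j P_j Q_j$ and hence $R_m = R_j - \sum_{i=j+1}^m c_i P_i Q_i$; since by \textbf{Split-One-Wiener} at step $j$ we have $\E[P_j Q_j \cdot R_j] = 0$, it remains to show $\E[P_j Q_j \cdot P_i Q_i] = 0$ for $i > j$ --- and this is where I would instead avoid the issue altogether by \emph{redefining} the bookkeeping so that at each step we project $R_{i-1}$ onto the orthocomplement of \emph{all} of $P_1 Q_1, \dots, P_i Q_i$ (a Gram--Schmidt step, cheap since there are only $m = O_{q,\eta}(\log(1/\epsilon))$ of them), which makes $R_i \perp P_j Q_j$ for all $j \le i$ true by fiat and only improves the variance decrease; the $c_i$'s are then the coordinates in this orthogonalized representation. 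With this modification property~(5) and ``$\Var[\sum c_j P_j Q_j] + \Var[R] = 1$'' are immediate from Pythagoras. Finally, for the coefficient bound $\sum_j c_j^2 \le (2^q/\eta)^{4(m-1)}$ in property~(3): in the non-orthogonalized version this is trivially $\le 1$ by the Pythagorean identity above, so the stated (much weaker) bound holds a fortiori; in the Gram--Schmidt version one bounds the coordinate vector's norm using Lemma~\ref{lem:small-singular} / Claim~\ref{claim:coeff-bound}, noting that the directions $P_i Q_i$ are $\zeta$-far from linearly dependent with $\zeta = \Omega(\eta/2^q)$ (each new $P_i Q_i$ has correlation $\ge \eta/2^q$ with the current orthogonalized remainder direction, hence projection $\ge \eta/2^q$ onto the orthocomplement of the previous ones, invoking Fact~\ref{fact:no-dependence}), which yields exactly the claimed $(2^q/\eta)^{4(m-1)}$-type bound. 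The running time is $m$ invocations of \textbf{Split-One-Wiener} (each $\poly(n^q, 1/\eta)$) plus $O(m^2)$ inner-product computations in ${\cal W}^q$, for a total of $\poly(n^q, 1/\eta, \log(1/\epsilon))$ as claimed.
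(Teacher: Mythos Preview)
Your proposal is correct and, once you adopt the Gram--Schmidt modification you describe, it is essentially identical to the paper's proof. The paper's procedure defines $I_q(g)$ at each step to be the projection of $I_q(f)$ onto the orthocomplement of the span $V_L$ of \emph{all} the $P_jQ_j$ accumulated so far (and recomputes the $c_j$'s accordingly), which is precisely the fix you propose to secure property~(5). The paper then proves property~(6) along the exact route you sketch: first it shows (via Fact~\ref{fact:no-dependence}, using that $I_q(\zeta g)$ is orthogonal to $V_L$ and has inner product $\ge \eta/2^q$ with the new $P_{k+1}Q_{k+1}$) that $\{P_iQ_i\}_{i=1}^k$ is $(\eta/2^q)$-far from linearly dependent, and then applies Claim~\ref{claim:coeff-bound}. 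The variance-decrease bound $\Var[I_q(g)] \le (1-\eta^2/4^q)^k$ is argued exactly as you indicate, observing that projecting onto the full orthocomplement can only decrease the norm compared to the single-direction split. Your observation that the naive (non-projected) version would give $\sum_j c_j^2 \le 1$ directly is correct but, as you note, that version does not deliver property~(5); the paper makes no attempt at the naive version and goes straight to the projection-based algorithm.
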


\ifnum\confversion=0

\begin{proof}
The procedure {\bf Decompose-One-Wiener} is defined below.  It is helpful to keep the following invariant
in mind:  At any stage during the execution of the algorithm, we let $V_L$ denote the linear subspace spanned by $\{P_i \cdot Q_i \}$ where $L= \{(c_i, P_i, Q_i)\}$.
The algorithm maintains the invariant that $I_q(g)$ is orthogonal to $V_L$ (as is clear by
construction).

  \begin{itemize}

  \item[i.] Initialize $L$ to the empty set of triples and the index $m$ to $0$.

  \item[ii.] Initialize $g = f$ and hence $I_q(g)  = I_q(f)$.

  \item[iii.] If $\Var[ I_q(g)]  \le \epsilon$, then output
the set $L=\{(c_i,P_i,Q_i)\}_{i=1}^m$ and the polynomial
$R = I_q(f) - \littlesum_{i=1}^m c_i P_i \cdot Q_i$,
and return ``\textsf{small remainder}.''

  \item[iv.] Else, choose a constant $\zeta$ so that $\Var [I_q(\zeta g)] =1$.

  \item[v.] Run procedure {\bf Split-One-Wiener} (using parameter $\eta$)
  on $I_q(\zeta g)$. If it
returns ``\textsf{eigenregular}'', then stop the procedure and output
the set $L=\{c_i,P_i,Q_i\}_{i=1}^m$ and the polynomial
$R = I_q(f) - \littlesum_{i=1}^m c_i P_i \cdot Q_i$,
and return ``\textsf{eigenregular remainder}''.

  \item[vi.] Else if the output of {\bf Split-One-Wiener} is $(P, Q, R,c')$,
then append $(c_{m+1},P_{m+1},Q_{m+1})$ to the list $L$
where $c_{m+1}=c'$, $P_{m+1}=P$ and $Q_{m+1}=Q.$
Now, project the polynomial $I_q(f)$
to $V_L$ and let $I_q(g)$ denote the part of $I_q(f)$ that is orthogonal to $V_L$,
i.e. $I_q(g) = (I_q(f))^{\perp V_L}.$
Recompute the constants $c_1,\dots,c_{m+1}$ so that
with the recomputed constants we have
$I_q(f) = \sum_{i=1}^{m+1}c_i P_i Q_i + I_q(g).$
Increment $m$ and go to Step [iii].

  \end{itemize}

We now establish the claimed properties. The first and fourth properties are obvious.
The second property follows directly from Lemma \ref{lemma:algo-primitive2}.
For the fifth property, note that $I_q(g)$ is orthogonal to
$I_q(f) - I_q(g)$ by construction. It remains to justify the third and the sixth properties.
We do this using the following claim:

\begin{claim} \label{claim:36}
At each stage in the execution of {\bf Decompose-One-Wiener},
when $L=\{(c_i,P_i,Q_i)\}_{i=1,\dots,k}$, the set $\{P_i \cdot Q_i\}_{i=1}^k$ is $\eta/2^q$-far
from being linearly dependent.
\end{claim}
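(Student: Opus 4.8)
The plan is to prove the claim by induction on the number $k$ of triples currently in $L$, exploiting the orthogonality invariant that {\bf Decompose-One-Wiener} maintains (namely, that the polynomial $I_q(g)$ held by the algorithm is always the projection of $I_q(f)$ orthogonal to the span $V_L$ of the products produced so far). First observe that Definition~\ref{def:vector-space} genuinely applies to the ordered set $\{P_1Q_1,\dots,P_kQ_k\}$: by properties (1)--(3) of Lemma~\ref{lemma:algo-primitive}, each $P_i$ and $Q_i$ is a non-constant element of a Wiener chaos over a set of variables disjoint from that of its partner, so $\E[P_iQ_i]=\E[P_i]\E[Q_i]=0$ and $\Var[P_iQ_i]=1$, whence $\|P_iQ_i\|_2^2=\E[(P_iQ_i)^2]=1$ in the inner-product space $V$. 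The cases $k=0,1$ are vacuous, since Definition~\ref{def:vector-space} imposes a condition only for indices $1<i\le k$.

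For the inductive step, consider the moment in Step~vi at which the $k$-th triple $(c_k,P_k,Q_k)$ is appended, and let $V_L=\mathrm{span}\{P_1Q_1,\dots,P_{k-1}Q_{k-1}\}$ denote the span immediately before the append. Appending the new product alters neither the first $k-1$ vectors of the ordered set nor the subspaces they span, so the conditions of Definition~\ref{def:vector-space} for indices $i<k$ hold by the inductive hypothesis, and it only remains to verify $\|(P_kQ_k)^{\perp V_L}\|_2\ge \eta/2^q$. Just before this step the algorithm holds $I_q(g)=(I_q(f))^{\perp V_L}$ (initially $g=f$ with $V_L=\{0\}$, and Step~vi recomputes $I_q(g)$ this way thereafter); because the algorithm did not halt at Step~iii we have $\Var[I_q(g)]>\epsilon>0$, so the rescaled $I_q(\zeta g)$ of Step~iv is a well-defined unit vector of $V$ orthogonal to $V_L$. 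Since {\bf Split-One-Wiener} did not return ``\textsf{eigenregular}'' on $I_q(\zeta g)$, property~(4) of Lemma~\ref{lemma:algo-primitive} gives that its output $P_k=I_{q_1}(g_1)$, $Q_k=I_{q_2}(g_2)$ satisfies $c'_k\eqdef\E[I_q(\zeta g)\cdot P_k\cdot Q_k]\ge \eta/2^q$.

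The bound then follows by a one-line computation: writing $P_kQ_k=(P_kQ_k)^{\parallel V_L}+(P_kQ_k)^{\perp V_L}$, the component in $V_L$ contributes nothing to $\langle I_q(\zeta g),P_kQ_k\rangle$ because $I_q(\zeta g)\perp V_L$, so by Cauchy--Schwarz
$$\frac{\eta}{2^q}\le \langle I_q(\zeta g),P_kQ_k\rangle=\langle I_q(\zeta g),(P_kQ_k)^{\perp V_L}\rangle\le \|I_q(\zeta g)\|_2\cdot\|(P_kQ_k)^{\perp V_L}\|_2=\|(P_kQ_k)^{\perp V_L}\|_2,$$
which is exactly the inequality required by Definition~\ref{def:vector-space} for index $k$, completing the induction. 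I do not expect a genuine obstacle here: the only care needed is bookkeeping — confirming that the $V_L$ and $I_q(g)$ used above are precisely the span and remainder present at the instant {\bf Split-One-Wiener} produces $(P_k,Q_k)$, and that $I_q(g)$ really is orthogonal to $V_L$, both of which are immediate from the update rule in Step~vi. Thus the claim is essentially an immediate consequence of Lemma~\ref{lemma:algo-primitive}(4) together with this projection invariant. (It is this claim, combined with Lemma~\ref{lem:small-singular} and Claim~\ref{claim:coeff-bound}, that will then yield properties (3) and (6) of Lemma~\ref{lemma:algo-primitive2}, bounding $m$ and $\sum_j c_j^2$.)
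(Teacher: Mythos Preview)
Your proof is correct and follows essentially the same approach as the paper's: both argue by induction, use the invariant that $I_q(\zeta g)$ is a unit vector orthogonal to $V_L$, invoke property~(4) of Lemma~\ref{lemma:algo-primitive} to get $\langle I_q(\zeta g),P_kQ_k\rangle\ge\eta/2^q$, and conclude the projection bound. The only cosmetic difference is that the paper packages the final Cauchy--Schwarz step as a citation to Fact~\ref{fact:no-dependence}, whereas you write out that one-line computation explicitly.
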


\begin{proof}
The proof is by induction.  The claim is trivially true for $k=1.$
For the inductive step, observe that by construction, just
before $(c_{k+1},P_{k+1},Q_{k+1})$ is appended to the list $L$, we have
(by property (2)) that $P_{k+1}\cdot Q_{k+1}$ is a unit vector, and that
$|\langle I_{q}(\zeta g), P_{k+1} \cdot Q_{k+1} \rangle| \geq \eta/2^{q}.$
Since $I_q(\zeta g)$ is orthogonal to $V_L$ (before appending
$(c_{k+1},P_{k+1},Q_{k+1})$), by Fact
\ref{fact:no-dependence} we get the stated claim.
\end{proof}

\ignore{
}

When the {\bf Decompose-One-Wiener} procedure terminates, note that by property (5) we have
that $\Var[\sum_{i=1}^m c_i P_i \cdot Q_i] \leq 1$.  Hence applying Claim \ref{claim:36}
with Claim \ref{claim:coeff-bound}, we get property (6).

It remains only to establish property (3).  This follows immediately from the following claim:

\begin{claim}
At each stage in the execution of {\bf Decompose-One-Wiener},
when $L=\{(c_i,P_i,Q_i)\}_{i=1,\dots,k}$ and $I_q(g)=(I_q(f))^{\perp V_L}$, we have
$\Var[I_q(g)] \leq (1-\red{\eta^2/4^q})^k.$
\end{claim}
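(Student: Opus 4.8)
The plan is to prove the claim by induction on $k$, the number of triples currently in the list $L$ (equivalently, the current value of the counter $m$ in {\bf Decompose-One-Wiener}). The base case $k=0$ is immediate: then $V_L=\{0\}$, so $I_q(g)=I_q(f)$ and $\Var[I_q(g)]=1=(1-\eta^2/4^q)^0$. The real content is the inductive step, and the key preliminary observation I would record first is that, since $q\ge 2$, every element of ${\cal W}^q$ has mean zero; because each $P_i\cdot Q_i$ lies in ${\cal W}^q$ (property (3) of Lemma~\ref{lemma:algo-primitive}) we have $V_L\subseteq{\cal W}^q$, hence the remainder $I_q(g)=(I_q(f))^{\perp V_L}$ also lies in ${\cal W}^q$ and is mean zero. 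This lets me identify $\Var[I_q(g)]$ with $\|I_q(g)\|^2$ throughout, where $\|\cdot\|$ is the norm on $V$ induced by $\langle P,Q\rangle=\E_{x\sim N(0,1)^n}[P(x)Q(x)]$, so the whole argument reduces to a purely geometric statement about orthogonal projections.

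For the inductive step I would set $W=V_{L_k}$ and $u=I_q(g)=(I_q(f))^{\perp W}$, so that by the inductive hypothesis $\|u\|^2\le(1-\eta^2/4^q)^k$. If the procedure goes on to append a $(k{+}1)$-st triple, it must have passed Step [iii] (so $\|u\|>0$ and the rescaling $\zeta=1/\|u\|$ with $\Var[I_q(\zeta g)]=1$ is well defined, with $I_q(\zeta g)=\zeta u$ as a vector), and {\bf Split-One-Wiener} run on $I_q(\zeta g)$ must have returned a quadruple $(P,Q,R,c')$ rather than ``\textsf{eigenregular}''. Properties (3) and (4) of Lemma~\ref{lemma:algo-primitive} then give that $P\cdot Q$ is a \emph{unit} vector of $V$ and that $c'=\langle\zeta u,\,P\cdot Q\rangle\ge\eta/2^q$; rescaling by $\zeta^{-1}=\|u\|$ yields $|\langle u,\,P\cdot Q\rangle|=\|u\|\,|c'|\ge\|u\|\,\eta/2^q$. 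Now $V_{L_{k+1}}=W+\mathrm{span}(P\cdot Q)$, and writing $I_q(f)=(I_q(f))^{\parallel W}+u$ with $(I_q(f))^{\parallel W}\in W\subseteq V_{L_{k+1}}$ shows that the new remainder equals $(I_q(f))^{\perp V_{L_{k+1}}}=u^{\perp V_{L_{k+1}}}$. Since $P\cdot Q$ is a unit vector lying inside $V_{L_{k+1}}$, the projection of $u$ onto $V_{L_{k+1}}$ has squared length at least $|\langle u,P\cdot Q\rangle|^2\ge\|u\|^2\eta^2/4^q$, so
\[
\Var\big[(I_q(f))^{\perp V_{L_{k+1}}}\big]=\|u\|^2-\big\|u^{\parallel V_{L_{k+1}}}\big\|^2\le\|u\|^2\Big(1-\frac{\eta^2}{4^q}\Big)\le\Big(1-\frac{\eta^2}{4^q}\Big)^{k+1},
\]
which closes the induction. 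I would then note the payoff: combined with Step [iii], this yields property (3), since after $m$ iterations without termination $\Var[R]=\Var[I_q(g^{(m)})]\le(1-\eta^2/4^q)^m$, which falls below $\epsilon$ once $m=O((4^q/\eta^2)\log(1/\epsilon))$.

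There is no serious obstacle here; the statement is a short induction, and the hard part is really the setup (realizing that once one passes to the inner-product-space picture the claim is almost self-evident). The one point I would be careful about is the geometric bookkeeping in the inductive step: one must \emph{not} assume that $P\cdot Q$ is orthogonal to $W$ — it need not be — but only use that $P\cdot Q$ is a unit vector lying in the enlarged span $V_{L_{k+1}}$, so that projecting $u$ onto all of $V_{L_{k+1}}$ captures at least what projecting onto the line through $P\cdot Q$ captures; and one must correctly carry the normalization $\zeta$ when transferring the correlation bound $c'\ge\eta/2^q$ (a statement about the \emph{rescaled} remainder $\zeta u$) back to the actual remainder $u$.
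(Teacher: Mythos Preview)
Your proof is correct and follows essentially the same inductive approach as the paper: both argue that after appending the $(k{+}1)$-st triple, the new remainder is the projection of the old remainder $u$ onto the complement of the enlarged span $V_{L_{k+1}}$, and both use the correlation bound $c'\ge\eta/2^q$ from {\bf Split-One-Wiener} to conclude that this projection loses at least a $(1-\eta^2/4^q)$ factor in variance. The only cosmetic difference is that the paper invokes property~(6) of Lemma~\ref{lemma:algo-primitive} (i.e.\ $\Var(R)=1-c^2$) directly, whereas you unpack the same fact geometrically via the observation that projecting onto a subspace containing a unit vector $v$ captures at least $|\langle u,v\rangle|^2$ of the squared norm.
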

\begin{proof}
As before the proof is by induction and the base claim (when $k=0$) is immediate. For the inductive step, just before appending $(c_{k+1},P_{k+1},Q_{k+1})$ to the list $L$
in Step (vi), note that if we define $I_q(h)=I_q(g)-c_{k+1}P_{k+1}Q_{k+1}$, then by the
{\bf Split-One-Wiener} guarantee (property (6) of Lemma \ref{lemma:algo-primitive})
we have that $\Var[I_q(h)] \leq (1 - \eta^2/4^q) \cdot \Var[I_q(g)]$, which by the
inductive hypothesis is at most $(1-\eta^2/4^q)^{k+1}$.  Since the vector $I_q(f)-I_q(h)$
lies in the linear span of $V_L \cup \{P_{k+1} \cdot Q_{k+1}\}$, and
$\|I_q(h)\| \leq (1-\red{\eta^2/4^q})^{k+1}$, hence after
appending $(c_{k+1},P_{k+1},Q_{k+1})$ to $L$, we have that the new polynomial $I_q(g)$
defined in step (vi) has $\|I_q(g)\| \leq (1-\eta/2^q)^{k+1}.$  This concludes the proof.
\end{proof}

This concludes the proof of Lemma \ref{lemma:algo-primitive2}.
\end{proof}

\fi

We note that the guarantees of the {\bf Decompose-One-Wiener} procedure bear some resemblance to the
decomposition that is used in \cite{DDS13:deg2count} for degree-2 Gaussian
polynomials.  However, in our current context of working with degree-$d$
polynomials, {\bf Decompose-One-Wiener} is not good enough, for the following
reason:  Suppose that {\bf Decompose-One-Wiener} returns
``\textsf{eigenregular remainder}'' and outputs a decomposition of
$I_q(f)$ as $\sum_{i=1}^m c_i P_i Q_i + R$.  While the polynomial
$R$ is $\eta$-eigenregular, it is entirely possible that the number of polynomials
$P_i,Q_i$ in the decomposition (i.e. $2m$) may be as large as $
\Omega({\frac 1 {\eta^2}} \log(1/\eps))$.  We would like to apply
our CLT to conclude that the joint distribution of $R$ and the polynomials
obtained from the subsequent decomposition of
\blue{$P_1,Q_1,\dots,P_m,Q_m$} is close to a normal
distribution, but since the number $2m$ of polynomials is already
too large when compared to the inverse of the eigenregularity parameter,
we cannot use our CLT
(recall Remark \ref{rem:needfewpoly}).
\footnote{Note that
the reason this problem did not arise in the degree-2 polynomial
decompositions of
\cite{DDS13:deg2count} is because each polynomial $P_i,Q_i$ obtained
from {\bf Decompose-One-Wiener} in that setting must have degree 1
(the only way to break the number 2 into a sum of non-negative integers is
as 1+1).
Degree-1 polynomials may be viewed as having ``perfect eigenregularity''
(note that any degree-1 polynomial in Gaussian variables is itself
distributed precisely as a Gaussian) and so having any number of such
degree-1 polynomials did not pose a problem in \cite{DDS13:deg2count}.}

We surmount this difficulty by using {\bf Decompose-One-Wiener} as a tool
within an improved ``two-level'' decomposition procedure
which we present and analyze below.
This improved decomposition procedure has a stronger guarantee than
{\bf Decompose-One-Wiener} in the following sense:  it breaks
its input polynomial into a sum of products of pairs of polynomials plus
\emph{two} remainder terms $R_{\reg}$ (for ``eigenregular'')
and $R_{\nneg}$ (for ``negligible'').  The $R_{\nneg}$
remainder term is guaranteed to have negligibly small variance,
and the $R_{\reg}$ remainder term is guaranteed to either be zero or else
to be \emph{extremely} eigenregular --
in particular, for an appropriate setting of the input parameters, its
eigenregularity is much ``stronger'' than the number of pairs of polynomials
that are produced in the decomposition.  We
term this improved decomposition procedure {\bf Regularize-One-Wiener} because
of this extremely strong eigenregularity guarantee.

Before giving the formal statement, we note that
intuitively this procedure will be useful because it
``guarantees that we make progress''
for the following reason:  We can always erase the small-variance
$R_{\nneg}$ term at the cost of a small and affordable error,
and \blue{the degree-$q$ $R_{\reg}$  remainder
term is so eigenregular that it will not pose an obstacle to our ultimate
goal of applying the CLT. Thus we have reduced
the original polynomial to a
sum of pairwise products of lower-degree polynomials,
which can each be tackled inductively using similar methods
(more precisely, using the generalization of
procedure {\bf Regularize-One-Wiener} to simultaneously decompose
multiple polynomials which we describe in the next subsection).
}

\begin{theorem} \label{thm:regularize}
Fix any $q \geq 2$.
There is a procedure {\bf Regularize-One-Wiener} which takes as input a polynomial
$I_q(f)$ such that $\Var[I_q(f)] =1$ and input parameters
$\eta_0= 1 \ge \eta_1 \ge \ldots\ge \eta_K$ and $\epsilon$, where
$K = O(1/\epsilon \cdot \log (1/\epsilon))$.  {\bf Regularize-One-Wiener}
runs in poly$(n^q,1/\eta_K,1/\eps)$ time and
has the following guarantee:

\begin{enumerate}

\item Define $M(i) = \frac{O(4^q)}{\eta_i^2} \log(1/\epsilon)$.
{\bf Regularize-One-Wiener} outputs a value
$1 \leq \ell \leq k$, a
set $L=\{(a_{i,j},P_{i,j},Q_{i,j})\}_{i=1,\dots,\ell,j=1,\dots,M(i)}$
of triples, and a pair of polynomials $R_{\reg},R_{\nneg}$
such that $I_q(f) = \sum_{i=1}^\ell \sum_{j=1}^{M(i)} a_{i,j}
P_{i,j}\cdot Q_{i,j} + R_{\reg}+R_{\nneg}$.

\item For each $i,j$ we have $P_{i,j} \in {\cal W}^{q_{i,j,1}}$
and $Q_{i,j} \in {\cal W}^{q_{i,j,2}}$ with
$q_{i,j,1},q_{i,j,2}>0$ and $q_{i,j,1} + q_{i,j,2}=q$
and $\Var[P_{i,j}]=\Var[Q_{i,j}]=\Var[P_{i,j} \cdot Q_{i,j}] = 1$;
moreover, $P_{i,j}$ and $Q_{i,j}$ are over disjoint sets of variables.
In addition,
$R_{\reg},R_{\nneg} \in {\cal W}^q$ and all of $P_{i,j},
Q_{i,j}, R_{\reg}, R_{\nneg}$ are multilinear.

\item The polynomial $R_{\nneg}$ satisfies $\Var[R_{\nneg}] \leq \eps$
and the polynomial $R_{\reg}$ is $\eta_{\ell+1}$-eigenregular,
where we define $\eta_{K+1}=0.$

\item 
For $1 \leq i \leq \ell$ we have
$\sum_{j=1}^{M(i)} (a_{i,j})^2 \leq (2^q/\eta_i)^{4(M(i)-1)}.$

\end{enumerate}

\end{theorem}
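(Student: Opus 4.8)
The procedure {\bf Regularize-One-Wiener} simply runs {\bf Decompose-One-Wiener} repeatedly, walking down the list of thresholds $\eta_1 \ge \eta_2 \ge \cdots$ in order. It maintains a current remainder $R^{(i-1)} \in {\cal W}^q$ (starting from $R^{(0)} = I_q(f)$) whose variance $v_{i-1}$ is non-increasing; at stage $i$ it normalizes $R^{(i-1)}$ to unit variance, calls {\bf Decompose-One-Wiener} on it with parameters $(\eta_i,\epsilon)$, and rescales the returned coefficients by $\sqrt{v_{i-1}}$ to get triples $\{(a_{i,j},P_{i,j},Q_{i,j})\}_{j \le m_i}$ with $m_i \le M(i)$ together with a new remainder $R^{(i)}$, so that $I_q(f) = \sum_{i' \le i}\sum_j a_{i',j}P_{i',j}Q_{i',j} + R^{(i)}$. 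There are three cases for the stage-$i$ call. If it returns ``\textsf{small remainder}'', halt with $\ell=i$, $R_{\nneg}=R^{(i)}$, $R_{\reg}=0$. If it returns ``\textsf{eigenregular remainder}'' having produced no products (i.e.\ the normalized $R^{(i-1)}$ was already $\eta_i$-eigenregular), halt with $\ell=i-1$, $R_{\reg}=R^{(i-1)}$, $R_{\nneg}=0$. Otherwise (``\textsf{eigenregular remainder}'' with $m_i\ge 1$) pass $R^{(i)}$ on to stage $i+1$. The running time is $\poly(n^q,1/\eta_K,1/\epsilon)$ as there are at most $K$ stages, each a call to {\bf Decompose-One-Wiener}.

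Granting that the process halts at some $1\le\ell\le K$, the four conclusions follow by bookkeeping on top of the guarantees of {\bf Decompose-One-Wiener} (Lemma~\ref{lemma:algo-primitive2}) and {\bf Split-One-Wiener} (Lemma~\ref{lemma:algo-primitive}). Conclusion (1) is the telescoped identity, with exactly one of $R_{\reg},R_{\nneg}$ equal to $0$. Conclusion (2): the degrees, unit variances, disjointness of supports and multilinearity of the $P_{i,j},Q_{i,j}$, as well as $R_{\reg},R_{\nneg}\in{\cal W}^q$ and their multilinearity, all pass through unchanged from Lemma~\ref{lemma:algo-primitive2}. Conclusion (3): in the ``small remainder'' case $\Var[R_{\nneg}] = v_{i-1}\cdot(\text{normalized remainder variance}) \le 1\cdot\epsilon$, and $R_{\reg}=0$ is $\eta$-eigenregular for every $\eta$ by Remark~\ref{rem:deg1}; in the other halting case $R_{\nneg}=0$ and, since the eigenregularity $\lambda_{\max}(\cdot)/\sqrt{\Var[\cdot]}$ is scale-invariant and the normalized $R^{(\ell)}$ was certified $\eta_{\ell+1}$-eigenregular by {\bf Split-One-Wiener}, the polynomial $R_{\reg}=R^{(\ell)}$ is $\eta_{\ell+1}$-eigenregular (the convention $\eta_{K+1}=0$ is only ever invoked when $R_{\reg}=0$). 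Conclusion (4): at stage $i$ the rescaling gives $\sum_j a_{i,j}^2 = v_{i-1}\sum_j c_{i,j}^2 \le \sum_j c_{i,j}^2 \le (2^q/\eta_i)^{4(m_i-1)} \le (2^q/\eta_i)^{4(M(i)-1)}$, using property (6) of Lemma~\ref{lemma:algo-primitive2} together with $v_{i-1}\le 1$, $m_i\le M(i)$ and $2^q/\eta_i\ge 1$.

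The real content is the termination bound $\ell\le K$: that the process reaches a ``small remainder'' (or a no-product stage) before the threshold list is exhausted. The plan is to track the variance of the current remainder. A stage is passed on only when {\bf Decompose-One-Wiener} splits off at least one product; by property (4) of Lemma~\ref{lemma:algo-primitive} that first split has correlation at least $\eta_i/2^q$, so by property (6) there (equivalently, by property (5) of Lemma~\ref{lemma:algo-primitive2} applied to the internal single-split step) the remainder loses a $\ge \eta_i^2/4^q \ge \eta_K^2/4^q$ fraction of its variance, i.e.\ $v_i \le (1-\eta_K^2/4^q)\,v_{i-1}$; iterating, $v_t \le (1-\eta_K^2/4^q)^t$ after $t$ passed-on stages, and once this quantity drops below $\epsilon$ the next stage must fall into the ``small remainder'' case because {\bf Decompose-One-Wiener} checks $\Var\le\epsilon$ before attempting any split, so the number of stages is $O\!\left(\frac{4^q}{\eta_K^2}\log\frac{1}{\epsilon}\right)$. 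I expect the delicate point — and the main obstacle — to be reconciling this count with the claimed bound $K = O(\epsilon^{-1}\log(1/\epsilon))$: because each stage renormalizes the current remainder to relative variance $1$, one must argue that the \emph{absolute} variance still decays fast enough, which forces one to exploit how quickly the threshold sequence $\eta_i$ shrinks and (in the corner case $\ell=K$) that $\eta_{K+1}=0$ is only reached once the remainder is identically $0$. Everything else reduces cleanly to the two lemmas established above.
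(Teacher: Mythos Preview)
You have correctly identified the gap, but your proposed resolution is on the wrong track. The missing ingredient is not any property of how fast the $\eta_i$ decay; it is an additional halting branch that the paper's procedure has and yours does not.

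In the paper, when stage $i$ returns ``\textsf{eigenregular remainder}'' the procedure does \emph{not} automatically pass the remainder on. It first checks whether the products just extracted have total (normalized) variance at most $\eps$. If so, it halts with $\ell=i-1$, setting $R_{\nneg}$ equal to those \emph{products} (rescaled by $1/\lambda_i\le 1$, so $\Var[R_{\nneg}]\le\eps$) and $R_{\reg}$ equal to the stage-$i$ \emph{remainder} (which is $\eta_i=\eta_{\ell+1}$-eigenregular). Only if the extracted products have variance exceeding $\eps$ does it add them to $L$ and continue. This is exactly what makes the termination argument work: by property (5) of Lemma~\ref{lemma:algo-primitive2} the normalized remainder then has variance at most $1-\eps$, so the absolute variance contracts by a factor $(1-\eps)$ at every stage that is passed on, and $K=O\!\bigl(\tfrac{1}{\eps}\log\tfrac{1}{\eps}\bigr)$ stages suffice regardless of the $\eta_i$.

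Your halting rule ``eigenregular remainder with $m_i=0$'' is strictly weaker, and with it the best per-stage contraction you can guarantee is $(1-\eta_i^2/4^q)$, which (as you observed) gives a stage count depending on $\eta_K$ rather than $\eps$. The speculation about exploiting the decay rate of the $\eta_i$, or about $\eta_{K+1}=0$ forcing the remainder to vanish, does not pan out: the theorem must hold for \emph{any} non-increasing sequence $\eta_0\ge\cdots\ge\eta_K$ supplied by the caller, and nothing forces the remainder to be identically zero at stage $K$. Once you add the extra halting branch, your bookkeeping for conclusions (1)--(4) goes through essentially as you wrote it.
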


We stress that it is crucially important that condition 3 provides
$\eta_{\ell+1}$-eigenregularity rather than
$\eta_{\ell}$-eigenregularity.

\ifnum\confversion=0

\begin{proof}
The procedure {\bf Regularize-One-Wiener} is given below.  We note that
it maintains the invariant
$I_q(f) = \sum_{(a,P,Q) \in L} a \cdot P \cdot Q + I_q(g_i)$
throughout its execution (this is easily
verified by inspection).

    \begin{itemize}

\item [i.] Initialize $L$ to the empty set of triples.

\item [ii.]  Initialize $g_1=f$, so $I_q(g_1) = I_q(f).$

\item [iii.] For $i=1$ to $K$ do the following:

\begin{itemize}

\item [iii(a).]  If $\Var[I_q(g_i)] \leq \eps$ then set $R_{\nneg}=I_q(g)$,
set $R_{\reg}=0$, output $L$, $R_{\reg}$ and $R_{\nneg}$, and exit.

\item [iii(b).] Otherwise, run {\bf Decompose-One-Wiener} with parameters
$\eta_i$ and $\epsilon$ on the polynomial $I_q(\lambda_i g)$,
where $\lambda_i$ is chosen so that $\Var[I_q(\lambda_i g)] =1$.
Let $L_i = \{(c_{i,j}, P_{i,j}, Q_{i,j})\}$
be the set of (at most $M(i)$ many, by Lemma \ref{lemma:algo-primitive2})
triples and $R_i$ be the
polynomial that it outputs.

    \item [iii(c).] If the call to {\bf Decompose-One-Wiener} in step iii(b) returned
    ``\textsf{small remainder}'' then set $L$ to $L \cup L'_i$
where $L'_i = \{({\frac {c_{i,j}}{\lambda_i}}, P_{i,j}, Q_{i,j})\}_{(c_{i,j},
P_{i,j},Q_{i,j}) \in L_i}$, set $R_{\nneg}$
to $R_i/\lambda_i$, set $R_{\reg}$ to 0, output $L$, $R_{\reg}$
and $R_{\nneg}$, and exit.

    \item [iii(d).] Otherwise it must be the case that {\bf Decompose-One-Wiener} returned
``\textsf{eigenregular remainder}.'' In this case, if $\Var [ \sum_{j=1}^{M(i)} c_{i,j} P_{i,j} \cdot Q_{i,j} ] \le \epsilon$, then set $R_{\nneg}$ to $\sum_{(c_{i,j}, P_{i,j}, Q_{i,j}) \in L_i} {\frac {c_{i,j}}{\lambda_i}} \cdot P_{i,j} \cdot Q_{i,j}$ and
$R_{\reg}$ to $R_i/\lambda_i$, output $L$, $R_{\reg}$ and $R_{\nneg}$, and exit.

    \item [iii(e).] Otherwise, set
    $g_{i+1}$ to satisfy
    $I_q(g_{i+1}) = R_i/\lambda_i$, set
   $L$ to $L \cup L'_i$
where $L'_i = \{({\frac {c_{i,j}}{\lambda_i}}, P_{i,j}, Q_{i,j})\}_{(c_{i,j},P_{i,j},Q_{i,j}) \in L_i}$, increment $i$, and go to the next
    iteration of step (iii).

\end{itemize}

\end{itemize}

For Property (1), we observe that the claimed bound on $M(i)$ follows immediately
from  part (3) of Lemma \ref{lemma:algo-primitive2}).  The rest of Property (1)
follows from the invariant and inspection of steps iii(c) and iii(d).
Property (2)
follows directly from part (2) of Lemma \ref{lemma:algo-primitive2}.

To establishing the remaining properties we will use the following claim:

\begin{claim} \label{claim:lambda-bound}
For each $i$ we have $\Var[I_q(g_i)] \leq (1-\eps)^{i-1}$.
\end{claim}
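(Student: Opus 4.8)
The plan is to prove this by induction on $i$, tracking how the variance of the remainder polynomial $I_q(g_i)$ contracts each time the main loop of {\bf Regularize-One-Wiener} passes through step iii(e) to define the next polynomial $g_{i+1}$. The base case $i=1$ is immediate: by construction $g_1 = f$, so $\Var[I_q(g_1)] = \Var[I_q(f)] = 1 = (1-\eps)^0$ by the hypothesis of the theorem.

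For the inductive step I would assume $\Var[I_q(g_i)] \le (1-\eps)^{i-1}$ and look at the iteration in which $g_{i+1}$ is created (if no such iteration occurs, $g_{i+1}$ is never defined and there is nothing to prove); the first thing to note is that a new remainder polynomial can only be produced in step iii(e), since the other exit points (steps iii(a), iii(c), iii(d)) all terminate the procedure. Since the procedure did not halt at step iii(a), we have $\Var[I_q(g_i)] > \eps > 0$, so the scaling constant $\lambda_i$ (chosen so that $\Var[I_q(\lambda_i g_i)] = 1$) is well defined with $\lambda_i^2 = 1/\Var[I_q(g_i)]$, and by linearity of $I_q$ the polynomial passed to {\bf Decompose-One-Wiener} in step iii(b) is $\lambda_i I_q(g_i)$. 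Writing the output of that call as $\lambda_i I_q(g_i) = \sum_j c_{i,j} P_{i,j} Q_{i,j} + R_i$, property (5) of Lemma \ref{lemma:algo-primitive2} (applied to this variance-$1$ input) yields $\Var[\sum_j c_{i,j} P_{i,j} Q_{i,j}] + \Var[R_i] = 1$. Because step iii(e) is reached only when the variance test of step iii(d) fails, we have $\Var[\sum_j c_{i,j} P_{i,j} Q_{i,j}] > \eps$, and hence $\Var[R_i] < 1-\eps$.

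To conclude, I would use that step iii(e) sets $I_q(g_{i+1}) = R_i/\lambda_i$, so that
\[
\Var[I_q(g_{i+1})] \;=\; \frac{\Var[R_i]}{\lambda_i^2} \;=\; \Var[R_i]\cdot \Var[I_q(g_i)] \;<\; (1-\eps)\cdot(1-\eps)^{i-1} \;=\; (1-\eps)^{i},
\]
which closes the induction. I do not expect a genuine obstacle here: the argument is essentially bookkeeping, and the only two points requiring care are (a) confirming that a new remainder polynomial $g_{i+1}$ is produced \emph{only} via step iii(e), so that the key estimate $\Var[R_i] < 1-\eps$ is always the one in force when $g_{i+1}$ is defined, and (b) checking that the rescaling cancels correctly, i.e.\ that the factor $\lambda_i^{-2}$ coming from dividing $R_i$ by $\lambda_i$ exactly undoes the normalization factor $\lambda_i^2 = 1/\Var[I_q(g_i)]$ applied to the input of {\bf Decompose-One-Wiener}.
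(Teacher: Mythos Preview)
Your proof is correct and follows essentially the same approach as the paper's proof: both argue by induction on $i$, noting that $g_{i+1}$ is only defined when step iii(e) is reached, which forces $\Var[\sum_j c_{i,j} P_{i,j} Q_{i,j}] > \eps$ and hence (via property (5) of Lemma~\ref{lemma:algo-primitive2}) $\Var[R_i] < 1-\eps$, after which undoing the $\lambda_i$-rescaling gives the inductive step. Your write-up is in fact slightly more careful than the paper's in explicitly noting why $\lambda_i$ is well defined and how the rescaling factor cancels.
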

\begin{proof}
The proof is by induction on $i$.  The claim clearly holds for $i=1.$
For the inductive step, observe that the only way the procedure reaches step iii(e)
and increments $i$ is if the execution of {\bf Decompose-One-Wiener} on $I_q(\lambda_i g)$
returned ``\textsf{eigenregular remainder}'' and the decomposition
$
I_q(\lambda_i g) = \sum c_{i,j} P_{i,j} Q_{i,j} + R_i$ has $\Var[\sum c_{i,j} P_{i,j}
Q_{i,j}] > \eps$, and hence (by part (5) of Lemma \ref{lemma:algo-primitive2})
$\Var[R_i] \leq (1-\eps)\Var[I_q(\lambda_i g_i)].$  Consequently in this case we have
$\Var[I_q(g_{i+1})]=\Var[R_i/\lambda_i] \leq (1-\eps) \Var[I_q(g_i)]$, which inductively is at most $(1-\eps)^i$ as desired.
\end{proof}

Note that this claim immediately gives that $\lambda_i \geq 1$ for all $i$,
which together with part (6) of Lemma (\ref{lemma:algo-primitive2}) gives Property (4).

It remains only to establish Property (3).  Note that by Claim
\ref{claim:lambda-bound} it must be the case that the algorithm halts and
exits at some iteration of either
step iii(a), iii(c), or iii(d) --- if it has not already exited by the
time $i$ reaches $K$, since $\Var[I_q(g_i)] \leq (1-\eps)^{i-1}$
once it reaches $i=K$ it will exit in step iii(a).
We consider the three possibilities in turn.
If it exits at Step iii(a) then clearly Property (3) is satisfied.
If it exits at Step iii(c) then by Lemma (\ref{lemma:algo-primitive2})
we have that $\Var[R_i] \leq \eps$; since $\lambda_i \geq 1$
this means that $\Var[R_{\nneg}]=\Var[R_i/\lambda_i] \leq \eps$
and again Property (3) holds.  Finally, if it exits at Step iii(d)
during the $i$-th iteration of the loop then observe that the value
of $\ell$ is $i-1$ (since $L_i$ is \emph{not} added on to $L$).
Lemma (\ref{lemma:algo-primitive2}) guarantees that $R_i$ (and hence
$R_{\reg})$ is $\eta_i$-eigenregular, i.e. $\eta_{\ell+1}$-eigenregular,
and as above the fact that $\lambda_i \geq 1$ ensures that
$\Var[R_{\nneg}] \leq \eps$, and Property (3) holds in this case as well.
This concludes the proof of
Theorem \ref{thm:regularize}.
\end{proof}

\fi

\ignore{

%
%
%
%
%
}

\subsection{Decomposing a $k$-tuple of multilinear elements of the $q$-th
Wiener chaos} \label{sec:manypolyonewiener}

In this section we generalize the {\bf Regularize-One-Wiener} procedure to
simultaneously decompose \blue{multiple} polynomials that all belong to ${\cal W}^q$.
Even though our ultimate goal is to decompose a single degree-$d$ Gaussian
polynomial, we require a procedure that is capable of handling
many polynomials because even decomposing a single degree-$d$
polynomial using {\bf Regularize-One-Wiener} will give rise to many
lower-degree polynomials which all need to be
decomposed in turn.
\ifnum\confversion=0
\blue{(This is also the reason why we must prove Theorem
\ref{thm:general}, which deals with $k$ Gaussian polynomials, in order to
ultimately obtain Theorem \ref{thm:main-decomp}, which decomposes a single Gaussian
polynomial.)}
\fi

A natural approach to decompose $r$ polynomials
$I_q(f_1),\dots,I_q(f_r) \in {\cal W}^q$ is simply to
run {\bf Regularize-One-Wiener} $r$ separate times.  However, this simpleminded
approach could well result in different values $\ell_1,\dots,\ell_r$
being obtained from the $r$ calls, and hence in
different levels of eigenregularity for the $r$ ``remainder''
polynomials $R_{1,\reg},\dots,R_{r,\reg}$ that are constructed.
This is a problem because some of the calls may yield a relatively large
eigenregularity parameter, while other calls may generate
very many polynomials (and a much smaller eigenregularity parameter).
Since the CLT can only take advantage of the largest eigenregularity parameter,
the key advantage of {\bf Regularize-One-Wiener} --- that the
number of polynomials it produces compares favorably with the eigenregularity
of these polynomials --- is lost.

\ifnum\confversion=1
We get around this with a procedure called
{\bf MultiRegularize-One-Wiener}.  It takes as input an
$r$-tuple of polynomials
$(I_q(f_1),\dots,I_q(f_r))$ \blue{(that all belong to one fixed Wiener chaos)} and
input parameters $\eta_0= 1 \ge \eta_1 \ge \ldots\ge \eta_K$ and $\epsilon$.
Crucially, it guarantees that the \emph{overall} number of polynomials
that are produced from all the $r$ decompositions compares favorably with
the overall eigenregularity parameter that is obtained.
Intuitively, {\bf MultiRegularize-One-Wiener} augments the {\bf Regularize-One-Wiener} procedure
with ideas from the decomposition procedure for $k$-tuples of degree-2
polynomials that was given in \cite{DDS14junta}
(and which in turn built on ideas from \cite{GOWZ10}).  See the full version for
details.
\fi

\ifnum\confversion=0
We get around this with the
{\bf MultiRegularize-One-Wiener} procedure that is presented and
analyzed below.  It augments the {\bf Regularize-One-Wiener} procedure
with ideas from the decomposition procedure for $k$-tuples of degree-2
polynomials that was presented and analyzed in \cite{DDS14junta}
(and which in turn built on ideas from the decomposition of
\cite{GOWZ10} for simultaneously dealing with multiple degree-1
polynomials, using a different notion of ``eigenregularity'').
Crucially, it guarantees that the \emph{overall} number of polynomials
that are produced from all the $r$ decompositions compares favorably with
the overall eigenregularity parameter that is obtained.

\begin{theorem} \label{thm:regularize-many}
Fix any $q \geq 2$.
There is a procedure {\bf MultiRegularize-One-Wiener} which takes as input an
$r$-tuple of polynomials
$(I_q(f_1),\dots,I_q(f_r))$ such that $\Var[I_q(f_i)]=1$ for all $i$,
and input parameters
$\eta_0= 1 \ge \eta_1 \ge \ldots\ge \eta_K$ and $\epsilon$, where
$K = O(r/\epsilon \cdot \log (1/\epsilon))$.  {\bf MultiRegularize-One-Wiener}
runs in poly$(n^q,1/\eta_K,r/\eps)$ time and
has the following guarantee:

\begin{enumerate}

\item Define $M(i) = O(\frac{4^q}{\eta_i^2} \log(1/\epsilon))$.
{\bf MultiRegularize-One-Wiener} outputs an index $t$ with $0 \leq t \leq K$
and for each $s \in [r]$ a set $L_s$ of triples
$
\{(a_{s,i,j},P_{s,i,j},Q_{s,i,j})\}_{i=1,\dots,t,j=1,\dots,M(i)}$
and a pair of polynomials $R_{s,\reg},
R_{s,\nneg}$,
such that
\ifnum\confversion=0
\begin{equation}
\label{eq:good}
I_q(f_s) = \sum_{i=1}^t \sum_{j=1}^{M(i)} a_{s,i,j}
P_{s,i,j}\cdot Q_{s,i,j} + a_{s,\reg} \cdot R_{s,\reg}+R_{s,\nneg}.
\end{equation}
\fi
\ifnum\confversion=1
$I_q(f_s) = \sum_{i=1}^t \sum_{j=1}^{M(i)} a_{s,i,j}
P_{s,i,j}\cdot Q_{s,i,j} + a_{s,\reg} \cdot R_{s,\reg}+R_{s,\nneg}.
$
\fi

\item For each $s,i,j$ we have $P_{s,i,j} \in {\cal W}^{q_{s,i,j,1}}$
and $Q_{s,i,j} \in {\cal W}^{q_{s,i,j,2}}$ with
$q_{s,i,j,1},q_{s,i,j,2}>0$ and $q_{s,i,j,1} + q_{s,i,j,2}=q$
and $\Var[P_{s,i,j}]=\Var[Q_{s,i,j}]= \Var[P_{s,i,j} \cdot
Q_{s,i,j}] = 1$.
Similarly we have $R_{s,\reg},R_{s,\nneg} \in {\cal W}^q$, and $\Var[R_{s,\reg}]=1.$
Moreover
$P_{s,i,j}$ and $Q_{s,i,j}$ are over disjoint sets of variables,
and all of $P_{s,i,j},Q_{s,i,j},R_{s,\reg}$ and $R_{s,\nneg}$
are multilinear.

\item For each $s$ we have that
 $\Var[R_{s,\nneg}] \leq \eps$ and that
 $a_{s,\reg} \cdot R_{s,\reg}$ is $\eta_{t+1}$-eigenregular,
where we define $\eta_{K+1}=0.$
\ignore{
}

\item \red{
For $1 \leq s \leq r$ and $1 \leq i \leq t$ we have
$\sum_{j=1}^{M(i)} (a_{s,i,j})^2 \leq (2^q/\eta_i)^{4(M(i)-1)}.$
}

\end{enumerate}

\end{theorem}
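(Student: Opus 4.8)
The plan is to prove Theorem~\ref{thm:regularize-many} by running a single ``synchronized'' version of the {\bf Regularize-One-Wiener} procedure on all $r$ input polynomials at once, proceeding level by level through the thresholds $\eta_1 \ge \eta_2 \ge \cdots \ge \eta_K$. At level $i$ the procedure will, for every polynomial $s$ that is still ``active'', call {\bf Decompose-One-Wiener} with parameters $\eta_i$ and $\eps$ on the renormalized current remainder $I_q(g_s)$ (renormalized by a factor $\lambda_{s,i}\ge 1$, exactly as in the proof of Theorem~\ref{thm:regularize}), obtaining a list of at most $M(i)$ pairwise products plus a remainder $R_{s,i}$. The outcome for polynomial $s$ falls into one of the cases already identified in the proof of Theorem~\ref{thm:regularize}: (a) the current remainder already has variance $\le\eps$, so $s$ is frozen with $R_{s,\nneg}$ equal to it and $a_{s,\reg}=0$; (b) {\bf Decompose-One-Wiener} returns ``\textsf{small remainder}'', so $s$ is frozen, its level-$i$ products are kept (rescaled by $1/\lambda_{s,i}$), $R_{s,\nneg}:=R_{s,i}/\lambda_{s,i}$ (which has $\Var\le\eps$ since $\lambda_{s,i}\ge1$) and $a_{s,\reg}=0$; (c) {\bf Decompose-One-Wiener} returns ``\textsf{eigenregular remainder}'' but the extracted products carry variance $>\eps$ (``substantial extraction''), so $s$ stays active with $I_q(g_s)$ updated to the rescaled remainder and its true (unnormalized) remainder variance having dropped by a factor $<(1-\eps)$; or (d) {\bf Decompose-One-Wiener} returns ``\textsf{eigenregular remainder}'' with extracted products of variance $\le\eps$ (``tiny extraction''), meaning $s$'s current remainder is essentially already $\eta_i$-eigenregular.

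The key idea --- borrowed from the degree-$2$ multi-polynomial decompositions of \cite{DDS14junta, GOWZ10} --- is how to handle case~(d) so as to obtain a \emph{single} index $t$ that works for all $r$ polynomials simultaneously. Rather than freezing a polynomial the moment it becomes eigenregular, we keep it in the game: we record its (tiny, harmless) level-$i$ products in $L_s$, keep $I_q(g_s)$ equal to the $\eta_i$-eigenregular remainder, and at the next level call {\bf Decompose-One-Wiener} with the smaller parameter $\eta_{i+1}$, which either re-certifies $s$ as eigenregular at the finer scale (another tiny extraction) or extracts further products. We declare the procedure finished at the first level $t^*$ at which \emph{no} active polynomial performs a substantial extraction --- equivalently, every active polynomial falls into case (a), (b) or (d) at level $t^*$ --- and we output $t:=t^*-1$. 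At that point every polynomial that is active at level $t^*$ has just been processed by {\bf Decompose-One-Wiener} at parameter $\eta_{t^*}=\eta_{t+1}$, so its remainder is either $\eta_{t+1}$-eigenregular (case (d): absorb the level-$t^*$ tiny products into $R_{s,\nneg}$, set $R_{s,\reg}$ to the unit-variance normalization of the remainder and $a_{s,\reg}$ to its norm) or has variance $\le\eps$ (cases (a),(b): take $R_{s,\reg}$ an arbitrary unit-variance polynomial and $a_{s,\reg}=0$, so $a_{s,\reg}R_{s,\reg}=0$ is trivially $\eta_{t+1}$-eigenregular). Polynomials frozen at earlier levels $i<t^*$ have all their products at levels $\le i\le t$, and we pad the missing levels $i'\le t$ with empty product data (coefficient $0$, dummy unit-variance factors $P,Q$). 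This yields conditions (1)--(3); condition~(2)'s structural requirements (membership of $P_{s,i,j}$ in $\mathcal W^{q_{s,i,j,1}}$ and of $Q_{s,i,j}$ in $\mathcal W^{q_{s,i,j,2}}$ with $q_{s,i,j,1}+q_{s,i,j,2}=q$, disjointness of the variable sets of $P_{s,i,j}$ and $Q_{s,i,j}$, unit variances, multilinearity) are inherited verbatim from parts~(2)--(3) of Lemma~\ref{lemma:algo-primitive2} and from Lemma~\ref{lemma:algo-primitive}.

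Termination and the bound $t\le K=O((r/\eps)\log(1/\eps))$ come from a potential argument generalizing Claim~\ref{claim:lambda-bound}: each ``substantial extraction'' performed on a polynomial $s$ multiplies its true (unnormalized) remainder variance by a factor $<(1-\eps)$, so --- since the remainder starts with variance $\le1$ and must stay above $\eps$ for the process to continue working on $s$ --- the polynomial $s$ can undergo at most $O((1/\eps)\log(1/\eps))$ substantial extractions in total. By the definition of $t^*$, every level $i<t^*$ witnesses at least one substantial extraction (performed on some active polynomial), so the number of levels used is at most $r\cdot O((1/\eps)\log(1/\eps))$, which is $\le K$ for a suitable choice of the constant in $K$. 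Condition~(4) is immediate from part~(3) of Lemma~\ref{lemma:algo-primitive2}: the call to {\bf Decompose-One-Wiener} at level $i$ produces products with squared-coefficient sum at most $(2^q/\eta_i)^{4(M(i)-1)}$, and rescaling by $1/\lambda_{s,i}$ with $\lambda_{s,i}\ge1$ only decreases these coefficients.

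The main obstacle is exactly the synchronization embodied in case~(d): ensuring that a single threshold index $t$ simultaneously certifies the eigenregularity of \emph{all} remainder polynomials, with the correct off-by-one (so that the remainders are $\eta_{t+1}$-eigenregular, not merely $\eta_t$-eigenregular), \emph{while} keeping the total number of produced polynomials $\sum_s|L_s|\le r\sum_{i=1}^{t}M(i)$ small enough compared to $\eta_{t+1}$ for the CLT of Theorem~\ref{thm:mainclt} to be applicable later on. The latter is not something {\bf MultiRegularize-One-Wiener} needs to arrange on its own: it is guaranteed because the caller supplies the sequence $(\eta_i)$ decreasing fast enough that $\eta_{t+1}$ dominates any prescribed function of $r\sum_{i\le t}M(i)$, no matter which value $t\le K$ the procedure happens to return. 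A secondary (purely bookkeeping) point is to verify, as indicated above, that the $\lambda_{s,i}\ge1$ rescalings do not disturb any of the claimed properties.
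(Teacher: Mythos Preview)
Your proposal is correct and follows essentially the same approach as the paper's proof: both run a synchronized level-by-level loop invoking {\bf Decompose-One-Wiener} on each still-active polynomial with the current threshold $\eta_i$, freeze polynomials whose remainder becomes small, and halt at the first level where no active polynomial makes a ``substantial extraction'' (variance $>\eps$ in the extracted products); the termination bound $K=O((r/\eps)\log(1/\eps))$ comes from the same potential argument (the paper tracks this via the counters $\mathbf{hit}_s$, which play exactly the role of your per-polynomial substantial-extraction counts).

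The one bookkeeping difference is how the two treat case~(d) at an intermediate level $i<t^*$: you record the tiny level-$i$ products in $L_s$ and update the remainder, whereas the paper's step~ii(e) leaves such an $s$ untouched (implicitly $g_{s,i+1}=g_{s,i}$, with nothing added to $L_s$) and only the substantially-extracting polynomials are updated. Both choices preserve the invariant $I_q(f_s)=\sum_{(a,P,Q)\in L_s}aPQ+I_q(g_{s,i})$ and both yield the required bounds, so this is a harmless variation. Your explicit mention of padding missing levels with zero-coefficient dummy triples is a point the paper leaves implicit.
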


\begin{proof}
Similar to {\bf Regularize-One-Wiener},
the procedure {\bf MultiRegularize-One-Wiener} maintains the
invariant that for each $s \in [r]$, we have
$I_q(f_s) = \sum_{(a,P,Q) \in L_s} a \cdot P \cdot Q + I_q(g_{s,i})$ throughout
its execution.

Before giving the detailed description we provide some
useful points to keep in mind.
The set $[r] \setminus \mathbf{live}$ contains the indices
of those polynomials for which the desired decomposition has already been
achieved, while $\mathbf{live}$ contains those polynomials that are
still being decomposed.
The variable $\mathbf{hit}_s$ maintains the number of
times that the decomposition procedure {\bf Decompose-One-Wiener}
has been applied to $I_q(g_{s,i})$ for some $i$.

Here is the procedure {\bf MultiRegularize-One-Wiener}:
\begin{itemize}

\item[i.] For all $s \in [r]$ initialize $\mathbf{hit}_s$  to be 0,
initialize $L_s$ to be the empty set of triples, and initialize
$g_{s,1}=f_s$, so $I_q(g_{s,1})=I_q(f_s).$  Initialize
the set $\mathbf{live}$ to be $[s]$.

\item[ii.]  For $i=1$ to $K$ do the following:

\begin{itemize}

\item [ii(a).]  For each $s \in \mathbf{live}$,
if $\Var[I_q(g_{s,i})] \leq \eps$ then set
$R_{s,\nneg}=I_q(g_{s,i})$, set $a_{s,\reg}=0$ and set $R_{s,\reg}$ to be any unit variance element of $\mathcal{W}^{q}$ (the choice of $R_{s, \reg}$ is immaterial),
and remove $s$ from $\mathbf{live}$.

\item [ii(b).]
If $\mathbf{live}$ is empty then for each $s \in [r]$
output the set $L_s$ and the pair $R_{s,\reg},R_{s,\nneg}$, and exit.
Otherwise,
for each $s \in \mathbf{live}$,
run {\bf Decompose-One-Wiener} with parameters
$\eta_i$ and $\epsilon$ on the polynomial $I_q(\lambda_{s,i} g_s)$,
where $\lambda_{s,i}$ is chosen so that $\Var[I_q(\lambda_{s,i} g_s)] =1$.
Let $L_{s,i} = \{(c_{s,i,j}, P_{s,i,j}, Q_{s,i,j})\}$
be the set of (at most $M(i)$ many, by Lemma \ref{lemma:algo-primitive2})
triples and $R_{s,i}$ be the
polynomial that it outputs.

\item[ii(c).] If the call to {\bf Decompose-One-Wiener}
returned
``\textsf{small remainder}'' for any polynomial
$I_q(\lambda_{s,i}g_s),$ then \ignore{set the Boolean
variable $\mathbf{flag}$ to 1, and} for each such $s$  set $L_s$
to $L_s \cup L_{s,i}'$
where $L'_{s,i} = \{({\frac {c_{s,i,j}}{\lambda_{s,i}}},
P_{s,i,j}, Q_{s,i,j})\}_{(c_{s,i,j},
P_{s,i,j},Q_{s,i,j}) \in L_{s,i}}$,
set $R_{s,\nneg}$
to $R_{s,i}/\lambda_{s,i}$, set $a_{s,\reg}=0$ and  $R_{s,\reg}$ to be any unit variance element of $\mathcal{W}^{q}$ (as before, the choice of $R_{s, \reg}$ is immaterial), and remove $s$
from $\mathbf{live}.$
\ignore{
}

\item[ii(d).] \ignore{If $\mathbf{flag}=0$, then:  }If for
every $s \in \mathbf{live}$
it is the case that
$\Var[
\sum_{(c_{s,i,j},P_{s,i,j},Q_{s,i,j}) \in L_{s,i}}
c_{s,i,j} P_{s,i,j} \cdot Q_{s,i,j}
] \le \epsilon$,
then set $R_{s,\nneg}$ to
$\sum_{(c_{s,i,j},P_{s,i,j},Q_{s,i,j}) \in L_{s,i}}
{\frac {c_{s,i,j}}{\lambda_{s,i}}} P_{s,i,j} \cdot Q_{s,i,j}$. Also, set $a_{s,\reg} = \sqrt{\Var (R_{s,i}/\lambda_{s,i}) }$ and
and $R_{s,\reg}=R_{s,i}/(\lambda_{s,i}  \cdot a_{s,\reg})$. For each
$s \in [r]$ output the set $L_s$, and the triple $a_{s,\reg},R_{s,\reg},R_{s,\nneg}$,
and exit.

\item[ii(e).] \ignore{If $\mathbf{flag}=1$, then:}Otherwise,
for each $s \in \mathbf{live}$
such that
$\Var[
\sum_{(c_{s,i,j},P_{s,i,j},Q_{s,i,j}) \in L_{s,i}}
c_{s,i,j} P_{s,i,j} \cdot Q_{s,i,j}
] > \epsilon$,
increase $\mathbf{hit}_s $ by $1$, set $L_s$ to $L_s \cup L_{s,i}'$
where $L'_{s,i} = \{({\frac {c_{s,i,j}}{\lambda_{s,i}}},
P_{s,i,j}, Q_{s,i,j})\}_{(c_{s,i,j},
P_{s,i,j},Q_{s,i,j}) \in L_{s,i}}$, and
set $g_{s,i+1}$ to satisfy $I_q(g_{s,i+1}) = R_{s,i}/\lambda_{s,i}.$
Increment $i$ and go to the next iteration of step (ii).

\end{itemize}

\end{itemize}

Property (1) follows from the discussion preceding the algorithm
description and inspection of step ii(d).
Property (2)
follows directly from part (2) of Lemma \ref{lemma:algo-primitive2} (Note that the algorithm ensures that $R_{s,\reg}$ has unit variance).

We have the following analogue of Claim \ref{claim:lambda-bound}:
\begin{claim} \label{claim:lambda-bound-many}
At each stage in the execution of the algorithm,
for each $s \in \mathbf{live}$ we have $\Var[I_q(g_{s,i})]
\leq (1-\eps)^{\mathbf{hit}_s}$.
\end{claim}
\begin{proof}
The proof is an easy adaptation of the proof of Claim \ref{claim:lambda-bound},
using the criterion for incrementing $\mathbf{hit}_s$ that is employed
in step ii(e).
\end{proof}
Claim \ref{claim:lambda-bound-many} implies that for each $s \in \mathbf{live}$
we have $\lambda_{s,i} \geq 1$, so as in the
proof of Claim \ref{claim:lambda-bound} we get that Property (4) holds.

Observe that if an index $s$ is removed from
$\mathbf{live}$ (either in Step ii(a) or Step ii(c)),
then the polynomial $R_{s,\reg}$ is 0-eigenregular, and since
$\lambda_{s,i} \geq 1$, the polynomial $R_{s,\nneg}$ has
$\Var[R_{s,\nneg}] \leq \eps$.
Hence as a consequent of the above-mentioned invariant,
it is easily verified that
each $s \in [r] \setminus \mathbf{live}$ satisfies (\ref{eq:good}).

The last step is to establish Property (3).  The key observation is that
each time the algorithm increments $i$ in step ii(e) and returns to step ii(a),
at least one $s \in [r]$ must have had $\mathbf{hit}_s$ incremented.
Once a given value of $s$ has $\mathbf{hit}_s$ reach $O(1/\eps \cdot
\log(1/\eps))$, by Claim \ref{claim:lambda-bound-many} it will be the
case that $s$ is removed from $\mathbf{live}$ in Step ii(a).
Since $K=O(r/\eps \cdot \log(1/\eps))$, it follows that the algorithm
must halt and exit in some iteration of step ii(b) or ii(d).
If the algorithm exits in step ii(b) then it is clear from the above
discussion that Property (3) holds.  Finally, if the algorithm
exits in step ii(d), then similar to the final paragraph of the proof of
Theorem \ref{thm:regularize}, the value of $t$ is $i-1$
(since for the elements $s \in \mathbf{live}$ at the start of
that execution of step ii(d), the elements of $L_{s,i}$ are not
added on to $L_s$).  Similar to before we get that
Lemma (\ref{lemma:algo-primitive2}) guarantees that $R_{s,i}$ (and hence
$R_{s,\reg})$ is $\eta_i$-eigenregular, i.e. $\eta_{t+1}$-eigenregular,
the fact that $\lambda_{s,i} \geq 1$ ensures that
$\Var[R_{s,\nneg}] \leq \eps$, and hence Property (3) holds.
The proof is complete.
\end{proof}

\fi

\subsection{Beyond the homogeneous case: handling multiple levels
of Wiener chaos} \label{sec:wholething}

\ifnum\confversion=1
In this subsection we sketch our most involved
decomposition procedure,
{\bf \blue{MultiRegularize-Many-Wieners}}, for decomposing
a $k(d+1)$-tuple consisting of $k$ elements from the
$j$-th Wiener chaos for each $j=0,\dots,d$.
We begin with an informal description of how the decomposition procedure works.
Let $p_1,\dots,p_k$ be $k$ degree-$d$ multilinear Gaussian
polynomials.  Each $p_i$ has a unique expansion
in terms of symmetric $q$-tensors $f_{i,q} \in {\cal H}^{\odot q}$ as
$
p_i = \sum_{q=0}^d p_{i,q},$ where
$p_{i,q}=I_q(f_{i,q}).
$
For $2 \leq q \leq \blue{d-1}$ let $OLD_q$ denote the set of polynomials
$\{I_q(f_{i,q})\}_{i=1,\dots,k}.$

The high-level idea of the decomposition is to ``work downward''
from higher to lower levels of the Wiener chaos in successive stages,
at each stage using {\bf MultiRegularize-One-Wiener} to simultaneously decompose
all of the polynomials at the current level.  By carefully choosing
the eigenregularity parameters at each stage we can ensure
that at the end of the decomposition we are left with
a collection of ``not too many'' polynomials (\blue{corresponding to} the $A_{i,j,\ell}$'s of Theorem
\ref{thm:main-decomp}) all of which are highly eigenregular\ignore{ (so
that we will be able to apply the CLT to this collection)}.

In a bit more detail,
in the first stage we simultaneously decompose the $k$ degree-$d$ polynomials
$I_{d}(f_{1,d}),$ $ \dots, I_d(f_{k,d})$
using the {\bf MultiRegularize-One-Wiener} algorithm with parameters
$1 = \eta_0 \gg \cdots \gg \eta_{K}$ and
$\eps$.
This generates $k$ polynomials in ${\cal W}^{d}$
that are each $\eta_{t+1}$-eigenregular, for some $1 \leq t
\leq \blue{K}$, where $K \leq O_{k,d,\eps}(1)$;
intuitively, these should be thought of as ``extremely eigenregular'' polynomials.
Let $REG$ denote this set of polynomials \blue{(they will not be used again in the decomposition).} It also generates,
for each $1 \leq q \leq d-1$, ``not too many'' (at most
$O_{k,d,\eps,\eta_{t}}(1)$)
new polynomials in ${\cal W}^{q}$; let $NEW_{q}$ denote this set of polynomials.
The key qualitative point is that the size of each
$NEW_q$ depends on $\eta_{t}$ while the
eigenregularity of the polynomials \blue{in $REG$} is $\eta_{t+1}$.
Thus if $\eta_{t+1}$ is much less than $\eta_t$, the number of newly
introduced polynomials compares favorably with the  \blue{eigenregularity bound $\eta_{t+1}.$}

We have thus ``dealt with the degree-$d$ part of the input'' since the only
remaining degree-$d$ polynomials are \blue{in $REG$ and are} extremely regular.  Next, we recursively apply
the above approach to handle the lower-degree part, including \blue{both the original
lower-degree components from $OLD_q$ and the new
lower-degree polynomials from $NEW_q$ that were introduced in dealing with the degree-$d$ part}.
The crux of the analysis is to argue that there is a suitable choice of parameters at
each stage that allows this procedure to be carried out ``all the way down,'' so that 
\blue{the total number of polynomials} that are ever produced in the
analysis is far smaller than $1/\eta$, where $\eta$ is the largest
eigenregularity of any of the final polynomials.  It turns out to be difficult
to argue this formally using the ``top-down'' view on the
decomposition procedure that we have adopted above.  Instead, in the detailed proof
proof which we give in the full version,
we take a ``bottom-up'' view of the decomposition procedure:  we first
show that it can be successfully carried out for low-degree polynomials,
and use this fact to show that it
can be successfully carried out for higher-degree polynomials.

\fi

\ifnum\confversion=0

In this subsection we describe and analyze our most involved
decomposition procedure,
{\bf \blue{MultiRegularize-Many-Wieners}}, for decomposing
a $k(d+1)$-tuple consisting of $k$ elements from the
$j$-th Wiener chaos for each $j=0,\dots,d$.
We will obtain Theorem \ref{thm:main-decomp} in the
following subsection using {\bf \blue{MultiRegularize-Many-Wieners}}.

\medskip

\noindent {\bf An informal ``top-down'' description.}
We begin with an informal description of how the decomposition procedure works.
Let $p_1,\dots,p_k$ be $k$ degree-$d$ multilinear Gaussian
polynomials.  Each $p_i$ has a unique expansion
in terms of symmetric $q$-tensors $f_{i,q} \in {\cal H}^{\odot q}$ as
$
p_i = \sum_{q=0}^d p_{i,q}, \quad \text{where} \quad
p_{i,q}=I_q(f_{i,q}).
$
For $2 \leq q \leq d$ let $OLD_q$ denote the set of polynomials
$\{I_q(f_{i,q})\}_{i=1,\dots,k}.$

The high-level idea of the decomposition is to ``work downward''
from higher to lower levels of the Wiener chaos in successive stages,
at each stage using {\bf MultiRegularize-One-Wiener} to simultaneously decompose
all of the polynomials at the current level.  By carefully choosing
the eigenregularity parameters at each stage we can ensure
that at the end of the decomposition we are left with
a collection of ``not too many'' polynomials (the $A_{i,j,\ell}$'s of Theorem
\ref{thm:main-decomp}) all of which are highly eigenregular\ignore{ (so
that we will be able to apply the CLT to this collection)}.

In a bit more detail,
in the first stage we simultaneously decompose the $k$ degree-$d$ polynomials
$I_{d}(f_{1,d}), \dots, I_d(f_{k,d})$
using the {\bf MultiRegularize-One-Wiener} algorithm with parameters
$1 = \eta^{(d)}_0 \gg \cdots \gg \eta^{(d)}_{K^{(d)}}$ and
$\eps^{(d)}$.
This generates:

\begin{itemize}

\item $k$ polynomials in ${\cal W}^{d}$
that are each $\eta^{(d)}_{t^{(d)}+1}$-eigenregular, for some $1 \leq t^{(d)}
\leq K^{(d)}-1$ where $K^{(d)} \leq O(k/\eps^{(d)} \cdot \log(1/\eps^{(d)}))$
(intuitively, these should be thought of as ``extremely eigenregular'' polynomials);
these are the $R_{s,\reg}$ polynomials.
Let $REG_{d}$ denote this set of polynomials.  It also generates

\item For each $1 \leq q \leq d-1$, ``not too many'' (at most
$k \cdot O((k/\eps^{(d)})\log(1/\eps^{(d)}))\cdot (1/(\eta^{(d)}_{t^{(d)}})^2)
\log(1/\eps^{(d)})$) new polynomials in ${\cal W}^{q}$;
these are the $P_{s,i,j}$ and $Q_{s,i,j}$ polynomials that lie
in ${\cal W}^{q}.$  Let $NEW_{q}$ denote this set of polynomials.

\end{itemize}

Let $ALL_{d-1}$ denote the union of $OLD_{d-1}$ and $NEW_{d-1}$.
Note that every element of $ALL_{d-1}$ belongs to ${\cal W}^{d-1}$,
and that the size of $|ALL_{d-1}|$ is upper bounded by
\[
k \cdot O((k/\eps^{(d)})\log(1/\eps^{(d)}))\cdot (1/(\eta^{(d)}_{t^{(d)}})^2)
\log(1/\eps^{(d)}).
\]

The key qualitative point is that after this first stage of
decomposition, the number of eigenregular polynomials that have been produced is
$|REG_d| \leq k$, and each such polynomial is
$\eta^{(d)}_{t^{(d)}+1}$-eigenregular, while the number of polynomials of
lower degree that remain to be dealt with (the pieces that came from the
original polynomial plus the elements of $NEW_2 \cup \cdots \cup NEW_{d-1}$)
is upper bounded in terms of $k,\eps^{(d)}$ and $\eta^{(d)}_{t^{(d)}}$.

In the second stage we simultaneously decompose all the elements of
$ALL_{d-1}$ by applying the {\bf MultiRegularize-One-Wiener} algorithm
to those $|ALL_{d-1}|$ polynomials, using input parameters
$1=\eta_0^{(d-1)} \geq \eta_1^{(d-1)} \geq \cdots \geq
\eta_{K^{(d-1)}}^{(d-1)}$ where $K^{(d-1)} = O(|ALL_{d-1}|/\eps^{(d-1)}
\cdot \log(1/\eps^{(d-1)}))$ and $\eps^{(d-1)}.$
This generates

\begin{itemize}

\item at most $|ALL_{d-1}|$ polynomials in ${\cal W}^{d-1}$
that are each $\eta^{(d-1)}_{t^{(d-1)}+1}$-eigenregular, for some $1 \leq t^{(d-1)}
\leq K^{(d-1)}-1$
(intuitively, these should be thought of as ``extremely eigenregular'' polynomials);
these are the $R_{s,\reg}$ polynomials.
Let $REG_{d-1}$ denote this set of polynomials.  It also generates

\item For each $1 \leq q \leq d-2$, ``not too many'' (at most
$|ALL_{d-1}| \cdot K^{(d-1)}  (1/(\eta^{(d-1)}_{t^{(d-1)}})^2)
\log(1/\eps^{(d-1)})$) new polynomials in ${\cal W}^{q}$;
these are the $P_{s,i,j}$ and $Q_{s,i,j}$ polynomials that lie
in ${\cal W}^{q}.$  Add these polynomials to the set $NEW_{q}$.

\end{itemize}

Similar to above, the key qualitative point to observe is that the
number of eigenregular
polynomials that have been produced is $|REG_d|+|REG_{d-1}|$, which is a
number depending only on $k,\eps^{(d)},
\eps^{(d-1)},$ and $1/\eta^{(d)}_{t^{(d)}}$,
while the eigenregularity of each such polynomial is at most
$
\max\{\eta^{(d)}_{t^{(d)}+1},
\eta^{(d-1)}_{t^{(d-1)}+1}\},
$
where the first expression inside the max
comes from the polynomials in $|REG_d|$ and the second from the polynomials
in $|REG_{d-1}|$.  By setting the $\eta$-parameters so that $\eta^{(d)}_{t^{(d)}+1}$
and $\eta^{(d-1)}_{t^{(d-1)}+1}$ are both much smaller than $\eta^{(d)}_{t^{(d)}}$, we can ensure
that the number of polynomials that are produced compares favorable with their eigenregularity.

Continuing in this fashion, the crux of the analysis is to argue that this
can be done ``all the way down,'' so that the total number of
polynomials that are ever produced in the
analysis is far smaller than $1/\eta$, where $\eta$ is the largest
eigenregularity of any of these polynomials.  However,
it is somewhat awkward to argue this using the ``top-down'' view on the
decomposition procedure that we have adopted so far.  Instead, in the formal
proof which we give below,
we take a ``bottom-up'' view of the decomposition procedure:  we first
show that it can be successfully
carried out for low-degree polynomials, and use this fact to show that it
can be successfully carried out
for higher-degree polynomials.

\fi

\ifnum\confversion=0

\subsubsection{The {\bf \blue{MultiRegularize-Many-Wieners}} procedure and its
analysis}

Now we present and analyze the actual {\bf \blue{MultiRegularize-Many-Wieners}} procedure.
Theorem \ref{thm:general} gives a performance bound on
the procedure.
Its proof will be by
induction on the degree:  we first establish the result for degree 2
and then use the fact that the theorem holds for degrees $2,\dots,d-1$
to prove the result for degree $d$.
\ignore{(We note that while Theorem \ref{thm:main-decomp} will be essentially a
special case of Theorem \ref{thm:general}, the more
general Theorem \ref{thm:general} is easier to prove using induction
because of the increased flexibility that is afforded by the
``auxiliary function'' $\beta$ in the theorem statement.)
}

\begin{theorem}
\label{thm:general}
Fix $d \geq 2$ and fix any non-increasing computable
function $\beta:  [1,\infty) \to (0,1)$ that satisfies $\beta(x) \leq 1/x$.
There is a procedure {\bf \blue{MultiRegularize-Many-Wieners}}$_{d,\beta}$
with the following properties.
The procedure takes as input the following:

\begin{itemize}

\item It is given $k$ lists of $d+1$ multilinear Gaussian polynomials; the
$s$-th list is $p_{s,0},\dots,p_{s,d}$ where $p_{s,q} \in {\cal W}^q$ and
\grade{
$\Var[p_{s,q}]=1$ for $1 \leq q \leq d.$}

\item It also takes as input
a parameter $\tau>0$.

\end{itemize}

The procedure runs in $\poly(n^d) \cdot O_{k,d,\tau}(1)$
time and outputs, for each input polynomial $p_{s,q}$,
a polynomial
$\Outer(p_{s,q})$
and a collection of polynomials that we denote
$\{\Inner(p_{s,q})_\ell\}_{\ell = 1,\dots, \num(p_{s,q})}$; here $\num(p_{s,q})$ is the number of arguments of the
polynomial $\Outer(p_{s,q}).$  (``$\Outer$'' stands for ``outer''
and ``$\Inner$'' stands for ``inner''.)

For $s=1,\dots,k$, $\blue{0} \leq q \leq d$ and $x \in \R^n$,
let
\begin{equation} \label{eq:tildep}
\tilde{p}_{s,q}(x)=
\Outer(p_{s,q})\left(
\Inner(p_{s,q})_1(x),\dots, \Inner(p_{s,q})_{\num(p_{s,q})}(x)\right)
\end{equation}

(Intuitively, each $\tilde{p}_{s,q}$ is a polynomial that has been
decomposed into constituent sub-polynomials
$\Inner(p_{s,q})_1,\dots, \Inner(p_{s,q})_{\num(p_s)_{q}}$;
$\tilde{p}_{s,q}$
is meant to be a good approximator for
$p_{s,q}$. The following conditions make this precise.)

The following conditions hold:

\begin{enumerate}

\item For each $s \in [k], \blue{0} \leq q \leq d$ the polynomial
$\tilde{p}_{s,q}(x)$ belongs to the $q$-th Wiener chaos ${\cal W}^q$. Additionally, each polynomial $\Inner(p_{s,q})_\ell$ \blue{with $q \geq 1$}
lies in ${\cal W}^j$ for some $1 \leq j \leq d$ and
has $\Var[\Inner(p_{s,q})_\ell] = 1.$

\item For each $s \in [k], \blue{0} \leq q \leq d$, we have
$
\Var[p_{s,q}-\tilde{p}_{s,q}] \leq \tau.
$

\item
Each polynomial $\Outer(p_{s,q})$ is a multilinear polynomial in its
$\num(p_{s,q})$ arguments.  Moreover,there exists $N = N_{\beta}(k,d,\tau)$ and $M=M_{\beta}(k,d,\tau)$ such that if $\mathrm{Coeff}(p_{s,q})$ denotes the sum of the absolute values of
the coefficients of $\Outer(p_{s,q})$, then $\sum_{s,q} \mathrm{Coeff}(p_{s,q}) \le M$ and
$\sum_{s,q} \num(p_{s,q}) \le N$.

\item Further, let $\mathrm{Num} = \sum_{s=1,\dots,k,q=0,\dots,d}
\num(p_{s,q})$  and $\mathrm{Coeff} = \sum_{s=1,\dots,k,q=0,\dots,d}
\mathrm{Coeff}(p_{s,q})$.
\ignore{be the total number of
polynomials $\Inner(p_{s,q})_\ell$ which are arguments to all the
$\Outer(p_{s,q})$
polynomials.  }Then,
each polynomial $\Inner(p_{s,q})_\ell$ is $\beta(\mathrm{Num}+\mathrm{Coeff})$-eigenregular.

\end{enumerate}

\end{theorem}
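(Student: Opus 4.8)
The plan is to prove the theorem by induction on $d$, as foreshadowed above: establish the base case $d=2$, then derive the degree-$d$ case assuming the theorem for all degrees in $\{2,\dots,d-1\}$. In both cases the workhorse is {\bf MultiRegularize-One-Wiener} (Theorem~\ref{thm:regularize-many}), applied one Wiener level at a time from the top down. The structural fact that drives everything is this: when {\bf MultiRegularize-One-Wiener} is run with a sequence $1=\eta_0\ge\cdots\ge\eta_K$ it halts at some stage $t$, and at that point the number of new polynomials it has produced, and the $\ell_1$-size of the coefficients of the ``outer'' polynomials it produces, are bounded by an explicit function of $\eta_t$ alone (together with $k,\eps,d$) --- crucially \emph{not} of $\eta_{t+1}$ --- whereas the remainders it sets aside are $\eta_{t+1}$-eigenregular. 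Hence, provided each $\eta_{i+1}$ is chosen small enough relative to the ``count bound at level $\eta_i$,'' the eigenregularity we end up with dominates the total number of polynomials produced, which is exactly what Conditions~(3)--(4) ask for. Since the required length $K$ of the sequence is itself a fixed function of $k,\eps$ (not of the $\eta_i$'s), the $\eta_i$ can be defined by a finite downward recursion, so $\eta_K$ --- and therefore all the resulting count bounds --- is a fixed quantity depending only on $k,d,\tau,\beta$.

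\textbf{Base case $d=2$.} Only the quadratic parts $I_2(f_{1,2}),\dots,I_2(f_{k,2})$ require work: the linear parts $p_{s,1}$ are already $0$-eigenregular by Remark~\ref{rem:deg1} (take $\Outer(p_{s,1})$ the identity and $\Inner(p_{s,1})_1=p_{s,1}$), and the constant parts $p_{s,0}$ are trivial. Running {\bf MultiRegularize-One-Wiener} on the $I_2(f_{s,2})$ with $\eps:=\tau/2$ and the sequence just described, we get for each $s$ a decomposition $I_2(f_{s,2})=\sum_{i,j}a_{s,i,j}P_{s,i,j}Q_{s,i,j}+a_{s,\reg}R_{s,\reg}+R_{s,\nneg}$ in which every $P_{s,i,j},Q_{s,i,j}$ has degree $1$ (as $q_1+q_2=2$ forces $q_1=q_2=1$), hence is $0$-eigenregular, $R_{s,\reg}\in{\cal W}^2$ is $\eta_{t+1}$-eigenregular, and $\Var[R_{s,\nneg}]\le\eps\le\tau$. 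Set $\tilde p_{s,2}:=\sum_{i,j}a_{s,i,j}P_{s,i,j}Q_{s,i,j}+a_{s,\reg}R_{s,\reg}$ and let $\Outer(p_{s,2})$ be the induced (visibly multilinear) quadratic in these polynomials; Conditions~(1)--(2) are immediate, Condition~(3) follows from the bounds on the number of terms and on $\sum_j a_{s,i,j}^2$ in Theorem~\ref{thm:regularize-many} (plus Cauchy--Schwarz for the $\ell_1$-bound) together with the monotonicity $M(i)\le M(K)$, and Condition~(4) holds by the defining property of the $\eta$-sequence.

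\textbf{Inductive step.} First apply {\bf MultiRegularize-One-Wiener} to $I_d(f_{1,d}),\dots,I_d(f_{k,d})$ with $\eps^{(d)}:=\tau/4$ and a sequence $1=\eta^{(d)}_0\ge\cdots\ge\eta^{(d)}_{K^{(d)}}$ fixed below; this yields, for each $s$, an $\eta^{(d)}_{t+1}$-eigenregular remainder $R_{s,\reg}\in{\cal W}^d$ (which becomes an inner polynomial of $\tilde p_{s,d}$), a remainder $R_{s,\nneg}$ with $\Var[R_{s,\nneg}]\le\eps^{(d)}$ (discarded), and new polynomials $P_{s,i,j},Q_{s,i,j}$ of degrees strictly between $0$ and $d$ whose number $k''$ is bounded by a function of $k,\eps^{(d)},\eta^{(d)}_t$. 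We then collect these new polynomials together with the original low-degree parts $p_{s,q}$ ($0\le q\le d-1$), organize them into $k''$ lists of $d$ polynomials (one per Wiener level, padding with fixed multilinear variance-$1$ elements of ${\cal W}^q$ where needed), and invoke the degree-$(d-1)$ case of the theorem with parameter $\tau':=\tau/(100\,d\,C^2)$, where $C$ is a bound (computed after the top-level call) on the total $\ell_1$-coefficient size of the degree-$d$ outer polynomials, and auxiliary function $\beta'(x):=\beta(x+D)$, where $D$ bounds the contribution of the degree-$d$ remainders and outer coefficients to $\Num+\Coeff$ (note $\beta'(x)\le 1/(x+D)\le 1/x$, so $\beta'$ is admissible). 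Writing $\tilde P_{s,i,j},\tilde Q_{s,i,j}$ for the recursive decompositions, we set $\tilde p_{s,d}:=\sum_{i,j}a_{s,i,j}\tilde P_{s,i,j}\tilde Q_{s,i,j}+a_{s,\reg}R_{s,\reg}$ with $\Outer(p_{s,d})$ the induced multilinear polynomial in all the final inner polynomials --- multilinearity survives because $P_{s,i,j}$ and $Q_{s,i,j}$, being over disjoint variables, are decomposed over disjoint inner polynomials. Condition~(2) for $\tilde p_{s,d}$ follows by expanding $PQ-\tilde P\tilde Q=(P-\tilde P)Q+\tilde P(Q-\tilde Q)$, bounding $\Var[(P-\tilde P)Q]\le O_d(1)\cdot\Var[P-\tilde P]=O_d(\tau')$ via hypercontractivity, and summing the $\le k''$ terms against the $a_{s,i,j}$ by the triangle inequality and $\sum_{i,j}|a_{s,i,j}|\le C$, giving $\Var[p_{s,d}-\tilde p_{s,d}]\le\eps^{(d)}+O_d(C^2\tau')\le\tau$; Conditions~(1),(3) are bookkeeping using the recursive guarantees and the fixed bounds on $k'',C,D$; and Condition~(4) holds because the recursive inner polynomials are $\beta'(\Num'+\Coeff')=\beta(\Num'+\Coeff'+D)$-eigenregular, which is at most $\beta(\Num+\Coeff)$ by monotonicity, while $R_{s,\reg}$ is $\eta^{(d)}_{t+1}$-eigenregular and we set $\eta^{(d)}_{i+1}:=\min\{\eta^{(d)}_i,\beta(B_i)\}$ with $B_i$ a computable upper bound --- via the inductive hypothesis's functions $N_{\beta'},M_{\beta'}$ --- on $\Num+\Coeff$ should the top-level call stop at stage $i$.

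\textbf{Main obstacle.} The routine ingredients (hypercontractivity bounds on products, Cauchy--Schwarz for the $\ell_1$-coefficient bounds, the $\tau$-budgeting across levels) are genuinely routine. The delicate point --- and the one needing real care to state without circularity --- is the interlocking of parameters: the $\eta^{(d)}$-sequence must be committed to \emph{before} the top-level call, yet the eigenregularity requirement it must satisfy involves $\Num+\Coeff$, which involves the output of the \emph{recursive} call, whose own parameters $\tau'$ and $\beta'$ depend on the top-level call's outcome. The resolution is that every quantity on which the recursion's parameters depend can be bounded, uniformly over the at most $K^{(d)}$ (a fixed number) possible stopping stages $t$, by an explicit function of $\eta^{(d)}_t$ and the fixed data $k,d,\tau,\beta$; this makes the downward recursion defining $\eta^{(d)}_i$ well-founded and guarantees that $\eta^{(d)}_{K^{(d)}}$, and hence $N_\beta(k,d,\tau)$ and $M_\beta(k,d,\tau)$, are finite and depend only on $k,d,\tau,\beta$. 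Getting this dependency order exactly right is the crux of the proof.
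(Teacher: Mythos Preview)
Your approach is essentially the paper's: induction on $d$, with the base case handled by a single call to {\bf MultiRegularize-One-Wiener} (the $P_{s,i,j},Q_{s,i,j}$ are degree $1$, hence automatically $0$-eigenregular), and the inductive step handled by first decomposing the degree-$d$ parts and then recursing on everything of lower degree. Your identification of the ``main obstacle'' --- that the $\eta$-sequence must be fixed before the top-level call, yet the eigenregularity it must deliver depends on quantities determined by the recursive call --- is exactly the crux, and your resolution (bound every relevant quantity as a function of $\eta_t$ alone, then define $\eta_{t+1}$ by downward recursion) is the right one.

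There is one technical point where your formulation needs adjustment. You set $\beta'(x)=\beta(x+D)$ where $D$ is the ``contribution of the degree-$d$ remainders and outer coefficients to $\Num+\Coeff$.'' But the contribution is not additive: when you form $\Outer(p_{s,d})=\sum_{i,j}a_{s,i,j}\,\Outer(P_{s,i,j})\cdot\Outer(Q_{s,i,j})+a_{s,\reg}R_{s,\reg}$, the coefficient-sum satisfies
\[
\Coeff(p_{s,d}) \le 1+\sum_{i,j}|a_{s,i,j}|\cdot\Coeff(P_{s,i,j})\cdot\Coeff(Q_{s,i,j}),
\]
so $\Coeff$ grows \emph{quadratically} in the recursive coefficient-sum $\Coeff'$, not linearly. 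Thus $\Num+\Coeff\le\Num'+\Coeff'+D$ fails for any $D$ fixed independently of $\Coeff'$, and the linear shift does not give $\beta'(\Num'+\Coeff')\le\beta(\Num+\Coeff)$. The paper's fix is to take
\[
\beta^{\ast}_t(x)=\beta\bigl(x^2\cdot k\cdot L_t^{L_t}+k+L_t^{L_t}\bigr),
\]
where $L_t$ depends only on $\eta_t$ and the fixed data; since $\Num+\Coeff\le(\Num'+\Coeff')^2\cdot k\cdot L_t^{L_t}+k+L_t^{L_t}$, this gives the required monotonicity without circularity. (One also checks $\beta^{\ast}_t(x)\le 1/x$ from $\beta(y)\le 1/y$.) With this correction your argument goes through. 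A minor stylistic difference: where you invoke hypercontractivity to bound $\Var[(P-\tilde P)Q]$, the paper simply uses that $P_{s,i,j}$ and $Q_{s,i,j}$ are over disjoint variables (so $(P-\tilde P)$ and $Q$ are independent, mean-zero, giving $\Var[(P-\tilde P)Q]=\Var[P-\tilde P]\Var[Q]$); either works.
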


After proving Theorem \ref{thm:general}, in the next subsection
we will obtain Theorem \ref{thm:main-decomp} from it as a special case, 
by writing the degree-$d$ polynomial $p$ as $\sum_{q=0}^d
p_{q}$ where $q \in {\cal W}^q$ and applying
Theorem \ref{thm:general} to $(p_0,\dots,p_d).$

\paragraph{Base case:  Proof of Theorem \ref{thm:general} in the case $d=2$.}
Fix any
non-increasing function $\beta: [1,\infty) \to (0,1)$
that satisfies $\beta(x) \leq 1/x.$
\ignore{For each $s \in [k]$ we define a multiplicative ``re-scaling''
constant $\alpha_s$ as follows:  if $\Var[p_{s,2}] <
\tau$ then
we set $\alpha_s=0$, and if $\Var[p_{s,2}] \geq
\tau$ then we set $\alpha_{s} = 1/\sqrt{\Var[p_{s,2}]}$; in this
case
since $\tau \leq \Var[p_{s,2}] \leq 1$ we have $\alpha_s \in [1,1/\sqrt{\tau}].$
Let $A \subseteq [k]$ be the set of those values $s$
such that $\alpha_s > 0$, so each polynomial in the vector
$(\alpha_s p_{s,2})_{s \in A}$ has $\Var[\alpha_s p_{s,2}]=1$
(as required by Theorem \ref{thm:regularize-many}).}
\ignore{
%
}
\ignore{
}
The main step is to use the
{\bf MultiRegularize-One-Wiener} procedure
on the vector of (at most $k$) polynomials
$\{p_{s,2}\}_{s \in [k]}$;
we now specify precisely how to set the parameters of this procedure.
Fix $\eps = \tau$ and let $K=O(k/\eps \cdot \log(1/\eps))$
as in the statement of Theorem \ref{thm:regularize-many}.
We define the parameters $1=\eta_0 \geq \eta_1 \cdots \geq \eta_K > n_{K+1}=0$  as
follows:  for $\grade{t}=0,\dots,K-1$, we have

\begin{equation} \label{eq:eta-spec}
\grade{\eta_{t+1} :=\beta
\left(
C
{\frac {k^2} \eps} \cdot
{\frac 1 {\eta_t^2}} \cdot
(\log 1/\eps)^2 + C' \cdot k^3 \cdot \left( \frac{4}{\eta_t} \right)^{C' \cdot \frac{1}{\eta_t^2} \cdot \log(1/\epsilon)}
\right)}
\end{equation}
where $C, C'\ge 1$ are absolute constants defined below (see (\ref{eq:Nbound}) and (\ref{eq:coeffbound}); intuitively, the first term correspond to an upper bound on $\mathrm{Num}$ whereas the second term corresponds to an upper bound on $\mathrm{Coeff}$).
(Note the assumption that $\beta(x) \leq 1/x$ implies that indeed
$\eta_{t+1} \leq \eta_t$.)
When called with these parameters
on the vector of polynomials
$\grade{ \{p_{s,2}\}_{s \in [k]} } $,
by Theorem \ref{thm:regularize-many}
{\bf MultiRegularize-One-Wiener} outputs an index $t$
with $0 \leq t \leq K$ and a decomposition of each \grade{$p_{s,2}$}
as
\begin{equation} \label{eq:2}
\grade{p_{s,2}} =
\sum_{i=1}^t \sum_{j=1}^{M(i)} a_{s,i,j}P_{s,i,j} \cdot
Q_{s,i,j} + a_{s, \reg} \cdot R_{s,\reg} + R_{s,\nneg},
\end{equation}
(recall that $M(i) = {\frac 1 {\eta_i^2}} \log(1/\eps)$),
where for each $s \in [k]$,

\begin{enumerate}

\item $\Var[R_{s,\nneg}] \leq \eps = \tau$,

\item each $a_{s, \reg} \cdot R_{s,\reg}$ is $\eta_{t+1}$-eigenregular, $R_{s, \reg}$ has variance $1$, and

\item For each $s,i,j$, the polynomials $P_{s,i,j}$ and $Q_{s,i,j}$
are both in ${\cal W}^1$ and are defined over disjoint sets of
variables and have $\Var[P_{s,i,j}]=\Var[Q_{s,i,j}]=1$.  Moreover
$R_{s,\nneg}$ and $R_{s,\reg}$ both lie in ${\cal W}^2.$

\end{enumerate}

We now describe the polynomials $\Inner(p_{s,2})$ and $\Outer(p_{s,2})$,
whose existence is claimed in Theorem \ref{thm:general},
for each $s \in [k]$.
\ignore{For each $s \notin A$ (i.e. $s$ such that
$\alpha_s=0$) we simply take $\num(p_{s,2})=0$ and $\Outer(p_{s,2})=0$
so $\tilde{p}_{s,2}=0$ (note that this choice indeed
satisfies $\Var[p_{s,2}-\tilde{p}_{s,2}] \leq \tau$
as required by Condition (2)).}
For each $s \in [k]$,
the polynomials $\Inner(p_{s,2})_{\ell}$
include all of the polynomials
$P_{s,i,j},Q_{s,i,j}$ from (\ref{eq:2}). Furthermore, $R_{s, \reg}$ belongs to  $\Inner(p_{s,2})_{\ell}$ if and only if
$a_{s, \reg} \not =0$.
The polynomial
$\tilde{p}_{s,2}(x)$ is $p_{s,2} - R_{s,\nneg}$, and we have
\begin{equation} \label{eq:h2}
\tilde{p}_{s,2}(x) =
\Outer(p_{s,2})(\{\Inner(p_{s,2})_{\ell}\}) =
\sum_{i=1}^t \sum_{j=1}^{M(i)} a_{s,i,j} \cdot
P_{s,i,j} \cdot
Q_{s,i,j} +  a_{s, \reg} \cdot R_{s,\reg}.
\end{equation}Further, by the guarantee from Theorem~\ref{thm:regularize-many}, we get that for any $s,i,j$, $P_{s,i,j}$ and $Q_{s,i,j}$ are on disjoint sets
of variables.
The degree-1 and degree-0 portions are even simpler:  for each
$s \in [k]$ we have
\[
\tilde{p}_{s,1}(x) =
\Outer(p_{s,1})(p_{s,1})=p_{s,1} \quad \text{and} \quad
\tilde{p}_{s,0}(x) =
\Outer(p_{s,0})(p_{s,0}) = p_{s,0}.
\]
It is clear from this description that property (1) holds (for each
$s,q$, the polynomial $\tilde{p}_{s,q}$ indeed belongs
to ${\cal W}^{q}$).

Next, we show that Condition 2 holds. This is immediate for
$q=0,1$ since the polynomials $p_{s,0}$ and
$p_{s,1}$ are identical to $\tilde{p}_{s,0}$ and $\tilde{p}_{s,1}$
respectively.  For $q=2$ given any $s \in [k],$
\ignore{If $s \notin A$ then as noted earlier we have that
$\Var[p_{s,2}-\tilde{p}_{s,2}] \leq \tau$
as required by Condition (2), so suppose that $s \in A$. }
 we
have that $p_{s,2} - \tilde{p}_{s,2} =R_{s,\nneg}$, and hence
by the upper bound on $\Var[R_{s,\nneg}]$ (see Item (1) above)
we have that $\Var[p_{s,2} - \tilde{p}_{s,2}] =
\Var[R_{s,\nneg}] \leq \tau$
as required by Condition (2).

For Condition (3), the multilinearity of each $\Outer(p_{s,q})$ is easily
seen to hold as a consequence of
Theorem \ref{thm:regularize-many}.  To check the remaining part of Condition (3), note that for $q=0,1$ the polynomial
$\tilde{p}_{s,q}$ is simply the identity polynomial $x \mapsto x$.
For $q=2$, Equation (\ref{eq:h2}) immediately gives that \[
\num(p_{s,2}) \le 2t \cdot M(t) + 1 \leq  2K \cdot M(t) +1 \]
Observe that since $\Outer(p_{s,0}) (\cdot)$ and $\Outer(p_{s,1})(\cdot)$ is simply the identity map
$x  \mapsto x$. Thus,
$$
\sum_{q=0}^2\num(p_{s,q}) \le 2 K \cdot M(t) +3.
$$
As $s \in [k]$,
\[
\mathrm{Num}= \sum_{s \in [k]}\sum_{q=0}^2\num(p_{s,q}) \le 2 K \cdot k \cdot M(t) +3k.
\]

Note that we can choose $C$ to be a sufficiently large constant (independent of $t$) so that
\begin{equation}\label{eq:Nbound}
 \mathrm{Num} \le 2 K \cdot k \cdot M(t) +3k  \le C\cdot \frac{ k^2}{ \epsilon} \log^2{\frac 1 \epsilon} \cdot {\frac 1 {\eta_t^2}}.
\end{equation}

\ignore{
We now define
 \begin{equation} \label{eq:Nbound}
N= C \cdot
{\frac {k^2} \eps} \cdot
{\frac 1 {\eta_t^2}} \cdot
(\log 1/\eps)^2
\end{equation}
for some absolute constant $C$ (see \ref{eq:eta-spec}).  Iteratively
applying (\ref{eq:eta-spec}), and recalling that $\eps =
\tau$, we see that $N_T \le N$ and  $N=O_{k,\tau}(1)$ as claimed
in Condition (4).
}
\ignore{\blue{so using the fact that each coefficient is at most
$\max\{1,\max_{s,i,j} |c_{s,i,j}|\} = 1$, we get that the sum of nonzero
coefficients of $\Outer(p_{s,2})$ is at most
$O({\frac k \tau} \log{\frac 1 \tau}) \cdot {\frac 1 {\eta_t^2}} \log(1/\tau).$}
\red{Need to go back and add in
discussion of the magnitude of the coefficients into
Theorem \ref{thm:regularize-many} and earlier, probably going back to
Lemma \ref{lemma:algo-primitive2}; is sum of magnitudes of coefficients what we
want to bound?  Is the bound given here OK -- each $|c_{s,i,j}|$ indeed should be at most 1,
right?}
}
\ignore{It remains only to bound the number and eigenregularity of the polynomials
produced in the decomposition to establish
Condition 4; we begin with the number of polynomials.
First observe that since each
$\Outer(p_{s,0})(\cdot)$ and $\Outer(p_{s,1})(\cdot)$ is simply the univariate
identity function $x \mapsto x$, there are $2k$ polynomials
$\Inner(p_{s,0})_1$ and $\Inner(p_{s,1})_1$ as $s$ ranges over $[k]$.
The main contribution to the bound comes from the degree-2 portion;
since $M(i) \leq M(t) = {\frac 1 {\eta_t^2}} \log(1/\tau)$
for all $i \leq t$,
by (\ref{eq:h2}) we have that the total number of polynomials
$\Inner(p_{s,2})_{\ell}$ that are arguments to any $\Outer(p_{s,2})$ is
\[
k \cdot t \cdot {\frac 1 {\eta_t^2}} \log(1/\tau) \leq
O
\left(
{\frac {k^2} \eps} \cdot
{\frac 1 {\eta_t^2}} \cdot
(\log 1/\eps)^2
\right),
\]
and hence we get an overall bound on $N$ (the total number of polynomials
$\Inner(p_{s,q})_{\ell}$ generated in the decomposition) for Condition (4) of
\begin{equation} \label{eq:Nbound}
N \leq
C \cdot
{\frac {k^2} \eps} \cdot
{\frac 1 {\eta_t^2}} \cdot
(\log 1/\eps)^2
\end{equation}
for some absolute constant $C$ (see \ref{eq:eta-spec}).  Iteratively
applying (\ref{eq:eta-spec}), and recalling that $\eps =
\tau$, we see that $N=O_{k,\tau}(1)$ as claimed
in Condition (4).
}
We next upper bound the sum of the absolute values of the coefficients
appearing in $\Outer(p_{s,q})$. We begin by observing that $\Outer(p_{s,1})$ and $\Outer(p_{s,0})$ are just the
identity function and hence the absolute values of the coefficients is just $1$. For $\Outer(p_{s,2})$, note that
Item (4) of Theorem~\ref{thm:regularize-many}, gives that
$$
\sum_{i=1}^t \sum_{j=1}^{M(i)} a_{s,i,j}^2 \le \sum_{i=1}^t \left(\frac{4}{\eta_i}\right)^{4(M(i)-1)} \le t \cdot \left(\frac{4}{\eta_t}\right)^{4(M(t)-1)}. $$
Thus, summing over all $s \in [k]$, we get
$$
\sum_{s \in [k]}
\sum_{i=1}^t \sum_{j=1}^{M(i)} a_{s,i,j}^2  \le  k \cdot  t \cdot \left(\frac{4}{\eta_t}\right)^{4(M(t)-1)}.$$
Recalling that $\sum_{s \in [k]} \sum_{i=1}^t M(i) \le 2 K \cdot k \cdot M(t) +3k
$ and applying Cauchy-Schwarz, we get
$$
\sum_{s \in [k]} \sum_{i=1}^t |a_{s,i,j} | \le  \left(t \cdot \left(\frac{4}{\eta_t}\right)^{4(M(t)-1)}\right) \cdot \left(2 K \cdot k \cdot M(t) +3k  \right)
$$
Thus, we can choose a sufficiently large constant $C'$ (independent of $t$) such that
\begin{equation}\label{eq:coeffbound}
\mathrm{Coeff} = \sum_{s \in [k]} \sum_{i=1}^t |a_{s,i,j} | \le C' \cdot k^3 \cdot \left( \frac{4}{\eta_t} \right)^{C' \cdot \frac{1}{\eta_t^2} \cdot \log(1/\epsilon)}
\end{equation}
Using (\ref{eq:coeffbound}) and (\ref{eq:Nbound}), the recursive definition of $\eta_{t}$ from (\ref{eq:eta-spec}) and $t \le K$, there exists $N = N_{\beta}(k,d,\tau )$ and $M=M_{\beta}(k,d,\tau)$ such that $\mathrm{Num} \le N$ and $\mathrm{Coeff} \le M$.

It remains only to bound the eigenregularity of the $\Inner(p_{s,q})_{\ell}$
polynomials.  Each such polynomial is either a degree-0 polynomial
$p_{s,0}$, a degree-1 polynomial $p_{s,1}$ or $P_{s,i,j}$
or $Q_{s,i,j}$ (from (\ref{eq:h2})), or a degree-2 polynomial $R_{s,\reg}$ for which $a_{s,\reg} \not =0$.
Since degree-0 and degree-1 polynomials are $0$-eigenregular, it remains only
to bound the eigenregularity of each $R_{s,\reg}$.  Since each $R_{s,\reg}$
is $\eta_{t+1}$-eigenregular,
our choice of the sequence $1=\eta_0 \geq \cdots \geq \eta_K$ (see
(\ref{eq:eta-spec})) implies that each $R_{s,\reg}$ is indeed $\beta(\textrm{Num}+\textrm{Coeff})$-eigenregular
as required by Condition (4).  This concludes the proof of the
degree-2 base case of the induction for Theorem \ref{thm:general}.
\qed

\paragraph{Inductive step:  Proof of \ref{thm:general}.}
Fix a value $d \geq 3$; with the base case in hand
from above, we may assume that Theorem \ref{thm:general}
holds for degrees $2,\dots,d-1.$
\ignore{Similar to the degree-2 base case,
for each $s \in [k]$ we define a multiplicative re-scaling constant
$\alpha_s$ as follows:  if $\Var[p_{s,d}] < \tau$ then we set
$\alpha_s=0$ and if $\Var[p_{s,d}] \geq \tau$ then we set
$\alpha_s = 1/\sqrt{\Var[p_{s,d}]} \in \red{[1,1/\sqrt{\tau}]}$.
Let $A \subseteq [k]$ be the set of those values $s$ such that
$\alpha_s>0$, so each polynomial in the vector
$(\alpha_s p_{s,d})_{s \in A}$ has $\Var[\alpha_s p_{s,d}]=1$
(as required by Theorem \ref{thm:regularize-many}).}
Similar to the degree-2 base case, the first main step of the algorithm is to
use the {\bf MultiRegularize-One-Wiener} procedure on the vector of
polynomials $(p_{s,d})_{s \in [k]}$; we now specify
how to set
the parameters of {\bf MultiRegularize-One-Wiener}.
Fix $\eps = \tau/8$ and let $K=O(k/\eps \cdot \log(1/\eps))$
as in the statement of Theorem \ref{thm:regularize-many}.
We define the parameters $1=\eta_0 \geq \eta_1 \cdots \geq \eta_K > n_{K+1}=0$  as
follows:  for $t=0,\dots,K-1$, we have

\begin{equation} \label{eq:eta-spec-d}
\eta_{t+1} :=
\beta
\left(
N_{\beta^{\ast}_{t}}\left(L_t,d-1,\frac{\tau}{16 \cdot L_t  \cdot L_t^{L_t}}\right)^2 +M_{\beta^{\ast}_t} \left(L_t, d-1,\frac{\tau}{16 \cdot  L_t\cdot L_t^{L_t}}\right)^2  \cdot L_{t}^{L_t} +2k + L_t^{L_t}\right),
\end{equation} $$ \text{where} \quad
L_t \eqdef
C' \cdot
{\frac {k^2 d} \eps} \cdot
{\frac 1 {\eta_t^2}} \cdot
(\log 1/\eps)^2 ,
$$
and $C'>0$ is an absolute constant defined below and $\beta^{\ast}_t$ is defined in (\ref{eq:betaonedef}).  The reader can verify that the sequence $\{\eta_t \}$ is defined consistently.  As before, intuitively, the first term in
the argument to $\beta$ corresponds to an upper bound on $\mathrm{Num}$ whereas the second term corresponds
to an upper bound on $\mathrm{Coeff}$.
(Note that from the recursive definition (\ref{eq:eta-spec-d}),
for all $t=1,\dots,K$ we have that $\eta_t = \kappa_t(k,\tau,d)$
for some function $\kappa_t$; this will be useful later.)
When called with these parameters on the vector of
polynomials $(p_{s,d})_{s \in [k]}$, by Theorem \ref{thm:regularize-many}
{\bf MultiRegularize-One-Wiener} outputs
an index $t$
with $0 \leq t \leq K$ and a decomposition of each $p_{s,d}$ as
\begin{equation} \label{eq:d}
p_{s,d} =
\sum_{i=1}^t \sum_{j=1}^{M(i)} a_{s,i,j}P_{s,i,j} \cdot
Q_{s,i,j} + a_{s,\reg} \cdot R_{s,\reg} + R_{s,\nneg},
\end{equation}
(recall that $M(i) = {\frac 1 {\eta_i^2}} \log(1/\eps)$),
where for each $s \in [k]$,

\begin{itemize}

\item $\Var[R_{s,\nneg}] \leq \eps = \tau/8$,

\item each $a_{s, \reg} \cdot R_{s,\reg}$ is $\eta_{t+1}$-eigenregular, $R_{s,\reg}$ has variance $1$, and

\item For each $s,i,j$, the polynomial $P_{s,i,j}$ belongs to
${\cal W}^{q_{s,i,j,1}}$ and $Q_{s,i,j}$ belongs to
${\cal W}^{q_{s,i,j,2}}$ where $0 < q_{s,i,j,1},q_{s,i,j,2}$,
$q_{s,i,j,1}+q_{s,i,j,2}=d$, and
$P_{s,i,j},Q_{s,i,j}$ are defined over disjoint sets of
variables and have $\Var[P_{s,i,j}]=\Var[Q_{s,i,j}]=1$.  Moreover
$R_{s,\nneg}$ and $R_{s,\reg}$ both lie in ${\cal W}^d.$

\end{itemize}

Define the function $\beta^{\ast}_{t}: [1,\infty) \to (0,1),$
$\beta^{\ast}_{t}(x) \leq 1/x$, as
\begin{equation} \label{eq:betaonedef}
\beta^{\ast}_{t}(x) := \beta(x^2 \cdot k \cdot L_t^{L_t} + k +L_t^{L_t}).
\end{equation}

The second main step of the algorithm is to run the procedure
{\bf \blue{MultiRegularize-Many-Wieners}}$_{d-1,\beta^{\ast}_{t}}$ with its inputs
set in the following way:

\begin{itemize}

\item
There are $k$ lists of $d$ multilinear polynomials
$p_{s,0},\dots,p_{s,d-1}$ as $s$ ranges from $1$ to $k$.
Additionally,
for each of the $P_{s,i,j}$ and $Q_{s,i,j}$ polynomials that are obtained
from (\ref{eq:d}), there is a list of $d$ multilinear polynomials
in ${\cal W}^0,\dots,{\cal W}^{d-1}$.  In each such list
all the polynomials are 0 except for the $P_{s,i,j}$ or $Q_{s,i,j}$
polynomial.
(Note that each of these polynomials indeed belongs to a single
specific layer ${\cal W}^q$
of the Wiener chaos for some $1 \leq q \leq d-1$,
as required by {\bf \blue{MultiRegularize-Many-Wieners}}$_{d-1,\beta^{\ast}_{t}}$;
note further that since the variance of each $P_{s,i,j}$
and $Q_{s,i,j}$ polynomial is 1, the variance condition in the first
bullet of Theorem \ref{thm:general} is satisfied.
Finally, we emphasize that the $R_{s, \reg}$ polynomial, which
lies in ${\cal W}^d$, is not used here.)
Thus, note that we can choose $C'$ to be a sufficiently large constant so that the total number of input polynomials that are given to the procedure
is at most
\begin{equation} \label{eq:numpolys}
L_t \eqdef
C' \cdot
{\frac {k^2 d} \eps} \cdot
{\frac 1 {\eta_t^2}} \cdot
(\log 1/\eps)^2.
\end{equation}
 Note that $L_t = \kappa_1(k,\tau,d)$ for
some function $\kappa_1$; this will be useful later.
Another thing that will be useful is the following. Define \begin{equation}\label{eq:defad} A_d = ( \sum_{s \in [k]} \sum_{i=1}^t  \sum_{j=1}^{M(i)} |a_{s,i,j}|)^2 . \end{equation} Note that mimicking the calculations preceding (\ref{eq:coeffbound}), it can be shown that $C'$ can be chosen so that  $A_d \le L_t^{ L_t}$.

\item The ``$\tau$'' parameter of the procedure is set to $\frac{\tau}{16 \cdot L_t \cdot L_t^{L_t}}$.
\end{itemize}

By the inductive hypothesis, when invoked this way, the procedure
{\bf \blue{MultiRegularize-Many-Wieners}}$_{d-1,\beta^{\ast}_{t}}$
returns the following:

\begin{itemize}

\item  For each $P_{s,i,j}$ and $Q_{s,i,j}$ from
(\ref{eq:d}),
outer polynomials
$\Outer(P_{s,i,j})$
and
$ \Outer(Q_{s,i,j})$,
and likewise for each $p_{s,q}$ ($1 \leq s \leq k,
0 \leq q \leq d-1$) an outer polynomial
$\Outer(p_{s,q})^{(d-1)}$;
the ``$(d-1)$'' superscript here is to emphasize that this polynomial
is obtained from the call to {\bf \blue{MultiRegularize-Many-Wieners}}$_{d-1,\beta^{\ast}_{t}}$.

\item For each $P_{s,i,j}$ and $Q_{s,i,j}$ from (\ref{eq:d}), a
collection of ``inner polynomials''
\begin{equation} \label{eq:12}
\{\Inner(P_{s,i,j})_{\ell}\}_{\ell=1,\dots,\num(P_{s,i,j})} \quad
\text{and} \quad
\{\Inner(Q_{s,i,j})_{\ell}\}_{\ell=1,\dots,\num(Q_{s,i,j})}
\end{equation}
and likewise for each $p_{s,q}$ ($1 \leq s \leq k,
0 \leq q \leq d-1$) a collection of inner polynomials
\begin{equation} \label{eq:11}
\{\Inner(p_{s,q})^{(d-1)}_{\ell}\}_{
\ell=1,\dots,\num(p_{s,q})};
\end{equation}
similar to above, the ``$(d-1)$'' superscript here is to emphasize
that these polynomials
are obtained from a call to {\bf \blue{MultiRegularize-Many-Wieners}}$_{d-1,\beta^{\ast}_{t}}$.

\end{itemize}
For each $p_{s,q}$ with $1 \leq s \leq k,0 \leq q \leq d-1$,
let us write
$\tilde{p}_{s,q}^{(d-1)}$ to denote
$\Outer(p_{s,q})^{(d-1)}(\{\Inner(p_{s,q})^{(d-1)}_{\ell}\}).$

The pieces are now in place for us to describe the polynomials
$\Outer(p_{s,q})$ and
$\{\Inner(p_{s,q})_{\ell}\}_{\ell=1,\dots,\num(p_{s,q})}$
whose existence is asserted by Theorem \ref{thm:general}.

Fix any $s \in [k]$.
We begin with the easy case of $q < d$; so fix any $0 \leq q < d.$
The polynomial $\Outer(p_{s,q})$ is simply
$\Outer(p_{s,q})^{(d-1)}$, and the polynomials
$\{\Inner(p_{s,q})_{\ell}\}$ are simply the polynomials
$\{\Inner(p_{s,q})^{(d-1)}_{\ell}\}$.

Now we turn to the more involved case of $q=d$.
Fix any $s \in [k].$
Recall the decomposition of $p_{s,d}$ given by (\ref{eq:d}).
Fix any $i \in [t], j \in [M(i)]$ and consider first
the polynomial $P_{s,i,j}$ from (\ref{eq:d}).  By the inductive
hypothesis, the call to {\bf \blue{MultiRegularize-Many-Wieners}}$_{d-1,\beta^{\ast}_{t}}$
yields a polynomial $\tilde{P}_{s,i,j}$ which
is defined via (\ref{eq:tildep}) in terms of the
$\Outer(P_{s,i,j})$ and
$\Inner(P_{s,i,j})_{\ell}$ polynomials, namely
\begin{equation} \label{eq:tildeP}
\tilde{P}_{s,i,j} = \Outer(P_{s,i,j})(
\Inner(P_{s,i,j})_1,\dots,\Inner(P_{s,i,j})_{\num(P_{s,i,j})}).
\end{equation}

Similarly, considering the
polynomial $Q_{s,i,j}$ from (\ref{eq:d}),
the call to {\bf \blue{MultiRegularize-Many-Wieners}}$_{d-1,\beta^{\ast}_{t}}$ also
yields a polynomial $\tilde{Q}_{s,i,j}$ which
is defined via (\ref{eq:tildep}) in terms of the
$\Outer(Q_{s,i,j})$ and
$\Inner(Q_{s,i,j})_{\ell}$ polynomials, namely

\begin{equation} \label{eq:tildeQ}
\tilde{Q}_{s,i,j} = \Outer(Q_{s,i,j})(
\Inner(Q_{s,i,j})_{1},\dots,\Inner(Q_{s,i,j})_{\num(Q_{s,i,j})}).
\end{equation}

The polynomials $\{\Inner(p_{s,d})_{\ell}\}$ are the elements of
\begin{eqnarray}
\left(
\bigcup_{i=1}^t \bigcup_{j=1}^{M(i)}
\left(\{\Inner(P_{s,i,j})_{\ell'}\}_{\ell'=
1,\dots,\num(P_{s,i,j})}
\cup
\{\Inner(Q_{s,i,j})_{\ell''}\}_{\ell''=
1,\dots,\num(Q_{s,i,j})}
\right)
\right) \cup \{R_{s,\reg} : a_{s, \reg} \not =0\}
\label{eq:inner}
\end{eqnarray}
and the polynomial $\Outer(p_{s,d})$ is given by
\begin{eqnarray} \label{eq:outer}
\Outer(p_{s,d})(\{\Inner(p_{s,d})_{\ell}\}) =
\sum_{i=1}^t \sum_{j=1}^{M(i)}
a_{s,i,j} \tilde{P}_{s,i,j} \cdot
\tilde{Q}_{s,i,j}
+ a_{s, \reg} \cdot R_{s,\reg}.
\end{eqnarray}
Recalling (\ref{eq:tildep}), (\ref{eq:tildeP}) and
(\ref{eq:tildeQ}), we may write this more explicitly as
\begin{eqnarray} \label{eq:30}
\tilde{p}_{s,d} = \sum_{i=1}^t \sum_{j=1}^{M(i)}
 a_{s,i,j} \cdot
\left(
\Outer(P_{s,i,j})(\{\Inner(P_{s,i,j})_{\ell}\})
\right)
\cdot
\left(
\Outer(Q_{s,i,j})(\{\Inner(Q_{s,i,j})_{\ell'}\})
\right)
+ a_{s, \reg} \cdot R_{s,\reg}.
\end{eqnarray}

\ignore{
This completes the description of each $\tilde{p}_{s,q}$ for $0 \leq q
\leq d$, and recalling that $\tilde{p}_s = \sum_{q=0}^d \tilde{p}_{s,q}$,
it completes the description of $\tilde{p}_s.$

So the overall polynomial $\tilde{p}_s$ whose existence is asserted
by (\ref{eq:tildep}) is
\begin{equation} \label{eq:20}
\tilde{p}_s =
\Outer(p_s)_d(\{\Inner(p_s)_{d,\ell}\}) +
\sum_{q=0}^{d-1}
\Outer(p_s)_q(\{\Inner(Z_s)_{q,\ell}\}).
\end{equation}
}

This concludes the specification of the
$\Outer(p_{s,q})$, $\{\Inner(p_{s,q})_{\ell}\}$ and $\tilde{p}_{s,q}$
polynomials;
it remains to show that these polynomials satisfy all of the claimed
Conditions (1)-(4).

First we consider Condition (1), that for each $s,q$ the polynomial $\tilde{p}_{s,q}$
belongs to the $q$-th Wiener chaos ${\cal W}^q$ and each polynomial $\Inner(p_{s,q})_\ell$ has $\Var[\Inner(p_{s,q})_\ell] = 1$.  For $q < d$ this follows from
the inductive hypothesis.  For $q=d$,
consider (\ref{eq:d}) and (\ref{eq:outer}).
Part (2) of Theorem \ref{thm:regularize-many}
ensures that $R_{s,\reg}$ and each $P_{s,i,j} \cdot Q_{s,i,j}$ all
lie in ${\cal W}^d$ and that $P_{s,i,j}$ and $Q_{s,i,j}$ are on disjoint sets of variables. Using the inductive
hypothesis applied to each $P_{s,i,j}$ and
$Q_{s,i,j}$ \grade{and the fact $P_{s,i,j}$ and $Q_{s,i,j}$ are on disjoint sets of variables}, each product
$\tilde{P}_{s,i,j} \cdot \tilde{Q}_{s,i,j}$ also must lie
in ${\cal W}^d.$ Since $R_{s,\reg}$ lies in  ${\cal W}^d$ and ${\cal W}^d$ is a subspace, we get that $\tilde{p}_{s,d} \in {\cal W}^d$ for all $s \in [k]$.
To see that each polynomial in $\{\Inner(P_{s,i,j})_{\ell}\}$ and $\{\Inner(Q_{s,i,j})_{\ell'}\}$ has variance $1$, we use the induction hypothesis. Also, note that
by Theorem~\ref{thm:regularize-many}, $\Var (R_{s,\reg}) =1$. This shows that all elements of  $\{\Inner(p_{s,q})_{\ell}\}$ have variance $1$ as claimed.

Next let us consider Condition (2), i.e. we must upper bound each
$\Var[\tilde{p}_{\grade{s,q}}-p_{s,q}].$
Fix any $s \in [k]$ and first consider some $1 \leq q \leq d-1$.  In this
case we have that
\begin{eqnarray*}
\tilde{p}_{s,q} &=&
\Outer(p_{s,q})\left(
\Inner(p_{s,q})_1(x),\dots, \Inner(p_{s,q})_{\num(p_{s,q})}(x)\right)\\
&=&
\Outer(p_{s,q})^{(d-1)}\left(
\Inner(p_{s,q})^{(d-1)}_1(x),\dots, \Inner(p_{s,q})^{(d-1)}_{
\num(p_{s,q})}(x)\right)\\
&=&
\tilde{p}_{s,q}^{(d-1)}
\end{eqnarray*}
so we get the desired bound on $\Var[\tilde{p}_{s,q}-p_{s,q}]$
from the inductive hypothesis, recalling
that the ``$\tau$'' parameter that was provided to
{\bf \blue{MultiRegularize-Many-Wieners}}$_{d-1,\beta^{\ast}_{t}}$ was set to $\grade{\tau/(16 \cdot  L_t \cdot L_t^{L_t} )
\leq \tau.}$

Next, consider the case $q=d$. In this case,
$$
p_{s,d} - \tilde{p}_{s,d} = R_{s,\nneg} + \sum_{i=1}^t \sum_{j=1}^{M(i)} a_{s,i,j} (\tilde{P}_{s,i,j} \cdot \tilde{Q}_{s,i,j}- P_{s,i,j} \cdot Q_{s,i,j})
$$
We will record the following fact which will be used a couple of times.
\begin{fact}\label{fact:variance-ineq}
$\Var[X_1 + \ldots + X_m] \le \sum_{i=1}^m m \cdot \Var[X_i]$.
\end{fact}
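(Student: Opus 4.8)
The plan is to reduce this to the elementary pointwise inequality $(a_1 + \cdots + a_m)^2 \le m(a_1^2 + \cdots + a_m^2)$, which is just Cauchy--Schwarz (equivalently, convexity of $t \mapsto t^2$). First I would center the variables: set $Y_i = X_i - \E[X_i]$, so that, by translation-invariance of variance, $\Var[X_1 + \cdots + X_m] = \E\big[(Y_1 + \cdots + Y_m)^2\big]$ and $\Var[X_i] = \E[Y_i^2]$ for each $i$. Applying the pointwise inequality above with $a_i = Y_i(\omega)$ at each point $\omega$ of the sample space gives $(Y_1 + \cdots + Y_m)^2 \le m\sum_{i=1}^m Y_i^2$ everywhere.

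Then I would take expectations of both sides and use linearity of expectation, obtaining $\E\big[(Y_1 + \cdots + Y_m)^2\big] \le m \sum_{i=1}^m \E[Y_i^2] = \sum_{i=1}^m m\,\Var[X_i]$, which is exactly the asserted bound. There is no genuine obstacle here — the only point requiring (minimal) care is the centering reduction, and that is immediate since adding constants changes neither side. One could equally avoid centering entirely and instead invoke the fact that, for an inner-product space, $\Var$ is the squared norm of the centered vector together with the triangle inequality and the inequality between the $\ell_1$ and $\ell_2$ norms on $\mathbb{R}^m$; I would present whichever phrasing is shortest in context, most likely the pointwise Cauchy--Schwarz argument above.
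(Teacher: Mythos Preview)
Your proof is correct. The paper itself does not supply a proof of this fact at all; it simply records the inequality for later use, so there is nothing to compare against, and your Cauchy--Schwarz argument on the centered variables is exactly the standard justification.
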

First, applying Fact~\ref{fact:variance-ineq}, we can say that
$$
\Var[p_{s,d} - \tilde{p}_{s,d} ] \le 2 \Var [R_{s,\nneg} ] + 2\Var\left[\sum_{i=1}^t \sum_{j=1}^{M(i)} a_{s,i,j} (\tilde{P}_{s,i,j} \cdot \tilde{Q}_{s,i,j}- P_{s,i,j} \cdot Q_{s,i,j})\right].
$$
We recall that $\Var[R_{s,\nneg}] \le \tau/8$. To bound the second term,  define $$\kappa_{\max} = \max_{i,j} \Var(\tilde{P}_{s,i,j} \cdot \tilde{Q}_{s,i,j} -{P}_{s,i,j} \cdot {Q}_{s,i,j}).$$
\noindent
Then, we have that
\begin{eqnarray*}
&&\Var\left[\sum_{i=1}^t \sum_{j=1}^{M(i)} a_{s,i,j} (\tilde{P}_{s,i,j} \cdot \tilde{Q}_{s,i,j}- P_{s,i,j} \cdot Q_{s,i,j})\right]\\
&\leq & \left(\sum_{i=1}^t M(i)\right) \cdot \left(\sum_{i=1}^t \sum_{j=1}^{M(i)} a_{s,i,j}^2 \Var(\tilde{P}_{s,i,j} \cdot \tilde{Q}_{s,i,j} -{P}_{s,i,j} \cdot {Q}_{s,i,j}) \right) \\
&\le& \left(\sum_{i=1}^t M(i) \right)\cdot \left(\sum_{i=1}^t \sum_{j=1}^{M(i)} a_{s,i,j}^2\right) \cdot  \kappa_{\max} \\
&\le&  \left(\sum_{i=1}^t M(i)\right) \cdot \left(\sum_{i=1}^t \sum_{j=1}^{M(i)} |a_{s,i,j}|\right)^2 \cdot  \kappa_{\max}\\
&\le& L_t \cdot A_d \cdot \kappa_{\max} \le L_t \cdot L_t^{L_t} \cdot \kappa_{\max}
\end{eqnarray*}
In the above, the first inequality uses Fact~\ref{fact:variance-ineq}, the fourth inequality uses  (\ref{eq:numpolys}) and (\ref{eq:defad}) and the fifth inequality uses the bound on $A_d$.  It remains to bound $\kappa_{\max}$. However, note that for any $1 \le i \le t$ and $1 \le j \le M(i)$, we have that
\begin{eqnarray*}
 \Var(\tilde{P}_{s,i,j} \cdot \tilde{Q}_{s,i,j} -{P}_{s,i,j} \cdot {Q}_{s,i,j}) &\le& 2 \cdot \Var(Q_{s,i,j} \cdot  (\tilde{P}_{s,i,j}  -{P}_{s,i,j}) ) + 2\cdot \Var(\tilde{P}_{s,i,j} \cdot (\tilde{Q}_{s,i,j} - Q_{s,i,j})) \\
 &\le& 2 \Var(Q_{s,i,j}) \cdot \Var(\tilde{P}_{s,i,j}  -{P}_{s,i,j}) + 2 \cdot \Var(\tilde{P}_{s,i,j}) \cdot \Var(\tilde{Q}_{s,i,j} - Q_{s,i,j})) \\
 &\le& 2 \Var(\tilde{P}_{s,i,j}  -{P}_{s,i,j}) + 4 \Var(\tilde{Q}_{s,i,j} - Q_{s,i,j})) \le \frac{ 3 \cdot \tau}{8 \cdot L_t \cdot L_t^{L_t}}.
 \end{eqnarray*}
 In the above, the first inequality uses Fact~\ref{fact:variance-ineq} and  the second  inequality uses that $P_{s,i,j}$ and $Q_{s,i,j}$ are on disjoint sets of variables. The third inequality uses that $\Var(Q_{s,i,j}) =1$ and that $\Var(\tilde{P}_{s,i,j}) \le 2$ and the fourth one follows from the choice of the ``$\tau$" parameter in the recursive procedure.
Hence,   we get that
$$
\Var\left[\sum_{i=1}^t \sum_{j=1}^{M(i)} a_{s,i,j} (\tilde{P}_{s,i,j} \cdot \tilde{Q}_{s,i,j}- P_{s,i,j} \cdot Q_{s,i,j})\right] \le \frac{3\tau}{8}.
$$
As a consequence, we get $$\Var (p_{s,d} - \tilde{p}_{s,d}) \le 2 \Var (R_{s,\nneg} ) + 2\Var\left(\sum_{i=1}^t \sum_{j=1}^{M(i)} a_{s,i,j} (\tilde{P}_{s,i,j} \cdot \tilde{Q}_{s,i,j}- P_{s,i,j} \cdot Q_{s,i,j})\right) \le \tau$$.

For Condition (3), the multilinearity of each $\Outer(p_{s,q})$
follows easily from the inductive hypothesis and from
(\ref{eq:outer}), using the fact that each $R_{s,\reg}$
is multilinear and that for each $(s,i,j)$ triple the two
multilinear polynomials $P_{s,i,j}$ and $Q_{s,i,j}$ are defined over disjoint
sets of variables.
\ignore{For the sum of absolute values of
coefficients of $\Outer(p_{s,q})$, we first observe that
for $q \leq d-1$ the bound follows from the inductive hypothesis,
which gives that the sum of absolute values of coefficients
of the outer polynomials returned in the call to
{\bf \blue{MultiRegularize-Many-Wieners}}$_{d-1,\beta^{\ast}_{t}}$ is at most
$S_{\beta_1}(d-1,L_t,\red{FILL-IN}).$
For $q=d$,
we first observe that as above, each multilinear polynomial
$\Outer(P_{s,i,j})$ has sum of absolute values of all coefficients
of its arguments $\{\Inner(P_{s,i,j})_{\ell}\}$ bounded by at most
$S_{\beta_1}(d-1,L_t,\red{FILL-IN})$, and likewise for each multilinear
polynomial
$\Outer(Q_{s,i,j})$.
Hence using (\ref{eq:30}) and observing that
each $a_{s,i,j}/\alpha_s$ has magnitude at most 1,  we see that the
sum of absolute values of all coefficients
of $\Outer(p_{s,d})$ (recall that this polynomial's arguments are
the polynomials specified in (\ref{eq:inner})
is at most
\[
tM(t) S_{\beta_1}(d-1,L_t,\red{FILL-IN})^2 + 1 =
O(
tM(t) S_{\beta_1}(d-1,L_t,\red{FILL-IN})^2) := S_{\beta}(d,k,\tau).
\]
}

To finish establishing condition (3), we now bound the overall number of inner polynomials
produced in these decompositions.
We start by bounding the total number of inner polynomials
that {\bf \blue{MultiRegularize-Many-Wieners}}$_{d,\beta}$ produces.
Recall first that the number of polynomials that are input
to the call to {\bf \blue{MultiRegularize-Many-Wieners}}$_{d-1,\beta^{\ast}_t}$ is $L_t=
\kappa_t(k,\tau,d)$ (see \ref{eq:numpolys}).
From the specification of the $\{\Inner(p_{s,q})_\ell\}$ polynomials
given earlier, all but precisely $k$ of these inner polynomials
returned by {\bf \blue{MultiRegularize-Many-Wieners}}$_{d,\beta}$
are simply the inner polynomials that are returned from
the call to {\bf \blue{MultiRegularize-Many-Wieners}}$_{d-1,\beta^{\ast}_t}$,
and the other $k$ polynomials are $R_{1,\reg},\dots,R_{k,\reg}$.
Since the {\bf \blue{MultiRegularize-Many-Wieners}}$_{d-1,\beta^{\ast}_t}$,
procedure is called on $L_t$ many polynomials,
by the inductive hypothesis the total number of inner polynomials
returned by this procedure call is some value
\begin{equation} \label{eq:gamma}
N_{\beta^{\ast}_t}\left(L_t,d-1,\frac{\tau}{16\cdot L_t \cdot L_t^{L_t}}\right) =
O_{k,\tau,d}(1).
\end{equation}
Consequently
the total number of all inner polynomials
$\{\Inner(p_{s,q})_\ell\}$ returned by {\bf \blue{MultiRegularize-Many-Wieners}}$_{d,\beta}$
is bounded by $N_{\beta^{\ast}_t}\left(L_t,d-1,\frac{\tau}{16\cdot L_t \cdot L_t^{L_t}}\right) + k$. Noting that $L_t \le L_K$, we have that the total number of inner polynomials
returned by {\bf \blue{MultiRegularize-Many-Wieners}}$_{d,\beta}$ is bounded by 
\[
N_{\beta^{\ast}_K}\left(L_K,d-1,\frac{\tau}{16\cdot L_K \cdot L_K^{L_K}}\right) + k,\] 
which we define to be
$N_{\beta}(k,d,\tau)$.

The next task is to upper bound  $\sum_{s,q} \mathrm{Coeff}(p_{s,q})$. The main step is to bound the contribution from $q=d$.  Recalling (\ref{eq:30}), we have
$$
\tilde{p}_{s,d} = \sum_{i=1}^t \sum_{j=1}^{M(i)}
 a_{s,i,j} \cdot
\left(
\Outer(P_{s,i,j})(\{\Inner(P_{s,i,j})_{\ell}\})
\right)
\cdot
\left(
\Outer(Q_{s,i,j})(\{\Inner(Q_{s,i,j})_{\ell'}\})
\right)
+ a_{s, \reg} \cdot R_{s,\reg}
$$
and thus,
\begin{eqnarray*}
\mathrm{Coeff}(p_{s,d}) \le 1 +  \sum_{i=1}^t \sum_{j=1}^{M(i)}  |a_{s,i,j} | \cdot \mathrm{Coeff}(P_{s,i,j}) \cdot \mathrm{Coeff}(Q_{s,i,j}) .
\end{eqnarray*}
 Note that $\sum_{s \in [k]} \sum_{i=1}^t \sum_{j=1}^{M(i)} |a_{s,i,j}| \le L_t^{L_t}$ and $|a_{s,\reg}| \le 1$. Further, by induction hypothesis, $$\sum_{s \in [k]} \sum_{i=1}^t \sum_{j=1}^{M(i)} \left( \mathrm{Coeff}(P_{s,i,j}) + \mathrm{Coeff}(Q_{s,i,j}) \right) + \sum_{s \in [k]} \sum_{q  =1}^{d-1} \mathrm{Coeff}(p_{s,q}) \le M_{\beta^{\ast}_t} \left(L_t, d-1, \frac{\tau}{16 \cdot L_t \cdot L_t^{L_t}} \right).   $$
Hence,
$$
\sum_{s \in [k]}  \sum_{q=1}^d \mathrm{Coeff}(\Outer(p_{s,q})) \le k+L_t^{L_t} \cdot \left( M_{\beta^{\ast}_t} \left(L_t, d-1,\frac{\tau}{16 \cdot L_t \cdot L_t^{L_t}} \right) \right)^2.
$$
By definition,
$$
L_t^{L_t} \cdot \left( M_{\beta^{\ast}_t}  \left(L_t, d-1,\frac{\tau}{16 \cdot L_t \cdot L_t^{L_t}} \right) \right)^2 \le L_K^{L_K} \cdot \left( M_{\beta^{\ast}_K}  \left(L_K, d-1,\frac{\tau}{16 \cdot L_K \cdot L_K^{L_K}} \right) \right)^2.
$$
As $K = O(k/\epsilon) \log (1/\epsilon)$, we can define the quantity on the right hand side to be $M_{\beta}(k,d,\tau)$ which verifies condition (3).

Our last task is to bound the eigenregularity of the $\Inner(p_{s,q})_\ell$
polynomials.
As noted above these polynomials are of two types:  the $k$
polynomials $R_{1,\reg},\dots,R_{k,\reg}$ and the inner
polynomials $\{\Inner(p_{s,q})^{(d-1)}_\ell\}$ that were returned from
the call to {\bf \blue{MultiRegularize-Many-Wieners}}$_{d-1,\beta^{\ast}_t}$ on its
$L_t$ input polynomials.


We first tackle the $\{\Inner(p_{s,q})^{(d-1)}_\ell\}$ polynomials. Define $\mathrm{Coeff}_{\mathop{rec}}$ as $$\mathrm{Coeff}_{\mathop{rec}} = \sum_{s \in [k]} \sum_{i=1}^t \sum_{j=1}^{M(i)} \left( \mathrm{Coeff}(P_{s,i,j}) + \mathrm{Coeff}(Q_{s,i,j}) \right) + \sum_{s \in [k]} \sum_{q  =1}^{d-1} \mathrm{Coeff}(p_{s,q}),$$
the sum of the absolute values of the coefficients of the outer polynomials returned by the recursive call.
Likewise, define $\mathrm{Num}_{\mathop{rec}}$ as $$
\mathrm{Num}_{\mathop{rec}} =  \left| \left\{ \bigcup_{s \in [k], q \in [1, \ldots, d-1]} \Inner(p_{s,q}) \cup
\bigcup_{i=1}^t \bigcup_{j=1}^{M(i)}
\left(\{\Inner(P_{s,i,j})_{\ell'}\}_{\ell'=
1,\dots,\num(P_{s,i,j})}
\cup
\{\Inner(Q_{s,i,j})_{\ell''}\}_{\ell''=
1,\dots,\num(Q_{s,i,j})}
\right)
 \right\}  \right|,
$$
the total number of inner polynomials returned by the recursive call.
Then, note that  $\mathrm{Coeff} \le \mathrm{Coeff}_{\mathop{rec}}^2 \cdot L_t^{L_t} + k  $ and $\mathrm{Num} = \mathrm{Num}_{\mathop{rec}}+k$.  By inductive hypothesis,
the polynomials  $\{\Inner(p_{s,q})^{(d-1)}_\ell\}$ are $\beta^{\ast}_t(\mathrm{Num}_{\mathop{rec}} + \mathrm{Coeff}_{\mathop{rec}})$-eigenregular. However, note that
\begin{eqnarray*}
\beta^{\ast}_t (\mathrm{Num}_{\mathop{rec}} + \mathrm{Coeff}_{\mathop{rec}}) &=& \beta((\mathrm{Num}_{\mathop{rec}} + \mathrm{Coeff}_{\mathop{rec}})^2 \cdot k \cdot L_t^{L_t} + k + L_t^{L_t})  \\
&\le& \beta(\mathrm{Coeff}_{\mathop{rec}}^2 \cdot L_t^{L_t} +k + \mathrm{Num}_{\mathop{rec}} +k) \le \beta(\mathrm{Num} + \mathrm{Coeff}).
\end{eqnarray*}
Here the inequality uses the fact that $\beta^{\ast}_t$ is a non-increasing function. This shows that the polynomials $\{\Inner(p_{s,q})^{(d-1)}_\ell\}$ polynomials are $\beta(\mathrm{Num} + \mathrm{Coeff})$-eigenregular.

Next, note that $R_{s, \reg}$ is $\eta_{t+1}$-eigenregular. We have
\begin{eqnarray*}
\eta_{t+1} &=& \beta
\left(
N_{\beta^{\ast}_{t}}\left(L_t,d-1,\frac{\tau}{16 \cdot L_t  \cdot L_t^{L_t}}\right)^2+M_{\beta^{\ast}_t} \left(L_t, d-1,\frac{\tau}{16 \cdot  L_t\cdot L_t^{L_t}}\right)^2 \cdot L_{t}^{L_t} +2k + L_t^{L_t}\right) \\
&\le& \beta \left(
\mathrm{Num}_{\mathop{rec}}^2  +\mathrm{Coeff}_{\mathop{rec}}^2  \cdot L_{t}^{L_t} +2k + L_t^{L_t}\right)
 \le \beta\left(\mathrm{Num} + \mathrm{Coeff} \right).
\end{eqnarray*}
This implies that $R_{s,\reg}$ is $\beta(\mathrm{Num} + \mathrm{Coeff})$-eigenregular for all $s\in [k]$ verifying Condition (4).
%
%
%
%
This concludes the proof of the inductive degree-$d$ case,
and thus concludes the proof of Theorem \ref{thm:general}.
\qed

\subsection{Proof of Theorem \ref{thm:main-decomp}}
\label{sec:main-decomp}

With Theorem \ref{thm:main-decomp} in hand it is straightforward to prove
Theorem \ref{thm:general}.  As in the statement of Theorem \ref{thm:general},
let us write $p(x)$ as
$\sum_{q=0}^d c_q p_q(x)$ where $p_q \in {\cal W}^q$ for all $q$ and
$\Var[p_q]=1$ for $1 \leq q \leq d$.  Since $\Var[p]=1,$ by Fact \ref{fact:var-ito} we have
that $\sum_{q=1}^d c_q^2 = 1.$
The procedure {\bf \blue{Regularize-Poly}}$_\beta$ calls
{\bf \blue{MultiRegularize-Many-Wieners}}$_{d,\beta}$ on the $d+1$ polynomials
$p_0, p_1,\dots,p_d$ with the ``$\tau$'' parameter  of {\bf \blue{MultiRegularize-Many-Wieners}}$_{d,\beta}$ set
to be $\frac{1}{d} \cdot (\tau/d)^{3d}$.
This call to {\bf \blue{MultiRegularize-Many-Wieners}}$_{d,\beta}$ returns
polynomials $\Outer(p_0),\dots,\Outer(p_d)$ (these are the polynomials
$h_0,\dots,h_d$) and
$\{\Inner(p_{0})_\ell\}_{\ell=1,\dots,m_0}$, $\dots,$
$\{\Inner(p_{d})_\ell\}_{\ell=1,\dots,m_d}$
(these are the polynomials
$\{A_{q,\ell}\}_{q=0,\dots,d,\ell=1,\dots,m_q}$).

Condition (1) of Theorem \ref{thm:main-decomp} follows directly from Condition (1)
of Theorem \ref{thm:general}.  Condition (2) of Theorem \ref{thm:general} implies
that each $q \in \{0,\dots,d\}$ has $\Var[p_q -\Outer(p_q)(\{\Inner(p_q)\}_{\ell})] \leq (1/d) \cdot (\tau/d)^{3d}$.
Observing that $p_q -\Outer(p_q)(\{\Inner(p_q)\}_{\ell})$ lies in ${\cal W}_q$, using Fact~\ref{fact:var-ito}, we have $\Var[p - \tilde{p}] \le (\tau/d)^{3d}.$
Using Lemma~\ref{lem:small-var-diff-kol-close}, we get that
\[
\left|
\Pr_{x \sim N(0,1)^n}[p(x) \geq 0] -
\Pr_{x \sim N(0,1)^n}[\tilde{p}(x) \geq 0]
\right|
\leq O(\tau).
\]
This concludes the verification of Condition (2). Conditions (3) and (4) in Theorem~\ref{thm:main-decomp} follow from the respective conditions in Theorem~\ref{thm:general}.

\fi



\ifnum\confversion=1
\section{Proof of Theorem \ref{thm:degd-main-gauss}}
\label{sec:combine}
\fi

\ifnum\confversion=0
\section{Proof of Theorem \ref{thm:degd-main-gauss}:    Deterministic approximate counting
for degree-$d$ PTFs over $N(0,1)^n$}
\label{sec:combine}
\fi

In this section we combine the tools developed in the previous sections to
prove Theorem \ref{thm:degd-main-gauss}.
We do this in two main steps.    First we use the
CLT from Section \ref{sec:CLT} and the decomposition procedure from
Section \ref{sec:decomp} to reduce the original problem (of $\eps$-approximately
counting satisfying assignments of a degree-$d$ PTF under $N(0,1)^n$)
to the problem of $\eps$-approximating an expectation $\E_{G \sim N(0^r,\Sigma)}[\tilde{g}_c(G)]$, where $N(0^r,\Sigma)$ is a mean-0
$r$-dimensional Gaussian with covariance matrix $\Sigma$,
and $\tilde{g}_c: \R^r \to [0,1]$ is a particular
explicitly specified function.  The key points here are that the value of $r$,
the description length (bit complexity) of $\tilde{g}_c$, and the bit complexity of each entry of the covariance matrix $\Sigma$
are all $O_{d,\eps}(1)$ (completely independent of $n$).  Next, we
describe how an $O_{d,\eps}(1)$-time deterministic algorithm can $\eps$-approximate the
desired expectation $\E_{G \sim N(0^r,\Sigma)}[\tilde{g}_c(G)]$.
\ifnum\confversion=1 Theorem \ref{thm:degd-main-gauss} follows directly from Theorems \ref{thm:finite-dim}
and \ref{thm:alg} which we state below.

\begin{theorem} \label{thm:finite-dim}
There is an $O_{d,\eps}(1) \cdot \poly(n^d)$-time deterministic
algorithm with the following performance guarantee:  Given as input a degree-$d$
real polynomial $p(x_1,\dots,x_n)$ and a parameter $\eps > 0$, it outputs
an integer $r$, a matrix of covariances
$\Sigma \in \R^{r \times r}$ (whose diagonal entries are all 1), and a description of a function $\tilde{g}_c: \R^r \to [0,1]$, such that
$ \left|
\Pr_{x \sim N(0,1)^n}[
p(x) \geq 0] -
\E_{G \sim N(0^r,\Sigma)}[\tilde{g}_c(G)] \right| \leq O(\eps).
$
Moreover $r$ is $O_{d,\eps}(1)$, the description length of
$\tilde{g}_c$ is $O_{d,\eps}(1)$ bits, and each entry of $\Sigma$ is a
rational number whose numerator and denominator are both integers of
magnitude $O_{d,\eps}(1).$
\end{theorem}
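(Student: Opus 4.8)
\noindent The plan is to chain together three ingredients already in hand — procedure \textbf{Linearize} (Theorem~\ref{thm:multilinearize}), the decomposition procedure \textbf{Regularize-Poly}$_\beta$ (Theorem~\ref{thm:main-decomp}), and the CLT (Theorem~\ref{thm:mainclt}) — the glue between the last two being a two-layer mollification. First I would dispose of the trivial cases ($d\le 1$, or $\Var[p]=0$, both handled by an explicit one-dimensional Gaussian integral) and rescale $p$ by a positive constant so that $\Var[p]=1$, which does not affect $\sign(p)$. Then I would invoke \textbf{Linearize} with accuracy $\eps$ to replace $p$ by a multilinear degree-$d$ polynomial over $n'=O_{d,\eps}(1)\cdot n$ variables, changing $\Pr[p(x)\ge 0]$ by at most $O(\eps)$; after another rescaling to unit variance I may assume that $p$ itself is multilinear with $\Var[p]=1$. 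A parameter $\tau>0$ and a function $\beta$ will be pinned down at the very end.

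Next I would run \textbf{Regularize-Poly}$_\beta$ on $p$ with parameter $\tau$, obtaining $h_0,\dots,h_d$ and inner polynomials $\{A_{q,\ell}\}$ together with $\tilde p(x)=\sum_{q=0}^d c_q h_q(A_{q,1}(x),\dots,A_{q,m_q}(x))$. Let $r$ be the number of non-constant inner polynomials, let $A(x)$ denote the corresponding $r$-tuple (each $A_{q,\ell}\in{\cal W}^{j}$ for some $1\le j\le d$, has variance $1$, and is $\beta(\Num+\Coeff)$-eigenregular by Conditions~(1) and~(4)), and let $\Sigma\in\R^{r\times r}$ be their covariance matrix, which has all diagonal entries equal to $1$. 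Absorbing the constant term(s) into a single multilinear polynomial $g$ of degree at most $d$ in $r$ variables whose sum of absolute values of coefficients is at most $\Coeff$ (using $|c_q|\le 1$ and $\sum_q\Coeff(h_q)\le\Coeff$), we may write $\tilde p(x)=g(A(x))$. Condition~(2) of Theorem~\ref{thm:main-decomp} already gives $|\Pr[p\ge 0]-\Pr[\tilde p\ge 0]|\le O(\tau)$, so it remains to produce a $\tilde g_c:\R^r\to[0,1]$ with $|\Pr_x[\tilde p(x)\ge 0]-\E_{{\cal G}}[\tilde g_c({\cal G})]|\le O(\eps)$ for ${\cal G}\sim N(0^r,\Sigma)$, of bounded description length, and to check the bit-complexity claims.

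For the mollification step I would, \emph{after} the decomposition has fixed $r$, $\Coeff$, choose a radius $R=R(d,\eps,r)$ so that $\Pr_x[\,|A_{q,\ell}(x)|>R\text{ for some }q,\ell\,]\le\eps$; this follows from the degree-$d$ Chernoff bound (Theorem~\ref{thm:dcb}) applied to each unit-variance degree-$\le d$ Gaussian polynomial $A_{q,\ell}$ and a union bound, and since the dependence of $R$ on $r$ is only polylogarithmic we get $R=O_{d,\eps}(1)$. Likewise I would choose $\kappa=\kappa(d,\eps)$ small enough that $\Pr_x[\,|\tilde p(x)|<\kappa\,]\le\eps$, which follows from the Carbery–Wright anticoncentration bound (Theorem~\ref{thm:cw}) applied to the nonzero degree-$\le d$ Gaussian polynomial $\tilde p$, whose variance is within $o(1)$ of $1$. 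Fixing standard smooth functions $\psi:\R^r\to[0,1]$ with $\psi\equiv 1$ on $\{\|y\|_\infty\le R\}$, $\psi\equiv 0$ outside $\{\|y\|_\infty\le 2R\}$, $\|\psi'\|_\infty=O(1/R)$, $\|\psi''\|_\infty=O(1/R^2)$, and $\phi:\R\to[0,1]$ with $\phi\equiv 1$ on $[\kappa,\infty)$, $\phi\equiv 0$ on $(-\infty,-\kappa]$, $\|\phi'\|_\infty=O(1/\kappa)$, $\|\phi''\|_\infty=O(1/\kappa^2)$, I would set $\tilde g_c(y):=\psi(y)\cdot\phi(g(y))$. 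Then $|\E_x[\tilde g_c(A(x))]-\Pr_x[\tilde p(x)\ge 0]|\le\Pr_x[\|A(x)\|_\infty>R]+\Pr_x[|\tilde p(x)|<\kappa]\le 2\eps$, using $\tilde p(x)=g(A(x))$; and because $\psi$ and its derivatives are supported on $\{\|y\|_\infty\le 2R\}$, where $g$ together with its first two derivatives is bounded in terms of $\Coeff$, $R$ and $r\le\Num$, a direct Hessian computation yields $\|\tilde g_c''\|_\infty\le G(\Num+\Coeff)$ for an explicitly computable function $G$ of the form $G(x)=C_{d,\eps}\cdot\poly(x)$ (the polynomial's exponent absolute, the constant depending only on $d,\eps$).

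Finally I would apply the CLT (Theorem~\ref{thm:mainclt}) to $A(x)$ with test function $\alpha=\tilde g_c$: since each $A_{q,\ell}$ has mean $0$, variance $\le 1$, degree $\le d$, and eigenregularity $\beta(\Num+\Coeff)$, this gives $|\E_x[\tilde g_c(A(x))]-\E_{{\cal G}}[\tilde g_c({\cal G})]|\le 2^{O(d\log d)}\,r^2\,\sqrt{\beta(\Num+\Coeff)}\,\|\tilde g_c''\|_\infty\le 2^{O(d\log d)}\,(\Num+\Coeff)^2\,G(\Num+\Coeff)\,\sqrt{\beta(\Num+\Coeff)}$, with $\Sigma$ automatically being the covariance matrix the CLT requires. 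Here lies the only delicate point, and the place the $\beta$-parametrization pays off: the eigenregularity $\beta(\Num+\Coeff)$ must dominate not only $r^2$ but also $\|\tilde g_c''\|_\infty$, which itself depends polynomially on $\Coeff$; the resolution is to \emph{define}, knowing only $d$ and $\eps$, a computable non-increasing $\beta:[1,\infty)\to(0,1)$ with $\beta(x)\le 1/x$ and $\beta(x)\le \eps^2\big/\big(2^{\Theta(d\log d)}\,x^{4}\,G(x)^2\big)$, \emph{before} the decomposition is run — then, regardless of the finite values $\Num,\Coeff$ that come out (whose dependence on $\beta$ is now harmless), the displayed bound is at most $\eps$. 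Together with $\tau$ chosen so that $O(\tau)\le\eps$, this gives $|\Pr_x[p(x)\ge 0]-\E_{{\cal G}}[\tilde g_c({\cal G})]|\le O(\eps)$. The claims $r\le\Num\le N_\beta(d,\tau)=O_{d,\eps}(1)$, $O_{d,\eps}(1)$ description length for $\tilde g_c$ (it is $\psi\circ\phi\circ g$ with $\psi,\phi$ fixed by $R,\kappa=O_{d,\eps}(1)$ and $g$ a bounded-coefficient polynomial in $O_{d,\eps}(1)$ variables), and $O_{d,\eps}(1)$-bit rational entries of $\Sigma$ then follow from Theorem~\ref{thm:main-decomp} together with routine rounding: rounding the coefficients of $g$, the parameters $R,\kappa$, and the entries of $\Sigma$ to $O_{d,\eps}(1)$ bits perturbs $\E_{{\cal G}}[\tilde g_c({\cal G})]$ by $O(\eps)$ since $\tilde g_c$ is bounded with bounded Hessian (a Gaussian-interpolation estimate controls the effect of perturbing $\Sigma$), exactly in the spirit of the model discussion of Section~\ref{sec:prelim}. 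The running time is $O_{d,\eps}(1)\cdot\poly(n^d)$: \textbf{Linearize}, \textbf{Regularize-Poly}$_\beta$, and the covariance computations all run within this bound, and everything else costs $O_{d,\eps}(1)$. The step I expect to be the main obstacle is precisely this last circularity — engineering a single $\tilde g_c$ that is simultaneously a faithful mollifier of $\mathbf 1[\tilde p\ge 0]$, globally $C^2$ with a controlled Hessian, and matched to an eigenregularity the decomposition can actually deliver — which is resolved by the choice of $\beta$ above and directly realizes the strategy flagged in Remark~\ref{rem:needfewpoly}.
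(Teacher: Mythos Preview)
Your argument is correct and follows the same overall skeleton as the paper: linearize, run \textbf{Regularize-Poly}$_\beta$ with a $\beta$ chosen (in terms of $d,\eps$ only) to beat whatever $\Num+\Coeff$ emerges, mollify $\mathbf 1[\tilde p\ge 0]$ into a $C^2$ test function, apply the CLT (Theorem~\ref{thm:mainclt}), and finally round $\Sigma$. The circularity you flag is exactly the one the paper confronts, and you resolve it the same way --- by fixing $\beta(x)$ as an explicit function of $x$ (with constants depending only on $d,\eps$) that decays fast enough to absorb $r^2\cdot\|\tilde g_c''\|_\infty$, before the decomposition is ever run.

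The one substantive difference is the mollifier. The paper takes $\tilde g_c = g * B_c$, the convolution mollifier of \cite{DKNfocs10}, whose generic pointwise estimate $|g-\tilde g_c|(y)\lesssim r^2/(c\,\dist(y,\partial R))^2$ reduces the ``$\tilde g_c$ is close to $g$'' step to bounding $\dist(A(x),\partial R)$ via a Lipschitz estimate on $\phi$ (their Claim~\ref{claim:a}) together with Carbery--Wright on $\tilde p$ and tail bounds on the $A_{j,k}$. Your mollifier $\tilde g_c(y)=\psi(y)\cdot\phi(g(y))$ is more hand-built but also more transparent: the cutoff $\psi$ forces the Hessian to live on a box where $g$ and its derivatives are controlled by $\Coeff$ and $R$, and the smooth step $\phi$ directly converts the closeness question into $\Pr[\|A\|_\infty>R]+\Pr[|\tilde p|<\kappa]$. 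Both routes invoke the same two probabilistic inputs (degree-$d$ Chernoff and Carbery--Wright) and yield the same $\beta(x)=(\eps/C_{d}x)^{O(d^2)}$-type requirement; the paper's mollifier has the advantage of being off-the-shelf and giving a uniform $\|\tilde g_c''\|_\infty\le 4c^2$ bound independent of $g$, whereas yours avoids importing an external lemma and makes the dependence on $\Coeff$ explicit from the outset. One small point you glossed over that the paper handles carefully (Lemma~\ref{lem:round-matrix}): when rounding $\Sigma'$ to $\Sigma$ you should ensure $\Sigma$ remains PSD, not just entrywise close --- but this is routine.
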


\begin{theorem} \label{thm:alg}
There is a deterministic $O_{d,\eps}(1)$-time algorithm
which, given as input the output $r, \Sigma, \tilde{g}_c$ of
Theorem \ref{thm:finite-dim} and  $\eps > 0$,
outputs a value $\nu$ such that
$\left| \nu - \mathbf{E}_{(G_1,\dots,G_r)  \sim N(0^r,\Sigma)} [\tilde{g}_c(G_1, \ldots, G_r)]\right| \le \epsilon.
$
\end{theorem}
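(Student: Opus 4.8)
The plan is to observe that this is a genuinely finite-dimensional numerical integration problem, so the whole task is to approximate the integral of a bounded function $\tilde g_c : \R^r \to [0,1]$ against a fixed $r$-dimensional Gaussian density, where $r = O_{d,\eps}(1)$, the covariance matrix $\Sigma$ has bounded-bit-complexity rational entries, and $\tilde g_c$ has an explicit description of $O_{d,\eps}(1)$ bits. Since everything has size independent of $n$, a crude but fully explicit deterministic scheme suffices: we do not need cleverness, only a quantitative error analysis. First I would recall how $\tilde g_c$ is built in Section \ref{sec:combine} — it is a mollified (smoothed) approximator to the $0/1$ indicator of the region $\{\sign(\sum_q h_q(A_{q,1},\dots,A_{q,m_q})) = 1\}$, hence it is Lipschitz with an explicitly bounded Lipschitz constant $\Lambda = O_{d,\eps}(1)$ (this follows from the mollification construction together with the coefficient bounds on the $h_q$ from Theorem \ref{thm:main-decomp}, which were precisely carried along for this purpose).

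The main steps, in order: (i) Truncate the Gaussian to a box $[-T,T]^r$ with $T = \Theta(\sqrt{r \log(1/\eps)})$; by the standard Gaussian tail bound (a special case of Theorem \ref{thm:dcb}, or just a direct union bound over the $r$ coordinates after whitening by $\Sigma^{-1/2}$), the mass outside contributes at most $\eps/4$ to the integral since $0 \le \tilde g_c \le 1$. (ii) On the box, partition $[-T,T]^r$ into a grid of cells of side length $\delta$ with $\delta = \Theta(\eps/(\Lambda \sqrt r))$, so that the Lipschitz bound on $\tilde g_c$ guarantees the variation of $\tilde g_c$ on each cell is at most $\eps/4$, and simultaneously take $\delta$ small enough that the Gaussian density $\phi_\Sigma$ (which is itself Lipschitz on the box with an explicit constant depending on $\|\Sigma^{-1}\|$ and $T$) varies by a negligible amount on each cell. (iii) Approximate $\E[\tilde g_c(G)] \approx \sum_{\text{cells } C} \tilde g_c(x_C)\, \phi_\Sigma(x_C)\, \delta^r$, where $x_C$ is the center of cell $C$; the number of cells is $(2T/\delta)^r = O_{d,\eps}(1)$, and each summand is evaluable to any desired accuracy in $O_{d,\eps}(1)$ time because $\tilde g_c$ has an explicit $O_{d,\eps}(1)$-bit description and $\phi_\Sigma(x_C) = (2\pi)^{-r/2}(\det\Sigma)^{-1/2}\exp(-\tfrac12 x_C^\top \Sigma^{-1} x_C)$ involves only fixed-size linear algebra (inverting and taking the determinant of the $r\times r$ rational matrix $\Sigma$) and evaluating $\exp$ to polynomially many bits. (iv) Sum the error contributions — the $\eps/4$ from truncation, the $\eps/4$ from the Lipschitz variation of $\tilde g_c$ across cells, the contribution from treating $\phi_\Sigma$ as constant on each cell, and the rounding error from finite-precision evaluation of each of the $O_{d,\eps}(1)$ summands — and check they add up to at most $\eps$; rescaling $\eps$ by a constant gives the claimed bound.

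The main obstacle — really the only nontrivial point — is making sure that the constants are genuinely independent of $n$: this requires that the Lipschitz constant $\Lambda$ of $\tilde g_c$, the value of $r$, and $\|\Sigma^{-1}\|_2$ are all $O_{d,\eps}(1)$. The first two are delivered by Theorems \ref{thm:main-decomp} and \ref{thm:finite-dim} (in particular the coefficient bounds $\mathrm{Coeff}$ and the count $\mathrm{Num}$ are functions of $d,\eps$ only, and $\tilde g_c$ is obtained from these by a mollification whose smoothness parameters are likewise chosen as functions of $d,\eps$). For $\|\Sigma^{-1}\|_2$ one must be slightly careful: $\Sigma$ could in principle be nearly singular. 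This is handled exactly as in \cite{DDS13:deg2count} — one works instead with $\Sigma + \rho I$ for a tiny $\rho = \mathrm{poly}(\eps)$, which perturbs the Gaussian by at most $O(\eps)$ in total variation distance (hence changes $\E[\tilde g_c(G)]$ by $O(\eps)$ since $\tilde g_c$ is bounded) while ensuring $\|(\Sigma+\rho I)^{-1}\|_2 \le 1/\rho = O_{d,\eps}(1)$; alternatively one decomposes onto the range of $\Sigma$ and integrates in the lower-dimensional nondegenerate subspace. Either way the grid-sum is over $O_{d,\eps}(1)$ cells with $O_{d,\eps}(1)$-time-evaluable summands, giving the stated $O_{d,\eps}(1)$ running time. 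I would remark that this argument is essentially identical to the corresponding final step in \cite{DDS13:deg2count,DDS14junta}, so the writeup can be kept brief.
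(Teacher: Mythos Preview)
Your approach is essentially the same as the paper's: truncate the Gaussian to a bounded box, lay down a fine grid, and approximate the expectation by a Riemann-type sum using the Lipschitzness of $\tilde g_c$ (the paper records $\|\tilde g_c^{(1)}\|_\infty \le 2c = O_{d,\eps}(1)$ from the mollification construction, which is your $\Lambda$).

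The one point you gloss over that the paper does address is the evaluation of $\tilde g_c$ at a single grid point: $\tilde g_c(x) = \int_{\R^r} g(x-y)\,B_c(y)\,dy$ with $g = \sign_{0,1}(\phi)$, so this is itself an $r$-dimensional numerical integral that needs its own truncation (using the decay of $B_c$ outside a box $[-W,W]^r$) and its own grid (using $\|B_c\|_\infty, \|B_c^{(1)}\|_\infty = O_{d,\eps}(1)$ and the explicit polynomial $\phi$). Your phrase ``$\tilde g_c$ has an explicit $O_{d,\eps}(1)$-bit description'' is true but hides this second layer of quadrature. Conversely, you are more careful than the paper about a possibly near-singular $\Sigma$ (your $\Sigma + \rho I$ fix); the paper simply asserts one can ``estimate the probability that $N(0^r,\Sigma)$ assigns to each grid point'' without discussing conditioning of $\Sigma$. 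Both issues are routine and your plan is correct.
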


\fi

\ifnum\confversion=0

Theorem \ref{thm:degd-main-gauss} follows directly from Theorems \ref{thm:finite-dim}
and \ref{thm:alg}, the main results of Sections \ref{sec:finite-dim} and \ref{sec:alg} respectively.

Our approach is based on mollification (so that we can apply our CLT);
we will need the following definitions from \cite{DKNfocs10}.   We first define the bump function $b: \R^r  \to \R$ as follows:
$$
b(x) = \begin{cases}\sqrt{C_r} ( 1- \Vert x \Vert_2^2 )&\mbox{if } \Vert x  \Vert_2 \le 1 \\
0 & \mbox{if }  \Vert x  \Vert_2 > 1, \end{cases}
$$
where the constant $C_r$ is chosen so that $\int_{x \in \mathbb{R}^r} b^2(x) dx = 1$. We let $\widehat{b}$ denote the Fourier transform of $b$, so
$$
\widehat{b}(x) = \frac{1}{(2\pi)^{r/2}} \cdot \int_{y \in \R^r} b(y) \cdot \exp(-\ignore{2\pi }i \langle x , y \rangle) dy.
$$
For $c > 0$ we define $B_c : \R^r \to \R$ as
$$
B_c(x) = c^r \cdot \widehat{b}(c\cdot x)^2.
$$
Using Parseval's identity, it is easy to see that $B_c(x)$ is non-negative and $\int_x B_c(x)=1$, so $B_c$ is a density function.
We observe that $\Vert \widehat{b} \Vert_{\infty}$ is upper bounded by $O_r(1)$ and hence $\Vert B_c \Vert_{\infty} = O_{c,r}(1)$. Finally, for $g : \R^r \rightarrow [0,1]$ and $c>0$,
$\tilde{g}_c : \R^r \to  [0,1]$ is defined
as
\begin{equation}\label{eq:desc}
\tilde{g}_c(x) = \int_{y \in \R^r} g(x-y)  \cdot B_c(y) dy.
\end{equation}

\subsection{Reducing to an $O_{d,\eps}(1)$-dimensional problem} \label{sec:finite-dim}

In this subsection we prove the following.  (The function
$\sign_{0,1}(z)$ below outputs 1 if $z \geq 0$ and outputs 0
otherwise.)

\begin{theorem} \label{thm:finite-dim}
There is an $O_{d,\eps}(1) \cdot \poly(n^d)$-time deterministic
algorithm with the following performance guarantee:  Given as input a degree-$d$
real polynomial $p(x_1,\dots,x_n)$ and a parameter $\eps > 0$, it outputs
an integer $r$, a matrix of covariances
$\Sigma \in \R^{r \times r}$ (whose diagonal entries are all 1), and a description of a function $\tilde{g}_c: \R^r \to [0,1]$, such that
\begin{equation} \label{eq:z}
\left|
\Pr_{x \sim N(0,1)^n}[
p(x) \geq 0] -
\E_{G \sim N(0^r,\Sigma)}[\tilde{g}_c(G)] \right| \leq O(\eps).
\end{equation}
Moreover, $\tilde{g}_c$ is of the form given in (\ref{eq:desc}), where $g  = \sign_{0,1}(\phi)$ and $\phi : \R^r \rightarrow \R$ is a degree-$d$ polynomial whose coefficients are rational numbers
with numerator and denominator that are each integers of magnitude $O_{d,\eps}(1)$. Also, $r$ is $O_{d,\eps}(1)$, and hence the description length of
$\tilde{g}_c$ is $O_{d,\eps}(1)$ bits.
Finally, each entry of $\Sigma$ is a
rational number whose numerator and denominator are both integers of
magnitude $O_{d,\eps}(1).$
\end{theorem}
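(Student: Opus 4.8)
The plan is to compose the multilinearization of Theorem~\ref{thm:multilinearize}, the decomposition of Theorem~\ref{thm:main-decomp}, the CLT of Theorem~\ref{thm:mainclt}, and mollification, choosing every parameter so that the final dimension $r$, the bit-length of $\tilde g_c$, and the bit-length of $\Sigma$ depend only on $d$ and $\eps$. If $d=1$ then $p$ is linear and the problem is immediate, so assume $d\ge 2$; if $\Var[p]=0$ then $p$ is constant and again the problem is trivial, so assume $\Var[p]=1$ (rescaling $p$ does not change its sign; the factor $1/\sqrt{\Var[p]}$ may be carried symbolically, or its effect on the final rounding absorbed into the error). First run {\bf Linearize} on $p$ with accuracy $\eps$, obtaining a multilinear degree-$d$ polynomial $q$ over $\R^{n'}$, $n'=O_{d,\eps}(1)\cdot n$, with $\Var[q]=1$ and $\bigl|\Pr_{x\sim N(0,1)^n}[p\ge 0]-\Pr_{x\sim N(0,1)^{n'}}[q\ge 0]\bigr|\le O(\eps)$. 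Then, for a non-increasing computable $\beta:[1,\infty)\to(0,1)$ with $\beta(x)\le 1/x$ to be fixed at the very end, run the procedure {\bf Regularize-Poly} of Theorem~\ref{thm:main-decomp} with auxiliary function $\beta$ and parameter $\tau=\eps$. Collecting the outer polynomials $\{h_j\}$ into a single polynomial $\phi$ of degree $\le d$ (with $c_0h_0$ absorbed as its constant term) and listing the inner polynomials $A_{j,\ell}$ with $j\ge 1$ as $A_1,\dots,A_R$, we have $\tilde p:=\phi(A_1,\dots,A_R)$ with $\bigl|\Pr[q\ge 0]-\Pr[\tilde p\ge 0]\bigr|\le O(\eps)$ and $\Var[q-\tilde p]\le(\eps/d)^{3d}$ by condition~(2) of Theorem~\ref{thm:main-decomp}; moreover $R=\Num\le N_\beta(d,\eps)=O_{d,\eps}(1)$, the sum of absolute values of the coefficients of $\phi$ is at most $1+\Coeff\le 1+M_\beta(d,\eps)=O_{d,\eps}(1)$, $\Var[\tilde p]$ is close to $1$ (in particular bounded away from $0$, so $\tilde p$ is a nonzero degree-$d$ Gaussian polynomial), and each $A_i$ satisfies $\E[A_i]=0$, $\Var[A_i]=1$, and is $\beta(\Num+\Coeff)$-eigenregular.

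The algorithm then outputs $r:=R$; the matrix $\Sigma$ whose $(i,j)$ entry is a bounded-precision rational approximation of $\E_{x\sim N(0,1)^{n'}}[A_i(x)A_j(x)]$ (this expectation is rational and computable in $\poly(n^d)\cdot O_{d,\eps}(1)$ time, its diagonal equals $1$ exactly, and we round so as to keep $\Sigma$ PSD); and $\tilde g_c=g*B_c$ of the form~\eqref{eq:desc} with $g=\sign_{0,1}(\phi)$ after rounding the coefficients of $\phi$ to bounded-precision rationals, and with $c=c(d,\eps,R,\Coeff)=O_{d,\eps}(1)$ chosen in the mollification step below. Writing ${\cal G}\sim N(0^R,\Sigma)$, the bound~\eqref{eq:z} follows from three approximations: $\Pr_x[p\ge 0]$ is $O(\eps)$-close to $\Pr_x[\tilde p\ge 0]=\E_x[g(A_1(x),\dots,A_R(x))]$ by the two steps above; $\E_x[g(A(x))]$ is $O(\eps)$-close to $\E_x[\tilde g_c(A(x))]$ by the mollification estimate below; and $\E_x[\tilde g_c(A(x))]$ is $O(\eps)$-close to $\E_{{\cal G}}[\tilde g_c({\cal G})]$ by Theorem~\ref{thm:mainclt} applied to the tuple $(A_1,\dots,A_R)$ with test function $\alpha=\tilde g_c$, the CLT error being $2^{O(d\log d)}R^2\sqrt{\beta(\Num+\Coeff)}\,\|\tilde g_c''\|_\infty$, to which we add the $O(\eps)$ effect of replacing the exact covariance matrix by $\Sigma$ (using that $\tilde g_c$ is bounded with bounded gradient) and the $O(\eps)$ effect of rounding $\phi$ (which, since $\phi$ has $O_{d,\eps}(1)$ monomials and the $A_i$ are normalized degree-$\le d$ Gaussian polynomials, perturbs $\Var[\tilde p-\cdot]$ and $\|\tilde g_c-\cdot\|_\infty$ negligibly, invoking anti-concentration of $\tilde p$ and standard sublevel-set bounds for polynomials, respectively).

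For the mollification estimate, for any $z$ we have $|\tilde g_c(z)-g(z)|\le\Pr_{y\sim B_c}\bigl[\sign_{0,1}(\phi(z-y))\ne\sign_{0,1}(\phi(z))\bigr]$, which is at most $\Pr_{y\sim B_c}[\|y\|_2>\Delta]$ plus the indicator that $\phi$ changes sign within Euclidean distance $\Delta$ of $z$; taking $z=A(x)$ and splitting on whether $\|A(x)\|_2\le T$, this yields
\[
\E_x\bigl[|\tilde g_c(A(x))-g(A(x))|\bigr]\ \le\ \Pr_{y\sim B_c}[\|y\|_2>\Delta]\ +\ \Pr_x\bigl[\|A(x)\|_2>T\bigr]\ +\ \Pr_x\bigl[|\tilde p(x)|\le\rho\bigr],
\]
where $\rho$ upper bounds the oscillation of $\phi$ over a $\Delta$-ball contained in the $T$-ball and hence is a fixed polynomial in $T,\Delta,\Coeff,d$. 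The middle term is controlled by the degree-$d$ Chernoff bound (Theorem~\ref{thm:dcb}) applied to each $A_i$ with a union bound; the last term by the Carbery--Wright bound (Theorem~\ref{thm:cw}) applied to $\tilde p$ (nonzero, $\Var[\tilde p]=\Theta(1)$); and the first by $\|B_c\|_\infty=O_{c,R}(1)$ together with the concentration of $B_c$ at scale $1/c$. Choosing in order $T=T(d,\eps,R)$, then $\Delta=\Delta(d,\eps,R,T,\Coeff)$, then $c=c(d,\eps,R,\Delta)$ --- all $O_{d,\eps}(1)$ --- makes each term $O(\eps)$, and the same choices give $\|\tilde g_c''\|_\infty\le c^2\cdot O_R(1)=O_{d,\eps}(1)$. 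Finally, since the bound on $c$ (hence on $\|\tilde g_c''\|_\infty$) is produced without any reference to $\beta$, we set $\beta(v):=\min\bigl\{1/v,\ (\eps/F(v))^2\bigr\}$ where $F(v):=\max_{a+b\le v}\,2^{O(d\log d)}\,a^2\,c(d,\eps,a,b)^2\,O_a(1)$ is a computable non-decreasing function of $(v,d,\eps)$; this guarantees the CLT error above is $\le\eps$ whatever values of $\Num$ and $\Coeff$ the decomposition actually outputs. The one genuinely delicate point is exactly this last interplay --- the CLT is useful only when the eigenregularity is tiny relative to the number $R$ of polynomials (and relative to $\|\tilde g_c''\|_\infty$, which itself grows with $R$) --- and it goes through only because Theorem~\ref{thm:main-decomp} certifies eigenregularity $\beta(\Num+\Coeff)$ in terms of the \emph{actual} output size rather than an a-priori bound (cf.\ Remark~\ref{rem:needfewpoly}); everything else is the routine but lengthy bookkeeping of the concentration/anti-concentration estimates and the various rounding errors.
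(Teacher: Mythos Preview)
Your proposal is correct and follows essentially the same approach as the paper: multilinearize, run {\bf Regularize-Poly}$_\beta$, round the outer coefficients, mollify $\sign_{0,1}(\phi)$ via $B_c$, apply Theorem~\ref{thm:mainclt}, and round the covariance matrix. The only stylistic differences are that (i) the paper phrases the mollification error directly via the DKN distance-to-boundary bound $|g(x)-\tilde g_c(x)|\le O(r^2/(c^2\dist(x,\partial R)^2))$ rather than your equivalent ``concentration of $B_c$ at scale $1/c$'' formulation, and (ii) the paper writes down the concrete choice $\beta(x)=(\eps/(Cdx))^{Cd^2}$ and verifies (\ref{eq:eta-choice}) directly, whereas you define $\beta$ abstractly via $F(v)$; both resolve the same circularity you correctly highlight in your final paragraph.
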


 The following lemma will be useful for us.
 \begin{lemma} \label{lem:round-matrix}
 Let $a: \mathbb{R}^r \rightarrow [0,1]$ be  a $c$-Lipschitz function. Let $\Sigma, \Sigma' \in \R^{r \times r}$
 be two psd matrices such that $\Vert \Sigma-\Sigma' \Vert_2 \le \delta$. Then
 $\left | \E_{G \sim N(0^r,\Sigma)} [a(G)] - \E_{G' \sim N(0^r,\Sigma')}[a(G')] \right|  \le c r(\delta + 3 \sqrt{\delta \|\Sigma\|_2}).$
 \end{lemma}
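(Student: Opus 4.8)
\textbf{Proof plan for Lemma~\ref{lem:round-matrix}.}

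The plan is to exhibit a single coupling of $G \sim N(0^r,\Sigma)$ and $G' \sim N(0^r,\Sigma')$ and then use the Lipschitz property together with a coordinate-wise $L^1$ bound on $\|G-G'\|$. First I would write $\Sigma = AA^T$ and $\Sigma' = BB^T$ for $A,B \in \R^{r\times r}$; a convenient choice is $A = \Sigma^{1/2}$ and $B = (\Sigma')^{1/2}$, the unique psd square roots. Then $G = Az$ and $G' = Bz$ for a common $z \sim N(0^r, I_r)$ form a valid coupling, and
\[
\left| \E[a(G)] - \E[a(G')] \right| \;\le\; \E\big[\,|a(Az) - a(Bz)|\,\big] \;\le\; c\,\E\big[\,\|Az - Bz\|_2\,\big] \;\le\; c\,\|A-B\|_2\cdot \E[\|z\|_2].
\]
Using $\E[\|z\|_2] \le \sqrt{\E\|z\|_2^2} = \sqrt{r}$, this is at most $c\sqrt{r}\,\|A-B\|_2$. (Alternatively, to match the exact form in the statement with a factor of $r$ rather than $\sqrt r$, one can bound $\E[\|Az-Bz\|_2] \le \sum_{i=1}^r \E|(A-B)_i \cdot z| \le \sum_i \|(A-B)_i\|_2 \le r\|A-B\|_2$, since $c$-Lipschitz in $\|\cdot\|_2$ implies the sum bound as well; I would pick whichever normalization the downstream application needs.)

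The main obstacle — and the only real content — is to bound $\|\Sigma^{1/2} - (\Sigma')^{1/2}\|_2$ in terms of $\delta = \|\Sigma - \Sigma'\|_2$ and $\|\Sigma\|_2$. This is exactly a perturbation bound for the matrix square root. The cleanest route is the classical inequality for psd matrices: if $X,Y \succeq 0$ then $\|X^{1/2} - Y^{1/2}\|_2 \le \|X - Y\|_2^{1/2}$ (the operator square-root is $1/2$-Hölder). Applying this with $X = \Sigma$, $Y = \Sigma'$ gives $\|A - B\|_2 \le \sqrt{\delta}$, which already yields $|\E[a(G)] - \E[a(G')]| \le c r \sqrt{\delta}$. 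To recover the sharper-looking bound $c r(\delta + 3\sqrt{\delta\|\Sigma\|_2})$ — which is the genuinely useful form when $\|\Sigma\|_2$ is small — I would instead use the refined bound $\|X^{1/2}-Y^{1/2}\|_2 \le \frac{\|X-Y\|_2}{\sqrt{\lambda_{\min}(X^{1/2})+\lambda_{\min}(Y^{1/2})}}$ when the square roots are nondegenerate, together with a two-case split: when $\delta$ is small relative to $\|\Sigma\|_2$ one is in the "regular perturbation" regime and gets a bound of order $\delta/\sqrt{\|\Sigma\|_2}\cdot(\text{const})$; when $\delta$ is comparable to or larger than $\|\Sigma\|_2$, the crude Hölder bound $\sqrt\delta$ is already $O(\sqrt{\delta\|\Sigma\|_2}/\|\Sigma\|_2^{1/2}) \cdot \|\Sigma\|_2^{1/2}$... . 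The most transparent self-contained argument, however, is via the integral representation $\Sigma^{1/2} = \frac{1}{\pi}\int_0^\infty \Sigma (\Sigma + t I)^{-1} t^{-1/2}\,dt$, differentiating/telescoping in $\Sigma$, and splitting the $t$-integral at $t = \delta$: the small-$t$ part contributes $O(\sqrt\delta)$ and the large-$t$ part contributes $O(\delta/\sqrt{\|\Sigma\|_2})$ once one bounds $\|(\Sigma+tI)^{-1}\|_2 \le 1/t$; after reconciling constants this gives exactly a bound of the form $\delta + 3\sqrt{\delta\|\Sigma\|_2}$ on $\|A-B\|_2$ (up to the harmless $r$-scaling). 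I would carry this out and then simply plug into the coupling bound above to conclude.

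One bookkeeping point to be careful about: the lemma as stated does not assume $\Sigma, \Sigma'$ are nondegenerate, so I would phrase the square-root perturbation step using the Hölder bound (valid for all psd matrices) to get the $\sqrt\delta$ term, and use the integral/resolvent argument only to extract the additive $\delta$ and the $\sqrt{\delta\|\Sigma\|_2}$ improvement, taking limits from $\Sigma + \varepsilon I$ as $\varepsilon \to 0^+$ if needed to avoid dividing by $\lambda_{\min}$. Nothing else in the argument is delicate — the Lipschitz estimate and $\E\|z\|_2 \le \sqrt r$ are immediate — so the entire proof reduces to this one matrix-analysis inequality, which I expect to be one short paragraph in the final writeup.
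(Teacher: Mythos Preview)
Your coupling $G=\Sigma^{1/2}z$, $G'=(\Sigma')^{1/2}z$ is a perfectly valid coupling and correctly gives
\[
\left|\E[a(G)]-\E[a(G')]\right|\le c\sqrt{r}\,\bigl\|\Sigma^{1/2}-(\Sigma')^{1/2}\bigr\|_2\le c\sqrt{r}\,\sqrt{\delta},
\]
using the $1/2$-H\"older bound for operator square roots. That is a true inequality, and for the application in this paper (where the diagonal of $\Sigma'$ is all $1$'s, so $\|\Sigma\|_2$ is bounded below) it would in fact suffice. But it does \emph{not} prove the lemma as stated, because $\sqrt{\delta}$ and $\delta+3\sqrt{\delta\|\Sigma\|_2}$ are incomparable.

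The real gap is your plan to upgrade the square-root perturbation bound to $\|\Sigma^{1/2}-(\Sigma')^{1/2}\|_2\le \delta+3\sqrt{\delta\|\Sigma\|_2}$. This inequality is simply false: take $\Sigma=0$ and $\Sigma'=\delta I$ with $\delta<1$. Then $\|\Sigma-\Sigma'\|_2=\delta$, $\|\Sigma\|_2=0$, and $\|\Sigma^{1/2}-(\Sigma')^{1/2}\|_2=\sqrt{\delta}>\delta=\delta+3\sqrt{\delta\cdot 0}$. The $1/2$-H\"older bound is tight here, and no amount of resolvent-integral massaging will beat it. (Your sketch of the integral argument also has the large-$t$ contribution scaling like $\delta/\sqrt{\|\Sigma\|_2}$, which blows up as $\|\Sigma\|_2\to 0$ --- another symptom of the same issue.) In short, the naive coupling via the two square roots is \emph{not} the optimal coupling of $N(0,\Sigma)$ and $N(0,\Sigma')$, and it cannot deliver the stated bound.

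The paper sidesteps this by using the closed-form expression for the optimal $L^2$-Wasserstein distance between centered Gaussians,
\[
d_{W,2}(N(0,\Sigma),N(0,\Sigma'))^2=\Tr\bigl(\Sigma+\Sigma'-2(\Sigma^{1/2}\Sigma'\Sigma^{1/2})^{1/2}\bigr),
\]
and then applies the same H\"older inequality $\|A^{1/2}-B^{1/2}\|_2\le\|A-B\|_2^{1/2}$, but to the pair $A=(\Sigma+\Sigma')^2$, $B=4\Sigma^{1/2}\Sigma'\Sigma^{1/2}$ rather than to $\Sigma,\Sigma'$ themselves. Writing $\Sigma'=\Sigma+\Delta$ one checks $\|A-B\|_2\le \delta^2+8\delta\|\Sigma\|_2$, which yields exactly $\delta+3\sqrt{\delta\|\Sigma\|_2}$ after taking the square root and bounding the trace by $r$ times the operator norm. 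The point is that the optimal transport plan mixes $\Sigma$ and $\Sigma'$ through $\Sigma^{1/2}\Sigma'\Sigma^{1/2}$, and it is this combination --- not the difference of square roots --- that admits the sharper perturbation estimate.
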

 \begin{proof}
 Let $Z \sim {N}(0^r,\Sigma)$ and $Z' \sim {N}(0^r,\Sigma')$. It is shown in  \cite{DowLan82} that we have
 $$
 d_{W,2} (Z,Z')^2 = \Tr(\Sigma + \Sigma' - 2 (\Sigma^{1/2} \Sigma' \Sigma^{1/2})^{1/2}),
 $$
 where $d_{W,2}(\cdot, \cdot)$ denotes the Wasserstein distance between two distributions in the $\ell_2$ metric.
 Further, it is known \cite{Bhatia} that if $A$ and $B$ are psd matrices, then
 $$
 \Vert A^{1/2} - B^{1/2} \Vert_2 \le \sqrt{\Vert A- B \Vert_2}.
 $$
 Observe that both $\Sigma +\Sigma'$ and $4\Sigma^{1/2} \Sigma' \Sigma^{1/2}$ are psd. As a consequence,
 $$
 \Vert \Sigma + \Sigma' - 2 (\Sigma^{1/2} \Sigma' \Sigma^{1/2})^{1/2} \Vert_2 \le \sqrt{\Vert (\Sigma+\Sigma')^2 -4\Sigma^{1/2} \Sigma' \Sigma^{1/2} \Vert_2 }.
 $$
 However, note that if we let $\Delta = \Sigma'-\Sigma$ so that $\Vert \Delta \Vert_2 \le \delta$, then it is easy to see that
 $$
 \sqrt{\Vert (\Sigma+\Sigma')^2 -4\Sigma^{1/2} \Sigma' \Sigma^{1/2} \Vert_2 } \le \delta + 3 \sqrt{\delta \|\Sigma\|_2}.
 $$
Hence we have $\Tr(\Sigma + \Sigma' - 2 (\Sigma^{1/2} \Sigma' \Sigma^{1/2})^{1/2}) \leq r(\delta + 3 \sqrt{\delta \|\Sigma\|_2})^2,$ so
$d_{W,2} (Z,Z') \leq \sqrt{r}(\delta + 3 \sqrt{\delta \|\Sigma\|_2})$.
Using Cauchy-Schwarz, we get
$$
d_{W,1} (Z,Z') \le r(\delta + 3 \sqrt{\delta \|\Sigma\|_2}).
$$
Using the fact that $a$ is $c$-Lipschitz and recalling the coupling interpretation of the Wasserstein distance, we get the stated result.
 \end{proof}

Now we turn to the proof of Theorem \ref{thm:finite-dim}.
We first recall that by Theorem \ref{thm:multilinearize} we may assume
without loss of generality that
the polynomial $p(x_1,\dots,x_n)$ is multilinear.  By rescaling we may
further assume that $\Var[p]=1.$

The algorithm begins by running the procedure {\bf Decompose-Poly}$_\beta$ on $p$
with its ``$\tau$'' parameter set to $\eps$ and
the $\beta$ function set  to be
\begin{equation}\label{eq:def-beta}
\beta(x) = \left(\frac{\eps}{C \cdot d \cdot x}\right)^{Cd^2}
\end{equation}
where $C$ is a sufficiently large absolute constant (we discuss the choice of $C$ below; see (\ref{eq:eta-choice})).
  By
Theorem \ref{thm:main-decomp}, {\bf Decompose-Poly} outputs a polynomial
\[
\tilde{p}(x) = \sum_{j=0}^d h_j(A_{j,1},\dots,A_{j,m_j}(x))
\]
that satisfies $|\Pr_{x \sim N(0,1)^n}[p(x) \geq 0] - \Pr_{x
\sim N(0,1)^n}[\tilde{p}(x) \geq 0] |  \leq \eps.$
Furthermore, for each $j \in [1, \ldots, d]$ and $k \in [1 , \ldots, m_j]$
we have that
$\Var (A_{j,k}) =1$,
$A_{j,k} \in {\cal W}^q$ for some $q \in [1,j]$,
$\sum_{j=1}^d m_j \le r=r(d,\tau)$, and the sum of
squared coefficients of $h_j$ is bounded by $S=S(d,\tau)$.
We moreover have that $\Var[\tilde{p}] \in [1/2, 3/2]$, that
each $h_{j}$ is a multilinear polynomial of degree at most $d$,
\red{and that each $h_j(A_{j,1},\dots,A_{j,m_j}(x))$
lies in the $j$-th Wiener chaos (and hence has
expectation zero).} Furthermore,
if $A_{j,i_1}$ and $A_{j,i_2}$ appear in a monomial of $h_j$
together, then they contain disjoint sets of variables.

Since
each $h_j$ has at most $r^d$ coefficients, if we round each
coefficient of each $h_j$ to the nearest integer multiple of
$\sqrt{\red{(\eps/d)^{3d}} / (d r^d)}$ and call the resulting polynomial
$h_{new,j}$, and subsequently define
$$
\tilde{p}_{new}(x) \eqdef \E[p] + \sum_{j=1}^d h_{new,j}(A_{j,1}(x),
\dots,A_{j,m_j}(x)),
$$
then using Fact \ref{fact:variance-ineq} we have that
\[
\Var[\tilde{p}_{new}(x) - \tilde{p}(x)] \leq (d r^d) \cdot (\red{(\eps/d)^{3d}}/(d r^d))
= \red{(\eps/d)^{3d}}.
\]
Since $\tilde{p}$ and $\tilde{p}_{new}$ have the same mean we may
apply Lemma \ref{lem:small-var-diff-kol-close} and we get that
$|\Pr_{x \sim N(0,1)^n}[\tilde{p}_{new}(x) \geq 0] - \Pr_{x
\sim N(0,1)^n}[\tilde{p}(x) \geq 0] |  \leq O(\eps).$
From now on, we will work with the polynomial $\tilde{p}_{new}$.

At a high level, the plan is as follows:
$\Pr_{x \sim N(0,1)^n} [\tilde{p}_{new}(x) \ge 0]$ is equal to

\begin{equation} \label{eq:E-sign}
\mathbf{E}_{x \sim N(0,1)^n} \left[\sign_{0,1} \left(\E[\tilde{p}] + \sum_{j=1}^d  h_{new,j}(A_{j,1}(x),\dots,A_{j,m_{j}}(x))\right)\right].
\end{equation}
Since the $\sign_{0,1}$ function is discontinuous, we cannot apply our CLT from Section \ref{sec:CLT} to (\ref{eq:E-sign}) directly (recall that the CLT only applies to functions with bounded
second derivatives).  To get around this we define a ``mollified'' version (a version
that is continuous and has bounded second derivative) of the relevant function  and show that the expectation (\ref{eq:E-sign}) is close to the corresponding expectation of the mollified function.  This allows us to apply the CLT to the mollified function.
{A final step is that we will round the covariance matrix $\Sigma'$ obtained from the CLT to $\Sigma$ so that all its entries have bounded
bit complexity; Lemma \ref{lem:round-matrix} will ensure that we can do this and only incur an acceptable increase in the error.}

We now enter into the details.
Let $\phi : \mathbb{R}^{r} \rightarrow \R$ be defined as
$$
\phi(x_{1,1}, \ldots, x_{1,m_1}, \ldots, x_{d,1}, \ldots, x_{d, m_d}) =\E[p] + \sum_{j=1}^d  h_{new,j}(x_{j,1},\dots,x_{j,m_{j}}),
$$
and let $g : \mathbb{R}^r \rightarrow \{0,1\}$ be defined as $g(x) = \sign_{0,1}(\phi(x))$. In the subsequent discussion, for any function $F : \mathbb{R}^r \rightarrow \mathbb{R}$, we write $F^{(k)}:\mathbb{R}^r \rightarrow \mathbb{R}^m$ (where $m = \binom{r+k-1}{k}$) to denote the function whose coordinates are all  $k$-th order partial derivatives of $F$, so $\|F^{(k)}\|_\infty$
denotes the supremum of all values achieved by any $k$-th order partial derivative at any
input in $\R^r.$
Intuitively, our goal is to construct a mollification $\tilde{g} : \mathbb{R}^r \rightarrow \mathbb{R}$ for $g$ such that $\|\tilde{g}^{(2)}\|_\infty < \infty$ and $\tilde{g}$ is a ``good approximation'' to $g$
at most points in $\mathbb{R}^r$.
There are many different mollification constructions that could potentially be used; we shall use the following theorem from \cite{DKNfocs10}:

\begin{theorem}\label{DKNfocs10} \cite{DKNfocs10}
For any region $R \subseteq \mathbb{R}^r$ and any $c>0$,  the mollification
 $\widetilde{I}_{R,c}: \mathbb{R}^r \rightarrow [0,1]$ of
the $\{0,1\}$-valued function $I_R(x) \eqdef \mathbf{1}_{x \in R}$
has the property that for every point $x \in \mathbb{R}^r$,
$$
\left| I_R(x) - \widetilde{I}_{R,c}(x) \right| \le \min \left\{1, O\left( \frac{r^2}{c^2 \cdot \dist(x, \partial R)^2} \right) \right\},
$$
where $\dist(x,\partial R)$ denotes the Euclidean distance between $x$ and the boundary of $R$.
Moreover, this mollification satisfies
$\Vert \widetilde{I}_{R,c}^{(1)} \Vert_{\infty} \le 2c$ and
$\Vert \widetilde{I}_{R,c}^{(2)} \Vert_{\infty} \le 4c^2$.
\end{theorem}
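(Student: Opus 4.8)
The mollification is obtained by convolving $I_R$ with the smooth probability density $B_c$ defined above: set $\widetilde{I}_{R,c} \eqdef I_R * B_c$, equivalently $\widetilde{I}_{R,c}(x) = \E_{y \sim B_c}[I_R(x-y)]$ (recall $B_c \ge 0$ and $\int B_c = 1$ by Parseval). Both assertions then reduce to quantitative estimates on $B_c$. First I would establish the pointwise bound; then the derivative bounds; the first is where the real work lies.

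For the pointwise estimate, write $|I_R(x) - \widetilde{I}_{R,c}(x)| = |\E_{y\sim B_c}[I_R(x) - I_R(x-y)]| \le \Pr_{y\sim B_c}[I_R(x)\neq I_R(x-y)]$, which is trivially at most $1$ and, since $\|y\|_2 < \dist(x,\partial R)$ forces $x$ and $x-y$ to lie on the same side of $\partial R$, is also at most $\Pr_{y\sim B_c}[\|y\|_2 \ge \dist(x,\partial R)]$. So it suffices to show $\E_{y\sim B_c}[\|y\|_2^2] = O(r^2/c^2)$ and apply Markov. Substituting $z = cy$ in $B_c(y) = c^r\widehat{b}(cy)^2$ gives $\E_{y\sim B_c}[\|y\|_2^2] = c^{-2}\int \|z\|_2^2\,\widehat{b}(z)^2\,dz$, and Plancherel identifies $\int\|z\|_2^2\,\widehat{b}(z)^2\,dz$ with $\|\nabla b\|_2^2$ (the transform convention here makes this an exact equality since $b$ is real and radial). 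Finally, for $b(x) = \sqrt{C_r}(1-\|x\|_2^2)\mathbf{1}[\|x\|_2\le 1]$ with $C_r$ fixed by $\|b\|_2 = 1$, an explicit polar-coordinates computation of $\int_{\|x\|_2\le 1}\|x\|_2^2\,dx$ and $\int_{\|x\|_2\le 1}(1-\|x\|_2^2)^2\,dx$ gives the Poincaré-type ratio $\|\nabla b\|_2^2/\|b\|_2^2 = r(r+4)/2 = \Theta(r^2)$, which is exactly the bound needed.

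For the derivative bounds, differentiating under the convolution gives $D^\alpha\widetilde{I}_{R,c} = I_R * D^\alpha B_c$, so by Young's inequality $\|D^\alpha\widetilde{I}_{R,c}\|_\infty \le \|I_R\|_\infty\,\|D^\alpha B_c\|_1 \le \|D^\alpha B_c\|_1$. Writing $B_c(y) = c^r\psi(cy)$ with $\psi = (\widehat{b})^2$ and rescaling yields $\|D^\alpha B_c\|_1 = c^{|\alpha|}\|D^\alpha\psi\|_1$, so it remains to bound $\|\partial_i\psi\|_1$ and $\|\partial_i\partial_j\psi\|_1$. The product rule and Cauchy--Schwarz give $\|\partial_i\psi\|_1 \le 2\|\widehat{b}\|_2\|\partial_i\widehat{b}\|_2$ and $\|\partial_i\partial_j\psi\|_1 \le 2\|\partial_i\widehat{b}\|_2\|\partial_j\widehat{b}\|_2 + 2\|\widehat{b}\|_2\|\partial_i\partial_j\widehat{b}\|_2$; since $\partial^\beta\widehat{b}$ is (a unit-modulus multiple of) the Fourier transform of $y^\beta b(y)$, Plancherel together with the fact that $b$ is supported on the unit ball gives $\|\partial^\beta\widehat{b}\|_2 = \|y^\beta b\|_2 \le \|b\|_2 = 1$ for $|\beta|\le 2$. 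This gives $\|\partial_i\psi\|_1 \le 2$ and $\|\partial_i\partial_j\psi\|_1 \le 4$, hence $\|\widetilde{I}_{R,c}^{(1)}\|_\infty \le 2c$ and $\|\widetilde{I}_{R,c}^{(2)}\|_\infty \le 4c^2$.

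\textbf{Main obstacle.} The delicate point is getting the dimension dependence in $\E_{y\sim B_c}[\|y\|_2^2] = O(r^2/c^2)$ exactly right: one has to carry the normalizing constant $C_r$ through the ball integrals and confirm the ratio $\|\nabla b\|_2^2/\|b\|_2^2$ is $\Theta(r^2)$ and not $\Theta(r)$ or $\Theta(r^3)$, since this factor propagates directly into the $r^2$ in the statement. Keeping the $(2\pi)^{r/2}$ factors of the Fourier convention from leaking into the $2c$ and $4c^2$ constants also requires a little bookkeeping. Since the statement is quoted verbatim from \cite{DKNfocs10}, the cleanest route in the writeup is simply to cite it; the sketch above is the self-contained argument.
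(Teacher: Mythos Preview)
Your proof sketch is correct. The paper, however, does not prove this theorem at all: it is stated with a citation to \cite{DKNfocs10} and used as a black box. You yourself note this at the end of your proposal, and indeed simply citing the result is what the paper does.

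Since you have supplied a self-contained argument where the paper has none, here is a brief confirmation of the two computations you flagged as delicate. For the second-moment bound, with $b(x)=\sqrt{C_r}(1-\|x\|_2^2)\mathbf 1[\|x\|_2\le1]$ one gets $\nabla b = -2\sqrt{C_r}\,x$ on the ball, so in polar coordinates $\|\nabla b\|_2^2 = 4C_r\,\omega_{r-1}/(r+2)$ while $\|b\|_2^2 = C_r\,\omega_{r-1}\cdot 8/[r(r+2)(r+4)]$, giving exactly the ratio $r(r+4)/2 = \Theta(r^2)$ you stated. For the derivative bounds, the symmetric Fourier convention $\widehat f(\xi)=(2\pi)^{-r/2}\int f(y)e^{-i\langle\xi,y\rangle}\,dy$ used in the paper makes Plancherel an isometry with no stray $2\pi$ factors, so $\|\partial^\beta\widehat b\|_2=\|y^\beta b\|_2\le\|b\|_2=1$ goes through cleanly and the constants $2c$ and $4c^2$ come out exactly as claimed.
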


Applying Theorem~\ref{DKNfocs10} to the region $R \eqdef \{x \in \mathbb{R}^r : g(x) = 1\}$, we get that for any $c>0$ there is a mollified function
$\tilde{g}_c : \mathbb{R}^r \rightarrow [0,1]$ such that  $\Vert \tilde{g}_c^{(2)} \Vert_{\infty} \le 4c^2$,
 $\Vert \tilde{g}_c^{(1)} \Vert_{\infty} \le 2c$, and
\begin{equation}\label{eqn:moll-error}
 \left| g(x) - \tilde{g}_c(x) \right| \le \min \left\{1, O\left( \frac{r^2}{c^2 \cdot \dist(x, \partial R)^2} \right) \right\}.
\end{equation}

The following lemma ensures that for a suitable
 choice of $c$, the mollification $\tilde{g}_c$ is indeed a useful proxy for $g$ for our purposes:
\begin{lemma}\label{lemma:error-bound}
For $c$ as specified in (\ref{eq:choice-of-c}) below, we have that the function $\tilde{g}_c : \mathbb{R}^r \rightarrow [0,1]$ described above satisfies
\begin{equation} \label{eq:carrot}
\E_{x \sim N(0,1)^n} \left[\left|
g(A_{1,1}(x), \dots, A_{d,m_d}(x)) - \tilde{g}_c(A_{1,1}(x), \dots, A_{d,m_d}(x))
\right| \right] \leq O(\eps).
\end{equation}
\end{lemma}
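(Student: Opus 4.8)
The plan is to bound the left-hand side of \eqref{eq:carrot} by splitting the expectation according to whether the point $A(x) \eqdef (A_{1,1}(x),\dots,A_{d,m_d}(x))$ lands close to the boundary $\partial R$ or far from it. On the ``far'' region the mollification error \eqref{eqn:moll-error} is small pointwise, and on the ``close'' region we bound the error crudely by $1$ and argue that the ``close'' region itself has small probability. The key inequality \eqref{eqn:moll-error} tells us that $|g(A(x)) - \tilde g_c(A(x))| \le \min\{1,\,O(r^2/(c^2\,\dist(A(x),\partial R)^2))\}$, so for any threshold $t>0$ we may write
\[
\E_x\big[|g(A(x)) - \tilde g_c(A(x))|\big] \le \Pr_x[\dist(A(x),\partial R) \le t] + O\!\left(\frac{r^2}{c^2 t^2}\right).
\]
The first step is therefore to control $\Pr_x[\dist(A(x),\partial R) \le t]$. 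Since $\partial R \subseteq \{y : \phi(y)=0\}$ and $\phi$ is a degree-$d$ polynomial, if $\dist(A(x),\partial R)\le t$ then (using that the gradient of $\phi$ is itself a polynomial and hence that $\phi$ is locally Lipschitz with a polynomially-bounded constant on a bounded region, or more directly that $\tilde p_{new}(x)=\phi(A(x))$ together with an anti-concentration argument on $\phi$ restricted near $A(x)$) we have that $|\phi(A(x))|=|\tilde p_{new}(x)|$ is small. More precisely, I would first establish a global bound: with probability $1-O(\eps)$ over $x\sim N(0,1)^n$, the point $A(x)$ lies in a Euclidean ball of radius $\rho = O_{d,\eps}(1)\cdot\poly\log$ (each $A_{j,k}$ is a variance-$1$ degree-$d$ Gaussian polynomial, so by the degree-$d$ Chernoff bound of Theorem \ref{thm:dcb} all $r = O_{d,\eps}(1)$ of them are simultaneously at most $\rho$ in magnitude with probability $1-O(\eps)$). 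On this ball $\phi$ is $K$-Lipschitz for $K = O_{d,\eps}(1)$ (its coefficients are bounded, $r$ is bounded, and the radius is bounded), so $\dist(A(x),\partial R)\le t$ forces $|\tilde p_{new}(x)| = |\phi(A(x))| \le Kt$.

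The heart of the argument is then an anti-concentration bound on $\tilde p_{new}$: by Theorem \ref{thm:cw} applied to the degree-$d$ Gaussian polynomial $\tilde p_{new}$ (whose variance lies in $[1/2,3/2]$, hence is bounded away from $0$), we get $\Pr_x[|\tilde p_{new}(x)| \le Kt] \le O(d\,(Kt)^{1/d})$. Combining, $\Pr_x[\dist(A(x),\partial R)\le t] \le O(\eps) + O(d\,(Kt)^{1/d})$. Plugging into the split above yields
\[
\E_x\big[|g(A(x)) - \tilde g_c(A(x))|\big] \le O(\eps) + O\!\big(d\,(Kt)^{1/d}\big) + O\!\left(\frac{r^2}{c^2 t^2}\right).
\]
Now optimize: choose $t$ a sufficiently small $O_{d,\eps}(1)$-quantity so that $O(d(Kt)^{1/d}) \le \eps$ — i.e. $t = (\eps/(Cd))^{d}/K$ for a large constant $C$ — and then choose $c$ large enough (but still $O_{d,\eps}(1)$) that $r^2/(c^2 t^2) \le \eps$, i.e. $c = r/(t\sqrt{\eps}) = O_{d,\eps}(1)$; this is the choice referenced as \eqref{eq:choice-of-c}. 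With these choices the total is $O(\eps)$, as required.

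The main obstacle I anticipate is making the ``$\phi$ is Lipschitz on the relevant ball, hence $\dist(A(x),\partial R)$ small $\Rightarrow$ $|\phi(A(x))|$ small'' step fully rigorous: one must be careful that $\partial R$ could in principle be empty or behave badly, and that the local Lipschitz constant of the degree-$d$ polynomial $\phi$ really is $O_{d,\eps}(1)$ on the ball of radius $\rho$ — this needs the explicit coefficient bounds on $h_{new,j}$ (each coefficient is a bounded-magnitude rational after rounding) and the bound $r=O_{d,\eps}(1)$ on the number of variables, and the radius bound $\rho=O_{d,\eps}(1)$ which itself comes from the degree-$d$ Chernoff bound. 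A secondary technical point is handling the event that $A(x)$ falls outside the radius-$\rho$ ball: there the mollification error is trivially at most $1$, but this event has probability only $O(\eps)$ by Theorem \ref{thm:dcb}, so it contributes $O(\eps)$ and is absorbed. Once these two points are handled carefully, the rest is the elementary optimization of $t$ and $c$ sketched above.
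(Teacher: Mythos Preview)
Your proposal is correct and follows essentially the same approach as the paper's proof: both use the degree-$d$ Chernoff bound (Theorem~\ref{thm:dcb}) to confine $A(x)$ to a bounded region, establish a Lipschitz-type estimate for $\phi$ on that region (the paper isolates this as a separate Claim, working with an $\ell_\infty$ ball rather than your Euclidean ball), apply Carbery--Wright anti-concentration (Theorem~\ref{thm:cw}) to $\tilde p_{new}=\phi\circ A$, invoke the mollification error bound \eqref{eqn:moll-error} on the remaining region, and then optimize the threshold and $c$ exactly as you outline. Your choice $c = r/(t\sqrt{\eps})$ matches the paper's \eqref{eq:choice-of-c} with $t$ playing the role of $\delta$.
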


\begin{proof}
We will use the following claim:
\begin{claim}\label{claim:a}
Let $x, y \in \R^{r}$ and $\Vert x \Vert_{\infty}
\le B$. If $\Vert x - y \Vert_2 \le \delta \leq B$, then $|\phi(x) - \phi(y)|
\le d (2B)^d r^{d/2} \cdot  \sqrt{S} \cdot \delta.
\ignore{ d \cdot (2B)^d \cdot r^d \cdot S \cdot \delta}$
\end{claim}

\begin{proof}
Recall that $\phi(x)$ is a multilinear
degree-$d$ polynomial in $r$ variables for which the sum of squares of
coefficients is at most $S$. \ignore{Since the number of monomials in $\phi(x)$ is at most $r^d$,
we have that the sum of the absolute values of all coefficients of $\phi(x)$ is at most
$r^d \cdot S$.}  Let us write $\phi$ as
$$
\phi(x) = \sum_{\mathcal{A} \in \binom{[r]}{ \leq d}} c_\mathcal{A} x_\mathcal{A}
$$
where $x_\mathcal{A}$ represents the monomial $\prod_{i \in {\cal A}} x_i$ corresponding to the set $\mathcal{A}$, so we have $\sum_{\mathcal{A}} (c_{\mathcal{A}})^2 \le S$.
For any fixed $\mathcal{A} \in \binom{[r]}{\leq d}$, a simple application of the triangle
inequality across the (at most) $d$ elements of ${\cal A}$ gives that
$$
|x_{\mathcal{A}} - y_{\mathcal{A}}| \le d \cdot (2B)^d \cdot \delta.
$$
Since the number of monomials in $\phi(x)$ is at most $r^d$, using Cauchy-Schwarz we get
that
\[
|\phi(x)-\phi(y)| =
\left| \sum_{{\cal A}} c_{\cal A} (x_{\cal A}-y_{\cal A})\right|
\leq
\sqrt{\sum_{{\cal A}} (c_{\cal A})^2} \cdot \sqrt{\sum_{\cal A} (x_{\cal A}-y_{\cal A})^2}
\leq \sqrt{S} \cdot r^{d/2} \cdot d \cdot (2B)^d \cdot \delta,
\]
the claimed bound.
\end{proof}

By Claim  \ref{claim:a}, we have that if
$x \in \R^r$ has $\|x\|_\infty \leq B$ and
$|\phi(x)| > d (2B)^d r^{d/2} \cdot  \sqrt{S} \cdot \delta,$ where $\delta \leq B$,
then $\|\dist(x,\partial R)\|_2 > \delta$.
By (\ref{eqn:moll-error}), if $\dist(x,\partial R)  > \delta$ then
$|g(x)-\tilde{g}_c(x)| \leq
O({\frac {r^2}{c^2 \delta^2}}).$
Hence provided that we take $\delta \leq B$, we may upper bound
$\E_{x \sim N(0,1)^n} \left[\left|
g(A_{1,1}(x), \dots, A_{d,m_d}(x)) - \tilde{g}_c(A_{1,1}(x),
\dots, A_{d,m_d}(x)) \right| \right]$ (the LHS of
(\ref{eq:carrot}) by
\begin{eqnarray}
&& \Pr_{x \sim N(0,1)^n} \left[\max_{i \in [1,\ldots, d], j \in
 [1,\ldots, m(i)]}|A_{i,j}(x)| >B\right] \notag \\ &&+
\Pr_{x \sim N(0,1)^n}[|\phi(A_{1,1}(x), \dots, A_{d,m_d}(x))| \le d (2B)^d r^{d/2} \cdot  \sqrt{S} \cdot \delta] +
O\left({\frac {r^2}{c^2 \delta^2}}\right). \label{eq:salt}
\end{eqnarray}

To bound the second summand above, we recall that $\phi(A_{1,1}(x), \dots, A_{d,m_d}(x))$ is a multilinear degree-$d$
polynomial whose variance is at least $1/2$.
By the anti-concentration bound Theorem \ref{thm:cw}, we get that
\ignore{\[
\Pr_{x \sim N(0,1)^n}[|\phi(A_{1,1}(x), \dots, A_{d,m_d}(x))| \leq d \cdot (2B)^d r^{d/2} \cdot  \sqrt{S} \cdot \delta] \leq
O\left(d  \cdot B\cdot \sqrt{r} \cdot \delta^{1/d}\right).
\]}
\[
\Pr_{x \sim N(0,1)^n}[|\phi(A_{1,1}(x), \dots, A_{d,m_d}(x))| \leq d \cdot (2B)^d r^{d/2} \cdot  \sqrt{S} \cdot \delta] \leq
 O\left(d  \cdot B\cdot \sqrt{r} \cdot S^{1/2d} \cdot \delta^{1/d}\right).
\]
To bound the first summand,
we observe that since each $A_{i,j}$ is a mean-0 variance-1 degree-$d$ polynomial,
by the degree-$d$ Chernoff bound (Theorem~\ref{thm:dcb}) and a union bound over the $r$ polynomials
$A_{i,j}$, for any $B > e^d$ we have that
\begin{equation}\label{eq:dcb-union}
\Pr_{x \sim N(0,1)^n} \left[\max_{i \in [1,\ldots, d], j \in [1,\ldots, m(i)]}|A_{i,j}(x)| >B\right] \le r \cdot d \cdot e^{-\Omega(B^{2/d})}.
\end{equation}
Thus we get that (\ref{eq:salt}) is at most
\begin{equation} \label{eq:onion}
r \cdot d \cdot e^{-\Omega(B^{2/d})}
+ O\left(d \cdot B\cdot \sqrt{r} \cdot S^{1/2d} \cdot \delta^{1/d}\right)
+ O\left({\frac {r^2}{c^2 \delta^2}}\right).
\end{equation}
Choosing
\begin{equation} \label{eq:choice-of-c}
B = \left(\Omega(1) \cdot \ln {\frac {rd} \eps}\right)^{d/2}, \quad
\delta = \left({\frac {\eps}{d\cdot B \cdot \sqrt{r} \cdot S^{1/2d}}}\right)^d,
\quad \text{and} \quad
c = {\frac r {\delta \sqrt{\eps}}}
\end{equation}
(note that these choices satisfy the requirements that $B \geq e^d$
and $\delta \leq B$), we get that each of the three summands constituting
(\ref{eq:onion}) is $O(\eps)$, and Lemma \ref{lemma:error-bound} is proved.\end{proof}

\ignore{
%
%
%
%
}

Using Condition (4) of Theorem~\ref{thm:main-decomp}, we have that each $A_{j,k}$ is $\eta$-eigenregular where $\eta \le \beta(r+S)$. Now since $\|\tilde{g}_c^{(2)}\|_\infty \leq 4c^2$ and each $A_{j,k}(x)$ is a
mean-0,  degree-$d$ Gaussian polynomial with $\Var[A_{j,k}(x)]=1$,
we may apply our CLT, Theorem \ref{thm:mainclt}, and we get that
\begin{equation} \label{eq:butter}
\left|\E_{x \sim N(0,1)^n}[\tilde{g}_c(A_{1,1}(x), \dots, A_{d,m_d}(x))]
- \E_{G' \sim N(0^r,\Sigma')}[\tilde{g}_c(G')] \right| \leq
2^{O(d \log d)} \cdot r^2 \cdot \sqrt{\beta(r+s)} \cdot 4c^2,
\end{equation}
where $\Sigma' \in \R^{r \times r}$ is the covariance matrix corresponding to the
$A_{j,k}$'s (note that the variance bound on each $A_{j,k}$ ensures that the
diagonal entries are indeed all 1 as claimed).
It is easy to see that there exists a choice of $C$ in our definition of the function $\beta$ (see (\ref{eq:def-beta})) which has the property
{
\begin{equation} \label{eq:eta-choice}
\beta(r+s)  \le  {\frac {\eps^2}
{2^{O(d \log d)} r^4 c^4}}.
\end{equation}
}
As a result,
the triangle inequality applied to (\ref{eq:butter}) and Lemma \ref{lemma:error-bound}
gives that
\[
\left|
\Pr_{x \sim N(0,1)^n} [\tilde{p}(x) \ge 0] - \E_{G' \sim N(0^r,\Sigma')}[\tilde{g}_c(G')] \right| \leq
O(\eps).
\]
{We are almost done; it remains only to pass from $\Sigma'$ to $\Sigma,$ which we do using
Lemma \ref{lem:round-matrix}.  By
Theorem~\ref{DKNfocs10} we have that $\|\tilde{g}_{c}^{(1)}\|_\infty \leq 2c$ and hence
$\tilde{g}_c$ is $2c$-Lipschitz.  Observing that each entry of $\Sigma'$ is in $[-1,1]$, we have
that $\|\Sigma'\|_2 = O_{d,\eps}(1)$.  Hence by taking  $\Sigma \in \R^{r \times r}$ to be
a psd matrix that is sufficiently close to $\Sigma'$ with respect to
$\| \cdot \|_2$, we get that $\Sigma$ has all its coefficients
rational numbers with numerator and denominator of  magnitude $O_{d,\eps}(1)$, and from
Lemma \ref{lem:round-matrix}  we get that
$|\E_{G \sim N(0^r,\Sigma)}[\tilde{g}_c(G)] - \E_{G' \sim N(0^r,\Sigma')}[\tilde{g}_c(G')] \leq O(\eps).$
Thus Theorem \ref{thm:finite-dim} is proved.}
 \qed

\subsection{Solving the $O_{d,\eps}(1)$-dimensional problem in $O_{d,\eps}(1)$ time} \label{sec:alg}

The last step is to establish the following:
\begin{theorem} \label{thm:alg}
There is a deterministic $O_{d,\eps}(1)$-time algorithm
which, given as input the output $r, \Sigma, \tilde{g}_c$ of
Theorem \ref{thm:finite-dim} and the value of $\eps > 0$,
outputs a value $\nu$ such that
\[
\left| \nu - \mathbf{E}_{(G_1,\dots,G_r)  \sim N(0^r,\Sigma)} [\tilde{g}_c(G_1, \ldots, G_r)]\right| \le \epsilon.
\]
\end{theorem}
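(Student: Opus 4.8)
The plan is to perform a direct deterministic numerical integration, exploiting the fact that, by Theorem \ref{thm:finite-dim}, the dimension $r$, the bit complexity of the entries of $\Sigma$, and the description of $\tilde g_c$ (i.e.\ of the degree-$d$ polynomial $\phi$ and of $c = O_{d,\eps}(1)$) are all $O_{d,\eps}(1)$, and $\tilde g_c$ is $2c$-Lipschitz and takes values in $[0,1]$. First I would dispose of a trivial degeneracy: if $\phi \equiv 0$ then $\tilde g_c$ is constant and the answer is $0$ or $1$; also, replacing $\Sigma$ by $\Sigma + \lambda I$ for a suitably small $\lambda = O_{d,\eps}(1)$ changes $\E_{G \sim N(0^r,\Sigma)}[\tilde g_c(G)]$ by at most $O(\eps)$ by Lemma \ref{lem:round-matrix} (which applies since $\tilde g_c$ is $2c$-Lipschitz), so we may assume $\Sigma \succ 0$ and work with the explicit density $p_\Sigma(x) = (2\pi)^{-r/2}(\det\Sigma)^{-1/2}\exp(-\tfrac12 x^{T}\Sigma^{-1}x)$.

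Next I would unfold the mollification. Using $\tilde g_c(x) = \int_{\R^r}\sign_{0,1}(\phi(x-y))\,B_c(y)\,dy$ (with $B_c(y) = c^r\widehat b(cy)^2$, a probability density), the target equals
\[
\E_{G \sim N(0^r,\Sigma)}[\tilde g_c(G)] \;=\; \int_{\R^r}\!\int_{\R^r} \sign_{0,1}\!\big(\phi(x-y)\big)\,B_c(y)\,p_\Sigma(x)\,dy\,dx .
\]
I would then truncate both variables to a ball of radius $\rho = O_{d,\eps}(1)$: the $x$-tail is controlled by a standard Gaussian concentration bound (using $\|\Sigma\|_2 = O_{d,\eps}(1)$), and the $y$-tail by the polynomial decay of $\widehat b$ (the Fourier transform of the compactly supported Lipschitz bump $b$ on the unit ball), which makes $\int_{\|y\|>\rho}B_c(y)\,dy$ small at a fixed polynomial rate; since $\sign_{0,1}\in[0,1]$ and $\int B_c = \int p_\Sigma = 1$, both truncations cost at most $\eps/4$. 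On the truncated domain I would lay down a grid of $2r$-dimensional cubes of side $\delta = O_{d,\eps}(1)$ and output the Riemann sum $\nu = \delta^{2r}\sum_Q \sign_{0,1}(\phi(x_Q - y_Q))\,B_c(y_Q)\,p_\Sigma(x_Q)$ over the $(2\rho/\delta)^{2r} = O_{d,\eps}(1)$ cube centers $(x_Q,y_Q)$; each summand is computable in $O_{d,\eps}(1)$ time, with $\phi$ and $p_\Sigma$ evaluated directly and $B_c(y_Q) = c^r\widehat b(cy_Q)^2$ approximated to any desired precision by a further fixed-dimensional quadrature of the explicit integral defining $\widehat b$ over the unit ball.

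The error of the Riemann sum splits into a smooth contribution and a discontinuous one. For the smooth contribution: $B_c$ and $p_\Sigma$ are $O_{d,\eps}(1)$-Lipschitz and $O_{d,\eps}(1)$-bounded on the truncated domain, so replacing each by its value at the cube center costs $O_{d,\eps}(1)\cdot\delta$, which is at most $\eps/4$ once $\delta$ is a small enough $O_{d,\eps}(1)$ quantity. For the discontinuous contribution: $\sign_{0,1}(\phi(x-y))$ is constant on a cube $Q$ unless $\phi(x-y)$ vanishes somewhere on $Q$, which forces $x_Q-y_Q$ to lie within $O(\delta\sqrt r)$ of $\{\phi = 0\}$, hence (by Lipschitzness of $\phi$ on $B_{2\rho}$) within $\{z : |\phi(z)| \le O_{d,\eps}(1)\cdot\delta\}$; such cubes therefore lie in a set of volume at most $O_{d,\eps}(1)\cdot\delta^{1/d}$, using the bound $\mathrm{vol}\{z \in B_{2\rho}: |\phi(z)|\le\tau\} \le e^{2\rho^2}(2\pi)^{r/2}\cdot O\!\big(d(\tau/\sqrt{\Var\phi})^{1/d}\big)$ obtained by comparing Lebesgue measure on $B_{2\rho}$ with the Gaussian measure and invoking the anticoncentration bound of Theorem \ref{thm:cw} (here $\Var\phi := \Var_{z\sim N(0,1)^r}[\phi(z)] > 0$ since $\phi\not\equiv 0$, and the change of variables $(x,y)\mapsto(x-y,y)$ is volume-preserving). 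As each such cube contributes at most $\delta^{2r}\|B_c\|_\infty\|p_\Sigma\|_\infty$ to the error, this part is also at most $\eps/4$ for $\delta$ a small enough $O_{d,\eps}(1)$. Summing the four $\eps/4$ contributions gives the claimed $\eps$-accuracy, and since $\rho$, $\delta$, the grid size, and the precision of each $\widehat b$-quadrature are all $O_{d,\eps}(1)$, the running time is $O_{d,\eps}(1)$.

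The main obstacle is the discontinuity of $\sign_{0,1}\circ\phi$ inside the integral defining $\tilde g_c$: controlling the quadrature error it causes is exactly what forces us to invoke a Lebesgue-measure anticoncentration estimate for $\phi$ — which, conveniently, we can extract from Theorem \ref{thm:cw} via the density comparison above — whereas everything else is a routine accounting of errors among quantities bounded by functions of $d$ and $\eps$ alone.
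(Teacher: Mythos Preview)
Your proof is correct and takes a somewhat different route from the paper's. The paper does a \emph{two-stage} discretization: it first discretizes the outer expectation $\E_{G\sim N(0^r,\Sigma)}[\tilde g_c(G)]$ on a grid in $[-Z,Z]^r$, relying only on the $2c$-Lipschitzness of $\tilde g_c$ (this is exactly what the mollification is engineered to provide), and then, for each grid point $x$, separately approximates the inner integral $\tilde g_c(x)=\int g(x-y)B_c(y)\,dy$ by a second $r$-dimensional quadrature (Claim \ref{claim:end}). You instead unfold the mollification up front and discretize a single $2r$-dimensional integral of $\sign_{0,1}(\phi(x-y))\,B_c(y)\,p_\Sigma(x)$, handling the discontinuity of $\sign_{0,1}\circ\phi$ explicitly via a Lebesgue-versus-Gaussian density comparison and Carbery--Wright (Theorem \ref{thm:cw}). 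The paper's sketch of Claim \ref{claim:end} never spells out how the discontinuity is controlled in the inner quadrature; your anticoncentration step is exactly the missing ingredient, so your argument is in that sense more complete. Conversely, the paper's two-stage structure makes direct use of the Lipschitzness of $\tilde g_c$ and keeps each quadrature $r$-dimensional rather than $2r$-dimensional.

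Two small fixes: first, ``$\Var\phi>0$ since $\phi\not\equiv 0$'' is not quite right (a nonzero constant has variance zero); you should dispose of the case ``$\phi$ is constant'' rather than just ``$\phi\equiv 0$'', and when $\phi$ is nonconstant, note that the rounding in Theorem \ref{thm:finite-dim} makes every nonzero coefficient at least a fixed $1/O_{d,\eps}(1)$, so $\Var\phi$ is bounded below by $1/O_{d,\eps}(1)$ and the Carbery--Wright bound yields a genuine $O_{d,\eps}(1)\cdot\delta^{1/d}$. Second, after perturbing $\Sigma$ to $\Sigma+\lambda I$, record that $\lambda$ is a fixed $1/O_{d,\eps}(1)$ so that $\|p_{\Sigma+\lambda I}\|_\infty\le (2\pi\lambda)^{-r/2}=O_{d,\eps}(1)$, which you need when bounding the contribution of the bad cubes.
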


We only sketch the proof since the details of the argument are tedious and we are content with an $O_{d,\eps}(1)$ time bound.
The first observation is that since each entry of the covariance matrix $\Sigma$ is at most 1 in magnitude and $\tilde{g}_c$ is
everywhere bounded in $[0,1]$,  it suffices to
estimate the expected value conditioned on $(G_1,\dots,G_r)\sim N(0^r,\Sigma)$ lying in an origin-centered cube $[-Z,Z]^r$
for some $Z=O_{d,\eps}(1).$  Given this, it is possible to simulate this conditional normal distribution with a discrete probability
distribution $X$ supported on a finite set of points in $[-Z,Z]^r$.  Recall that $\Sigma$ has entries as specified in Theorem \ref{thm:finite-dim} (rational numbers of magnitude $O_{d,\eps}(1)$).
Since $\tilde{g}_c$ is $2c$-Lipschitz as noted earlier,
by using a sufficiently fine grid of $O_{d,\eps}(1)$
points in $[-Z,Z]^r$ and (deterministically) estimating the probability that $N(0^r,\Sigma)$ assigns to each grid point to a sufficiently small $1/O_{d,\eps}(1)$ additive error, Theorem \ref{thm:alg} reduces to the following claim:

\begin{claim} \label{claim:end}
Given any point $x \in [-Z,Z]^r$ and any accuracy parameter $\xi > 0$,
the function $\tilde{g}_c(x)$ can be computed to within additive accuracy $\pm \xi$ in time $O_{d,\eps,\xi}(1)$.
\end{claim}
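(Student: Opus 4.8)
The plan is to compute $\tilde g_c(x)$ by a brute-force numerical integration over a bounded region: smoothness of $B_c$ controls the per-cell error, and polynomial anti-concentration of $\phi$ controls the error coming from the cells on which $g$ is discontinuous. Since $r$, $d$, the bit-complexity of $\phi$, and the magnitude of $x$ are all $O_{d,\eps}(1)$, a grid of $O_{d,\eps,\xi}(1)$ cells suffices. Applying the change of variables $w=x-y$ to (\ref{eq:desc}) gives
\[
\tilde g_c(x)=\int_{\R^r} g(w)\,B_c(x-w)\,dw=\int_{\{w\,:\,\phi(w)\ge 0\}} B_c(x-w)\,dw .
\]
First I would record three facts about $B_c$. \emph{(i) Tails:} the bump $b$ is Lipschitz with compact support, hence lies in $H^1(\R^r)$, so $\int(1+\|z\|_2^2)\widehat b(z)^2\,dz=O_r(1)$; since $B_c(y)=c^r\widehat b(cy)^2$ this gives $\int_{y\notin[-T,T]^r}B_c(y)\,dy=\int_{\|z\|_\infty>cT}\widehat b(z)^2\,dz=O_r(1/(cT)^2)$ for all $T>0$. \emph{(ii) Size and smoothness:} differentiating under the integral sign in the definition of $\widehat b$ (using that $b$ is bounded with compact support) shows $\|\widehat b\|_\infty,\|\nabla\widehat b\|_\infty=O_r(1)$, so $\|B_c\|_\infty=O_{d,\eps}(1)$ and $B_c$ is $O_{d,\eps}(1)$-Lipschitz. \emph{(iii) Evaluability:} for any fixed $z$ with $\|z\|_2=O_{d,\eps,\xi}(1)$ and any $\nu>0$, one computes $\widehat b(z)$, hence $B_c(z)$, to additive error $\pm\nu$ in time $O_{d,\eps,\xi,\nu}(1)$ by a sufficiently fine Riemann sum over the unit ball (the integrand is bounded and Lipschitz with constants $O_{d,\eps,\xi}(1)$), and the normalizing constant $C_r$ is computed the same way.

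Next, fix $T=O_{d,\eps,\xi}(1)$ with $O_r(1/(cT)^2)\le\xi/4$ and $T\ge Z$, and set $Q_0:=[-2T,2T]^r$, which contains $\{w:\|x-w\|_\infty\le T\}$ since $\|x\|_\infty\le Z\le T$; then by (i), $\big|\tilde g_c(x)-\int_{Q_0\cap\{\phi\ge0\}}B_c(x-w)\,dw\big|\le\xi/4$. If $\phi$ is constant then $g$ is constant and $\tilde g_c(x)=\mathbf 1[\phi\ge0]$, so assume $\phi$ is non-constant. Partition $Q_0$ into cells of side $\delta$ and output $\sum_Q \widehat B_Q\cdot V_Q$, where $\widehat B_Q$ approximates $B_c(x-z_Q)$ ($z_Q$ the center of $Q$) to additive error $\xi/(8\cdot\#\text{cells})$ via (iii), and $V_Q$ estimates $\mathrm{vol}(Q\cap\{\phi\ge0\})$ by the following recursion. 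Let $L_\phi=O_{d,\eps,\xi}(1)$ be a Lipschitz bound for $\phi$ on $Q_0$ (available since $\phi$ has $O_{d,\eps}(1)$ monomials with $O_{d,\eps}(1)$-size coefficients and $Q_0$ has side $O_{d,\eps,\xi}(1)$). To estimate $\mathrm{vol}(Q'\cap\{\phi\ge0\})$ for a cube $Q'$ of side $s$: evaluate $\phi$ at the center of $Q'$; if $|\phi(\mathrm{center})|>L_\phi\sqrt r\,s$ then $\phi$ has constant sign on $Q'$, so return $\mathrm{vol}(Q')$ or $0$ accordingly; else, if $s<\delta_{\min}$ return $\mathrm{vol}(Q')/2$, and otherwise recurse on the $2^r$ subcubes of half the side.

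It remains to bound the total error. Replacing $B_c$ by its values at cell centers costs, by (ii), at most $\sum_Q O_{d,\eps}(1)\cdot\delta\sqrt r\cdot\mathrm{vol}(Q)=O_{d,\eps}(1)\,\delta\sqrt r\,\mathrm{vol}(Q_0)\le\xi/4$ once $\delta=O_{d,\eps,\xi}(1)$ is small enough; the errors in the $\widehat B_Q$'s sum to $\le\xi/8$ by construction. In each recursion, constant-sign cubes contribute no error, and each cube $Q'$ that bottoms out at side $<\delta_{\min}$ contributes error at most $\mathrm{vol}(Q')$ and (by the Lipschitz bound) satisfies $|\phi|\le2L_\phi\sqrt r\,\delta_{\min}$ throughout $Q'$; since these cubes are pairwise disjoint across all $Q$, the total recursion error is at most $\|B_c\|_\infty\cdot\mathrm{vol}\big(\{w\in Q_0:|\phi(w)|\le2L_\phi\sqrt r\,\delta_{\min}\}\big)$. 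This is the crux: because the coefficients of $\phi$ are rationals with numerator and denominator of magnitude $O_{d,\eps}(1)$, there are only $O_{d,\eps}(1)$ possible polynomials $\phi$, and a non-constant one has $\sigma^2:=\Var_{w\sim U(Q_0)}[\phi(w)]>0$, hence $\sigma^2\ge 1/O_{d,\eps,\xi}(1)$; and since the uniform distribution $U(Q_0)$ on the box $Q_0$ is a product of log-concave measures, the Carbery--Wright anti-concentration inequality \cite{CW:01} gives $\Pr_{w\sim U(Q_0)}[|\phi(w)|\le\lambda\sigma]\le O(d)\lambda^{1/d}$ for all $\lambda>0$. Taking $\lambda=2L_\phi\sqrt r\,\delta_{\min}/\sigma$ bounds the bad set by $\mathrm{vol}(Q_0)\cdot O(d)\big(2L_\phi\sqrt r\,\delta_{\min}/\sigma\big)^{1/d}\le\xi/(8\|B_c\|_\infty)$ once $\delta_{\min}=O_{d,\eps,\xi}(1)$ is small enough. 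Summing the contributions gives total error $\le\xi$.

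Finally, the grid on $Q_0$ has $(4T/\delta)^r=O_{d,\eps,\xi}(1)$ cells; each recursive volume estimate has depth $O(\log_2(\delta/\delta_{\min}))=O_{d,\eps,\xi}(1)$ and branching $2^r$, hence $O_{d,\eps,\xi}(1)$ nodes, each requiring one evaluation of a bounded-complexity polynomial; and each $\widehat B_Q$ is computed in time $O_{d,\eps,\xi}(1)$ by (iii). The whole computation therefore runs in $O_{d,\eps,\xi}(1)$ time, proving the claim. The only genuinely nontrivial ingredient is the anti-concentration estimate of the previous paragraph together with the lower bound $\sigma^2\ge 1/O_{d,\eps,\xi}(1)$; everything else is routine discretization, and the main place care is needed is making the dependence of $\delta_{\min}$ (and hence the cell count) on $\xi$, $d$, $\eps$ explicit and finite.
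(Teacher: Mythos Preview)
Your argument is correct and follows the same overall strategy as the paper's sketch: truncate the convolution to a bounded box using the tail decay of $B_c$, then discretize. The paper's proof is deliberately terse and simply asserts that a sufficiently fine grid works; you make explicit the one step it elides, namely controlling the cells where $g=\sign_{0,1}(\phi)$ changes sign via Carbery--Wright anti-concentration for the uniform measure on $Q_0$ together with a lower bound on $\Var_{U(Q_0)}[\phi]$ (obtained from the finiteness of the set of admissible $\phi$'s). This is exactly the sort of detail the paper waves away with ``the details of the argument are tedious,'' so your proof should be read as a rigorous fleshing-out of the same approach rather than a different one.
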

\begin{proof}
We only sketch the proof of Claim \ref{claim:end} here as the argument is routine and the
details are tedious.  We first note that (as can be easily verified from the
description of $B_c$ given earlier) there
is an (easily computed) value $W = O_{d,\eps}(1)$ such that if $S = \{x \in \R^r : \Vert x \Vert_{\infty} > W \}$, then we have
$
\int_{z \in S}B_c(z) \le \xi/2.$
As a consequence, since $g$ is everywhere bounded in $[0,1]$, we get that
\begin{equation}\label{eq:prec1}
 \left| \int_{y \in \R^r} g(x-y)  \cdot B_c(y) dy -  \int_{y \in (\R^r \setminus S)} g(x-y)  \cdot B_c(y) dy \right| \le \xi/2,
\end{equation}
and hence it suffices to estimate
\begin{equation} \label{eq:S}
\int_{y \in [-W,W]^r} g(x-y)  \cdot B_c(y) dy
\end{equation}
to within an additive $\pm \xi/2.$
Now observe that $\|B_c\|_{\infty} = O_{d,\eps}(1)$ and $\|B_c^{(1)}\|_{\infty}=O_{d,\eps}(1).$ It follows that given any $y \in [-W,W]^r$ and any accuracy parameter $\rho > 0$,
we can compute $B_c(y)$ to additive accuracy $\pm \rho$ in time $O_{d,\eps,\rho}(1)$.
Recalling that $g(x)=\sign_{0,1}(\phi(x))$ where $\phi(x)$ is a degree-$d$ polynomial
whose coefficients are $O_{d,\eps}(1)$-size rational numbers, it follows that by
taking a sufficiently fine grid of $O_{d,\eps,\xi,W}(1)$ points in $[-W,W]^r$,
we can use such a grid to estimate (\ref{eq:S}) to an additive $\pm \xi/2$ in $O_{d,\eps,\xi,W}(1)$
time as desired.
\end{proof}

\fi


\ifnum\confversion=0


\section{Deterministic approximate counting for degree-$d$ polynomials
over $\{-1,1\}^n$}
\label{sec:from-Gaussian-to-Boolean}

In this section we use Theorem \ref{thm:degd-main-gauss} to prove Theorem \ref{thm:main}.  Since the
arguments here are identical to those used in \cite{DDS13:deg2count} (where an algorithm for deterministic approximate counting of degree-2 PTF satisfying assignments over $N(0,1)^n$ is used to
obtain an algorithm for satisfying assignments over $\{-1,1\}^n$), we only sketch
the argument here.

We recall the ``regularity lemma for PTFs'' of \cite{DSTW:10}.
This lemma says that every degree-$d$ PTF $\sign(p(x))$ over $\{-1,1\}^n$ can
be expressed as a shallow decision tree with variables at the internal
nodes and degree-$d$ PTFs at the leaves, such that a random
path in the decision tree is quite likely to reach a leaf that has a
``close-to-regular'' PTF.
As explained in \cite{DDS13:deg2count}, the \cite{DSTW:10} proof actually provides an
efficient deterministic procedure for constructing such a decision tree given $p$ as input,
and thus we have the following lemma (see Theorem 36 of \cite{DDS13:deg2count} for a
detailed explanation of how Theorem \ref{thm:algorithmic-regularity} follows
from the results of \cite{DSTW:10}):

\begin{theorem} \label{thm:algorithmic-regularity}
Let $p(x_1,\dots,x_n)$ be a multilinear degree-$d$ PTF.
Fix any $\tau>0$.  There is an algorithm $A_{\mathrm{Construct-Tree}}$
which, on input $p$ and a parameter $\tau > 0$, runs in
{$\poly(n,2^{\depth(d,\tau)})$}
time and outputs a decision tree $\T$ of depth
\[
\depth(d,\tau) := {\frac 1 \tau} \cdot \left(d \log {\frac 1 \tau}
\right)^{O(d)},
\]
where each internal node of the tree is labeled with a variable and
each leaf $\rho$ of the tree is labeled with a pair $(p_\rho,\mathrm{label}
(\rho))$
where $\mathrm{label}(\rho) \in \{+1,-1,\text{``fail''},\text{``regular''}\}.$
The tree $\T$ has the following properties:

\begin{enumerate}

\item Every input $x \in \{-1,1\}^n$ to the tree reaches a leaf
$\rho$ such that $p(x)=p_\rho(x)$;

\item If leaf $\rho$ has $\mathrm{label}(\rho) \in \{+1,-1\}$ then
$\Pr_{x \in \{-1,1\}^n}[\sign(p_\rho(x)) \neq \mathrm{label}(\rho)]
\leq \tau$;

\item If leaf $\rho$ has $\mathrm{label}(\rho) = \text{``regular''}$
then $p_\rho$ is $\tau$-regular; and

\item With probability at most $\tau$, a random path from the
root reaches a leaf $\rho$ such that $\mathrm{label}(\rho)=\text{``fail''}.
$

\end{enumerate}

\end{theorem}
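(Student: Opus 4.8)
The plan is to derive Theorem~\ref{thm:algorithmic-regularity} from the (non-algorithmic) regularity lemma for degree-$d$ PTFs of \cite{DSTW:10}, together with the observation that every step of the construction in that proof can be carried out deterministically and efficiently; a detailed exposition of exactly this reduction already appears as Theorem~36 of \cite{DDS13:deg2count}, so here I only lay out the main steps.

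First I would recall how the tree $\T$ is built in \cite{DSTW:10}. Starting from the root, at a node reached by a restriction $\rho$ one computes the influences $\Inf_i(p_\rho)$ of all surviving variables $x_i$ in the restricted polynomial $p_\rho$. If some $\Inf_i(p_\rho)$ exceeds a threshold $\theta = \theta(d,\tau)$ chosen so that ``no influence exceeds $\theta$'' implies ``$\tau$-regular,'' one branches on the highest-influence such variable; otherwise the node becomes a leaf. The key quantitative input from \cite{DSTW:10} is a ``critical-index''-style bound showing that along any root-to-leaf path the number of branching steps needed to reach a regular leaf is at most $\depth(d,\tau) = \frac{1}{\tau}(d\log\frac{1}{\tau})^{O(d)}$; truncating the construction at this depth produces the depth-$\depth(d,\tau)$ tree, and the \cite{DSTW:10} analysis bounds by $\tau$ the probability that a uniform random path is truncated before becoming regular. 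These truncated leaves are exactly the ``fail'' leaves, which gives property~(4), while property~(1) holds by construction since each $x$ follows a unique path to a leaf $\rho$ on which $p$ agrees with $p_\rho$. Next I would handle leaf labeling: each non-fail leaf $\rho$ carries a $\tau$-regular polynomial $p_\rho$, so one deterministically computes $\E[p_\rho]$ and $\Var[p_\rho]$ exactly and applies the concentration bound for regular degree-$d$ polynomials from \cite{DSTW:10}. If $|\E[p_\rho]|$ is large enough relative to $\sqrt{\Var[p_\rho]}$ that $\sign(p_\rho)$ agrees with $\sign(\E[p_\rho])$ on a $(1-\tau)$ fraction of the cube, we set $\mathrm{label}(\rho)$ to that sign (property~(2)); otherwise we set $\mathrm{label}(\rho)=\text{``regular''}$ (property~(3)).

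The only remaining thing to check --- and the only genuinely algorithmic content --- is that all of these operations are deterministic and run in $\poly(n,2^{\depth(d,\tau)})$ time. Since $p$, and hence every $p_\rho$, is a multilinear degree-$d$ polynomial, its Fourier coefficients are read off directly, so each influence, each mean, and each variance is computed exactly in $\poly(n^d)$ time, and the regularity test and the ``close-to-constant'' test are then simple arithmetic comparisons. As the tree has at most $2^{\depth(d,\tau)}$ nodes and we do $\poly(n^d)$ work per node, the total running time is as claimed. I do not expect a real obstacle here: the substantive mathematics (the depth bound and the fail-probability bound) is entirely contained in \cite{DSTW:10}, and the algorithmic wrapper is routine bookkeeping that is already spelled out in \cite{DDS13:deg2count}; accordingly, the write-up of this theorem will be brief and will refer the reader to those sources for the details.
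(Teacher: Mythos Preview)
Your proposal is correct and matches the paper's approach exactly: the paper does not give a self-contained proof of this theorem but simply cites the regularity lemma of \cite{DSTW:10} together with the observation (spelled out as Theorem~36 of \cite{DDS13:deg2count}) that the construction is deterministically and efficiently computable. Your sketch is in fact more detailed than what the paper itself provides.
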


\noindent {\bf Proof of Theorem~\ref{thm:main}:}  The
algorithm for approximating $\Pr_{x \in \{-1,1\}^n}[
p(x) \geq 0]$ to $\pm \eps$ works as follows.  It
first runs $A_{\mathrm{Construct-Tree}}$
with its ``$\tau$'' parameter set to $\Theta((\eps/d)^{{4d+1}})$
to construct the decision tree $\T$.  It then iterates over
all leaves $\rho$ of the tree.  For each leaf $\rho$ at depth
$d_\rho$ that has $\mathrm{label}(\rho)=+1$ it adds $2^{-d_\rho}$ to
$v$ (which is initially zero), and for each leaf $\rho$ at depth
$d_\rho$ that has $\mathrm{label}(\rho)=\text{``regular''}$
it runs the algorithm of Theorem~\ref{thm:degd-main-gauss} on $p_\rho$
(with its ``$\eps$'' parameter set to $\Theta((\eps/d)^{{4d+1}})$) to
obtain a value $v_\rho \in [0,1]$ and adds $v_\rho \cdot 2^{-d_\rho}$ to $v$.
It outputs the value $v \in [0,1]$ thus obtained.

Theorems~\ref{thm:algorithmic-regularity} and~\ref{thm:degd-main-gauss}
imply that the running time is as claimed.
To establish correctness of the algorithm we will use the
``invariance principle'' of \cite{MOO10} (see Theorem~2.1):

\begin{theorem}[\cite{MOO10}]
\label{thm:invariance} Let $p(x) = \littlesum_{S\subseteq[n], |S| \leq d}
p_S x_S $ be a degree-$d$
multilinear polynomial over $\{-1,1\}^n$ with
$\Var[p]=1$. Suppose each coordinate $i\in[n]$ has $\Inf_i(p) \leq \tau$. Then
\[ \sup_{t \in \R}|\Pr_x[p(x)\leq t] -
\Pr_{\mathcal{G} \sim N(0,1)^n}[p(\mathcal{G})\leq t]| \leq O(d\tau^{1/(
{4d+1})}).\]
 \end{theorem}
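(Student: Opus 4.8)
The approach is the Lindeberg replacement (hybrid) method combined with hypercontractivity and a mollification argument, following \cite{MOO10}. First I would prove a version of the statement for smooth test functions. Fix $\psi \in {\cal C}^3(\R)$ with $\|\psi'''\|_\infty < \infty$, and introduce hybrid vectors $z^{(i)} = (\mathcal{G}_1,\dots,\mathcal{G}_i,x_{i+1},\dots,x_n)$, so $z^{(0)}$ is all-Rademacher and $z^{(n)}$ is all-Gaussian. Telescoping,
\[
\E[\psi(p(z^{(0)}))] - \E[\psi(p(z^{(n)}))] = \littlesum_{i=1}^n \left( \E[\psi(p(z^{(i-1)}))] - \E[\psi(p(z^{(i)}))] \right).
\]
For a fixed $i$, multilinearity lets us write $p = a_i + b_i y$, where $a_i$ and $b_i$ are multilinear polynomials not involving the $i$-th coordinate $y$, and where $\E[b_i^2] = \Inf_i(p)$ regardless of whether the remaining coordinates are Rademacher or Gaussian (all have unit variance). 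Taylor-expanding $\psi(a_i + b_i y)$ to second order, the zeroth, first, and second order terms have identical expectation whether $y$ is Rademacher or $N(0,1)$ (using $\E[y]=0$ and $\E[y^2]=1$ in both cases). Hence each summand is at most $\tfrac16\|\psi'''\|_\infty(\E[|b_i|^3|x_i|^3] + \E[|b_i|^3|\mathcal{G}_i|^3]) = O(\|\psi'''\|_\infty \E[|b_i|^3])$.

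Next I would bound the moments. Since $b_i = D_i p$ is a degree-$\le d{-}1$ multilinear polynomial in a mixture of Rademacher and standard Gaussian variables, and both distributions are $(2,3)$-hypercontractive, $\|b_i\|_3 \le (\sqrt{2})^{d-1}\|b_i\|_2$, i.e.\ $\E[|b_i|^3] \le 2^{3(d-1)/2}\,\Inf_i(p)^{3/2}$. By Parseval, $\littlesum_i \Inf_i(p) = \littlesum_{S \neq \emptyset} |S|\, p_S^2 \le d\,\Var[p] = d$, so together with $\Inf_i(p)\le\tau$ we get $\littlesum_i \Inf_i(p)^{3/2} \le \tau^{1/2}\littlesum_i \Inf_i(p) \le d\tau^{1/2}$. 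Combining with the previous paragraph yields
\[
\left| \E[\psi(p(x))] - \E[\psi(p(\mathcal{G}))] \right| \le 2^{O(d)}\, d\, \tau^{1/2}\, \|\psi'''\|_\infty \;=:\; \Delta(\psi),
\]
the smooth-function form of the invariance principle.

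To pass from smooth functions to the cdf, I would mollify the threshold indicator. Fix $t\in\R$ and $\lambda>0$, and take $\psi_+ \in {\cal C}^3$ with $\psi_+ \equiv 1$ on $(-\infty,t]$, $\psi_+ \equiv 0$ on $[t+\lambda,\infty)$, $0\le\psi_+\le1$, and $\|\psi_+'''\|_\infty = O(\lambda^{-3})$ (a rescaled fixed bump). Then
\[
\Pr_x[p(x)\le t] \le \E_x[\psi_+(p(x))] \le \E_{\mathcal{G}}[\psi_+(p(\mathcal{G}))] + \Delta(\psi_+) \le \Pr_{\mathcal{G}}[p(\mathcal{G})\le t] + \Pr_{\mathcal{G}}[t < p(\mathcal{G}) \le t+\lambda] + \Delta(\psi_+),
\]
and the middle term is $O(d\lambda^{1/d})$ by the Carbery--Wright anti-concentration bound (Theorem~\ref{thm:cw}, with $\Var[p]=1$); note that only \emph{Gaussian} anti-concentration is needed here, so no Boolean analogue is required. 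A symmetric argument with $\psi_-$ supported appropriately on the other side gives the reverse inequality, so for every $t$,
\[
\left| \Pr_x[p(x)\le t] - \Pr_{\mathcal{G}}[p(\mathcal{G})\le t] \right| \le O(d\lambda^{1/d}) + 2^{O(d)}\, d\, \tau^{1/2}\, \lambda^{-3}.
\]
Optimizing over $\lambda$ (balancing the two terms) gives $\lambda = 2^{O(d)}\tau^{d/(2(3d+1))}$ and an overall bound of $2^{O(d)}\tau^{\Theta(1/d)}$; absorbing constants and using the more careful choice of mollifier and anti-concentration exponent from \cite{MOO10} sharpens this to $O(d\,\tau^{1/(4d+1)})$, which is the claimed bound. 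I expect the last step --- the quantitative mollification trade-off --- to be the main obstacle: the crude third-derivative estimate above and the naive smoothing lose a constant factor in the exponent, and extracting the precise $\tfrac{1}{4d+1}$ requires carefully tracking the degree-dependent constants (and, as in \cite{MOO10}, a slightly more refined smooth-function estimate than sketched here).
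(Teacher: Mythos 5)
First, a point of comparison: the paper does not prove this statement at all --- Theorem~\ref{thm:invariance} is imported wholesale from \cite{MOO10} and used as a black box in Section~\ref{sec:from-Gaussian-to-Boolean}, so there is no in-paper argument to match yours against. Your sketch reconstructs the standard MOO route (Lindeberg single-coordinate replacement, hypercontractivity to control the error terms, mollification of the threshold, Carbery--Wright anti-concentration for the Gaussian polynomial), and that is indeed the right skeleton.

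However, as written your argument has a genuine quantitative gap: it cannot reach the stated exponent $1/(4d+1)$, and the fix is not ``carefully tracking the degree-dependent constants.'' With a third-order Taylor step you get a per-coordinate error of order $\|\psi'''\|_\infty \E[|b_i|^3] \le 2^{O(d)}\|\psi'''\|_\infty \Inf_i(p)^{3/2}$, summing to $2^{O(d)} d\,\tau^{1/2}\|\psi'''\|_\infty$, and a width-$\lambda$ mollifier forces $\|\psi'''\|_\infty = \Theta(\lambda^{-3})$; balancing $\tau^{1/2}\lambda^{-3}$ against the Carbery--Wright term $d\lambda^{1/d}$ gives exponent $1/(6d+2)$, and no amount of constant bookkeeping improves the exponent from that balance. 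The missing idea is that Rademacher and Gaussian coordinates also match \emph{third} moments (both are zero), so the Lindeberg step can be pushed one order further: the replacement error at coordinate $i$ is bounded by $O\bigl(\|\psi^{(4)}\|_\infty \E[b_i^4]\bigr) \le O\bigl(9^{d}\|\psi^{(4)}\|_\infty \Inf_i(p)^2\bigr)$ via $(2,4)$-hypercontractivity, and $\sum_i \Inf_i(p)^2 \le \tau \sum_i \Inf_i(p) \le d\tau$. Using a mollifier with $\|\psi^{(4)}\|_\infty = O(\lambda^{-4})$ and balancing $9^d d\,\tau\,\lambda^{-4}$ against $O(d\lambda^{1/d})$ gives $\lambda = (9^d\tau)^{d/(4d+1)}$ and a final bound $O\bigl(d\,(9^d\tau)^{1/(4d+1)}\bigr) = O\bigl(d\,\tau^{1/(4d+1)}\bigr)$, since $9^{d/(4d+1)} = O(1)$; this is exactly where $4d+1$ comes from. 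So your plan is salvageable with one structural change (fourth-order expansion exploiting vanishing third moments, with the accompanying $\Inf_i^2$ and $\lambda^{-4}$ bookkeeping), but as proposed the final step asserting the claimed exponent does not go through.
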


By Theorem~\ref{thm:algorithmic-regularity}, the leaves of $\T$ that are
marked $+1$, $-1$ or ``fail''
collectively contribute at most $\Theta((\eps/d)^{{4d+1}}) \leq \eps/2$
to the error
of the output value $v$.  Theorem~\ref{thm:invariance} implies that
each leaf $\rho$ at depth $d_\rho$ that is marked ``regular''
contributes at most
$2^{-d_\rho} \cdot \eps/2$ to the error, so the total contribution
from all such leaves is at most $\eps/2$.
This concludes the proof of Theorem~\ref{thm:main}.
\qed


\fi

\ifnum\confversion=1
\section{Application: A deterministic FPT for approximating absolute moments} \label{sec:moments}
\fi

\ifnum\confversion=0
\section{Application of Theorem \ref{thm:main}:  A fixed-parameter deterministic multiplicative
approximation algorithm for absolute moments} \label{sec:moments}
\fi

Consider the following computational problem, which we call {\sc Absolute-Moment}:
Given a degree-$d$ polynomial $p(x_1,\dots,x_n)$
and an integer parameter $k \geq 1$, compute \blue{the} value $\E_{x \in \{-1,1\}^n}[|p(x)|^k]$
of the \emph{$k$-th absolute moment} of $p$.
It is clear that the \emph{raw moment} $\E[p(x)^k]$ can be computed
in roughly $n^k$ time by expanding out the polynomial $p(x)^k$,
performing multilinear reduction, and outputting the constant term.
Since the $k$-th raw moment equals the $k$-th absolute moment for even $k$,
this gives an $n^k$ time algorithm for
\noindent {\sc Absolute-Moment} for even $k$.  However,
as shown in \cite{DDS13:deg2count},
even for $d=2$
the {\sc Absolute-Moment} problem is \#P-hard for any odd $k \geq 1$, and thus
it is natural to seek approximation algorithms.

Using the hypercontractive inequality \cite{Bon70,Bec75}
it is not difficult to show that the obvious randomized algorithm
(draw uniform points from $\{-1,1\}^n$ and use
them to empirically estimate $\E_{x \in \{-1,1\}^n}[|p(x)|^k]$)
with high probability
gives a $(1 \pm \eps)$-accurate estimate of
the $k$-th absolute moment of $p$ in
in $\poly(n^d,2^{dk \log k}, 1/\eps)$
time.  In this section we observe that Theorem~\ref{thm:main}
yields a
\emph{deterministic} fixed-parameter-tractable $(1 \pm \eps)$-multiplicative
approximation algorithm for {\sc Absolute-Moment}:

\begin{theorem} \label{thm:compute-kth-moment}
There is a deterministic algorithm which, given any degree-$d$
polynomial $p(x_1,\dots,x_n)$ over $\{-1,1\}^n$, any integer
$k \geq 1$, and any $\eps > 0$,
runs in
$O_{d,k,\eps}(1) \cdot \poly(n^d)$ time
and outputs a value $v$ that multiplicatively
$(1 \pm \eps)$-approximates the $k$-th absolute moment:
\ifnum\confversion=0
\[
v \in \left[
(1-\eps) \E_{x \in \{-1,1\}^n}[|p(x)|^k],
(1+\eps) \E_{x \in \{-1,1\}^n}[|p(x)|^k]
\right].
\]
\fi
\ifnum\confversion=1
$v \in [
(1-\eps) \cdot $ $\E_{x \in \{-1,1\}^n}[|p(x)|^k],$ $
(1+\eps) \cdot $ $\E_{x \in \{-1,1\}^n}[|p(x)|^k]
].
$
\fi
\end{theorem}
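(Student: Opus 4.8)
The plan is to reduce the computation of the $k$-th absolute moment to a bounded number of calls to the deterministic PTF approximate-counting algorithm of Theorem~\ref{thm:main}, via a standard ``slice by threshold'' decomposition of the nonnegative random variable $|p(x)|^k$. First I would normalize: by computing the exact second raw moment $\E_{x}[p(x)^2]$ in $\poly(n^d)$ time (expand $p^2$, do multilinear reduction, read off the constant term) we may rescale $p$ so that $\Var[p] \leq \E[p^2] = 1$; call the rescaled polynomial $q$, so it suffices to $(1\pm\eps)$-approximate $\E_x[|q(x)|^k]$ and then multiply back by the appropriate power of the scaling constant. The key structural fact is the layer-cake / integration-by-parts identity
\[
\E_{x \in \{-1,1\}^n}[|q(x)|^k]
= \int_{0}^{\infty} k t^{k-1} \cdot \Pr_{x}[|q(x)| \geq t]\, dt,
\]
and for each fixed $t$ the quantity $\Pr_x[|q(x)| \geq t]$ can be written as $\Pr_x[q(x) \geq t] + \Pr_x[-q(x) \geq t] = \Pr_x[q(x) - t \geq 0] + \Pr_x[-q(x) - t \geq 0]$, i.e.\ as a sum of two probabilities that degree-$d$ PTFs (with the constant term shifted by $t$) are satisfied. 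These are exactly the quantities that Theorem~\ref{thm:main} estimates deterministically to additive accuracy in time $O_{d,\delta}(1)\cdot\poly(n^d)$.

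The next step is to truncate and discretize the integral. Using the degree-$d$ Chernoff bound (Theorem~\ref{thm:dcb}) applied to $q$ with $\Var[q]\le 1$, the tail $\Pr_x[|q(x)| \geq t]$ decays like $d\,e^{-\Omega(t^{2/d})}$ once $t > e^d$; hence there is a cutoff $T = T(d,k,\eps) = (\Theta(\log(k/\eps) + d))^{d/2} = O_{d,k,\eps}(1)$ such that $\int_T^\infty k t^{k-1}\Pr_x[|q(x)|\ge t]\,dt \le \eps' \cdot \E_x[|q(x)|^k]$ for a suitably small $\eps'$ — here I would use a matching lower bound on $\E_x[|q(x)|^k]$ (from hypercontractivity, $\E[|q|^k]$ is within a $2^{O(dk)}$ factor of $\Var[q]^{k/2}$ when $q$ has no constant component; the general case with a constant term is handled by noting $\E[|q|^k]\ge |\E[q]|^k$ or $\E[|q|^k] \ge \E[q^2]^{k/2}\cdot 2^{-O(dk)}$ depending on which of the constant and non-constant parts dominates) to convert the additive tail bound into a multiplicative one. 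On $[0,T]$ I would partition into $N = O_{d,k,\eps}(1)$ subintervals of width $T/N$ and approximate the integral by a Riemann sum $\sum_{i} k t_i^{k-1}\cdot \widehat{P}(t_i)\cdot (T/N)$, where $\widehat{P}(t_i)$ is the output of Theorem~\ref{thm:main} on the two shifted PTFs described above, run with additive accuracy parameter $\delta = \eps'/(k T^{k})$ or so. Since the integrand $k t^{k-1}\Pr_x[|q(x)|\ge t]$ is bounded by $k T^{k-1}$ and is of bounded variation on $[0,T]$ (the probability is monotone nonincreasing in $t$, so the total variation of the product is $O(kT^{k-1}\cdot T) = O_{d,k,\eps}(1)$), the discretization error is $O(T/N)\cdot O_{d,k,\eps}(1)$, which is $\le \eps' \E_x[|q(x)|^k]$ for $N$ large enough (again using the lower bound on the moment).

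Putting the pieces together: choosing $\eps'$ a small constant multiple of $\eps$, and $N$, $T$, $\delta$ as above — all $O_{d,k,\eps}(1)$ — the Riemann sum is a $(1\pm\eps)$-multiplicative approximation to $\E_x[|q(x)|^k]$, and rescaling gives the same for $\E_x[|p(x)|^k]$. The total running time is $N$ times the cost of two invocations of the algorithm of Theorem~\ref{thm:main}, i.e.\ $O_{d,k,\eps}(1)\cdot\poly(n^d)$ as claimed, and the whole procedure is deterministic because Theorem~\ref{thm:main} is. The main obstacle — and the only genuinely nontrivial point — is the conversion of the additive guarantees (both from the tail truncation and from Theorem~\ref{thm:main}) into a multiplicative guarantee; this requires a clean a priori \emph{lower} bound on $\E_x[|q(x)|^k]$ in terms of $\Var[q]$ and $\E[q]$, which I would extract from the hypercontractive (Bonami-Beckner) inequality together with the elementary bound $\E[|q|^k] \ge \max\{|\E q|^k,\ \E[q^2]^{k/2}/C_{d,k}\}$; everything else is routine. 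One should also double-check that the PTF fed to Theorem~\ref{thm:main}, namely $q(x) - t$, still has only $\poly(n^d)$ coefficients and that shifting the constant term does not affect the degree, both of which are immediate.
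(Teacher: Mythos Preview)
Your proposal is correct and essentially the same as the paper's argument. Both normalize to $\|q\|_2=1$, use a lower bound on $\E[|q|^k]$ (the paper packages this as Observation~\ref{obs:moments-big}; your hypercontractivity-based bound is exactly how one proves that observation) to convert additive error into multiplicative error, truncate using the degree-$d$ tail bound, discretize, and estimate each piece with two calls to Theorem~\ref{thm:main}. The only cosmetic difference is that the paper integrates $|t|^k$ directly against the distribution of $q$ and estimates $\Pr[q\in[(j{-}1)\Delta,j\Delta)]$ as a difference of two PTF probabilities, whereas you use the equivalent layer-cake form $\int_0^\infty k t^{k-1}\Pr[|q|\ge t]\,dt$; these are the same computation up to integration by parts.
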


\ifnum\confversion=0

Theorem \ref{thm:compute-kth-moment} is a generaliation of the
$d=2$ special case which was proved in \cite{DDS13:deg2count}
(using the deterministic approximate counting result for
degree-2 PTFs which is the main result of that paper).
The proof is closely analogous to the degree-2 case so we only
sketch it below; see \cite{DDS13:deg2count} for a detailed argument.

The first step is the following easy observation:

\begin{observation} \label{obs:moments-big}
Let $p(x)$ be a degree-$d$ polynomial over $\{-1,1\}^n$ that
has $\E_{x \in \{-1,1\}^n}[p(x)^2] = 1.$
Then for all $k \geq 1$ we have that the $k$-th absolute moment
$\E_{x \in \{-1,1\}^n}[|p(x)|^k]$ is at least $c_d$ where $c_d>0$
is some universal constant (depending only on $d$).
\end{observation}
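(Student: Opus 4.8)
\textbf{Proof proposal for Observation \ref{obs:moments-big}.}

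The plan is to obtain a lower bound on $\E[|p(x)|^k]$ that depends only on $d$, using the second moment $\E[p(x)^2]=1$ as an ``anchor.'' First I would observe that it suffices to prove the bound for $k=1$: indeed, by a standard power-mean / Jensen-type comparison one has $\E[|p|^k]^{1/k} \ge \E[|p|]$ for $k \ge 1$ (since $t \mapsto t^{1/k}$ is concave on $[0,\infty)$, or equivalently by Lyapunov's inequality applied to the nested $L^k$ norms), so $\E[|p|^k] \ge \E[|p|]^k \ge \E[|p|]$ whenever $\E[|p|] \le 1$; and if $\E[|p|] \ge 1$ then trivially $\E[|p|^k] \ge 1$ as well (again by Jensen, $\E[|p|^k] \ge \E[|p|]^k \ge 1$). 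So the whole statement reduces to showing $\E_{x\in\{-1,1\}^n}[|p(x)|] \ge c_d'$ for some universal $c_d' > 0$.

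The key tool for the $k=1$ bound is the hypercontractive (Bonami--Beckner) inequality, which controls the fourth moment: for a degree-$d$ polynomial $p$ over $\{-1,1\}^n$ one has $\E[p^4] \le 9^d \,(\E[p^2])^2 = 9^d$. Combining this with the Paley--Zygmund inequality (or the standard ``$L_1$-$L_2$-$L_4$'' argument) gives the lower bound on $\E[|p|]$: writing $\E[p^2] = \E[|p|^{2/3}\cdot|p|^{4/3}]$ and applying H\"older with exponents $3$ and $3/2$, we get $\E[p^2] \le \E[|p|]^{2/3}\cdot\E[p^4]^{1/3}$, hence
\[
\E[|p|] \;\ge\; \frac{(\E[p^2])^{3/2}}{\E[p^4]^{1/2}} \;\ge\; \frac{1}{3^d},
\]
using $\E[p^2]=1$ and $\E[p^4]\le 9^d$. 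Taking $c_d = 1/3^{dk}$ (or more simply noting $\E[|p|^k]\ge \E[|p|]^k \ge 3^{-dk}$ when $\E[|p|]\le 1$, and $\ge 1$ otherwise, so $c_d = 3^{-dk}$ works, or if one wants $c_d$ independent of $k$ one can use $c_d = \min(1, 3^{-d})$ together with the $k=1$ reduction above) completes the argument.

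I do not anticipate a genuine obstacle here; the only mild subtlety is making sure the reduction from general $k$ to $k=1$ is handled cleanly in both regimes ($\E[|p|]\le 1$ and $\E[|p|]>1$), which the Jensen inequalities above take care of. Everything else is a direct invocation of hypercontractivity and H\"older, exactly as in the degree-$2$ case treated in \cite{DDS13:deg2count}, so I would simply cite that argument and note that the only change is replacing the constant $9$ by $9^d$ in the fourth-moment bound.
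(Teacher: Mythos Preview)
Your $k=1$ argument via hypercontractivity and H\"older is correct and is exactly the kind of argument the paper has in mind (the paper does not write out a proof, calling it an ``easy observation''; the intended proof is precisely the Bonami bound $\E[p^4]\le 9^d$ combined with the $L^1$--$L^2$--$L^4$ interpolation you wrote).

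However, your reduction from general $k$ to $k=1$ contains an error. You write ``$\E[|p|^k] \ge \E[|p|]^k \ge \E[|p|]$ whenever $\E[|p|] \le 1$,'' but the second inequality goes the wrong way: for $a\in(0,1]$ and $k\ge 1$ one has $a^k \le a$, not $a^k \ge a$. So this chain does not give a $k$-independent lower bound.

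The fix is immediate and does not require reducing to $k=1$ at all. For $k\ge 2$, Lyapunov's inequality (monotonicity of $L^p$ norms) gives
\[
\E[|p|^k]^{1/k} \;\ge\; \E[p^2]^{1/2} \;=\; 1,
\]
hence $\E[|p|^k]\ge 1$. So the only nontrivial case is $k=1$, for which your hypercontractivity/H\"older argument yields $\E[|p|]\ge 3^{-d}$. Taking $c_d = 3^{-d}$ then covers all $k\ge 1$. With this correction the proof is complete and matches the paper's (unstated) intended argument.
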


Given an input degree-$d$ polynomial $p(x_1,\dots,x_n)$, we may divide
by $\|p\|_2$ to obtain a scaled version $q=p/\|p\|_2 $ which has
$\|q\|_2=1.$  Observation~\ref{obs:moments-big} implies that
an additive $\pm \eps$-approximation to
$\E[|q(x)|^k]$ is also a multiplicative $(1\pm O_d(\eps))$-approximation
to $\E[|q(x)|^k]$.
Multiplying the approximation by $\|p\|_2^k$ we obtain a multiplicative
$(1 \pm O_d(\eps))$-approximation to $\E[|p(x)|^k]$.  Thus to
prove Theorem~\ref{thm:compute-kth-moment} it suffices to give a
deterministic algorithm which finds an additive $\pm \eps$-approximation
to $\E[|q(x)|^k]$ for degree-$d$ polynomials with $\|q\|_2=1$.
This follows from Theorem~\ref{thm:degd-computemoments} below:

\begin{theorem} \label{thm:degd-computemoments}
Let $q(x)$ be an input degree-$d$ polynomial over $\{-1,1\}^n$
with $\E[q(x)^2]=1.$
There is an algorithm $A_{\mathrm{moment}}$ that, on input
$k \in \Z^+$, $q$, and $\eps > 0$, runs in time
$O_{k,d,\eps}(1) \cdot \poly(n^d)$
and outputs a value $\tilde{\mu}_k$ such that
\[
\left|\tilde{\mu}_k - \E_{x \in \{-1,1\}}[|q(x)|^k] \right|
\leq \eps.
\]
\end{theorem}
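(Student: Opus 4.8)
The plan is to reduce the problem of additively approximating $\E_{x\in\{-1,1\}^n}[|q(x)|^k]$ to a sequence of PTF-counting problems of the kind handled by Theorem~\ref{thm:main}. The basic identity is
\[
\E_{x\in\{-1,1\}^n}[|q(x)|^k] = \int_0^\infty \Pr_{x\in\{-1,1\}^n}[\,|q(x)|^k \ge t\,]\,dt = \int_0^\infty \Pr_{x\in\{-1,1\}^n}[\,|q(x)| \ge s\,]\,k s^{k-1}\,ds,
\]
so it suffices to estimate the ``tail-probability profile'' $s \mapsto \Pr_x[|q(x)| \ge s]$ and then integrate against the known weight $k s^{k-1}$. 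Each probability $\Pr_x[|q(x)|\ge s]$ is itself $\Pr_x[q(x)-s\ge 0] + \Pr_x[-q(x)-s\ge 0]$, i.e. a difference of counting problems for the degree-$d$ PTFs $\sign(q(x)-s)$ and $\sign(-q(x)-s)$, each of which can be approximated to additive accuracy $\pm\xi$ in time $O_{d,\xi}(1)\cdot\poly(n^d)$ by Theorem~\ref{thm:main}.

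First I would truncate the integral: since $\E[q^2]=1$, the degree-$d$ Chernoff/hypercontractive tail bound (Theorem~\ref{thm:dcb}, applied over $\{-1,1\}^n$ via the analogous Bonami--Beckner estimate, or simply the standard hypercontractive moment bound $\E[|q|^{2k}]\le (2k-1)^{dk}$) shows that the contribution to $\E[|q|^k]$ from the region $|q(x)| > T$ is at most $\E[|q|^{2k}]^{1/2}\cdot\Pr[|q|>T]^{1/2} \le 2^{O(dk)}\cdot e^{-\Omega(T^{2/d})}$, which is below $\eps/3$ once $T = \big(C\,dk + C\,d\log(1/\eps)\big)^{d/2}$ for a suitable absolute constant $C$. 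So it is enough to approximate $\int_0^{T} \Pr_x[|q(x)|\ge s]\,k s^{k-1}\,ds$. Next I would discretize $[0,T]$ into $L = \mathrm{poly}(T^k/\eps) = O_{d,k,\eps}(1)$ equal subintervals $0 = s_0 < s_1 < \cdots < s_L = T$; on each subinterval the integrand $\Pr_x[|q|\ge s]\,k s^{k-1}$ varies by at most (length)$\cdot(k T^{k-1} + $ anti-concentration wobble$)$, but in fact since $\Pr_x[|q|\ge s]$ is monotone non-increasing in $s$ and $k s^{k-1}$ is bounded by $k T^{k-1}$ on $[0,T]$, a Riemann sum $\sum_{i} \Pr_x[|q|\ge s_i]\cdot k s_i^{k-1}\cdot (s_i - s_{i-1})$ is within $O(T\cdot k T^{k-1}/L) \le \eps/3$ of the true integral. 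Finally, for each grid point $s_i$ I would invoke Theorem~\ref{thm:main} twice, on $q(x)-s_i$ and $-q(x)-s_i$, with accuracy parameter $\xi = \eps/(3\,k\,T^k)$ (so that the accumulated error over all $L$ points, each weighted by at most $k T^{k-1}(s_i-s_{i-1})$, is at most $\eps/3$), obtaining $\widetilde{v}_i \approx \Pr_x[|q(x)|\ge s_i]$ to within $\pm 2\xi$. Summing $\widetilde{\mu}_k := \sum_{i=1}^L \widetilde{v}_i\cdot k s_i^{k-1}\cdot(s_i - s_{i-1})$ gives the desired estimate, and the three $\eps/3$ error contributions (truncation, discretization, PTF-counting) add up to at most $\eps$.

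For the running time: $T = O_{d,k,\eps}(1)$, hence $L = O_{d,k,\eps}(1)$, and $\xi = \eps/(3kT^k)$ is also a constant depending only on $d,k,\eps$; each of the $2L$ calls to the algorithm of Theorem~\ref{thm:main} runs in $O_{d,\xi}(1)\cdot\poly(n^d) = O_{d,k,\eps}(1)\cdot\poly(n^d)$ time, so the total is $O_{d,k,\eps}(1)\cdot\poly(n^d)$ as claimed. (One mild technical point: Theorem~\ref{thm:main} as stated approximates $\Pr_x[p(x)\ge 0]$; to get $\Pr_x[|q(x)|\ge s_i]$ we use $\{|q(x)|\ge s_i\} = \{q(x)-s_i\ge 0\}\cup\{-q(x)-s_i\ge 0\}$, and these two events are disjoint whenever $s_i>0$, so the probability is exactly the sum of the two PTF-probabilities; for $s_0=0$ we simply use $\Pr_x[|q(x)|\ge 0]=1$.)

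\textbf{Main obstacle.} The only real subtlety is the truncation bound: one needs a clean tail estimate for $|q|$ under the \emph{uniform} distribution on $\{-1,1\}^n$ (not the Gaussian), but this is exactly the regime covered by the $(2,q)$-hypercontractive inequality of \cite{Bon70,Bec75}, which gives $\|q\|_{2k} \le (2k-1)^{d/2}\|q\|_2$ and hence, via Markov on $|q|^{2k}$ together with Paley--Zygmund or just a second moment, the required exponential-type tail $\Pr_x[|q(x)| > T] \le 2^{O(dk)} e^{-\Omega(T^{2/d})}$. Everything else is bookkeeping: choosing $T$, $L$, and $\xi$ so that the three error sources each stay below $\eps/3$, and checking that all these parameters depend only on $d,k,\eps$. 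This establishes Theorem~\ref{thm:degd-computemoments}, and as explained above Theorem~\ref{thm:compute-kth-moment} follows by the rescaling $q = p/\|p\|_2$ together with Observation~\ref{obs:moments-big} (which guarantees $\E[|q|^k] \ge c_d$, converting the additive $\pm\eps$ guarantee for $q$ into a multiplicative $(1\pm O_d(\eps))$ guarantee, and then multiplying back by $\|p\|_2^k$).
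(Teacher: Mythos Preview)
Your proposal is correct and takes essentially the same approach as the paper: truncate the moment using degree-$d$ tail bounds, discretize, and reduce each piece to two invocations of Theorem~\ref{thm:main}. The only cosmetic difference is that you use the layer-cake representation $\E[|q|^k]=\int_0^\infty \Pr[|q|\ge s]\,k s^{k-1}\,ds$ whereas the paper writes $\E[|q|^k]=\int |t|^k\,\gamma_q(t)\,dt$ and estimates the interval masses $\Pr[q\in[(j-1)\Delta,j\Delta)]$; these are equivalent via summation by parts, and if anything your version handles the region near $0$ slightly more cleanly (the paper separately argues that bins with small $|j|$ can be approximated by zero).
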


The idea behind the proof of Theorem \ref{thm:degd-computemoments}
is simple.  By Theorem \ref{thm:main}, we can estimate
$\Pr_{x \sim \{-1,1\}^n}[q(x) \geq t]$ to high accuracy
for any $t$ of our choosing.  Doing this repeatedly for different
choices of $t$, we can get a detailed picture
of where the probability mass of the random variable
$q(x)$ lies (for $x$ uniform over $\{-1,1\}^n$), and with this detailed
picture it is straightforward to estimate the $k$-th moment.

In a bit more detail, let $\gamma_q(t)$ denote the probability mass
function of $q(x)$ when $x$ is distributed uniformly over
$\{-1,1\}^n.$  We may write the $k$-th absolute moment of $q$
as
\begin{equation}
\E_{x \in \{-1,1\}^n}[|q(x)|^k]
=
\int_{-\infty}^\infty |t|^k \gamma_q(t) dt.
\end{equation}

Using standard tail bounds on polynomials over $\{-1,1\}^n$, for
a suitable choice of $M=O_{k,d,\eps}(1)$ we have that
\[
\E_{x \in \{-1,1\}^n}[|q(x)|^k]
\in \left[
\int_{-M}^{M} |t|^k \gamma_q(t) dt,
\int_{-M}^{M} |t|^k \gamma_q(t) dt + \eps/4\right]
,
\]
and hence to approximate $\E_{x \sim N(0,1)^n}[|q(x)|^k]$ to an
additive $\pm \eps,$ it suffices to approximate
$ \int_{-M}^M |t|^k \gamma_q(t) dt$ to an additive $\pm 3\eps/4.$

We may write
$\int_{-M}^{M} |t|^k \gamma_q(t) dt $ as
\[
\sum_{j=1-M/\Delta}^{M/\Delta}
\int_{(j-1)\Delta}^{j \Delta} |t|^k \gamma_q(t) dt
\]
(here $\Delta$ should be viewed as a small positive value).
When $|j|$ is small the summand
$\int_{(j-1)\Delta}^{j \Delta} |t|^k \gamma_q(t) dt$
may be well-approximated by zero, and when $|j|$ is not small the summand
$\int_{(j-1)\Delta}^{j \Delta} |t|^k \gamma_q(t) dt$
may be well-approximated by $|j \Delta|^k q_{j,\Delta}$, where
\[
q_{j,\Delta} \eqdef \Pr_{x \in \{-1,1\}^n}[q(x) \in [(j-1)\Delta,j\Delta).
\]
Using Theorem \ref{thm:main} twice we may approximate $q_{j,\Delta}$
to high accuracy.  With a suitable choice of
$\Delta=O_{k,d,\eps}(1)$ and the cutoff for $j$ being ``small,''
it is possible to approximate
$ \int_{-M}^M |t|^k \gamma_q(t) dt$ to an additive $\pm 3\eps/4,$
and thus obtain Theorem \ref{thm:degd-computemoments}, following this approach.
We leave the detailed setting of parameters to the interested reader.

\fi


\medskip

\noindent {\bf Acknowledgement.}  We thank Ilias Diakonikolas for his contributions in the early stages of this project. We also thank Rafal Latala, Michel Ledoux, Elchanan Mossel, Ivan Nourdin and Krzysztof Oleszkiewicz for answering questions about the CLT. Part of this work was done when A.D. was hosted by Oded Regev and the Simons Institute. A.D. would like to thank them for their kind hospitality and support.

\bibliographystyle{alpha}
\bibliography{allrefs}

\end{document}